\newcommand{\qef}{\hfill \ensuremath{\blacksquare}}
\newif\ifshort
\ifshort\typeout{*** short version}\else\typeout{*** full version}\fi
\newcounter{refappendix}
\newcommand{\proofinapx}{%
\addtocounter{refappendix}{1}%
\immediate\write\axp@proofsfile{\noexpand\hypertarget\string{apx\arabic{refappendix}\string}\string{\string~\string~\string}}%
\hyperlink{apx\arabic{refappendix}}{\rlap{\hskip0.8mm\raisebox{1pt}[0pt][0pt]{\textbf{\scriptsize A}}}$\bigcirc$}}%
\newcommand{\proofinapx}{\relax}
\newcommand{\ltuple}[1]{\langle\nobreak#1,\allowbreak}
\newcommand{\mtuple}[1]{\:#1\nobreak,\allowbreak}
\newcommand{\rtuple}[1]{\:#1\nobreak\rangle}
\newcommand{\singleton}[1]{\langle#1\rangle}
\newcommand{\pair}[2]{\ltuple{#1}\allowbreak\rtuple{#2}}
\newcommand{\triple}[3]{\ltuple{#1}\mtuple{#2}\rtuple{#3}}
\newcommand{\quadruple}[4]{\ltuple{#1}\mtuple{#2}\mtuple{#3}\rtuple{#4}}
\newcommand{\sextuple}[6]{\ltuple{#1}\mtuple{#2}\mtuple{#3}\mtuple{#4}\mtuple{#5}\rtuple{#6}}
\newcommand{\septuple}[7]{\ltuple{#1}\mtuple{#2}\mtuple{#3}\mtuple{#4}\mtuple{#5}\mtuple{#6}\rtuple{#7}}
\newcommand{\decatuple}[9]
{\ltuple{#1}\mtuple{#2}\mtuple{#3}\mtuple{#4}\mtuple{#5}\mtuple{#6}\mtuple{#7}\mtuple{#8}\mtuple{#9}\rtuple}
\newcommand{\undecatuple}[9]
{\ltuple{#1}\mtuple{#2}\mtuple{#3}\mtuple{#4}\mtuple{#5}\mtuple{#6}\mtuple{#7}\mtuple{#8}\mtuple{#9}\@undecatuple}
\newcommand{\@undecatuple}[2]{\mtuple{#1}\rtuple{#2}}
\newcommand{\dodecatuple}[9]
{\ltuple{#1}\mtuple{#2}\mtuple{#3}\mtuple{#4}\mtuple{#5}\mtuple{#6}\mtuple{#7}\mtuple{#8}\mtuple{#9}\@dodecatuple}
\newcommand{\@dodecatuple}[3]{\mtuple{#1}\mtuple{#2}\rtuple{#3}}
\newcommand{\tridecatuple}[9]
{\ltuple{#1}\mtuple{#2}\mtuple{#3}\mtuple{#4}\mtuple{#5}\mtuple{#6}\mtuple{#7}\mtuple{#8}\mtuple{#9}\@tridecatuple}
\newcommand{\@dtridecatuple}[4]{\mtuple{#1}\mtuple{#2}\mtuple{#3}\rtuple{#4}}
\newcommand{\tetradecatuple}[9]
{\ltuple{#1}\mtuple{#2}\mtuple{#3}\mtuple{#4}\mtuple{#5}\mtuple{#6}\mtuple{#7}\mtuple{#8}\mtuple{#9}\@tetradecatuple}
\newcommand{\@tetradecatuple}[5]{\mtuple{#1}\mtuple{#2}\mtuple{#3}\mtuple{#4}\rtuple{#5}}
\newcommand{\pentadecatuple}[9]
{\ltuple{#1}\mtuple{#2}\mtuple{#3}\mtuple{#4}\mtuple{#5}\mtuple{#6}\mtuple{#7}\mtuple{#8}\mtuple{#9}\@pentadecatuple}
\newcommand{\@pentadecatuple}[6]{\mtuple{#1}\mtuple{#2}\mtuple{#3}\mtuple{#4}\mtuple{#5}\rtuple{#6}}
\newcommand{\hexadecatuple}[9]
{\ltuple{#1}\mtuple{#2}\mtuple{#3}\mtuple{#4}\mtuple{#5}\mtuple{#6}\mtuple{#7}\mtuple{#8}\mtuple{#9}\@hexadecatuple}
\newcommand{\@hexadecatuple}[7]{\mtuple{#1}\mtuple{#2}\mtuple{#3}\mtuple{#4}\mtuple{#5}\mtuple{#6}\rtuple{#7}}
\newcommand{\sqb}[1]{\ensuremath{\def\@paramsqb{#1}\ifx\@paramsqb\@empty\else\llbracket\@paramsqb\rrbracket\fi}}
\newcommand{\crb}[1]{\ensuremath{\def\@paramsqb{#1}\ifx\@paramsqb\@empty\else\mathopen{\{\mskip-4mu|}\@paramsqb\mathclose{|\mskip-4mu\}}\fi}}
\def\displayrelsize{0}
\def\scriptrelsize{-1}
\def\scriptscriptrelsize{-2}
\newcommand{\sisymbol}{{\ensuremath{\llparenthesis}}}
\newcommand{\alorssymbol}{\ensuremath{\mathbb{?}}}%
\newcommand{\sinonsisymbol}{{\ensuremath{\lceil\mskip-6.5mu\rfloor}}}
\newcommand{\sinonsymbol}{\ensuremath{\mathbb{:}}}%
\newcommand{\fsinonsi}{{\ensuremath{\rrparenthesis}}}
\newcommand{\si}{\ensuremath{\mathopen{\mathchoice{\mbox{\relsize{\displayrelsize}\textup{{\sisymbol}}}}{\mbox{\relsize{\displayrelsize}\textup{{\sisymbol}}}}{\mbox{\relsize{\scriptrelsize}\textup{{\sisymbol}}}}{\mbox{\relsize{\scriptscriptrelsize}\textup{{\sisymbol}}}}}\mskip3mu}}
\newcommand{\alors}{\ensuremath{\mathrel{\mathchoice{\mbox{\relsize{\displayrelsize}\textup{{\alorssymbol}}}}{\mbox{\relsize{\displayrelsize}\textup{{\alorssymbol}}}}{\mbox{\relsize{\scriptrelsize}\textup{{\alorssymbol}}}}{\mbox{\relsize{\scriptscriptrelsize}\textup{{\alorssymbol}}}}}}}
\newcommand{\sinon}{\ensuremath{\mathrel{\mathchoice{\mbox{\relsize{\displayrelsize}\textup{{\sinonsymbol}}}}{\mbox{\relsize{\displayrelsize}\textup{{\sinonsymbol}}}}{\mbox{\relsize{\scriptrelsize}\textup{{\sinonsymbol}}}}{\mbox{\relsize{\scriptscriptrelsize}\textup{{\sinonsymbol}}}}}}}
\newcommand{\sinonsi}{\ensuremath{\mathrel{\mathchoice{\mbox{\relsize{\displayrelsize}\textup{{\sinonsisymbol}}}}{\mbox{\relsize{\displayrelsize}\textup{{\sinonsisymbol}}}}{\mbox{\relsize{\scriptrelsize}\textup{{\sinonsisymbol}}}}{\mbox{\relsize{\scriptscriptrelsize}\textup{{\sinonsisymbol}}}}}}}
\newcommand{\fsi}{\ensuremath{\mskip3mu\mathclose{\mathchoice{\mbox{\relsize{\displayrelsize}\textup{{\fsinonsi}}}}{\mbox{\relsize{\displayrelsize}\textup{{\fsinonsi}}}}{\mbox{\relsize{\scriptrelsize}\textup{{\fsinonsi}}}}{\mbox{\relsize{\scriptscriptrelsize}\textup{{\fsinonsi}}}}}}}
\newcommand{\Lfp}[1]{\ensuremath{\textsf{\upshape lfp}^{\scriptscriptstyle\mskip2mu #1}\,}}
\newcommand{\Gfp}[1]{\ensuremath{\textsf{\upshape gfp}^{\scriptscriptstyle\mskip2mu #1}\,}}
\def\@LAMBDAoperator{\text{\boldmath$\lambda$}}%
\def\@LAMBDApoint{%
\mathchoice%
{\,\mbox{\relsize{2}\bf\raisebox{0.3ex}{.}}\,}%
{\,\mbox{\relsize{2}\bf\raisebox{0.3ex}{.}}\,}%
{\,\mbox{\relsize{1}\bf\raisebox{0.3ex}{.}}\,}%
{\,\mbox{\bf\raisebox{0.3ex}{.}}\,}%
}
\def\LAMBDA#1{\@ifnextchar[{{\@@LAMBDA@IN{#1}}}{{\@@LAMBDA{#1}}}}
\def\@@LAMBDA#1{\@LAMBDAoperator{#1}{\@LAMBDApoint}}
\def\@@LAMBDA@IN#1[#2]{\@LAMBDAoperator{#1}\,{\in}\,{#2}{\@LAMBDApoint}}
\def\LAMBDAoperator{{\ensuremath{\@LAMBDAoperator}}}
\DeclareRobustCommand{\cev}[1]{%
  \mathpalette\do@cev{#1}%
}
\newcommand{\do@cev}[2]{%
  \fix@cev{#1}{+}%
  \reflectbox{$\m@th#1\vec{\reflectbox{$\fix@cev{#1}{-}\m@th#1#2\fix@cev{#1}{+}$}}$}%
  \fix@cev{#1}{-}%
}
\newcommand{\fix@cev}[2]{%
  \ifx#1\displaystyle
    \mkern#23mu
  \else
    \ifx#1\textstyle
      \mkern#23mu
    \else
      \ifx#1\scriptstyle
        \mkern#22mu
      \else
        \mkern#22mu
      \fi
    \fi
  \fi
}
\newbox\mystrutbox
\newcommand{\@ustrut}[1]{\setbox\mystrutbox\hbox{#1\strut}\hbox{\vrule 
     height\ht\mystrutbox
     depth\z@
     width\z@}}
\newcommand{\@lstrut}[1]{\setbox\mystrutbox\hbox{#1\strut}\hbox{\vrule 
     height\z@
     depth\dp\mystrutbox
     width\z@}}
\newcommand{\ustrut}{\@ustrut{\normalfont}}
\newcommand{\lstrut}{\@lstrut{\normalfont}}
\newcommand{\ulstrut}{\@ustrut{\normalfont\relsize{1}}}
\newcommand{\uLstrut}{\@ustrut{\normalfont\relsize{2}}}
\newcommand{\uhstrut}{\@ustrut{\normalfont\relsize{3}}}
\newcommand{\uHstrut}{\@ustrut{\normalfont\relsize{4}}}
\newcommand{\llstrut}{\@lstrut{\normalfont\relsize{1}}}
\newcommand{\lLstrut}{\@lstrut{\normalfont\relsize{2}}}
\newcommand{\lhstrut}{\@lstrut{\normalfont\relsize{3}}}
\newcommand{\lHstrut}{\@lstrut{\normalfont\relsize{4}}}
\newcommand{\usstrut}{\@ustrut{\normalfont\relsize{-1}}}
\newcommand{\lsstrut}{\@lstrut{\normalfont\relsize{-1}}}
\newcommand{\ussstrut}{\@ustrut{\normalfont\relsize{-2}}}
\newcommand{\lssstrut}{\@lstrut{\normalfont\relsize{-2}}}
\newcommand{\usssstrut}{\@ustrut{\normalfont\relsize{-3}}}
\newcommand{\lsssstrut}{\@lstrut{\normalfont\relsize{-3}}}
\newcommand{\angelic}{\not\bot}%
\newcommand{\demoniac}{\bot}%
\newsavebox{\largecirclebox}
\savebox{\largecirclebox}{\normalfont\Large$\largecircle$}
\newsavebox{\numberbox}
\newcommand{\circled}[1]{\savebox{\numberbox}{\scriptsize\upshape#1}%
\rlap{\raisebox{1pt}[0pt][0pt]{\hskip0.5\wd\largecirclebox\hskip-0.5\wd\numberbox\usebox{\numberbox}}}\usebox{\largecirclebox}}
\newcommand{\circledgray}[1]{\savebox{\numberbox}{\color{gray}\scriptsize\upshape#1}%
\rlap{\raisebox{1pt}[0pt][0pt]{\hskip0.5\wd\largecirclebox\hskip-0.5\wd\numberbox\usebox{\numberbox}}}{\normalfont\Large$\color{gray}\largecircle$}}
\makeatletter\AtBeginDocument{\makeatletter
\immediate\write\axp@proofsfile{\string\noindent\string\textbf\string{\string\LARGE\string{Appendix of ``Calculational Design of [In]Correctness Transformational Program Logics by Abstract Interpretation''\string\ulstrut\string}\string}\string\\[1em]}
\immediate\write\axp@proofsfile{\string\textsf\string{PATRICK COUSOT\string}, Courant Institute of Mathematical Studies, New York University, USA\string\\[-0.75em]}
\immediate\write\axp@proofsfile{\string\let\string\my@title\string\@title}
\immediate\write\axp@proofsfile{\string\def\string\@title\string{Appendix of  ``\string\my@title''\string}}
\immediate\write\axp@proofsfile{\string\if@ACM@printacmref\string\@mkbibcitation\string\fi}
\makeatother}\makeatother
\begin{document}
\title{Calculational Design of [In]Correctness Transformational Program Logics by Abstract Interpretation}

\author{Patrick Cousot}
\email{pcousot@cims.nyu.edu}
\orcid{0000-0003-0101-9953}
\affiliation{%
  \institution{New York University}
  \city{}
  \country{USA}
}

\renewcommand{\shortauthors}{Patrick Cousot}

\begin{abstract}
We study transformational program logics for correctness and incorrectness that we extend to explicitly handle both termination and nontermination. We show that the logics are abstract interpretations of the right image transformer for a natural relational semantics covering both finite and infinite executions. This understanding of logics as abstractions of a semantics facilitates their comparisons through their respective abstractions of the semantics  (rather that the much more difficult comparison through their formal proof systems). More importantly, the formalization provides a calculational method for constructively designing the sound and complete formal proof system by abstraction of the semantics. As an example, we extend Hoare logic to cover all possible behaviors of nondeterministic programs and design a new precondition (in)correctness logic.
\end{abstract}
\begin{CCSXML}
<ccs2012>
<concept>
<concept_id>10003752.10003790.10002990</concept_id>
<concept_desc>Theory of computation~Logic and verification</concept_desc>
<concept_significance>500</concept_significance>
</concept>
<concept>
<concept_id>10003752.10010124.10010131.10010135</concept_id>
<concept_desc>Theory of computation~Axiomatic semantics</concept_desc>
<concept_significance>500</concept_significance>
</concept>
</ccs2012>
\end{CCSXML}

\ccsdesc[500]{Theory of computation~Logic and verification}
\ccsdesc[500]{Theory of computation~Axiomatic semantics}
\keywords{program logic, transformer, semantics, correctness, incorrectness, termination, nontermination, abstract interpretation\setcounter{TotPages}{-32}
}

%

\maketitle
\section{Introduction}
\vphantom{\circled{1}}
In verification, the focus is on which program properties can be expressed and proved. We discuss
transformational (or Hoare's style) logics characterized by formulas expressing program properties that relate initial/input values of variables to their final/output values, nontermination, or runtime errors (or inversely final to initial) and a  Hilbert-style proof system \cite[\S10]{HilbertAckermann-Logic28} to prove that a program has a property expressed by a formula of the logic (but not that a given program does not have a property expressed by a formula of the logic or that no program can have this property \cite{DBLP:journals/pacmpl/KimDR23}). Examples are Hoare's logic \cite{DBLP:journals/cacm/Hoare69} and the reverse Hoare logic \cite{DBLP:conf/sefm/VriesK11} aka incorrectness logic \cite{DBLP:journals/pacmpl/OHearn20}. 
 
\subsection{The Classic Proof-Theoretic Approach}
The ``classic approach'' to the design of a Hoare style logic follows the proof-theoretic semantics in logic originated by Hilbert, Gentzen,
Prawitz, and others \cite{DBLP:journals/sLogica/PiechaS19}. The true program properties are the provable ones, which is also the idea of ``axiomatic semantics'' \cite{DBLP:books/daglib/0070910}, that is, Floyd's idea that a program proof method is ``Assigning Meaning to Programs'' \cite{Floyd67-1}. First the syntax of program properties is defined (e.g\@. $P\{C\}Q$, $\{P\}C\{Q\}$, $[P]C[Q]$). Then proof rules are postulated  (e.g\@. ``If $\vdash P\{Q\}R$ and $\vdash R\supset  S$ then $\vdash P\{Q\}S$'' \cite[page 578]{DBLP:journals/cacm/Hoare69}). Finally, soundness and completeness theorems are proved to relate the logic properties to a more concrete/refined semantics (e.g\@. years after its design, Hoare logic \cite{DBLP:journals/cacm/Hoare69} was proved sound  by Donahue \cite{DBLP:books/sp/Donahue76} (with respect to a denotational semantics) and sound and relatively complete by Pratt \cite{DBLP:conf/focs/Pratt76} (with respect to a relational semantics excluding nontermination) and Cook \cite{DBLP:journals/siamcomp/Cook78,DBLP:journals/siamcomp/Cook81} (with respect to an operational trace semantics)). This design method has perdured over time, even if, nowadays, soundness and completeness proofs are often published together with the logic (e.g\@. 
\cite{DBLP:journals/jacm/BruniGGR23,%
DBLP:journals/afp/Dardinier23a,%
DBLP:conf/sefm/VriesK11,%
DBLP:journals/entcs/GotsmanBC11,%
DBLP:conf/RelMiCS/MollerOH21,%
DBLP:journals/pacmpl/ZhangAG22,%
DBLP:journals/pacmpl/OHearn20,%
DBLP:conf/sas/Vanegue22,%
DBLP:journals/pacmpl/ZhangK22,%
DBLP:journals/pacmpl/ZilbersteinDS23} a.o\@.). 
Therefore, in this ``classic approach'' the program properties of interest (partial correctness, total correctness, incorrectness, etc) are the one provable by the proof system, while soundness and completeness theorems aims at connecting the provable properties to the program semantics.

\subsection{The Model-Theoretic Semantic Abstraction Approach}
In this paper, we consider an alternative ``semantic abstraction approach'' which is based on Tarski’s truth paradigm 
\cite{Tarski-truth-44} in model theory and the abstract interpretation of the semantics of languages  \cite{DBLP:conf/popl/CousotC77,Cousot-PAI-2021}. First, a formal semantics is specified for the language (preferable using structural fixpoints or deductive proof systems). This induces a collecting semantics defining the strongest (hyper) property of programs. Then the program properties of interest for the logic are specified by a Galois connection abstracting the collecting (hyper) properties. The abstraction is usually the composition of several primitive ones, in the spirit of \cite{DBLP:conf/popl/CousotC14}. Varying the primitives and their composition yields different logics.
At this point, the logic is precisely and fully determined since all expressible properties of all programs have been formally specified. For example, the logic can be compared and combined with other logics (see e.g\@. Figs\@. \ref{fig:Forward-semantics-logics}, \ref{fig:taxonomy}, \ref{fig:taxonomy-assertional} and the taxonomy in Sect\@. \ref{sec:subhierarchy-assertional-logics}). Finally the rules of the proof system are designed by calculus using fixpoint abstraction (Sect\@. \ref{sect:FixpointAbstraction}), fixpoint induction principles  (Sect\@. \ref{sect:FixpointInduction}), and Peter Aczel \cite{Aczel:1977:inductive-definitions} construction of deductive rule-based systems from fixpoints, or conversely (Sect\@. \ref{sect:SemanticsDeductiveSystems}).

The advantage is that reasoning on abstractions of program properties is much more concise and easy than reasoning on proof systems. This clearly appears e.g\@. in Fig\@.  \ref{fig:taxonomy} comparing 40 logics by combining only 8 abstractions (plus one, common to all logics defining ``transformational''). Fig\@.  \ref{fig:taxonomy} is itself part of the lattice of abstract interpretations of \cite[section 8]{DBLP:conf/popl/CousotC77} including many logics whose abstraction is given in this paper. Another advantage is that the proof system is derived by calculus so sound and complete by construction.

\subsection{The Structure of the Paper}

The paper has two main parts. In the first part, we characterize the semantics of a transformational logics, i.e\@. the true formulas (a theory in logic), as an abstract interpretation of the program (collecting) semantics. This allows us to provide a taxonomy of transformational semantics by comparing their abstractions, without referring to their proof systems.

After showing that theories of logics are set abstractions of the program (collecting) semantics in the first part, we have to design the corresponding proof systems in the second part.

Aczel has shown that deductive rule-based systems and set-theoretic fixpoint definitions are equivalent \cite{Aczel:1977:inductive-definitions}. Therefore we first define the program semantics
in fixpoint form, then abstract this semantics to get a fixpoint definition of the theory of the logic, and finally apply Aczel's method to derive the equivalent
proof system. The proof system is then sound and complete by construction.

\ifshort We use the clickable symbol \proofinapx\ to refer to complements and proofs found in the appendix.\fi

\vskip-2ex
\part{}{{\LARGE\bfseries Part I: Design of the Theory of Logics by Abstraction of the \phantom{\LARGE\bfseries Part I:\ $\,$}Program Semantics}}\label{part:SemanticsLogics}
\let\originalthesection\thesection
\setcounter{section}{0}%
\def\thesection{I.\originalthesection}%
\vskip2ex

In part I, we show that the theory (or semantics) of transformational logics are abstractions of the relational 
semantics of programs, which leads to a taxonomy of transformational logics, as well as, to
their combinations. The meaning or semantics of a logic is the set of true formulas of that logic which is also
called the theory of the logic. So we use ``theory'' for the meaning of a logic and ``semantics'' for the meaning of a program or a programming language.

\ifshort\vskip-2em\fi

\section{Relational Semantics}\label{sect:RelationalSemantics}
``Relational'' means that the semantics defines a relation between initial states of executions and final states or $\bot$ to denote nontermination (as conventional in denotational semantics \cite{ScottStrachey71-PRG6}). 
\ifshort Our notations on relations are classic and defined in the appendix \proofinapx.\vskip-1em\fi
\begin{toappendix}
\subsection{Relations}\label{sect:Relations}
The Cartesian product of sets $\mathcal{X}$ and $\mathcal{Y}$ is $\mathcal{X}\times\mathcal{Y}\triangleq\{\pair{x}{y}\mid x\in \mathcal{X} \wedge y\in \mathcal{Y}\}$. The Cartesian power is 
$\mathcal{X}^n\triangleq\{\triple{x_1}{\ldots}{x_n}\mid \forall i\in\interval{1}{n}\mathrel{,}x_i\in\mathcal{X}\}$, $n\in\mathbb{N}$, with $\mathcal{X}^0=\emptyset$ and $\mathcal{X}^1=X$.
The powerset $\wp(\mathcal{X})$ of a set $\mathcal{X}$ is the set of all subsets of $\mathcal{X}$. A relation $r$ on sets $\mathcal{X}$ and $\mathcal{Y}$ is a set of pairs in their Cartesian product so
$r\in\wp(\mathcal{X}\times\mathcal{Y})$.  Its domain is
$\textsf{dom}(r)\triangleq\{x\mid\exists y\mathrel{.}\pair{x}{y}\in r\}$ and its codomain is $\textsf{codom}(r)\triangleq\{y\mid\exists x\mathrel{.}\pair{x}{y}\in r\}$. The inverse of a relation $r$ is $r^{-1}\triangleq\{\pair{y}{x}\mid \pair{x}{y}\in r\}$. The left composition of relations is $r_1\fatsemi r2\triangleq\{\pair{x}{z}\mid\exists y\mathrel{.}\pair{x}{y}\in r_1\wedge \pair{y}{z} \in r_2\}$. The composition of functions is $g\comp f\triangleq\LAMBDA{x}g(f(x))$ which, for their graphs, is the  functional relation $\{\pair{x}{f(x)}\mid x\in\textsf{dom}(f)\}\fatsemi\{\pair{y}{g(y)}\mid y\in\textsf{dom}(g)\}$. A relation is functional if it is the graph of a total function.  Therefore, the set of functional relations between $\mathcal{X}$ and $\mathcal{Y}$ is defined as the set of relations such that any element of their domain has a unique image in the codomain, that is, $\wp_{\textsf{fun}}(\mathcal{X}\times\mathcal{Y})\triangleq
\{r\in\wp(\mathcal{X}\times\mathcal{Y})\mid\forall x\in\mathcal{X}\mathrel{.}\exists y\in\mathcal{Y}\mathrel{.}\pair{x}{y}\in r\wedge\forall y,z\in\mathcal{Y}\mathrel{.}\pair{x}{y}\in r\wedge\pair{x}{z}\in r\Rightarrow y=z\}$.  We extend the definition of the left relation composition $\fatsemi$ to nontermination $\bot$ by $r\fatsemi r'\triangleq\{\pair{x}{\bot}\mid \pair{x}{\bot}\in r\}\cup\{\pair{x}{y}\mid (\exists z\in \mathcal{X}\setminus\{\bot\}\mathrel{.}\pair{x}{z}\in r\wedge \pair{z}{y}\in r'\}$. 
The conditional with several alternatives, à la \texttt{C}, is $\si \ldots \alors \ldots\sinonsi \ldots\alors \ldots\sinonsi \ldots\sinon\ldots\fsi$ where $\sinonsi$ is the optional ``else if''.
\end{toappendix}
\ifshort\vskip-1em\fi
\subsection{Structural Deductive Definition of the Natural Relational Semantics}\label{sec:natural-relational-semantics-deductive}
We consider an imperative language $\mathbb{S}$ with assignments, sequential composition, conditionals, and conditional iteration with breaks. The syntax is $\texttt{\small S}\in\mathbb{S}\mathbin{{:}{:}{=}}\texttt{\small x = A}
\mid\texttt{\small x = [$a$,$b$]}
\mid\texttt{\small skip}
\mid\texttt{\small S;S}
\mid\texttt{\small if (B) S else S}
\mid\texttt{\small while (B) S}
\mid\texttt{\small break}
$. The nondeterministic assignment \texttt{\small x = [$a$, $b$]} with $a\in\mathbb{Z}\cup\{-\infty\}$ and $b\in\mathbb{Z}\cup\{\infty\}$, $-\infty-1=-\infty$, $\infty+1=\infty$ may be unbounded. \texttt{\small break} is a simple form of exception (to answer a question on exceptions by \href{https://felleisen.org/matthias/}{Matthias Felleisen} at POPL 2014 \cite{DBLP:conf/popl/CousotC14}).

States $\sigma\in\Sigma\triangleq\mathbb{X}\rightarrow \mathbb{V}$ (also called environments) map variables  $\texttt{\small x}\in\mathbb{X}$ to their values $\sigma(\texttt{\small x})$ in $\mathbb{V}$ including integers, $\mathbb{Z}\subseteq \mathbb{V}$. We let $\bot\not\in\Sigma$ denote nontermination with $\Sigma_{\bot}\triangleq\Sigma\cup\{\bot\}$. 

We deliberately leave unspecified the syntax and semantics of arithmetic expressions $\mathcal{A}\sqb{\texttt{\small A}}\in\Sigma\rightarrow \mathbb{V}$ and Boolean expressions  $\mathcal{B}\sqb{\texttt{\small B}}\in\wp(\Sigma)\simeq\Sigma\rightarrow \{\textsf{\upshape true},\textsf{\upshape false}\}$. The only assumption on expressions is the absence of side effects.

The relational semantics $\sqb{\texttt{\small S}}_{\bot}$ of a command $\texttt{\small S}\in\mathbb{S}$ is an element of $\wp(\Sigma\times\Sigma_{\bot})$. Formally, $\pair{\sigma}{\sigma'}\in\sqb{\texttt{\small S}}_{\bot}$ means that an execution of the nondeterministic command $\texttt{\small S}$ from initial state $\sigma\in\Sigma$ may terminate in final state $\sigma'\in\Sigma$ or may not terminate when $\sigma'=\bot$. (The relational semantics could have been proven to be the abstraction of a finite and infinite trace semantics \cite{Cousot-PAI-2021}.) The right-image $\LAMBDA{\sigma}\{\sigma'\in\Sigma_\bot\mid\pair{\sigma}{\sigma'}\in\sqb{\texttt{\small S}}_{\bot}\}$ of the natural relational semantics $\sqb{\texttt{\small S}}_{\bot}$ is isomorphic to Plotkin's natural denotational semantics \cite{DBLP:journals/siamcomp/Plotkin76}. Such natural relational semantics have been originated by Park \cite{DBLP:conf/ac/Park79}.

We partition the relational natural semantics into the semantics $\sqb{\texttt{\small S}}^e\in\wp(\Sigma\times\Sigma)$ of statement \texttt{\small S} terminating/\underline{e}nding normally, the semantics $\sqb{\texttt{\small S}}^b\in\wp(\Sigma\times\Sigma)$ of statement \texttt{\small S} terminating by a \texttt{\small \underline{b}reak} statement, and the semantics $\sqb{\texttt{\small S}}^\bot\in\wp(\Sigma\times\{\bot\})$ of nontermination \underline{$\bot$}. Therefore $\sqb{\texttt{\small S}}_{\bot}\triangleq\sqb{\texttt{\small S}}^e\cup\sqb{\texttt{\small S}}^b\cup\sqb{\texttt{\small S}}^\bot$. The angelic semantics 
\bgroup\abovedisplayskip0.5\abovedisplayskip\belowdisplayskip0.5\belowdisplayskip\begin{eqntabular}{rclcl}
\sqb{\texttt{\small S}}&\triangleq&\sqb{\texttt{\small S}}^e\cup\sqb{\texttt{\small S}}^b&=& 
\sqb{\texttt{\small S}}_{\bot}\cap(\Sigma\times\Sigma)
\label{eq:angelic-semantics}
\end{eqntabular}\egroup
ignores non termination. 

We follow the tradition established by Plotkin \cite{DBLP:journals/jlp/Plotkin04,DBLP:journals/jlp/Plotkin04a} to define the program semantics by structural induction (i.e\@. by induction on the program syntax) using a deductive system of rules. We extend the semantics of the deductive system using bi-induction combining induction for terminating executions and co-induction for nonterminating ones \cite{DBLP:conf/popl/CousotC92,DBLP:conf/cav/CousotC95,DBLP:journals/iandc/CousotC09}. 

Let us write judgements $\sigma\vdash\texttt{\small S}\stackrel{e}{\Rightarrow}\sigma'$ for 
$\pair{\sigma}{\sigma'}\in\sqb{\texttt{\small S}}^e$, $\sigma\vdash\texttt{\small S}\stackrel{b}{\Rightarrow}\sigma'$ for 
$\pair{\sigma}{\sigma'}\in\sqb{\texttt{\small S}}^b$, and $\sigma\vdash\texttt{\small S}\stackrel{\infty}{\Rightarrow}$ for 
$\pair{\sigma}{\bot}\in\sqb{\texttt{\small S}}^\bot$. Moreover, for the conditional iteration statement \texttt{\small W} $\triangleq$ \texttt{\small while (B) S}, we write $\sigma\vdash\texttt{\small W}\stackrel{i}{\Rightarrow}\sigma'$ to mean that if $\sigma$ is a state
before executing \texttt{\small W}, then $\sigma'$ is reachable after 0 or more iterations of the loop body (so  $\sigma=\sigma'$ for 0 iterations, before entering the loop in case (\ref{eq:while:invariant}.a)). We have the axiom and inductive rule for iterations \texttt{\small W} 
\bgroup\abovedisplayskip0.3\abovedisplayskip
\belowdisplayskip0.3\belowdisplayskip
\begin{eqntabular}{c@{\qquad\qquad}c}
\textup{(a)}\quad\sigma\vdash\texttt{\small W}\stackrel{i}{\Rightarrow}\sigma
&
\textup{(b)}\quad\frac{\mathcal{B}\sqb{\texttt{\small B}}\sigma,\quad \sigma\vdash\texttt{\small S}\stackrel{e}{\Rightarrow}\sigma',\quad \sigma'\vdash\texttt{\small W}\stackrel{i}{\Rightarrow}\sigma''}{\sigma\vdash\texttt{\small W}\stackrel{i}{\Rightarrow}\sigma''}
\label{eq:while:invariant}
\end{eqntabular}\egroup
The following axioms define termination (these are axioms since the precondition has been previously established either by $\stackrel{i}{\Rightarrow}$ or by structural induction). (\ref{eq:W:e}.b) is for termination by a \texttt{\small break}.
\bgroup\abovedisplayskip0.33\abovedisplayskip\begin{eqntabular}{c@{\qquad\qquad}c}
\textup{(a)}\quad\frac{\sigma\vdash\texttt{\small W}\stackrel{i}{\Rightarrow}\sigma',\quad\mathcal{B}\sqb{\neg\texttt{\small B}}\sigma'}{\sigma\vdash\texttt{\small W}\stackrel{e}{\Rightarrow}\sigma'}
&
\textup{(b)}\quad\frac{\sigma\vdash\texttt{\small W}\stackrel{i}{\Rightarrow}\sigma',\quad\mathcal{B}\sqb{\texttt{\small B}}\sigma',\quad\sigma'\vdash\texttt{\small S}\stackrel{b}{\Rightarrow}\sigma''}{\sigma\vdash\texttt{\small W}\stackrel{e}{\Rightarrow}\sigma''}
\label{eq:W:e}
\end{eqntabular}\egroup
The following axiom and co-inductive rule define nontermination (the left rule is an axiom since the precondition has already been defined either by $\stackrel{i}{\Rightarrow}$ or by structural induction). Rule (\ref{eq:W:infty}.b) right-marked $\infty$ is co-inductive.
\bgroup\abovedisplayskip0.33\abovedisplayskip\begin{eqntabular}{c@{\qquad}c}
\textup{(a)}\quad\frac{\sigma\vdash\texttt{\small W}\stackrel{i}{\Rightarrow}\sigma',\quad\mathcal{B}\sqb{\texttt{\small B}}\sigma',\quad \sigma'\vdash\texttt{\small S}\stackrel{\infty}{\Rightarrow}}{\sigma\vdash\texttt{\small W}\stackrel{\infty}{\Rightarrow}}
&
\textup{(b)}\quad\frac{\mathcal{B}\sqb{\texttt{\small B}}\sigma,\quad \sigma\vdash\texttt{\small S}\stackrel{e}{\Rightarrow}\sigma', \quad
\sigma'\vdash\texttt{\small W}\stackrel{\infty}{\Rightarrow}}{\sigma\vdash\texttt{\small W}\stackrel{\infty}{\Rightarrow}}\infty
\label{eq:W:infty}
\end{eqntabular}\egroup

\subsection{State Properties, Semantics Properties, and Collecting Semantics}
We define properties in extension as the set of elements of a universe $\mathbb{U}$ that have this property. So false is $\emptyset$, true is $\mathbb{U}$, logical implication is $\subseteq$, disjunction is $\cup$, conjunction is $\cap$, negation is $\neg P\triangleq\mathbb{U}\setminus P$ and $\sextuple{\wp(\mathbb{U})}{\emptyset}{\mathbb{U}}{\cup}{\cap}{\neg}$ is a complete Boolean lattice \cite{DBLP:books/sp/Gratzer98}.

For example, properties of states $\sigma\in\Sigma_\bot$ (considered to be the universe) belong to
$\wp(\Sigma_\bot)$. The singleton $\{\bot\}$ is the property ``not to terminate'', $\emptyset$ is ``false'', $\{\sigma_1,\ldots,\sigma_n\}\subseteq\Sigma$ is ``to terminate with any one of the states $\sigma_1,\ldots,\sigma_n\in\Sigma$'',  $\{\sigma_1,\ldots,\sigma_n,\bot\}$ is `` ``to terminate with any one of the states $\sigma_1,\ldots,\sigma_n\in\Sigma$ or not to terminate'', $\Sigma$ is to terminate, $\Sigma_{\bot}$ is ``true'' i.e\@. ``to terminate with any state in $\Sigma$ or not to terminate'' (the common alternative to terminate with an error is assumed to be encoded with some specific values in the set $\Sigma$ of states).

Let $\sqb{\texttt{\small S}}_\bot\in\wp(\Sigma\times\Sigma_\bot)$ be the natural relational semantics of programs $\texttt{\small S}\in\mathbb{S}$ in Sect\@. \ref{sec:natural-relational-semantics-deductive}. When defined in extension, semantic properties belong to $\wp(\wp(\Sigma\times\Sigma_\bot))$. The program collecting semantics 
$\crb{\texttt{\small S}_\bot}\triangleq \{\sqb{\texttt{\small S}_\bot}\}\in\wp(\wp(\Sigma\times\Sigma_\bot))$ is the strongest (hyper) property of program $\texttt{\small S}$.

\section{Galois Connections}\label{sect:GaloisConnections}
Galois connections \cite[Ch\@. 11]{Cousot-PAI-2021} are used throughout the paper. They formalize correspondences between program properties which preserve implication and one is less precise/expressive than the other. The interest is that proofs in the abstract are valid in the concrete (or equivalent in case of Galois isomorphisms). Moreover, there is a most precise way to abstract any concrete property or logic, which provides a guideline for calculational design of logics from a program semantics.
\ifshort The definition and  properties of Galois connections are recalled in the appendix \proofinapx.
\fi
\begin{toappendix}
Formally,
a Galois connection $\pair{C}{\sqsubseteq}\galois{\alpha}{\gamma}\pair{A}{\preceq}$ is a pair $\pair{\alpha}{\gamma}$ of functions between posets 
$\pair{C}{\sqsubseteq}$ and $\pair{A}{\preceq}$ satisfying $\forall x\in C\mathrel{.}\forall y\in A\mathrel{.}\alpha(x)\preceq y\Leftrightarrow x\sqsubseteq\gamma(y)$. We use a double headed arrow \rlap{$\mskip9.5mu{\rightarrow}$}$\longrightarrow$ to indicate surjection and $\GaloiS{}{}$ for bijections. We use classic properties of Galois connections for which proofs are found in \cite[Ch\@. 11]{Cousot-PAI-2021} and \cite{DeneckeErneWismath-GC-03}. In particular, the abstraction $\alpha$ preserves existing arbitrary joins (so in strict and increasing) and dually the concretization $\gamma$ preserves existing arbitrary meets.

\begin{proposition}\label{prop:GC-composition} The composition of Galois connections with Galois connections and isomorphisms is a Galois connection. 
\end{proposition}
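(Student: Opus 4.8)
The plan is to reduce everything to the single defining property of a Galois connection, namely the adjunction equivalence $\alpha(x)\preceq y\Leftrightarrow x\sqsubseteq\gamma(y)$, and to verify that this property is preserved under composition by a short chain of equivalences. First I would fix two composable Galois connections, say $\pair{C}{\sqsubseteq_C}\galois{\alpha_1}{\gamma_1}\pair{A}{\sqsubseteq_A}$ and $\pair{A}{\sqsubseteq_A}\galois{\alpha_2}{\gamma_2}\pair{B}{\sqsubseteq_B}$, where the codomain poset of the first is the domain poset of the second. The candidate composite is $\pair{C}{\sqsubseteq_C}\galois{\alpha_2\comp\alpha_1}{\gamma_1\comp\gamma_2}\pair{B}{\sqsubseteq_B}$; note that the abstractions compose in the forward order while the concretizations compose in the reverse order, so that the types match.

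The key step is then purely calculational. For arbitrary $x\in C$ and $z\in B$, I would chain $(\alpha_2\comp\alpha_1)(x)\sqsubseteq_B z \Leftrightarrow \alpha_2(\alpha_1(x))\sqsubseteq_B z \Leftrightarrow \alpha_1(x)\sqsubseteq_A\gamma_2(z) \Leftrightarrow x\sqsubseteq_C\gamma_1(\gamma_2(z))\Leftrightarrow x\sqsubseteq_C(\gamma_1\comp\gamma_2)(z)$, where the first and last equivalences merely unfold the definition of $\comp$; the second is the adjunction of the second connection applied to $\alpha_1(x)$ and $z$; and the third is the adjunction of the first connection applied to $x$ and $\gamma_2(z)$. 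This is exactly the adjunction property of the composite, so it is a Galois connection.

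Finally, for the clause involving isomorphisms, I would observe that an order isomorphism $\pair{A}{\sqsubseteq_A}\GaloiS{\alpha}{\gamma}\pair{B}{\sqsubseteq_B}$ is itself a Galois connection, its two mutually inverse monotone bijections satisfying the adjunction with both inequalities holding as equalities. Hence composing a Galois connection with an isomorphism on either side is merely a special instance of the composition already treated, and no separate argument is required.

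I do not expect a genuine obstacle here: the only points requiring care are getting the composition order right (forward for abstractions, reverse for concretizations, so that the intermediate poset $\pair{A}{\sqsubseteq_A}$ cancels correctly) and confirming that an isomorphism qualifies as a Galois connection, after which the isomorphism clauses collapse into the general case.
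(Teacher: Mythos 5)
Your composition argument for two genuine Galois connections is exactly the paper's: the same chain of equivalences $(\alpha_2\circ\alpha_1)(x)\preceq z\Leftrightarrow\alpha_1(x)\preceq\gamma_2(z)\Leftrightarrow x\sqsubseteq(\gamma_1\circ\gamma_2)(z)$ through the intermediate poset, with abstractions composed forward and concretizations backward. The only divergence is in the isomorphism clause. You dispose of it by observing that an order isomorphism is itself a Galois connection, so that clause becomes a special case of the composition you already proved; the paper instead carries out two further explicit calculations, because its Galois isomorphisms are stated as bijections $\pair{\mathcal{X}}{=}\GaloiS{\alpha}{\alpha^{-1}}\pair{\mathcal{Y}}{=}$ with respect to \emph{equality}, so the partial order on the source side is not given in advance but must be defined by pullback, $x\sqsubseteq y\Leftrightarrow\alpha(x)\preceq\alpha(y)$, before the composite can even be typed as a Galois connection. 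Your reduction is sound whenever the isomorphism is already an order isomorphism between the posets in play; to cover the paper's formulation verbatim you would only need to add the one line equipping the source with the induced order, after which your observation applies unchanged. There is no mathematical gap in either route; the paper's version buys the explicit construction of the induced order (which it silently reuses later when chaining isomorphic abstractions such as $\alpha_{\textup{G}}$ and $\alpha^{-1}$ into longer compositions), while yours is shorter and makes the conceptual point that isomorphisms are degenerate Galois connections.
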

\begin{proof}[Proof of (\ref{prop:GC-composition})]
\hyphen{5}\quad Assume that $\pair{\mathcal{X}}{\sqsubseteq}\galois{\alpha_1}{\gamma_1}\pair{\mathcal{Y}}{\preceq}$ and
$\pair{\mathcal{Y}}{\preceq}\galois{\alpha_2}{\gamma_2}\pair{\mathcal{Z}}{\leq}$. Then
\begin{calculus}[$\Leftrightarrow$\ ]
\formula{\alpha_2\comp\alpha_1 (x)\leq y}\\
$\Leftrightarrow$
\formulaexplanation{\alpha_1 (x)\preceq \gamma_2(y)}{by $\pair{\mathcal{Y}}{\preceq}\galois{\alpha_2}{\gamma_2}\pair{\mathcal{Z}}{\leq}$}\\[2pt]
$\Leftrightarrow$
\formulaexplanation{x\sqsubseteq \gamma_1\comp\gamma_2(y)}{by $\pair{\mathcal{X}}{\sqsubseteq}\galois{\alpha_1}{\gamma_1}\pair{\mathcal{Y}}{\preceq}$}
\end{calculus}
proving $\pair{\mathcal{X}}{\sqsubseteq}\galois{\alpha_2\,\comp\,\alpha_1}{\gamma_1\,\comp\,\gamma_2}\pair{\mathcal{Z}}{\leq}$.

\hyphen{5}\quad Assume that $\pair{\mathcal{X}}{=}\GaloiS{\alpha}{\alpha^{-1}}\pair{\mathcal{Y}}{=}$ and
$\pair{\mathcal{Y}}{\preceq}\galois{\alpha_2}{\gamma_2}\pair{\mathcal{Z}}{\leq}$. Define the partial order $\pair{\mathcal{X}}{\sqsubseteq}$
by $x\sqsubseteq y$ $\Leftrightarrow$ $\alpha(x)\preceq\alpha(y)$.
\begin{calculus}[$\Leftrightarrow$\ ]
\formula{\alpha_2\comp\alpha (x)\leq y}\\
$\Leftrightarrow$
\formulaexplanation{\alpha (x)\preceq \gamma_2(y)}{by $\pair{\mathcal{Y}}{\preceq}\galois{\alpha_2}{\gamma_2}\pair{\mathcal{Z}}{\leq}$}\\
$\Leftrightarrow$
\formulaexplanation{\alpha (x)\preceq \alpha\comp\alpha^{-1}\comp\gamma_2(y)}{by $\pair{\mathcal{X}}{=}\GaloiS{\alpha}{\alpha^{-1}}\pair{\mathcal{Y}}{=}$ }\\
$\Leftrightarrow$
\formulaexplanation{x\sqsubseteq \alpha^{-1}\comp\gamma_2(y)}{def\@. $\sqsubseteq$}
\end{calculus}
proving $\pair{\mathcal{X}}{\sqsubseteq}\galois{\alpha_2\,\comp\,\alpha}{\alpha^{-1}\,\comp\,\gamma_2}\pair{\mathcal{Z}}{\leq}$.

\hyphen{5}\quad The proof that $\pair{\mathcal{X}}{\sqsubseteq}\galois{\alpha_1}{\gamma_1}\pair{\mathcal{Y}}{\preceq}$ and
$\pair{\mathcal{Y}}{=}\GaloiS{\alpha}{\alpha^{-1}}\pair{\mathcal{Z}}{=}$ implies $\pair{\mathcal{X}}{\sqsubseteq}\galois{\alpha\,\comp\,\alpha_1}{\gamma_1\,\comp\,\alpha^{-1}}\pair{\mathcal{Z}}{\leq}$ where $x\leq y$ $\Leftrightarrow$ $\alpha^{-1}(x) \preceq\alpha^{-1}(y)$ is similar.
\end{proof}

While a Galois connection $\pair{C}{\sqsubseteq}\galois{\alpha}{\gamma}\pair{A}{\preceq}$ is appropriate for over approximation, a semidual Galois correspondence $\pair{C}{\sqsubseteq}\galois{\alpha}{\gamma}\pair{A}{\succeq}$ (as originally defined by \'Evariste Galois for $\mathord{\sqsubseteq}\triangleq\mathord{\preceq}\triangleq\mathord{\subseteq}$) is convenient for under approximation (as discussed e.g\@. by \cite{DBLP:conf/fossacs/AscariBG22}). The dual $\pair{A}{\succeq}\galois{\gamma}{\alpha}\pair{C}{\sqsupseteq}$ may also be useful (e.g\@. to approximate greatest rather that least fixpoints).

By Prop\@. \ref{prop:GC-composition}, the composition of Galois connections (with corresponding partial orders) is a Galois connection. We will compose  Galois connections to show that all known transformational logics are abstractions of the natural relation semantics.
\end{toappendix}

\section{The Design of a Natural Transformational Logic Theory by Composing Abstractions of the Natural Relational Semantics}
\label{sec:DesignTheory}
A program logic consists of formal statements some of which are true and constitute the theory of the logic. Our objective in this section
is to characterize the theory of transformational logics by abstraction of the natural relational collecting semantics. This abstraction is obtained by composition of basic
Galois connections and functors introduced in this section.

\begin{example}\label{ex:fact:spec}The body \texttt{\small fact $\equiv$  while (n!=0) \{ f = f * n; n = n - 1;\}} of the factorial 
\texttt{\small f = 1; fact} can be specified as
$
\{ n=\underline{n}\wedge f=1\}\,\texttt{\small fact}\,\{(\underline{n}\geqslant0\wedge f=!\underline{n})\vee (\underline{n}<0\wedge n=f=\bot)\}
$
where, following Manna \cite{DBLP:journals/jcss/Manna71},  $\underline{\texttt{\small x}}$ or $\texttt{\small x}_0$ denotes the initial value of variable $\texttt{\small x}$ in the postcondition. When later incorporating \texttt{\small break} statements, the specification will be $\{ n=\underline{n}\wedge f=1\}\,\texttt{\small fact}\,\{ok:(\underline{n}\geqslant0\wedge f=!\underline{n})\vee (\underline{n}<0\wedge n=f=\bot),\ br:\textsf{false}\}$.

Assuming $P\neq\emptyset$ (false) and $\bot\not\in Q$, $\{P\}\texttt{\small S}\{Q\}$ specifies total correctness, as does Manna and Pnueli logic \cite{DBLP:journals/acta/MannaP74}. $\{P\}\texttt{\small S}\{\bot\}$ specifies definite non termination. Otherwise, when $\bot\in Q$, $\{P\}\texttt{\small S}\{Q\}$ expresses partial correctness, as does Hoare logic \cite{DBLP:journals/cacm/Hoare69}. 

By adding auxiliary variables (see Sect\@. \ref{sec:auxiliary:variables}\ifshort\ in the appendix\fi), this specification can also be partially formulated by two Hoare triples $\{ n=\underline{n}\geqslant0\wedge f=1\}\,\texttt{\small fact}\,\{f=!\underline{n}\}$ (although not ensuring termination) and $\{n<0\wedge f=1\}\,\texttt{\small fact}\,\{\textsf{false}\}$ (ensuring nontermination) but the conjunction of Hoare triples is not a Hoare triple and anyway the partial specification cannot preclude nontermination when $\underline{n}\geqslant0$. 

This specification cannot be expressed by Manna and Pnueli \cite{DBLP:journals/acta/MannaP74} logic since the program is not totally correct. 

The theory of the adequate logic (that we call the natural transformational over approximation logic) will be formally specified in (\ref{eq:natural-upper-transformational-logic-theory}) as $\{P\}\texttt{\small S}\{Q\}$, $P\in\wp(\Sigma\times\Sigma)$, $Q\in \wp(\Sigma\times\Sigma_\bot)$ if and only if $\forall \pair{\underline{\sigma}}{\sigma}\in P\mathrel{.}\forall \sigma'\mathrel{.}
\pair{\sigma}{\sigma'}\in\sqb{\texttt{\small S}}_\bot \Rightarrow\pair{\underline{\sigma}}{\sigma'}\in{Q}$. 
The proof system of this logic is designed in Sect\@. \ref{Calculational-Design-of-the-Extended-Hoare-Logic}.
\end{example}

\subsection{Collecting Semantics to Semantics Abstraction}\label{sec:Collecting-semantics-semantics-abstraction}
The collecting semantics of a program component is its strongest property, so transformational logic statements are weaker abstract properties that we specify by composition of Galois connections. The first abstraction $\alpha_C$ abstracts hyper properties into properties.

Let $\mathcal{D}$ be a set (e.g\@. $\mathcal{D}=\Sigma\times\Sigma_\bot$ for the  natural relational semantics of Sect\@. \ref{sec:natural-relational-semantics-deductive}).
There is a Galois connection \bgroup\abovedisplayskip-3.5pt\belowdisplayskip 0.5\belowdisplayskip\begin{eqntabular}{c}
\pair{\wp(\wp(\mathcal{D}))}{\subseteq}\galoiS{\alpha_C}{\gamma_C}\pair{\wp(\mathcal{D})}{\subseteq}\label{eq:hyper-tp-trace-abs}
\end{eqntabular}\egroup
where
$\alpha_C(P)\triangleq\bigcup P$ is surjective and $\gamma_C(S)\triangleq\wp(S)$ is injective (since 
$\alpha_C(P)\subseteq S$
$\Leftrightarrow$
$\bigcup P\subseteq S$
$\Leftrightarrow$
$P\subseteq \wp(S)$
$\Leftrightarrow$
$P\subseteq \gamma_C(S)$).

\begin{example} If $\mathcal{D}$ is a set of finite or infinite traces, $\sqb{\texttt{\small S}}_\bot$ defines the finite or infinite execution traces of \texttt{\small S},  $\crb{\texttt{\small S}}_{\!\bot}$ is the strongest hyper property of program \texttt{\small S} \cite{DBLP:journals/jcs/ClarksonS10}, and
$\alpha_C(\crb{\texttt{\small S}}_{\!\bot})=\sqb{\texttt{\small S}}_\bot$ is the strongest semantic  property of \texttt{\small S} (called a
trace property in \cite{DBLP:journals/jcs/ClarksonS10}).
\end{example}

Our first abstraction is therefore $\alpha_C(\mskip-1mu\crb{\texttt{\small S}}_{\!\bot}\mskip-1mu)$
=
$\alpha_C(\mskip-1mu\{\sqb{\texttt{\small S}}_{\bot}\}\mskip-1mu)$
=
$\sqb{\texttt{\small S}}_\bot$ where this natural relational semantics defines in Sect\@. \ref{sec:natural-relational-semantics-deductive} specifies the program properties of interest.

\subsection{Semantics to  Relational Postcondition Transformer $\normalfont\textsf{\upshape Post}$ Abstraction}\label{eq:post}
While the natural relational semantics establishes a relation between initial and final states or nontermination, the postcondition transformers establish a relation between properties of initial states and properties of final states or nontermination. The postcondition may be an assertion on final states only (as in Hoare partial correctness logic \cite{DBLP:journals/cacm/Hoare69}) or a relation between initial and final states (as in Manna partial correctness \cite{DBLP:journals/jcss/Manna71}). The postcondition may also include nontermination. Although Hoare logic is assertional, the initial values of variables can be recorded into auxiliary variables (see Sect\@. \ref{sec:auxiliary:variables}\ifshort\ in the appendix\fi). We start with the relational case since assertional property transformers are abstractions of relational ones (as shown in Sect\@. \ref{sec:Relational-assertional-abstraction}).

The relational postcondition transformer $\normalfont\textsf{\upshape Post}$ is also called the relational forward/right-image/\allowbreak post-image/strongest consequent/strongest post condition. 

\begin{eqntabular}{rcl}
\textsf{\upshape Post}&\in&\wp(\mathcal{X}\times\mathcal{Y})\rightarrow\wp(\mathcal{Z}\times\mathcal{X})\rightarrow\wp(\mathcal{Z}\times\mathcal{X})\nonumber\\
\textsf{\upshape Post}(r)P&\triangleq&\{\pair{\sigma_0}{\sigma'}\mid\exists\sigma\mathrel{.}\pair{\sigma_0}{\sigma}\in P\wedge\pair{\sigma}{\sigma'}\in r\}
\label{eq:def:Post}
\end{eqntabular}

$\textsf{\upshape Post}(\alpha_C(\crb{\texttt{\small S}}_{\!\bot}))P$
=
$\textsf{\upshape Post}\sqb{\texttt{\small S}}_\bot P$ is a relation between initial states $\sigma$ related to $\sigma_0$ satisfying the precondition $P$ and final states $\sigma'$ related to $\sigma_0$ upon termination of \texttt{\small S} or  $\sigma'=\bot$ in case of nontermination. This is the basis for the natural relational transformational logic (as in example \ref{ex:fact:spec} and Sect\@. \ref{Calculational-Design-of-the-Extended-Hoare-Logic}), except for the use of a transformer instead of logic triples. 
We will later prove in (\ref{eq:Post-Pret:Pre-Postt-GC}) that $\textsf{\upshape Post}$ is the lower adjoint of a Galois connection.
\begin{example}\label{ex-Floyd-assignment}Incrementation is characterized by \textsf{\upshape Post}(\texttt{\small x = x+1})(\texttt{\small x = x$_0$}) = (\texttt{\small x = x$_0+1$}) which, representing the semantics and relational properties as sets, is $\textsf{\upshape Post}(\{\pair{x}{x+1}\mid x\in\mathbb{Z}\})
\{\pair{x_0}{x}\mid x=x_0\in\mathbb{Z}\} = \{\pair{x_0}{x'}\mid\exists x\mathrel{.}\pair{x_0}{x}\in \{\pair{x_0}{x}\mid x=x_0\in\mathbb{Z}\}\wedge\pair{x}{x'}\in \{\pair{x}{x+1}\mid x\in\mathbb{Z}\}\} = \{\pair{x_0}{x_0+1}\mid x_0\in\mathbb{Z}\}$. Here, $\sigma_0$ is the initial value of the variables before the assignment  but, in general, this initial relation can be arbitrary.
More generally, Floyd's strongest postcondition for assignment \texttt{\small x := A} \cite{Floyd67-1} is $\textsf{\upshape Post}\sqb{\texttt{\small x := A}}P$ = $\{\pair{\sigma_0}{\sigma'}\mid\exists\sigma\mathrel{.}\pair{\sigma_0}{\sigma}\in P\wedge \sigma'=\sigma[\texttt{\small x}\leftarrow\mathcal{A}\sqb{\texttt{\small A}}\sigma]\}$.
\end{example}

\subsection{Relational Postcondition Transformer to Antecedent/Consequent Pairs}
Transformational logic triples $\{P\}\texttt{\small S}\{Q\}$ associate pairs $\pair{P}{Q}$ of predicates to each program command $\texttt{\small S}\in\mathbb{S}$. So the theory of the logic is the set $\{\pair{P}{Q}\mid \{P\}\texttt{\small S}\{Q\}\}$ for each statement \texttt{\small S}. For the natural transformational logic, this theory contains the graph of the $\textsf{\upshape Post}(\sqb{\texttt{\small S}}_\bot)$ function. Conversely, from this graph, we can recover the strongest valid triples $\{P\}\texttt{\small S}\{Q\}$. (Notice that we say ``contains'' not ``is'' and ``strongest'' since, in absence of a consequence rule, the graph does not contain all valid triples, only the $\{P\}\texttt{\small S}\{\textsf{\upshape Post}(\sqb{\texttt{\small S}}_\bot)P\}$ ones. Consequence rules will be introduced thanks to another abstraction discussed in next Sect\@. \ref{sec:Weakening-strengthening-abstractions}.)

More generally, a function $f\in\mathcal{X}\rightarrow\mathcal{Y}$ is isomorphic to its graph $\alpha_{\textup{G}}(f)=\{\pair{x}{f(x)}\mid x\in \mathcal{X}\}$. This graph $\alpha_{\textup{G}}(f)$ is a functional relation.
We have the Galois isomorphism \proofinapx
\bgroup\abovedisplayskip0.75\abovedisplayskip\belowdisplayskip0.75\belowdisplayskip
\begin{eqntabular}{c}
\pair{\mathcal{X}\rightarrow\mathcal{Y}}{=}
\GaloiS{\alpha_{\textup{G}}}{\gamma_{\textup{G}}}
\pair{\wp_{\textsf{fun}}(\mathcal{X}\times\mathcal{Y})}{=}
\label{eq:function-graph-GC}
\end{eqntabular}\egroup
where 
$\gamma_{\textup{G}}(r)\triangleq\LAMBDA{x}(y\mbox{\itshape\ such\ that\ }\pair{x}{y}\in r)$ is  uniquely well-defined since $r$ is a functional relation. 
\begin{toappendix}
\begin{proof}[Proof of (\ref{eq:function-graph-GC})]
\begin{calculus}[$\Leftrightarrow$\ ]
\formulaexplanation{\alpha_{\textup{G}}(f)=r}{assuming $r\in{\wp_{\textsf{fun}}({X}\times{Y})}$}\\
$\Leftrightarrow$
\formulaexplanation{\{\pair{x}{f(x)}\mid x\in{X}\}= r}{def\@. $\alpha_{\textup{G}}(f)$}\\
$\Leftrightarrow$
\formulaexplanation{\forall x\in{X}\mathrel{.}f(x)=(y\mbox{\itshape\ such\ that\ }\pair{x}{y}\in r)}{$r$ is a functional relation}\\
$\Leftrightarrow$
\formulaexplanation{f=\LAMBDA{x}(y\mbox{\itshape\ such\ that\ }\pair{x}{y}\in r)}{def\@. function equality}\\
$\Leftrightarrow$
\lastformulaexplanation{f=\gamma_{\textup{G}}(r)}{def\@. $\gamma_{\textup{G}}(r)$}{\mbox{\qed}}
\end{calculus}
\let\qed\relax
\end{proof}
\end{toappendix}
We have \proofinapx
\bgroup\arraycolsep0.5\arraycolsep
\begin{eqntabular}{rcl}
\alpha_{\textup{G}}(\textsf{\upshape Post}(\alpha_C(\crb{\texttt{\small S}}_{\!\bot})))
&=&
\{\pair{P}{\{\pair{\sigma_0}{\sigma'}\mid\exists\sigma\mathrel{.}\pair{\sigma_0}{\sigma}\in P\wedge\pair{\sigma}{\sigma'}\in \sqb{\texttt{\small S}}_\bot\}}\mid P\in \wp(\Sigma\times \Sigma)\}
\label{eq:alphaG-Post-natural-semantics}
\end{eqntabular}\egroup
\begin{toappendix}
\begin{proof}[Proof of (\ref{eq:alphaG-Post-natural-semantics})]
\begin{calculus}
\formula{\alpha_{\textup{G}}(\textsf{\upshape Post}(\alpha_C(\crb{\texttt{\small S}}_{\!\bot})))}\\
=
\formulaexplanation{\alpha_{\textup{G}}(\textsf{\upshape Post}(\sqb{\texttt{\small S}}_\bot))}{from Sect\@. \ref{sec:Collecting-semantics-semantics-abstraction}}\\
=
\formulaexplanation{\{\pair{P}{\textsf{\upshape Post}(\sqb{\texttt{\small S}}_\bot))P}\mid P\in \wp(\Sigma)\}}{def\@. $\alpha_{\textup{G}}$}\\
=
\lastformulaexplanation{\{\pair{P}{\{\pair{\sigma_0}{\sigma'}\mid\exists\sigma\mathrel{.}\pair{\sigma_0}{\sigma}\in P\wedge\pair{\sigma}{\sigma'}\in\sqb{\texttt{\small S}}_\bot\}}\mid P\in \wp(\Sigma)\}}{def\@. (\ref{eq:def:Post}) of $\textsf{\upshape Post}$}{\mbox{\qed}}
\end{calculus}\let\qed\relax
\end{proof}
\end{toappendix}
So $\alpha_{\textup{G}}(\textsf{\upshape Post}(\alpha_C(\crb{\texttt{\small S}}_{\!\bot})))$ is the set of pairs $\pair{P}{Q}$ such that
$Q$ is the strongest relational postcondition of $P$ for the natural relational semantics $\sqb{\texttt{\small S}}_\bot$. It is not a program logic since, as was the case for transformers, it is missing a consequence rule.
\begin{example}Floyd/Hoare logic rules  \cite{DBLP:journals/jacm/Hoare78} provide the strongest assertional post-condi\-tion except for the iteration and consequence rule, e.g\@.,
$\{P\}\texttt{\small skip}\{P\}$ is $\{P\}\texttt{\small skip}\{\textsf{\textup{post}}(\sqb{\texttt{\small skip}})P\}$ (see (\ref{eq:def:post}) below for the classic definition of \textsf{\textup{post}}). But excluding the consequence rule and using the following iteration rule (for bounded nondeterminism)
\bgroup\abovedisplayskip0.5\abovedisplayskip\belowdisplayskip0.5\belowdisplayskip\begin{eqntabular}{c}
\frac{I^0=P,\quad\forall n\in\mathbb{N}\mathrel{.}\{I^n\wedge\texttt{\small B}\}\,\texttt{\small S}\,\{I^{n+1}\}}{\{P\}\,\texttt{\small while (B) S}\,\{\exists n\in\mathbb{N}\mathrel{.}I^n\wedge \neg B\}}
\label{eq:hoare-while-strongest}
\end{eqntabular}\egroup
would yield the strongest post condition in all cases.
\end{example}

\subsection{Weakening and Strengthening Abstractions}\label{sec:Weakening-strengthening-abstractions}

Following \cite{Burstall-MI5-69} to make program proofs using the natural relational semantics proof rules \ref{eq:while:invariant}--\ref{eq:W:infty}, or, by (\ref{eq:alphaG-Post-natural-semantics}), the transformer $\textsf{\upshape Post}(\sqb{\texttt{\small S}}_\bot)$ or, isomorphically by (\ref{eq:function-graph-GC}), its graph $\{\pair{P}{\{\pair{\sigma_0}{\sigma'}\mid\exists\sigma\mathrel{.}\pair{\sigma_0}{\sigma}\in P\wedge\pair{\sigma}{\sigma'}\in \sqb{\texttt{\small S}}_\bot\}}\mid P\in \wp(\Sigma\times \Sigma)\}$ $\in$
$\wp(\wp(\Sigma\times\Sigma)\times\wp(\Sigma\times\Sigma_\bot))$ is inadequate  since the semantics  describes executions exactly, without any possibility of approximation.

In contrast, as first shown by Turing \cite{Turing49-program-proof}, using executions properties is the basis for elegant and concise program correctness proofs since it allows for approximations.  

This is even implicitly acknowledged by the most enthusiastic supporter of transformers. Edsger W\@. D\@. Dijkstra in \cite{DBLP:books/ph/Dijkstra76} has chapters 0 to 4 defining predicate transformers until chapter 5 introducing properties weakening by implication (i.e. one form of approximation) as well as the ``Fundamental Invariance Theorem for Loops'' (i.e\@. fixpoint induction Th\@. \ref{th:Fixpoint-Overapproximation} replacing the strongest loop invariant by weaker ones). 
Moreover, in chapter 6, it is explained how ``to choose an appropriate proof for termination'' (for bounded nondeterminism). Iterative program design and proofs are only considered after over approximation (invariance)  and under approximation (for termination) have been introduced, from chapter 7 on.  We have to do the same, but for any transformer (including $\textsf{\upshape Post}\sqb{\texttt{\small S}}_\bot$).  

For that purpose, we introduce weakening and  strengthening abstractions. Consequence rules, understood as an abstraction losing precision on program properties, will be a specific instance for a specific transformer. We also need compatible general induction principles to handle loops (of which invariance and (non)termination will be specific instances). Such induction principles are not relative to expressivity but to proofs, and so will be considered in part 2 of the paper.

\subsubsection{The Over Approximation Abstraction}
Pairs of properties $\pair{P}{Q}\in R\in\wp(\wp(\mathcal{X})\times\wp(\mathcal{Y}))$ can be approximated by weakening or strengthening $P$ and/or
$Q$. For Hoare logic \cite{DBLP:journals/cacm/Hoare69}, we can strengthen $P$ by $P'\subseteq P$ and weaken $Q$ by $Q'$ such that $Q\subseteq Q'$. This is the over approximation abstraction $\textsf{\textup{post}}({\supseteq},{\subseteq})R$ = $\{\pair{P'}{Q'}\mid
\exists \pair{P}{Q}\in R\mathrel{.}\pair{P}{Q}\mathrel{{\supseteq},{\subseteq}}\pair{P'}{Q'}\}$ = $\{\pair{P'}{Q'}\mid
\exists \pair{P}{Q}\in R\mathrel{.}{P}\supseteq {P'}\wedge {Q}\subseteq{Q'}\}$  = $\{\pair{P'}{Q'}\mid
\exists \pair{P}{Q}\in R\mathrel{.}{P'}\subseteq {P}\wedge {Q}\subseteq{Q'}\}$ by defining the classic assertional right image transformer (denoted $Xr$ in \cite{DBLP:conf/focs/Pratt76})
\bgroup\abovedisplayskip0.55\abovedisplayskip\belowdisplayskip0.5\belowdisplayskip
\begin{eqntabular}{rcl}
\textsf{\upshape post}(r)X&\triangleq&\{{y}\mid\exists x\in X\mathrel{.}\pair{x}{y}\in r\}
\label{eq:def:post}
\end{eqntabular}\egroup
and the component wise ordering $\mathrel{{\sqsubseteq},{\preceq}}$ on pairs
\bgroup\abovedisplayskip0.75\abovedisplayskip\belowdisplayskip0.75\belowdisplayskip
\begin{eqntabular}{rcl}
\pair{x}{y}\mathrel{{\sqsubseteq},{\preceq}}\pair{x'}{y'}&\triangleq& x\sqsubseteq x'\wedge y\preceq y'
\end{eqntabular}\egroup
If $r\in\wp(\mathcal{X}\times\mathcal{Y})$, we have the classic Galois connection
\bgroup\abovedisplayskip0.75\abovedisplayskip\belowdisplayskip0.75\belowdisplayskip
\begin{eqntabular}{c}
\pair{\wp(\mathcal{X})}{\subseteq}\galois{\textsf{\upshape post}(r)}{\widetilde{\textsf{\upshape pre}}(r)}\pair{\wp(\mathcal{Y})}{\subseteq}\label{eq:def:post:GC}
\end{eqntabular}\egroup
where ${\widetilde{\textsf{\upshape pre}}(r)}Q=\{x\mid\forall y\mathrel{.}\pair{x}{y}\in r\Rightarrow y\in Q\}$\ifshort\ (see example \ref{ex-Hoare-assignment} in the appendix)\fi.
\begin{toappendix}
\begin{example}\label{ex-Hoare-assignment}Continuing example \ref{ex-Floyd-assignment}, instead of Floyd's verification condition $\{P\}\texttt{\small S}\{\textsf{\textup{post}}\sqb{\texttt{\small S}}P\}$ \cite{Floyd67-1},
Hoare logic \cite{DBLP:journals/cacm/Hoare69} uses $\{\widetilde{\textsf{pre}}\sqb{\texttt{\small S}}Q\}\texttt{\small S}\{Q\}$ that is, $\{\{\sigma\mid\sigma[\texttt{\small x}\leftarrow\mathcal{A}\sqb{\texttt{\small A}}\sigma\in Q\}\}\texttt{\small S}\{Q\}$,
in syntactic form $\{Q[\texttt{\small A}/\texttt{\small x}]\}\texttt{\small S}\{Q\}$ where $Q[\texttt{\small A}/\texttt{\small x}]$ is $Q$ where \texttt{\small A} is substituted for \texttt{\small x}. By (\ref{eq:def:post:GC}), the two verification conditions for assignment are equivalent.
\end{example}
\end{toappendix}
The theory of the \emph{natural transformational over approximation logic} is
therefore  \proofinapx
\bgroup\abovedisplayskip0.9\abovedisplayskip\belowdisplayskip0.75\belowdisplayskip
\begin{eqntabular}[fl]{rcl@{\qquad}}
\textsf{\textup{post}}({\supseteq},{\subseteq})(\alpha_{\textup{G}}(\textsf{\upshape Post}\sqb{\texttt{\small S}}_\bot))
&=&\{\pair{P}{Q}\mid\textsf{\upshape Post}\sqb{\texttt{\small S}}_\bot P\subseteq{Q}\}\label{eq:natural-upper-transformational-logic-theory}\\
&=&
\{\pair{P}{Q}\mid\forall \pair{\sigma_0}{\sigma}\in P\mathrel{.}\forall \sigma'\mathrel{.}
\pair{\sigma}{\sigma'}\in\sqb{\texttt{\small S}}_\bot\Rightarrow\pair{\sigma_0}{\sigma'}\in{Q}\}
\nonumber
\end{eqntabular}\egroup
that is, for any initial state $\sigma$ related to $\sigma_0$ by the precondition $P$ and any final state $\sigma'$ of \texttt{\small S}, possibly $\bot$,
the pair $\pair{\sigma_0}{\sigma'}$ satisfies the postcondition $Q$, as considered in example \ref{ex:fact:spec}. The difference with the interpretation of Manna and Pnueli total correctness logic \cite{DBLP:journals/acta/MannaP74} is that we may have $\pair{\sigma_0}{\bot}\in Q$ thus allowing possible nontermination for some initial pair of
states $\pair{\sigma_0}{\sigma}$ of $P$. Therefore we can both express both total and partial correctness plus nontermination when $Q=\Sigma\times\{\bot\}$. With this convention, only one of Dijkstra's weakest preconditions  transformers \cite{DBLP:journals/cacm/Dijkstra75,DBLP:books/ph/Dijkstra76,DBLP:books/daglib/0067387} is needed since $\textsf{\upshape wlp}(\texttt{\small S},Q)=\textsf{\upshape wp}(\texttt{\small S},Q\cup\{\bot\})$.
\begin{toappendix}
\begin{proof}[Proof of (\ref{eq:natural-upper-transformational-logic-theory})]
\begin{calculus}
\formulaexplanation{\textsf{\textup{post}}({\supseteq},{\subseteq})(\alpha_{\textup{G}}(\textsf{\upshape Post}(\sqb{\texttt{\small S}}_\bot)))}{def\@. (\ref{eq:natural-upper-transformational-logic-theory}) of consequence}\\ 
= 
\formulaexplanation{\textsf{\textup{post}}({\supseteq},{\subseteq})(\{\pair{P}{\textsf{\upshape Post}(\sqb{\texttt{\small S}}_\bot)P}\mid P\in\wp(\Sigma\times\Sigma)\})}{def\@. $\alpha_{\textup{G}}$}\\ 
= 
\formulaexplanation{\textsf{\textup{post}}({\supseteq},{\subseteq})(\{\pair{P}{\{\pair{\sigma_0}{\sigma'}\mid\exists\sigma\mathrel{.}\pair{\sigma_0}{\sigma}\in P\wedge\pair{\sigma}{\sigma'}\in\sqb{\texttt{\small S}}_\bot\}}\mid P\in\wp(\Sigma\times\Sigma)\})}{def\@. (\ref{eq:def:Post}) of $\textsf{\upshape Post}$}\\
= 
\formula{\{{y}\mid\exists x\in \{\pair{P}{\{\pair{\sigma_0}{\sigma'}\mid\exists\sigma\mathrel{.}\pair{\sigma_0}{\sigma}\in P\wedge\pair{\sigma}{\sigma'}\in\sqb{\texttt{\small S}}_\bot\}}\mid P\in\wp(\Sigma\times\Sigma)\}\mathrel{.}\pair{x}{y}\in ({\supseteq},{\subseteq})\}}\\[-0.5ex]
\rightexplanation{def\@. (\ref{eq:def:post}) of $\textsf{\textup{post}}$}\\
= 
\formula{\{\pair{P'}{Q'}\mid\exists \pair{P}{Q}\in \{\pair{P}{\{\pair{\sigma_0}{\sigma'}\mid\exists\sigma\mathrel{.}\pair{\sigma_0}{\sigma}\in P\wedge\pair{\sigma}{\sigma'}\in\sqb{\texttt{\small S}}_\bot\}}\mid P\in\wp(\Sigma\times\Sigma)\}\mathrel{.}\pair{\pair{P}{Q}}{\pair{P'}{Q'}}\in ({\supseteq},{\subseteq})\}}\\
\rightexplanation{$y=\pair{P'}{Q'}\in\wp(\Sigma\times\Sigma)\times\wp(\Sigma\times\Sigma_\bot)$ and $x=\pair{P}{Q}\in\wp(\Sigma\times\Sigma)\times\wp(\Sigma\times\Sigma_\bot)$ }\\[0.75ex]
= 
\formulaexplanation{\{\pair{P'}{Q'}\mid
\exists \pair{P}{Q}\in \{\pair{P}{\{\pair{\sigma_0}{\sigma'}\mid\exists\sigma\mathrel{.}\pair{\sigma_0}{\sigma}\in P\wedge\pair{\sigma}{\sigma'}\in\sqb{\texttt{\small S}}_\bot\}}\mid P\in\wp(\Sigma\times\Sigma)\}\mathrel{.}{P}\supseteq {P'}\wedge {Q}\subseteq{Q'}\}}{def\@. ${\supseteq},{\subseteq}$}\\[0.5ex]
 = 
\formulaexplanation{\{\pair{P'}{Q'}\mid
\exists P\in\wp(\Sigma\times\Sigma) \mathrel{.}{P'}\subseteq {P}\wedge
\{\pair{\sigma_0}{\sigma'}\mid\exists\sigma\mathrel{.}\pair{\sigma_0}{\sigma}\in P\wedge\pair{\sigma}{\sigma'}\in\sqb{\texttt{\small S}}_\bot\}\subseteq{Q'}\}}{def\@. $\in$}\\
= 
\formula{\{\pair{P'}{Q'}\mid
\exists P\in\wp(\Sigma\times\Sigma) \mathrel{.}{P'}\subseteq {P}\wedge\forall \sigma_0,\sigma,\sigma'\mathrel{.}
(\pair{\sigma_0}{\sigma}\in P\wedge\pair{\sigma}{\sigma'}\in\sqb{\texttt{\small S}}_\bot)\Rightarrow(\pair{\sigma_0}{\sigma'}\in{Q'})\}}\\[-0.5ex]
\rightexplanation{def\@. $\subseteq$}\\[-1.5ex]
= 
\formula{\{\pair{P'}{Q'}\mid\forall \sigma_0,\sigma,\sigma'\mathrel{.}
(\pair{\sigma_0}{\sigma}\in P'\wedge\pair{\sigma}{\sigma'}\in\sqb{\texttt{\small S}}_\bot)\Rightarrow(\pair{\sigma_0}{\sigma'}\in{Q'})}\\
\explanation{($\subseteq$)\quad If $\pair{\sigma_0}{\sigma}\in P'$, then $\pair{\sigma_0}{\sigma}\in P$ since ${P'}\subseteq {P}$ which implies $\pair{\sigma_0}{\sigma'}\in{Q'}$ by hypothesis. Otherwise, $\pair{\sigma_0}{\sigma}\not\in P'$, and $\textsf{false}\Rightarrow(\pair{\sigma_0}{\sigma'}\in{Q'})$.\\
($\supseteq$)\quad Take $P=P'$ and reflexivity}\\[0.5ex]
= 
\formulaexplanation{\{\pair{P}{Q}\mid\forall\pair{\sigma_0}{\sigma'}\mathrel{.}(\exists\sigma\mathrel{.}\pair{\sigma_0}{\sigma}\in P\wedge\pair{\sigma}{\sigma'}\in \sqb{\texttt{\small S}}_\bot)\Rightarrow(\pair{\sigma_0}{\sigma'}\in Q)\}}{def\@. $\Rightarrow$, renaming}\\
=
\formulaexplanation{\{\pair{P}{Q}\mid\{\pair{\sigma_0}{\sigma'}\mid\exists\sigma\mathrel{.}\pair{\sigma_0}{\sigma}\in P\wedge\pair{\sigma}{\sigma'}\in \sqb{\texttt{\small S}}_\bot\}\subseteq{Q}\}}{def\@. $\subseteq$}\\
=
\lastformulaexplanation{\{\pair{P}{Q}\mid\textsf{\upshape Post}(\sqb{\texttt{\small S}}_\bot)P\subseteq{Q}\}}{def\@. (\ref{eq:def:Post}) of $\textsf{\upshape Post}$}{\mbox{\qed}}
\end{calculus}\let\qed\relax
\end{proof}
\end{toappendix}
This is similar to the classic characterization of Hoare logic by a forward transformer, $\{P\}\texttt{\small S}\{Q\}$ if and only if $\textsf{\textup{post}}\sqb{\texttt{\small S}}P\Rightarrow Q$ given by \cite[equation (S), p\@. 110]{DBLP:conf/focs/Pratt76}  or, equivalently by (\ref{eq:def:post:GC}), $P\Rightarrow \widetilde{\textsf{\upshape pre}}\sqb{\texttt{\small S}}Q$ \cite[equation (w), p\@. 110]{DBLP:conf/focs/Pratt76} (except that in (\ref{eq:natural-upper-transformational-logic-theory}), $P$  and $Q$ are relational and take nontermination into account). By (\ref{eq:def:post:GC}), the abstraction $\textsf{\textup{post}}({\supseteq},{\subseteq})$ is the lower adjoint of a Galois connection.

\subsubsection{The Under Approximation Abstraction}
For the \emph{natural transformational under approximation logic}, as well as reverse Hoare logic \cite{DBLP:conf/sefm/VriesK11} aka incorrectness logic \cite{DBLP:journals/pacmpl/OHearn20}, we can weaken $P$ by $P'\supseteq P$ and strengthen $Q$ by $Q'$ such that $Q\supseteq Q'$. This is the under approximation abstraction $\textsf{\textup{post}}({\subseteq},{\supseteq})R$ = $\{\pair{P'}{Q'}\mid
\exists \pair{P}{Q}\in R\mathrel{.}\pair{P}{Q}\mathrel{{\subseteq},{\supseteq}}\pair{P'}{Q'}\}$ = $\{\pair{P'}{Q'}\mid
\exists \pair{P}{Q}\in R\mathrel{.}{P}\subseteq {P'}\wedge {Q}\supseteq{Q'}\}$  = $\{\pair{P'}{Q'}\mid
\exists \pair{P}{Q}\in R\mathrel{.}{P}\subseteq {P'}\wedge {Q'}\subseteq{Q}\}$ which is the consequence rule called \emph{Symmetry} in 
\cite[Fig\@. 1]{DBLP:journals/pacmpl/OHearn20} and \emph{Consequence}  in 
\cite[Fig\@. 2]{DBLP:journals/pacmpl/OHearn20}.

The theory of the natural transformational under approximation logic is
therefore  \proofinapx
\bgroup\arraycolsep0.8\arraycolsep
\begin{eqntabular}{rcl@{\qquad}}
\textsf{\textup{post}}({\subseteq},{\supseteq})(\alpha_{\textup{G}}(\textsf{\upshape Post}\sqb{\texttt{\small S}}_\bot))
&=&\{\pair{P}{Q}\mid Q\subseteq\textsf{\upshape Post}\sqb{\texttt{\small S}}_\bot P\}\label{eq:natural-lower-transformational-logic-theory}\\
&=&
\{\pair{P}{Q}\mid\forall \pair{\sigma_0}{\sigma}\in P\mathrel{.}\forall \sigma'\mathrel{.}\pair{\sigma_0}{\sigma'}\in{Q}\Rightarrow
\pair{\sigma}{\sigma'}\in\sqb{\texttt{\small S}}_\bot\} 
\nonumber
\end{eqntabular}
\egroup
that is, for any initial state $\sigma$ related to $\sigma_0$ satisfying the precondition $P$ and any final state $\sigma'$ related to $\sigma_0$, possibly $\bot$,
if the pair $\pair{\sigma_0}{\sigma'}$ satisfies the postcondition $Q$ then there exists an execution of \texttt{\small S} from
$\sigma$ to $\sigma'$ (possibly non termination). The difference with reverse Hoare logic \cite{DBLP:conf/sefm/VriesK11} aka incorrectness logic \cite{DBLP:journals/pacmpl/OHearn20} is that we may have $\pair{\sigma}{\bot}\in Q$ thus allowing possible nontermination for some initial
states $\pair{\sigma_0}{\sigma}$ of $P$ so we can both express total and partial correctness plus nontermination when $Q=\Sigma\times\{\bot\}$.
\begin{toappendix}
\begin{proof}[Proof of (\ref{eq:natural-lower-transformational-logic-theory})]
\begin{calculus}
\formulaexplanation{\textsf{\textup{post}}({\subseteq},{\supseteq})(\alpha_{\textup{G}}(\textsf{\upshape Post}(\sqb{\texttt{\small S}}_\bot)))}{def\@. (\ref{eq:natural-upper-transformational-logic-theory}) of consequence}\\ 
= 
\formulaexplanation{\textsf{\textup{post}}({\subseteq},{\supseteq})(\{\pair{P}{\textsf{\upshape Post}(\sqb{\texttt{\small S}}_\bot)P}\mid P\in\wp(\Sigma\times\Sigma)\})}{def\@. $\alpha_{\textup{G}}$}\\ 
= 
\formulaexplanation{\textsf{\textup{post}}({\subseteq},{\supseteq})(\{\pair{P}{\{\pair{\sigma_0}{\sigma'}\mid\exists\sigma\mathrel{.}\pair{\sigma_0}{\sigma}\in P\wedge\pair{\sigma}{\sigma'}\in \sqb{\texttt{\small S}}_\bot\}}\mid P\in\wp(\Sigma\times\Sigma)\})}{def\@. (\ref{eq:def:Post}) of $\textsf{\upshape Post}$}\\
= 
\formula{\{{y}\mid\exists x\in \{\pair{P}{\{\pair{\sigma_0}{\sigma'}\mid\exists\sigma\mathrel{.}\pair{\sigma_0}{\sigma}\in P\wedge\pair{\sigma}{\sigma'}\in\sqb{\texttt{\small S}}_\bot\}}\mid P\in\wp(\Sigma\times\Sigma)\}\mathrel{.}\pair{x}{y}\in ({\subseteq},{\supseteq})\}}\\[-0.5ex]\rightexplanation{def\@. (\ref{eq:def:post}) of $\textsf{\textup{post}}$}\\[-0.5ex]
= 
\formula{\{\pair{P'}{Q'}\mid\exists \pair{P}{Q}\in \{\pair{P}{\{\pair{\sigma_0}{\sigma'}\mid\exists\sigma\mathrel{.}\pair{\sigma_0}{\sigma}\in P\wedge\pair{\sigma}{\sigma'}\in\sqb{\texttt{\small S}}_\bot\}}\mid P\in\wp(\Sigma\times\Sigma)\}\mathrel{.}\pair{\pair{P}{Q}}{\pair{P'}{Q'}}\in ({\subseteq},{\supseteq})\}}\\\rightexplanation{$y=\pair{P'}{Q'}\in\wp(\Sigma\times\Sigma)\times\wp(\Sigma\times\Sigma_\bot)$ and $x=\pair{P}{Q}\in\wp(\Sigma\times\Sigma)\times\wp(\Sigma\times\Sigma_\bot)$}\\
= 
\formulaexplanation{\{\pair{P'}{Q'}\mid
\exists \pair{P}{Q}\in \{\pair{P}{\{\pair{\sigma_0}{\sigma'}\mid\exists\sigma\mathrel{.}\pair{\sigma_0}{\sigma}\in P\wedge\pair{\sigma}{\sigma'}\in\sqb{\texttt{\small S}}_\bot\}}\mid P\in\wp(\Sigma\times\Sigma)\}\mathrel{.}{P}\subseteq {P'}\wedge {Q}\supseteq{Q'}\}}{def\@. ${\supseteq},{\subseteq}$}\\
= 
\formulaexplanation{\{\pair{P'}{Q'}\mid
\exists P\in\wp(\Sigma\times\Sigma) \mathrel{.}{P}\subseteq {P'}\wedge
\{\pair{\sigma_0}{\sigma'}\mid\exists\sigma\mathrel{.}\pair{\sigma_0}{\sigma}\in P\wedge\pair{\sigma}{\sigma'}\in\sqb{\texttt{\small S}}_\bot\}\supseteq{Q'}\}}{def\@. $\in$}\\
 = 
\formulaexplanation{\{\pair{P'}{Q'}\mid
\exists P\in\wp(\Sigma\times\Sigma) \mathrel{.}{P}\subseteq {P'}\wedge\forall\pair{\sigma_0}{\sigma'}\mathrel{.}
(\pair{\sigma_0}{\sigma'}\in{Q'})\Rightarrow(\exists\sigma\mathrel{.}\pair{\sigma_0}{\sigma}\in P\wedge\pair{\sigma}{\sigma'}\in\sqb{\texttt{\small S}}_\bot)\}}{def\@. $\subseteq$}\\
=
\formula{\{\pair{P'}{Q'}\mid \forall\pair{\sigma_0}{\sigma'}\mathrel{.}(\pair{\sigma_0}{\sigma'}\in Q')\Rightarrow
(\exists\sigma\mathrel{.}\pair{\sigma_0}{\sigma}\in P'\wedge\pair{\sigma}{\sigma'}\in \sqb{\texttt{\small S}}_\bot)\}}\\
\explanation{($\subseteq$) $\pair{\sigma_0}{\sigma}\in P$ and ${P}\subseteq {P'}$ implies $\pair{\sigma_0}{\sigma}\in P'$ \\
($\supseteq$) Take $P'=P$ and reflexivity}\\
=
\formulaexplanation{\{\pair{P}{Q}\mid Q\subseteq\{\pair{\sigma_0}{\sigma'}\mid\exists\sigma\mathrel{.}\pair{\sigma_0}{\sigma}\in P\wedge\pair{\sigma}{\sigma'}\in \sqb{\texttt{\small S}}_\bot\}\}}{def\@. $\subseteq$ and renaming}\\
=
\lastformulaexplanation{\{\pair{P}{Q}\mid Q\subseteq\textsf{\upshape Post}(\sqb{\texttt{\small S}}_\bot)P\}}{def\@. (\ref{eq:def:Post}) of $\textsf{\upshape Post}$}{\mbox{\qed}}
\end{calculus}\let\qed\relax
\end{proof}
\end{toappendix}

Up to the use of relations instead of assertions and the consideration of nontermination $\bot$, this is similar to the classic characterization of reverse Hoare logic aka incorrectness logic by a forward transformer, $\{P\}\texttt{\small S}\{Q\}$ if and only if $Q\Rightarrow\textsf{\textup{post}}(\sqb{\texttt{\small S}})P$ given by
\cite[section 5]{DBLP:conf/sefm/VriesK11} and \cite[Lemma 3.(2)]{DBLP:journals/pacmpl/OHearn20}, showing that both logics have the same semantics/theory (again up to nontermination and relational postconditions). By (\ref{eq:def:post:GC}), the abstraction $\textsf{\textup{post}}({\subseteq},{\supseteq})$ is the lower adjoint of a Galois connection.

\subsubsection{The Incorrectness Logic is Insufficient to Prove That All Alarms in Static Analysis Are True or False Alarms}
Incorrectness logic \cite{DBLP:journals/pacmpl/OHearn20} 
``was motivated in large part by the aim of providing a logical foundation for bug-catching program analyses'' \cite{DBLP:journals/pacmpl/LeRVBDO22}. In particular incorrectness logic is useful to prove that alarms in static analyzers are true alarms. This consists in showing that the alarm is definitely reachable from some input. However, not all alarms are reachable from initial states since static analyses are over approximating reachable states so that unreachable code under the precondition may produce false alarms. 

\begin{example}\label{ex:fact:pre:spec}Consider the factorial of example \ref{ex:fact:spec} specified by $\{ f=1\}\,\texttt{\small fact}\,\{f>0\}$. This contract is obviously satisfied since on exit $f=!\underline{n}>0$. However, an interval analysis of this program with initially $\underline{\texttt{\small n}}\in\mathbb{Z}$ is totally imprecise and will produce an alarm on program exit with postcondition $Q=\texttt{\small f}\leqslant0$. This is a false alarm since the loop exit is unreachable. This unreachability is not provable by incorrectness logic. This is provable by Hoare logic as $\{ \underline{n}<0 \wedge f=1\}\,\texttt{\small fact}\,\{\textsf{false}\}$ but then we don't want to use two different logics to prove incorrectness, the main motivation for recent work on combining logics (e.g\@. \cite{DBLP:journals/jacm/BruniGGR23,DBLP:conf/sas/MilaneseR22,DBLP:journals/pacmpl/ZilbersteinDS23,DBLP:conf/ecoop/MaksimovicCLSG23}, etc). This is also provable by the natural transformational under approximation logic which extends incorrectness logic to nontermination, that is, in the assertional form of Sect\@. \ref{sec:Relational-assertional-abstraction}, $\{\bot\}\subseteq\textsf{\upshape Post}\sqb{\texttt{\small fact}}_\bot\{ \underline{n}<0 \wedge f=1\}$, see example \ref{ex:fact:pre}.
\end{example}

\subsection{To Terminate or Not to Terminate Abstraction for Properties}\label{sec:terminate-not-terminate-properties}
Total correctness excludes nontermination while partial correctness allows it. This corresponds to different abstractions of the natural relational semantics.
\subsubsection{The Termination Exclusion Abstraction}
We can  exclude the possibility of
nontermination by the abstraction
\bgroup\abovedisplayskip0.0\abovedisplayskip\belowdisplayskip0.75\belowdisplayskip
\begin{eqntabular}{rcl}
\alpha^2_{\not\bot}(R)&\triangleq&\{\pair{P}{Q}\mid \pair{P}{Q}\in R\wedge Q\cap(\Sigma\times\{\bot\})=\emptyset\}
\label{eq:def:alpha-not-bot}
\end{eqntabular}\egroup
excluding $\bot$ from the postcondition.
This is an abstraction by the Galois connection 
\bgroup\belowdisplayskip0.75\belowdisplayskip\begin{eqntabular}{c}
\pair{\wp(\wp(\Sigma\times\Sigma)\times\wp(\Sigma\times\Sigma_\bot))}{\subseteq}\galoiS{\alpha^2_{\not\bot}}{\gamma^2_{\not\bot}}\pair{\wp(\wp(\Sigma\times\Sigma)\times\wp(\Sigma\times\Sigma))}{\subseteq}
\label{eq:alpha:2:bot:GC-1}
\end{eqntabular}\egroup
with $\gamma^2_{\not\bot}(R')\triangleq R'\cup \{\pair{P}{Q}\mid Q\cap(\Sigma\times\{\bot\})\neq\emptyset\}$ \proofinapx.
\begin{toappendix}
\begin{proof}[Proof of (\ref{eq:alpha:2:bot:GC-1})]
\begin{calculus}[$\Leftrightarrow$\ ]
\formulaexplanation{\alpha^2_{\not\bot}(R)\subseteq R'}{$R\in\wp(\wp(\Sigma\times\Sigma)\times\wp(\Sigma\times\Sigma_\bot))$ and $R'\in\wp(\wp(\Sigma\times\Sigma)\times\wp(\Sigma\times\Sigma))$ by hypothesis}\\
$\Leftrightarrow$
\formulaexplanation{\{\pair{P}{Q}\mid \pair{P}{Q}\in R\wedge Q\cap(\Sigma\times\{\bot\})=\emptyset\}\subseteq R'}{def\@. (\ref{eq:def:alpha-not-bot}) of $\alpha^2_{\not\bot}$}\\
$\Leftrightarrow$
\formulaexplanation{\forall P,Q\mathrel{.}(\pair{P}{Q}\in R\wedge Q\cap(\Sigma\times\{\bot\})=\emptyset)\Rightarrow(\pair{P}{Q}\in R')}{def\@. subset $\subseteq$}\\
$\Leftrightarrow$
\formulaexplanation{\forall P,Q\mathrel{.}(\pair{P}{Q}\in R)\Rightarrow(( Q\cap(\Sigma\times\{\bot\})=\emptyset)\Rightarrow(\pair{P}{Q}\in R'))}{def\@. implication $\Rightarrow$}\\[-0.25ex]
$\Leftrightarrow$
\formulaexplanation{R\subseteq\{\pair{P}{Q}\mid( Q\cap(\Sigma\times\{\bot\})=\emptyset)\Rightarrow(\pair{P}{Q}\in R')\}}{def\@. subset $\subseteq$}\\
$\Leftrightarrow$
\formula{R\subseteq R'\cup \{\pair{P}{Q}\mid Q\cap(\Sigma\times\{\bot\})\neq\emptyset\}}\\
\explanation{($\Rightarrow$)\quad Assume $\pair{P}{Q}\in R$. If $Q\subseteq \Sigma\times\Sigma$ then $Q\cap(\Sigma\times\{\bot\})=\emptyset$ so $\pair{P}{Q}\in R'$ by hypothesis. Otherwise $Q\cap(\Sigma\times
\{\bot\})\neq\emptyset$ so $(Q\cap(\Sigma\times\{\bot\})=\emptyset)\Rightarrow(\pair{P}{Q}\in R')$ is true, in which case
$R\subseteq \{\pair{P}{Q}\mid Q\cap(\Sigma\times\{\bot\})\neq\emptyset\}$;
\\
($\Leftarrow$)\quad If $\pair{P}{Q}\in R$ and $Q\cap(\Sigma\times\{\bot\})=\emptyset$ then $\pair{P}{Q}\in R'$
}\\
$\Leftrightarrow$
\lastformulaexplanation{R\subseteq \gamma^2_{\not\bot}(R')}{def\@. $ \gamma^2_{\not\bot}$}{\mbox{\qed}}
\end{calculus}
\let\qed\relax
\end{proof}
\end{toappendix}

\begin{example}[Manna and Pnueli total correctness logic]
By eliminating the  nontermination possibility from the postcondition of the natural transformational over approximation logic (\ref{eq:natural-upper-transformational-logic-theory}), we get
 Manna and Pnueli  logic \cite{DBLP:journals/acta/MannaP74} with theory \proofinapx
\begin{eqntabular}{rcl@{\qquad}}
\alpha^2_{\not\bot}(\textsf{\textup{post}}({\supseteq},{\subseteq})(\alpha_{\textup{G}}(\textsf{\upshape Post}\sqb{\texttt{\small S}}_\bot)))
&=&
\{\pair{P}{Q\setminus(\Sigma\times\{\bot\})}\mid\textsf{\upshape Post}\sqb{\texttt{\small S}}_\bot P\subseteq Q\setminus(\Sigma\times\{\bot\})\}
\label{eq:Manna-Pnueli-Logic-Theory}
\end{eqntabular}
that is, for any initial state $\pair{\sigma_0}{\sigma}$ satisfying the precondition $P$, execution terminates in a final state $\sigma'$ such that the pair $\pair{\sigma_0}{\sigma'}$ satisfies the postcondition $Q\cap\Sigma\times\Sigma)$. This is relational total correctness since nontermination is excluded.
\end{example}
\begin{toappendix}
\begin{proof}[Proof of (\ref{eq:Manna-Pnueli-Logic-Theory})]
\begin{calculus}
\formula{\alpha^2_{\not\bot}(\textsf{\textup{post}}({\supseteq},{\subseteq})(\alpha_{\textup{G}}(\textsf{\upshape Post}(\sqb{\texttt{\small S}}_\bot))))}\\

=\formulaexplanation{\alpha^2_{\not\bot}(\{\pair{P}{Q}\mid\textsf{\upshape Post}(\sqb{\texttt{\small S}}_\bot)P\subseteq{Q}\})}{by (\ref{eq:natural-upper-transformational-logic-theory})}\\

=\formulaexplanation{\{\pair{P}{Q}\mid \pair{P}{Q}\in \{\pair{P}{Q}\mid\textsf{\upshape Post}(\sqb{\texttt{\small S}}_\bot)P\subseteq{Q}\wedge Q\cap(\Sigma\times\{\bot\})=\emptyset\}\}}{def\@. (\ref{eq:def:alpha-not-bot}) of $\alpha^2_{\not\bot}$}\\
=\formulaexplanation{\{\pair{P}{Q}\mid\textsf{\upshape Post}(\sqb{\texttt{\small S}}_\bot)P\subseteq{Q}\wedge Q\cap(\Sigma\times\{\bot\})=\emptyset\}}{def\@. $\in$}\\
=\lastformulaexplanation{\{\pair{P}{Q\setminus(\Sigma\times\{\bot\})}\mid\textsf{\upshape Post}(\sqb{\texttt{\small S}}_\bot)P\subseteq Q\setminus(\Sigma\times\{\bot\})\}}{since $Q\in\wp(\Sigma\times\Sigma\cup\{\bot\})$}{\mbox{\qed}}
\end{calculus}
\let\qed\relax
\end{proof}
\end{toappendix}
Another abstraction to specify total correctness  is to consider a transformer for a modified semantics $\sqb{\texttt{\small S}}\cup\{\pair{\sigma_0}{\sigma'}\mid \pair{\sigma}{\bot}\in\sqb{\texttt{\small S}}_\bot\wedge \sigma'\in\Sigma\}$ returning any possible result in case of nontermination \cite{DBLP:conf/ac/Plotkin79} using Smyth powerdomain \cite{DBLP:journals/jcss/Smyth78} so that it is impossible to make any conclusion on final values in case of possible nontermination for an initial state. However, this is an impractical basis for static analysis since it the abstraction introduces great imprecision.

\subsubsection{The Termination Inclusion Abstraction}

We can include the possibility of
nontermination by the abstraction
\bgroup\abovedisplayskip-0.25\abovedisplayskip\belowdisplayskip0.75\belowdisplayskip%
\begin{eqntabular}{rcl}
\alpha^2_{\bot}(R)&\triangleq&\{\pair{P}{Q\cup(\Sigma\times\{\bot\})}\mid \pair{P}{Q}\in R\}
\label{eq:def:alpha-bot}
\end{eqntabular}\egroup
allowing the possibility of nontermination for all input states by adding
 $\bot$ to the postcondition. This is an abstraction by a Galois connection \proofinapx
\bgroup\abovedisplayskip0.5\abovedisplayskip\belowdisplayskip0.25\belowdisplayskip\begin{eqntabular}{c}
\pair{\wp(\wp(\Sigma\times\Sigma)\times\wp(\Sigma\times\Sigma_\bot))}{\subseteq}\galoiS{\alpha^2_{\bot}}{\gamma^2_{\bot}}\pair{\wp(\wp(\Sigma\times\Sigma)\times\wp(\Sigma\times\Sigma_\bot))}{\subseteq}
\label{eq:alpha:2:bot:GC-2}
\end{eqntabular}\egroup
with $\gamma^2_{\bot}(R')\triangleq\{ \pair{P}{Q}\mid\pair{P}{Q\cup(\Sigma\times\{\bot\})}\in R'\}$.
\begin{toappendix}
\begin{proof}[Proof of (\ref{eq:alpha:2:bot:GC-2})]
\begin{calculus}[$\Leftrightarrow$\ ]
\formula{\alpha^2_{\bot}(R)\subseteq R'}\\
$\Leftrightarrow$\
\formulaexplanation{\{\pair{P}{Q\cup(\Sigma\times\{\bot\})}\mid \pair{P}{Q}\in R\}\subseteq R'}{def\@. (\ref{eq:def:alpha-bot}) of $\alpha^2_{\bot}$}\\
$\Leftrightarrow$\
\formulaexplanation{\forall \pair{P}{Q}\in R\mathrel{.}\pair{P}{Q\cup(\Sigma\times\{\bot\})}\in R'}{def\@. subset $\subseteq$}\\
$\Leftrightarrow$\
\formulaexplanation{R\subseteq\{ \pair{P}{Q}\mid\pair{P}{Q\cup(\Sigma\times\{\bot\})}\in R'\}}{def\@. subset $\subseteq$}\\
$\Leftrightarrow$\
\lastformulaexplanation{R\subseteq\gamma^2_{\bot}(R')}{def\@. concretization $\gamma^2_{\bot}$}{\mbox{\qed}}
\end{calculus}\let\qed\relax
\end{proof}
\end{toappendix}

\begin{example}[Manna relational partial correctness logic]\label{ex:Manna-relational-partial-correctness-logic-1}
Manna's relational partial correctness logic \cite{DBLP:journals/jcss/Manna71} includes the nontermination possibility for all input states. Its theory is  \proofinapx
\bgroup\abovedisplayskip0.75\abovedisplayskip\belowdisplayskip0.75\belowdisplayskip\begin{eqntabular}{rcl@{\qquad}}
\alpha^2_{\bot}(\textsf{\textup{post}}({\supseteq},{\subseteq})(\alpha_{\textup{G}}(\textsf{\upshape Post}\sqb{\texttt{\small S}}_\bot)))
&=&
\{\pair{P}{Q\cup(\Sigma\times\{\bot\})}\mid \textsf{\upshape Post}\sqb{\texttt{\small S}}_\bot P\subseteq{Q}\}
\label{eq:Manna-Logic-Theory}
\end{eqntabular}\egroup
which is $\{\pair{P}{Q}\in \wp(\Sigma\times\Sigma)\times\wp(\Sigma\times\Sigma)\mid\forall \pair{\sigma_0}{\sigma}\in P\mathrel{.}\forall \sigma'\mathrel{.}
\pair{\sigma}{\sigma'}\in\sqb{\texttt{\small S}}\Rightarrow\pair{\sigma_0}{\sigma'}\in{Q}\}$ when using the angelic semantics $\sqb{\texttt{\small S}}$ 
i.e\@. any terminating execution started within $P$ satisfies $Q$.
\end{example}
\begin{toappendix}
\begin{proof}[Proof of (\ref{eq:Manna-Logic-Theory})]
\begin{calculus}
\formula{\alpha^2_{\bot}(\textsf{\textup{post}}({\supseteq},{\subseteq})(\alpha_{\textup{G}}(\textsf{\upshape Post}(\sqb{\texttt{\small S}}_\bot))))}\\
=\formulaexplanation{\alpha^2_{\bot}(\{\pair{P}{Q}\mid\textsf{\upshape Post}(\sqb{\texttt{\small S}}_\bot)P\subseteq{Q}\})}{by (\ref{eq:natural-upper-transformational-logic-theory})}\\
=\formulaexplanation{\{\pair{P}{Q\cup(\Sigma\times\{\bot\})}\mid \pair{P}{Q}\in \{\pair{P}{Q}\mid\textsf{\upshape Post}(\sqb{\texttt{\small S}}_\bot)P\subseteq{Q}\}\}}{def\@. (\ref{eq:def:alpha-bot}) of $\alpha^2_{\bot}$}\\
=\lastformulaexplanation{\{\pair{P}{Q\cup(\Sigma\times\{\bot\})}\mid \textsf{\upshape Post}(\sqb{\texttt{\small S}}_\bot)P\subseteq{Q}\}}{def\@. $\in$}{\mbox{\qed}}
\end{calculus}
\let\qed\relax
\end{proof}
\end{toappendix}

So to prove partial correctness, we essentially add the possibility of nontermination to postconditions in $\wp(\Sigma\times\Sigma_\bot)$. However, for partial correctness, postconditions are 
traditionally chosen in $\wp(\Sigma\times\Sigma)$ not $\wp(\Sigma\times\Sigma_\bot)$. This equivalent alternative uses the Galois connection~\proofinapx
\bgroup\abovedisplayskip0pt\belowdisplayskip-5pt\begin{eqntabular}{c}
\pair{\wp(\wp(\Sigma\times\Sigma)\times\wp(\Sigma\times\Sigma_\bot))}{\subseteq}\galoiS{{\alpha^{2}_{\bot}}'}{{\gamma^{2}_{\bot}}'}\pair{\wp(\wp(\Sigma\times\Sigma)\times\wp(\Sigma\times\Sigma))}{\subseteq}
\label{eq:alpha:bot:GC:in:Sigma}
\end{eqntabular}\egroup
with
\bgroup\abovedisplayskip0.25\abovedisplayskip\belowdisplayskip0.75\belowdisplayskip\arraycolsep0.5\arraycolsep
\begin{eqntabular}{rcl@{\qquad}rcl}
{{\alpha^{2}_{\bot}}'}(R)&\triangleq&\{\pair{P}{Q\cap(\Sigma\times\Sigma)}\mid \pair{P}{Q}\in R\}
\label{eq:def:alpha-bot:in:Sigma}
&{\gamma^{2}_{\bot}}'(R')&\triangleq&\{\pair{P}{Q} \mid\pair{P}{Q\cap(\Sigma\times\Sigma)}\in R'\}\nonumber
\end{eqntabular} 
\egroup
\begin{toappendix}
\begin{proof}[Proof of (\ref{eq:alpha:bot:GC:in:Sigma})]
\begin{calculus}[$\Leftrightarrow$\ ]
\formula{{\alpha^{2}_{\bot}}'(R)\subseteq R'}\\
$\Leftrightarrow$
\formulaexplanation{\{\pair{P}{Q\cap(\Sigma\times\Sigma)}\mid \pair{P}{Q}\in R\}\subseteq R'}{def\@. (\ref{eq:def:alpha-bot:in:Sigma}) of ${\alpha^{2}_{\bot}}'$}\\
$\Leftrightarrow$
\formulaexplanation{\forall \pair{P}{Q}\in R\mathrel{.}\pair{P}{Q\cap(\Sigma\times\Sigma)}\in R'}{def\@. $\subseteq$}\\
$\Leftrightarrow$
\formulaexplanation{R\subseteq\{\pair{P}{Q} \mid\pair{P}{Q\cap(\Sigma\times\Sigma)}\in R'\}}{def\@. $\subseteq$}\\
$\Leftrightarrow$
\lastformulaexplanation{R\subseteq{\gamma^{2}_{\bot}}'(R')}{def\@. (\ref{eq:def:alpha-bot:in:Sigma}) of ${\gamma^{2}_{\bot}}'$}{\mbox{\qed}}
\end{calculus}\let\qed\relax
\end{proof}
\end{toappendix}
\begin{example}[Manna relational partial correctness logic, continuing example \ref{ex:Manna-relational-partial-correctness-logic-1}]\label{ex:Manna-relational-partial-correctness-logic-2}
In that case, the theory of Manna's logic is  \proofinapx
\begin{eqntabular}{rcl@{\qquad\qquad}}
{\alpha^{2}_{\bot}}'(\textsf{\textup{post}}({\supseteq},{\subseteq})(\alpha_{\textup{G}}(\textsf{\upshape Post}\sqb{\texttt{\small S}}_\bot)))
&=&
\{\pair{P}{Q\cap(\Sigma\times\Sigma)}\mid\textsf{\upshape Post}\sqb{\texttt{\small S}}_\bot P\subseteq Q\}
\label{eq:Manna-Logic-Theory-Sigma}
\stepcounter{equation}
\renumber{(\ref{eq:Manna-Logic-Theory-Sigma})\qquad\qef}
\end{eqntabular}\let\qef\relax
\end{example}
\begin{toappendix}
\begin{proof}[Proof of (\ref{eq:Manna-Logic-Theory-Sigma})]
\begin{calculus}
\formula{{\alpha^{2}_{\bot}}'(\textsf{\textup{post}}({\supseteq},{\subseteq})(\alpha_{\textup{G}}(\textsf{\upshape Post}(\sqb{\texttt{\small S}}_\bot))))}\\
=
\formulaexplanation{{\alpha^{2}_{\bot}}'(\{\pair{P}{Q}\mid\textsf{\upshape Post}(\sqb{\texttt{\small S}}_\bot)P\subseteq{Q}\})}{by (\ref{eq:natural-upper-transformational-logic-theory})}\\
=
\formulaexplanation{\{\pair{P}{Q\cap(\Sigma\times\Sigma)}\mid \pair{P}{Q}\in \{\pair{P}{Q}\mid\textsf{\upshape Post}(\sqb{\texttt{\small S}}_\bot)P\subseteq{Q}\}\}}{def\@. (\ref{eq:def:alpha-bot:in:Sigma}) of ${\alpha^{2}_{\bot}}'$}\\
=
\lastformulaexplanation{\{\pair{P}{Q\cap(\Sigma\times\Sigma)}\mid \textsf{\upshape Post}(\sqb{\texttt{\small S}}_\bot)P\subseteq{Q}\}}{def\@. $\in$}{\mbox{\qed}}
\end{calculus}\let\qed\relax
\end{proof}
\end{toappendix}

\subsection{Relational to Assertional Abstraction}\label{sec:Relational-assertional-abstraction}
Since they relate initial pairs $\pair{\sigma_0}{\sigma}$ to final pairs $\pair{\sigma_0}{\sigma'}$, $\sigma_0\in\mathcal{X}$, $\sigma\in\mathcal{Y}$, and $\sigma'\in\mathcal{Z}$, 
relational logics have their theory in a set
$\wp(\wp(\mathcal{X}\times\mathcal{Y})\times\wp(\mathcal{X}\times\mathcal{Z}))$ while assertional logic theories are in $\wp(\wp(\mathcal{Y})\times\wp(\mathcal{Z}))$ where e.g\@.  the postcondition is on final states and unrelated to the initial ones. This is an
abstraction by projection on the second component \proofinapx
\bgroup\abovedisplayskip0pt\belowdisplayskip-5pt\begin{eqntabular}{c@{\ \ }c@{\qquad}}
\mbox{}\hskip-1ex\pair{\wp(\mathcal{X}\times\mathcal{Y})}{\subseteq}\galois{\alpha_{\downarrow^2}}{\gamma_{\downarrow^2}}\pair{\wp(\mathcal{Y})}{\subseteq},\label{eq:alpha:relationbal-assertional:GC}
&
\pair{\wp(\wp(\mathcal{X}\times\mathcal{Y})\times\wp(\mathcal{X}\times\mathcal{Z}))}{\subseteq}\galois{\mathord{\stackrel{.}{\alpha}}_{\downarrow^2}}{\mathord{\stackrel{.}{\gamma}}_{\downarrow^2}}\pair{\wp(\wp(\mathcal{Y})\times\wp(\mathcal{Z}))}{\subseteq}
\end{eqntabular}\egroup
with
\bgroup\arraycolsep0.8\arraycolsep\abovedisplayskip5pt\belowdisplayskip0pt
\begin{eqntabular}{@{\hskip-1ex}rcl@{\quad}rcl}
\alpha_{\downarrow^2}(P)&\triangleq&\{\sigma\mid\exists \sigma_0\mathrel{.}\pair{\sigma_0}{\sigma}\in P\}
&
\gamma_{\downarrow^2}(Q)&\triangleq&\mathcal{X}\times Q
\label{eq:def:alpha-gamma-d2}
\\
\mathord{\stackrel{.}{\alpha}}_{\downarrow^2}(R)&\triangleq&\{\pair{\alpha_{\downarrow^2}(P)}{\alpha_{\downarrow^2}(Q)}\mid\pair{P}{Q}\in R\}
&\mathord{\stackrel{.}{\gamma}}_{\downarrow^2}(R')&\triangleq&\{ \pair{\gamma_{\downarrow^2}(P')}{\gamma_{\downarrow^2}(Q')}\mid \pair{P'}{Q'}\in P\}\nonumber
\end{eqntabular}\egroup
\begin{toappendix}
\begin{proof}[Proof of (\ref{eq:alpha:relationbal-assertional:GC})]
\begin{calculus}[$\Leftrightarrow$\ ]
\hyphen{5}\formula{\alpha_{\downarrow^2}(P)\subseteq Q}\\
$\Leftrightarrow$
\formulaexplanation{\{\sigma\mid\exists \sigma_0\mathrel{.}\pair{\sigma_0}{\sigma}\in P\}\subseteq Q}{def\@. (\ref{eq:def:alpha-gamma-d2}) of $\alpha_{\downarrow^2}$}\\
$\Leftrightarrow$
\formulaexplanation{\forall \sigma\mathrel{.}(\exists \sigma_0\mathrel{.}\pair{\sigma_0}{\sigma}\in P)\Rightarrow(\sigma\in Q)}{def\@. $\subseteq$}\\
$\Leftrightarrow$
\formulaexplanation{\forall \sigma\mathrel{.}\forall\sigma_0\mathrel{.}(\pair{\sigma_0}{\sigma}\in P)\Rightarrow(\sigma\in Q)}{def\@. $\Rightarrow$}\\
$\Leftrightarrow$
\formulaexplanation{\forall\sigma_0\mathrel{.}\forall \sigma\mathrel{.}(\pair{\sigma_0}{\sigma}\in P)\Rightarrow(\pair{\sigma_0}{\sigma}\in \mathcal{X}\times Q)}{def\@. $\forall$ and $\sigma_0\in\mathcal{X}$}\\
$\Leftrightarrow$
\formulaexplanation{P\subseteq(\mathcal{X}\times Q)}{def\@. $\subseteq$}\\
$\Leftrightarrow$
\formulaexplanation{P\subseteq\gamma_{\downarrow^2}(Q)}{def\@. (\ref{eq:def:alpha-gamma-d2}) of  $\gamma_{\downarrow^2}$}\\[1ex]

\hyphen{5}\formula{\mathord{\stackrel{.}{\alpha}}_{\downarrow^2}(R)\subseteq R'}\\
$\Leftrightarrow$
\formulaexplanation{\{\pair{\alpha_{\downarrow^2}(P)}{\alpha_{\downarrow^2}(Q)}\mid\pair{P}{Q}\in R\}
\subseteq R'}{def\@. $\mathord{\stackrel{.}{\alpha}}_{\downarrow^2}$}\\

$\Leftrightarrow$
\formulaexplanation{\{\pair{\{\sigma'\mid\exists \sigma\mathrel{.}\pair{\sigma}{\sigma'}\in p\}}{\{\sigma'\mid\exists \sigma\mathrel{.}\pair{\sigma}{\sigma'}\in Q\}}\mid\pair{P}{Q}\in R'\}\subseteq R'}{def\@. $\alpha_{\downarrow^2}$}\\
$\Leftrightarrow$
\formulaexplanation{\forall \pair{P}{Q}\in R\mathrel{.}\pair{\{\sigma'\mid\exists \sigma\mathrel{.}\pair{\sigma}{\sigma'}\in P\}}{\{\sigma'\mid\exists \sigma\mathrel{.}\pair{\sigma}{\sigma'}\in Q\}}\in R'}{def\@. $\subseteq$}\\
$\Leftrightarrow$
\formulaexplanation{R\subseteq\{ \pair{P}{Q}\mid\pair{\{\sigma'\mid\exists \sigma\mathrel{.}\pair{\sigma}{\sigma'}\in P\}\in R'}{\{\sigma'\mid\exists \sigma\mathrel{.}\pair{\sigma}{\sigma'}\in Q\}\in R'}}{def\@. $\subseteq$}\\
$\Leftrightarrow$
\formulaexplanation{R\subseteq\{ \pair{P}{Q}\mid\exists \pair{P'}{Q'}\in 'R\mathrel{.}P'=\{\sigma'\mid\exists \sigma\mathrel{.}\pair{\sigma}{\sigma'}\in P\}\wedge Q'=\{\sigma'\mid\exists \sigma\mathrel{.}\pair{\sigma}{\sigma'}\in Q\}\}}{def\@. $\in$}\\
$\Leftrightarrow$
\formulaexplanation{R\subseteq\{ \pair{\Sigma\times P'}{\Sigma\times Q'}\mid \pair{P'}{Q'}\in R'\}}{since $P=\Sigma\times P'$ by $\sigma\in\Sigma$, same for $Q$}\\
$\Leftrightarrow$
\formulaexplanation{R\subseteq\{ \pair{\gamma_{\downarrow^2}(P')}{\gamma_{\downarrow^2}(Q')}\mid \pair{P'}{Q'}\in R'\}}{def\@. $\gamma_{\downarrow^2}$}\\
$\Leftrightarrow$
\lastformulaexplanation{R\subseteq\mathord{\stackrel{.}{\gamma}}_{\downarrow^2}(R')}{def\@. $\mathord{\stackrel{.}{\gamma}}_{\downarrow^2}$}{\mbox{\qed}}
\end{calculus}
\let\qed\relax
\end{proof}
\end{toappendix}
\begin{example}At this point we have got the theory of Hoare logic as the abstraction
\bgroup\abovedisplayskip0.75\abovedisplayskip\belowdisplayskip0.75\belowdisplayskip\begin{eqntabular}{rcl}
\alpha_{\textsf{H}}(\crb{\texttt{\small S}}_{\!\bot})&\triangleq&
\footnotesize\begin{array}[b]{@{}c@{}}\mbox{\llap{assertio}\rlap{nal}}\\\downarrow\\\alpha_{\downarrow^2}\end{array}\comp\begin{array}[b]{@{}c@{}}\mbox{\llap{nonter}m\rlap{ination}}\\[-0.5ex]\raisebox{-1.25ex}[0pt][0pt]{$\vert$}\\\raisebox{-0.75ex}[0pt][0pt]{$\vert$}\\\downarrow\\\alpha^2_{\bot}\end{array}\comp\begin{array}[b]{@{}c@{}}\mbox{\llap{cons}equ\rlap{ence}}\\\downarrow\\\textsf{\textup{post}}({\supseteq},{\subseteq})\end{array}\comp\begin{array}[b]{@{}c@{}}\mbox{\llap{gr}a\rlap{ph}}\\[-0.5ex]\raisebox{-1.25ex}[0pt][0pt]{$\vert$}\\\raisebox{-0.75ex}[0pt][0pt]{$\vert$}\\\downarrow\\\alpha_{\textup{G}}\end{array}\comp\begin{array}[b]{@{}c@{}}\mbox{\llap{tr}a\rlap{ns-}}\\[-0.5ex]\mbox{\llap{fo}r\rlap{mer}}\\\downarrow\\\textsf{\upshape Post}\end{array}\comp\begin{array}[b]{@{}c@{}}\mbox{\llap{relati}o\rlap{nal}}\\[-0.5ex]\mbox{\llap{sema}n\rlap{tics}}\\[-0.5ex]\raisebox{-1.25ex}[0pt][0pt]{$\vert$}\\\raisebox{-0.75ex}[0pt][0pt]{$\vert$}\\\downarrow\\\alpha_{\textup{C}}\end{array}(\begin{array}[b]{@{}c@{}}\mbox{\llap{co}l\rlap{lecting}}\\[-0.5ex]\mbox{\llap{se}m\rlap{antics}}\\\downarrow\\\crb{\texttt{\small S}}_{\!\bot}\end{array}
)\label{eq:Hoare-logic-theory}\\
&=&\{\pair{P}{Q}\mid\forall \sigma\in P\mathrel{.}\forall \sigma'\mathrel{.}
\pair{\sigma}{\sigma'}\in\sqb{\texttt{\small S}}_\bot\Rightarrow{\sigma'}\in{Q}\cup\{\bot\}\}
\nonumber
\end{eqntabular}\egroup
The set of valid Hoare triples $\{P\}\texttt{\small S}\{Q\}$ is the set of pairs $\pair{P}{Q}$ in $\alpha_{\textsf{H}}(\sqb{\texttt{\small S}}_\bot)$  such that any execution started in a state $\sigma$ of $P$, that terminates, if ever, does terminate in a state $\sigma'$ of $Q$.
\end{example}
\begin{example}
Similarly the assertional abstraction $\alpha_{\downarrow^2}$ of Manna and Pnueli logic (\ref{eq:Manna-Pnueli-Logic-Theory}) yields  Apt and Plotkin generalization of Hoare logic to total correctness \cite[equation (6), page 749]{DBLP:journals/jacm/AptP86} (generalizing \cite{DBLP:books/sp/Harel79} using naturals to unbounded nondeterminism using ordinals, equivalently a variant function in well-founded sets, as first considered by Turing \cite{Turing49-program-proof} and Floyd \cite{Floyd67-1}). 
\end{example}
Similarly, we can define  an
abstraction by projection on the first component 
\bgroup\abovedisplayskip0.75\abovedisplayskip\belowdisplayskip0.75\belowdisplayskip
\begin{eqntabular}{rcl@{\qquad\quad}rcl@{\qquad\quad}rcl}
\alpha^{-1}(r)&\triangleq&r^{-1}
&
\alpha_{\downarrow^1}&\triangleq&\alpha_{\downarrow^2}\comp\alpha^{-1}
&
\gamma_{\downarrow^1}&\triangleq&\alpha^{-1}
\comp\gamma_{\downarrow^2}
\label{eq:def:alpha-gamma-d1}
\end{eqntabular}
so that by composition of Galois connections and isomorphisms (proposition \ref{prop:GC-composition}) and by the forthcoming (\ref{eq:def:alpha-1}), we have Galois connection similar to (\ref{eq:alpha:relationbal-assertional:GC}) for $\pair{\alpha_{\downarrow^1}}{\gamma_{\downarrow^1}}$.
\egroup

One may wonder why, for such a well-known result, we have considered so many successive abstractions (six when including the abstraction (\ref{eq:hyper-tp-trace-abs}) of the collecting semantics into the relational semantics). There are three main reasons.
\begin{enumerate}[leftmargin=*]
\item The composition of Galois connections and isomorphisms is a Galois connection (Prop\@. \ref{prop:GC-composition}\ifshort\ in the appendix\fi). Since abstractions preserves existing joins and concretizations preserve existing meets, we get ``healthiness conditions'' (such as \cite[(H2), page 469]{DBLP:journals/jacm/Hoare78}) as theorems, not hypotheses. In absence of a Galois connection, there would be no unique, most precise approximation, of the collecting semantics by a formula of the logic (e.g\@. \cite{DBLP:journals/entcs/GotsmanBC11});
\item By varying slightly the abstractions, we get a hierarchy of transformational logics (which extends the hierarchy of semantics in \cite{DBLP:journals/tcs/Cousot02}), that we can compare without even knowing their proof systems. This is the objective for the rest of this part I on the theories of logics;
\item Knowing the program semantics and its abstraction to the theory of a logic, we can constructively design, by calculus, a sound and complete proof system for this logic. This will be developed in part II.
\end{enumerate}
\subsection{The Forward Transformational Forward Logics Hierarchy}
We have built the theories of logics in Fig\@. \ref{fig:Forward-semantics-logics} by composition of abstractions. \noindent The relational and assertional logics are considered equivalent in practice by using an auxiliary program with phantom variables recording the values of the initial or final variables (see Sect\@. \ref{sec:auxiliary:variables}\ifshort\ in the appendix\fi). By allowing the explicit use of nontermination $\bot$ in the postcondition, the over/under approximating antecedent/consequent logics subsume their approximations by $\alpha^2_{\bot}$ or $\alpha^2_{\not\bot}$ and $\alpha_{\downarrow^2}$
(including the logics marked by circled numbers that do not look to have been considered in the literature). 
\vskip3pt
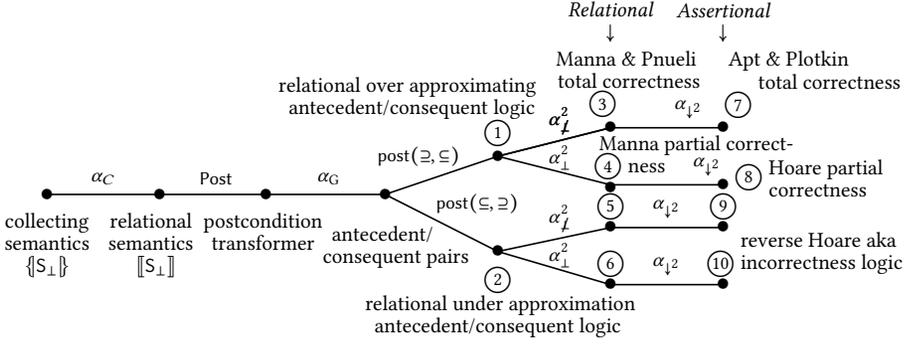
\begin{figure}[ht]
\begin{flushleft}\let\textsize\footnotesize
\vskip8mm
\begin{tikzpicture}[scale=0.75]
\draw (0,3) node[name=ColSem] {\textsize\raisebox{1pt}[0pt][0pt]{\begin{tabular}[t]{c}\large$\bullet$\\collecting\\[-0.5ex] semantics\\[-0.5ex] $\crb{\texttt{\textsize S}_\bot}$\end{tabular}}};
\draw (2,3) node[name=Sem] {\textsize\raisebox{1pt}[0pt][0pt]{\begin{tabular}[t]{c}\large$\bullet$\\\llap{re}lational\\[-0.5ex] \llap{se}mantics\\[-0.5ex]$\hskip-1mm\sqb{\texttt{\textsize S}_\bot}$\end{tabular}}};
\draw[semithick] (ColSem.north) --  (Sem.north) node[draw=none,fill=none,font=\scriptsize,midway,above] {$\alpha_C$};
\draw (4,3) node[name=Post] {\textsize\raisebox{1pt}[0pt][0pt]{\begin{tabular}[t]{@{\hskip-2pt}c}\hskip2pt\large$\bullet$\\postcondition \\[-0.5ex] transformer\end{tabular}}};
\draw[semithick] (Sem.north) --  (Post.north) node[draw=none,fill=none,font=\scriptsize,midway,above] {$\textsf{\upshape Post}$};
\draw (6,3) node[name=Pairs] {\textsize\raisebox{1pt}[0pt][0pt]{\begin{tabular}[t]{c}\large$\bullet$\\[1ex]\hskip-2pt antecedent/\\[-0.5ex] consequent pa\rlap{irs}\end{tabular}}};
\draw[semithick] (Post.north) --  (Pairs.north) node[draw=none,fill=none,font=\scriptsize,midway,above] {$\alpha_{\textup{G}}$};
\draw (8,4) node[name=gtlt] {\textsize\raisebox{-6pt}[0pt][0pt]{\begin{tabular}[b]{c}\llap{relational over appro}ximating\\[-0.5ex] \llap{antecedent/consequ}ent logic \\[-0.25ex]\circled{1}\\[-0.5ex]\large$\bullet$\end{tabular}}};
\draw[semithick] (Pairs.north) --  (gtlt.south) node[draw=none,fill=none,font=\scriptsize,midway,above] {\llap{$\textsf{\textup{post}}({\supseteq},{\subseteq})$\hskip-1em}};
\draw (8,2) node[name=ltgt] {\textsize\raisebox{1pt}[0pt][0pt]{\begin{tabular}[t]{c}\large$\bullet$\\ \circled{2}\\[-0.5ex]~relational under approximation \\[-0.5ex] antecedent/consequent logic\end{tabular}}};
\draw[semithick] (Pairs.north) --  (ltgt.north) node[draw=none,fill=none,font=\scriptsize,midway,above] {\rlap{\hskip-2pt$\textsf{\textup{post}}({\subseteq},{\supseteq})$}};
\draw (10,4.5) node[name=MP] {\textsize\raisebox{-6pt}[0pt][0pt]{\begin{tabular}[b]{c}\itshape Relational\\[-0.5ex]$\downarrow$\\Manna \& Pn\rlap{ueli}\\[-0.5ex]total correct\rlap{ness}\\[-0.5ex]\circled{3}~\ \ \ \\[-0.5ex]\large$\bullet$\end{tabular}}};
\draw[semithick] (gtlt.south) --  (MP.south) node[draw=none,fill=none,font=\scriptsize,midway,above] {\rlap{\hskip-2pt$\alpha^2_{\not\bot}$}};
\draw (10,3.5) node[name=M] {\textsize\raisebox{-7pt}[0pt][0pt]{\begin{tabular}[b]{c}~Ma\rlap{nna partial correct-}\\[-0.5ex]\raisebox{-1pt}[0pt][0pt]{\circled{4}} \rlap{ness}\\[-0.5ex]\large$\bullet$\end{tabular}}};
\draw[semithick] (gtlt.south) --  (M.south) node[draw=none,fill=none,font=\scriptsize,midway,above] {\rlap{\raisebox{-1pt}[0pt][0pt]{\hskip-2pt$\alpha^2_{\bot}$}}};
\draw (10,2.75) node[name=MPunder] {\textsize\raisebox{-6pt}[0pt][0pt]{\begin{tabular}[b]{c}\raisebox{-1pt}[0pt][0pt]{\circled{5}}\\[-0.5ex]\large$\bullet$\end{tabular}}};
\draw[semithick] (gtlt.south) --  (MP.south) node[draw=none,fill=none,font=\scriptsize,midway,above] {\rlap{\hskip-2pt$\alpha^2_{\not\bot}$}};
\draw (10,1.75) node[name=Munder] {\textsize\raisebox{-6pt}[0pt][0pt]{\begin{tabular}[b]{c}\raisebox{-1pt}[0pt][0pt]{\circled{6}}\\[-0.5ex]\large$\bullet$\end{tabular}}};
\draw[semithick] (ltgt.north) --  (MPunder.south) node[draw=none,fill=none,font=\scriptsize,midway,above] {\rlap{\hskip-2pt$\alpha^2_{\not\bot}$}};
\draw[semithick] (ltgt.north) --  (Munder.south) node[draw=none,fill=none,font=\scriptsize,midway,above] {\rlap{\raisebox{-1pt}[0pt][0pt]{\hskip-2pt$\alpha^2_{\bot}$}}};
\draw (12,4.5) node[name=MPassert] {\textsize\raisebox{-6pt}[0pt][0pt]{\begin{tabular}[b]{c}\itshape Assertional\\[-0.6ex]$\downarrow$\\\rlap{~Apt \& Plotkin}\\[-0.6ex]\rlap{\phantom{\circled{7}}~total correctness}\\[-0.6ex]\rlap{\circled{7}~}\\[-0.6ex]\large$\bullet$\end{tabular}}};
\draw[semithick] (MP.south) --  (MPassert.south) node[draw=none,fill=none,font=\scriptsize,midway,above] {\rlap{\hskip+3pt$\alpha_{\downarrow^2}$}};
\draw (12,3.5) node[name=H] {\textsize\raisebox{-16pt}[0pt][0pt]{\begin{tabular}[b]{c}\rlap{\ \ \phantom{\circled{8}} Hoare partial}\\[-0.5ex]\rlap{\ \ \raisebox{4pt}[0pt][0pt]{\circled{8}} correctness}\\[-0.5ex]\raisebox{10pt}[0pt][0pt]{\large$\bullet$}\end{tabular}}};
\draw[semithick] (M.south) --  (H.south) node[draw=none,fill=none,font=\scriptsize,midway,above] {\rlap{\hskip+10pt$\alpha_{\downarrow^2}$}};
\draw (12,2.75) node[name=APunder] {\textsize\raisebox{-6pt}[0pt][0pt]{\begin{tabular}[b]{c}\raisebox{-1pt}[0pt][0pt]{\circled{9}}\\[-0.5ex]\large$\bullet$\end{tabular}}};
\draw[semithick] (MPunder.south) --  (APunder.south) node[draw=none,fill=none,font=\scriptsize,midway,above] {$\alpha_{\downarrow^2}$};
\draw (12,1.75) node[name=APunder] {\textsize\raisebox{-6pt}[0pt][0pt]{\begin{tabular}[b]{c}\phantom{\circled{10}}~~~~\rlap{reverse Hoare aka}\\[-0.5ex]\raisebox{-1pt}[0pt][0pt]{\circled{10}}~~\rlap{incorrectness logic}\\[-0.5ex]\large$\bullet$\end{tabular}}};
\draw[semithick] (Munder.south) --  (APunder.south) node[draw=none,fill=none,font=\scriptsize,midway,above] {$\alpha_{\downarrow^2}$};
\end{tikzpicture}
\end{flushleft}
\vskip0.75em
\caption{Forward semantics and logics\label{fig:Forward-semantics-logics}}
\vskip-1.5em
\end{figure}

\subsection{Singularities of Logics}\label{sec:Singularities}
\subsubsection{Emptiness Versus Universality}\label{sec:UniversalityEmptiness}
The same way that \textsf{false} is satisfied by no element of the universe in logic,
some transformational logics have this  emptiness property, meaning that some programs satisfy no formula of the logic. This is the case of a nonterminating program for Manna and Pnueli total correctness logic \cite{DBLP:journals/acta/MannaP74}. Emptiness  may look awkward since using the deductive system to prove any specification will always fail.

The same way that \textsf{true} is satisfied by all elements of the universe in logic,
transformational  logics may have the universality property, meaning that  there exist programs for which any pair $\pair{P}{Q}$ for that program is in the logic (i.e.\ is satisfied in logical terms). For example, in Hoare logic, $\{P\}\,\texttt{\small while (true) skip}\,\{Q\}$ is satisfied for all $P$ and $Q$. $[P]\,\texttt{\small S}\,[\texttt{\small false}]$ 
is always true in incorrectness logic \cite{DBLP:journals/pacmpl/OHearn20}. Universality may look awkward since using the deductive system to prove this obvious fact may be very complicated.

These phenomena have been criticized (e.g.\ emptiness  for necessary preconditions \cite{DBLP:conf/vmcai/CousotCL11,DBLP:conf/vmcai/CousotCFL13} in \cite[section, page 10:28]{DBLP:journals/pacmpl/OHearn20}) but are inherent to semantic approximation.

\subsubsection{Correctness Versus Incorrectness}\label{sec:Correctness-versus-incorrectness}
The use of a logic to prove correctness or incorrectness is not intrinsic but depending upon the application domain. For example, termination is required for most programs so that Manna and Pnueli logic is a correctness logic \cite{DBLP:journals/acta/MannaP74}. However, operating systems should not terminate, and proving the contrary by Manna and Pnueli logic \cite{DBLP:journals/acta/MannaP74} would make it an incorrectness logic. Another example is the incorrectness logic \cite{DBLP:journals/pacmpl/OHearn20} which has the same theory as the reverse Hoare logic used by \cite{DBLP:conf/sefm/VriesK11} to prove correctness. The qualification of under or over approximation instead of correctness or incorrectness logics looks more independent of specific applications, as suggested by \cite{DBLP:conf/ecoop/MaksimovicCLSG23}.

\subsection{Backward Logics}\label{sec:Backward-logics}
Backward logics originates from the inversion abstraction (using the inverse program semantics $(\sqb{\texttt{\small S}}_\bot)^{-1}$) or the dual complement abstraction (stating the impossibility of the negation of a property, which is called the \emph{duality principle for programs} by Pratt \cite[p\@. 110]{DBLP:conf/focs/Pratt76}) and the \emph{conjugate} in \cite[equation (2) page 82]{DBLP:books/daglib/0067387}. 
They correspond to the commutative diagram of \cite[page 241]{DBLP:conf/popl/CousotC77}, also found on \cite[page 98]{CousotCousot82-TNPC} (where inversion is ${}^{-1}$ and complement is \~{}$\,$), diagrams which are extended to Fig\@.~\ref{fig:taxonomy}.

\subsubsection{The Inversion Abstraction}
As noticed by \cite[section 1.2]{DBLP:conf/focs/Pratt76}, the inversion isomorphism transforms forward antecedent-consequent logics into backward consequent-antecedent logics.
For that purpose, let us define the relation isomorphic abstraction $\alpha^{-1}$, its pointwise extension $\mathord{\stackrel{.}{\alpha}^{-1}}$, and the inverse transformer abstraction $\vec\alpha^{-1}$.
\bgroup\arraycolsep0.5\arraycolsep
\begin{eqntabular}{rcl@{\qquad}rcl@{\qquad}rcl}
\alpha^{-1}(r) &\triangleq&r^{-1} 
& 
\mathord{\stackrel{.}{\alpha}^{-1}}(f)  
&\triangleq&\alpha^{-1}\comp f\comp\alpha^{-1}
&  
\vec\alpha^{-1}(T)&\triangleq&\mathord{\stackrel{.}{\alpha}^{-1}}\comp T\comp\alpha^{-1}
\label{eq:def:alpha-1}
\end{eqntabular}
so that we have the following Galois isomorphisms \proofinapx
\egroup
\begin{eqntabular}{c@{\qquad}}
\pair{\wp(\mathcal{X}\times \mathcal{Y})}{\subseteq}\GaloiS{\alpha^{-1}}{\alpha^{-1}}\pair{\wp(\mathcal{Y}\times \mathcal{X})}{\subseteq}
\nonumber\\[-1ex]
\pair{\wp(\mathcal{Z}\times \mathcal{X})\rightarrow\wp(\mathcal{Z}\times \mathcal{Y})}{\stackrel{.}{\subseteq}}\GaloiS{\mathord{\stackrel{.}{\alpha}^{-1}}}{\mathord{\stackrel{.}{\alpha}^{-1}}}\pair{\wp(\mathcal{X}\times \mathcal{Z})\rightarrow\wp(\mathcal{Y}\times \mathcal{Z})}{\stackrel{.}{\subseteq}}
\label{eq:GC:alpha-1}\\[-1ex]
\pair{\wp(\mathcal{X}\times \mathcal{Y})\rightarrow \wp(\mathcal{Z}\times \mathcal{X})\rightarrow\wp(\mathcal{Z}\times \mathcal{Y})}{\stackrel{..}{\subseteq}}
\GaloiS{\vec{\alpha}^{-1}}{\vec{\alpha}^{-1}}\pair{\wp(\mathcal{Y}\times \mathcal{X})\rightarrow \wp(\mathcal{X}\times \mathcal{Z})\rightarrow\wp(\mathcal{Y}\times \mathcal{Z})}{\stackrel{..}{\subseteq}}\nonumber
\end{eqntabular}
\begin{toappendix}
\begin{proof}[Proof of (\ref{eq:GC:alpha-1})]
We write $\mathrel{{\subseteq}/{=}}$ to either stand for $\subseteq$ (in the Galois connection proof) or $=$ (in the isomorphism proof), everywhere in the proof.
\begin{calculus}[$\Leftrightarrow$\ ]
\hyphen{5} \formula{\alpha^{-1}(r)\mathrel{{\subseteq}/{=}} r'}\\
$\Leftrightarrow$
\formulaexplanation{r^{-1}\mathrel{{\subseteq}/{=}} r'}{def\@. (\ref{eq:def:alpha-1}) of $\alpha^{-1}$}\\
$\Leftrightarrow$
\formulaexplanation{(r^{-1})^{-1}\mathrel{{\subseteq}/{=}} r'^{-1}}{def\@. inverse ${}^{-1}$}\\
$\Leftrightarrow$
\formulaexplanation{r\mathrel{{\subseteq}/{=}} r'^{-1}}{def\@. inverse ${}^{-1}$}\\
$\Leftrightarrow$
\formulaexplanation{r\mathrel{{\subseteq}/{=}} \alpha^{-1}(r')}{def\@. (\ref{eq:def:alpha-1}) of $\alpha^{-1}$}\\[1ex]
\hyphen{5} \formula{\mathord{\mathord{\stackrel{.}{\alpha}^{-1}}(f)}\mathrel{{\stackrel{.}{\subseteq}}/{=}}g}\\
$\Leftrightarrow$
\formulaexplanation{\forall x\mathrel{.}\mathord{\mathord{\stackrel{.}{\alpha}^{-1}}(f)}(x)\mathrel{{\subseteq}/{=}} g(x)}{pointwise def\@. $\stackrel{.}{\subseteq}$}\\
$\Leftrightarrow$ 
\formulaexplanation{\forall x\mathrel{.}\alpha^{-1}(f(\alpha^{-1}(x)))\mathrel{{\subseteq}/{=}} g(x)}{def\@. (\ref{eq:def:alpha-1}) of $\mathord{\stackrel{.}{\alpha}^{-1}}$}\\
$\Leftrightarrow$ 
\formula{\forall y\mathrel{.}\alpha^{-1}(f(\alpha^{-1}(\alpha^{-1}(y))))\mathrel{{\subseteq}/{=}} g(\alpha^{-1}(y))}\\[-0.5ex]
\explanation{($\subseteq$)\quad letting $x=\alpha^{-1}(y)$\\
($\supseteq$)\quad $y=\alpha^{-1}(x)$, Galois isomorphism (\ref{eq:GC:alpha-1}) for $\alpha^{-1}$}\\[0.5ex]
$\Leftrightarrow$
\formulaexplanation{\forall y\mathrel{.}f(y)\mathrel{{\subseteq}/{=}}\alpha^{-1}(g(\alpha^{-1}(y)))}{Galois isomorphism (\ref{eq:GC:alpha-1}) for $\alpha^{-1}$}\\
$\Leftrightarrow$
\formulaexplanation{\forall x\mathrel{.}f(x)\mathrel{{\subseteq}/{=}}\mathord{\stackrel{.}{\alpha}^{-1}}(g)(x)}{pointwise def\@. (\ref{eq:def:alpha-1}) of $\mathord{\stackrel{.}{\alpha}^{-1}}$}\\
$\Leftrightarrow$
\formulaexplanation{f\mathrel{{\stackrel{.}{\subseteq}}/{=}}\mathord{\stackrel{.}{\alpha}}^{-1}(g)}{pointwise def\@. $\stackrel{.}{\subseteq}$}\\[1ex]
\hyphen{5} \formula{\vec\alpha^{-1}(T)\mathrel{{\stackrel{..}{\subseteq}}/{=}}T'}\\
$\Leftrightarrow$
\formulaexplanation{\mathord{\stackrel{.}{\alpha}^{-1}}\comp T\comp\alpha^{-1}\mathrel{{\stackrel{..}{\subseteq}}/{=}}T'}{def\@. (\ref{eq:GC:alpha-1}) of $\vec\alpha^{-1}$}\\ 
$\Leftrightarrow$
\formulaexplanation{\forall r\mathrel{.}\mathord{\stackrel{.}{\alpha}^{-1}}( T(\alpha^{-1}(r)))\mathrel{{\stackrel{.}{\subseteq}}/{=}}T'(r)}{pointwise def\@. ${\stackrel{..}{\subseteq}}$ and def\@. function composition $\comp$}\\
$\Leftrightarrow$
\formulaexplanation{\forall r'\mathrel{.}\mathord{\stackrel{.}{\alpha}^{-1}}( T(\alpha^{-1}(\alpha^{-1}(r'))))\mathrel{{\stackrel{.}{\subseteq}}/{=}}T'(\alpha^{-1}(r'))}{letting $r=\alpha^{-1}(r')$}\\
$\Leftrightarrow$
\formulaexplanation{\forall r'\mathrel{.}\mathord{\stackrel{.}{\alpha}^{-1}}( T(r'))\mathrel{{\stackrel{.}{\subseteq}}/{=}}T'(\alpha^{-1}(r'))}{Galois isomorphism (\ref{eq:GC:alpha-1}) for $\alpha^{-1}$}\\
$\Leftrightarrow$
\formulaexplanation{\forall r'\mathrel{.}\forall Q\mathrel{.}\mathord{\stackrel{.}{\alpha}^{-1}}( T(r'))Q\mathrel{{\subseteq}/{=}}T'(\alpha^{-1}(r'))Q}{pointwise def\@. ${{\stackrel{.}{\subseteq}}/{=}}$}\\
$\Leftrightarrow$
\formulaexplanation{\forall r'\mathrel{.}\forall Q\mathrel{.}\mathord{{\alpha}^{-1}}(T(r')Q)\mathrel{{\subseteq}/{=}}T'(\alpha^{-1}(r'))Q}{pointwise def\@. (\ref{eq:def:alpha-1}) of $\mathord{\stackrel{.}{\alpha}^{-1}}$}\\
$\Leftrightarrow$
\formulaexplanation{\forall r'\mathrel{.}\forall Q\mathrel{.}T(r')Q\mathrel{{\subseteq}/{=}}\mathord{{\alpha}^{-1}}(T'(\alpha^{-1}(r'))Q)}{Galois isomorphism (\ref{eq:def:alpha-1}) for $\mathord{{\alpha}^{-1}}$}\\
$\Leftrightarrow$
\formulaexplanation{\forall r'\mathrel{.}\forall Q\mathrel{.}T(r')Q\mathrel{{\subseteq}/{=}}\mathord{\stackrel{.}{\alpha}^{-1}}(T'(\alpha^{-1}(r')))Q}{def\@. (\ref{eq:def:alpha-1}) of $\mathord{\stackrel{.}{\alpha}^{-1}}$}\\
$\Leftrightarrow$
\formulaexplanation{\forall r'\mathrel{.}T(r')\mathrel{{\stackrel{.}{\subseteq}}/{=}}\mathord{\stackrel{.}{\alpha}^{-1}}(T'(\alpha^{-1}(r')))Q}{pointwise def\@. $\stackrel{.}{\subseteq}$}\\
$\Leftrightarrow$
\formulaexplanation{T\mathrel{{\stackrel{..}{\subseteq}}/{=}}\mathord{\stackrel{.}{\alpha}^{-1}}\comp T'\comp\alpha^{-1}}{def\@. fonction composition $\comp$ and pointwise def\@. $\stackrel{..}{\subseteq}$}\\
$\Leftrightarrow$
\lastformulaexplanation{T\mathrel{{\stackrel{..}{\subseteq}}/{=}} \vec\alpha^{-1}(T')}{def\@. (\ref{eq:GC:alpha-1}) of $\vec\alpha^{-1}$}{\mbox{\qed}}
\end{calculus}\let\qed\relax
\end{proof}
\end{toappendix}
\noindent Using these Galois isomorphisms (\ref{eq:GC:alpha-1}), we define the precondition transformer \proofinapx
\begin{eqntabular}{rclcl}
\textsf{\upshape Pre}&\triangleq&\vec{\alpha}^{-1}(\textsf{\upshape Post})
&=&\LAMBDA{r}\LAMBDA{Q}\{\pair{\sigma}{\sigma_{\mskip-2muf}}\mid\exists\sigma'\mathrel{.}\pair{\sigma}{\sigma'}\in r\wedge \pair{\sigma'}{\sigma_{\mskip-2muf}}\in Q)\}
\label{eq:def:Pre}
\end{eqntabular}
so that $\textsf{\upshape Pre}(r)Q$ is the set of initial states ${\sigma}$ related to ${\sigma_{\mskip-2muf}}$ from which it is possible to reach a final state ${\sigma'}$ related to ${\sigma_{\mskip-2muf}}$ satisfying the consequent $Q$  through a transition by $r$.
\begin{toappendix}
\begin{proof}[Proof of (\ref{eq:def:Pre})]
\begin{calculus}
\formula{\vec{\alpha}^{-1}(\textsf{\upshape Post})}\\
=
\formulaexplanation{\mathord{\stackrel{.}{\alpha}^{-1}}\comp\textsf{\upshape Post}\comp\alpha^{-1}}{def\@. (\ref{eq:GC:alpha-1}) of $\vec\alpha^{-1}$}\\
=
\formulaexplanation{\LAMBDA{r}\mathord{\stackrel{.}{\alpha}^{-1}}(\textsf{\upshape Post}(\alpha^{-1}(r)))}{def\@.  function composition $\comp$}\\
=
\formulaexplanation{\LAMBDA{r}\mathord{\stackrel{.}{\alpha}^{-1}}(\textsf{\upshape Post}(r^{-1}))}{def\@. (\ref{eq:def:alpha-1}) of $\alpha^{-1}$}\\
=
\formulaexplanation{\LAMBDA{r}\mathord{\stackrel{.}{\alpha}^{-1}}(\LAMBDA{P}\{\pair{\sigma_0}{\sigma'}\mid\exists\sigma\mathrel{.}\pair{\sigma_0}{\sigma}\in P\wedge\pair{\sigma}{\sigma'}\in r^{-1}\})}{def\@. (\ref{eq:def:Post}) of $\textsf{\upshape Post}$}\\
=
\formulaexplanation{\LAMBDA{r}{\LAMBDA{P}\alpha}^{-1}(\{\pair{\sigma_0}{\sigma'}\mid\exists\sigma\mathrel{.}\pair{\sigma}{\sigma_0}\in \alpha^{-1}(P)\wedge\pair{\sigma}{\sigma'}\in r^{-1}\})}{pointwise def\@. of $\mathord{\stackrel{.}{\alpha}^{-1}}$, def\@. $\alpha^{-1}$}\\
= 
\formulaexplanation{\LAMBDA{r}\LAMBDA{P}\alpha^{-1}(\{\pair{\sigma_0}{\sigma'}\mid\exists\sigma\mathrel{.}\pair{\sigma}{\sigma_0}\in \alpha^{-1}(P)\wedge\pair{\sigma'}{\sigma}\in r)\})}{def\@. inverse $r^{-1}$}\\
=
\formulaexplanation{\LAMBDA{r}\LAMBDA{P}\alpha^{-1}(\{\pair{\sigma_0}{\sigma'}\mid\exists\sigma\mathrel{.}\pair{\sigma'}{\sigma}\in r\wedge \pair{\sigma}{\sigma_0}\in \alpha^{-1}(P))\})}{$\wedge$ commutative}\\
=
\formulaexplanation{\LAMBDA{r}\LAMBDA{P}\alpha^{-1}(\{\pair{\sigma_f}{\sigma'}\mid\exists\sigma\mathrel{.}\pair{\sigma'}{\sigma}\in r\wedge \pair{\sigma}{\sigma_f}\in \alpha^{-1}(P))\})}{renaming $\sigma_0$ into $\sigma_f$}\\
=
\formula{\LAMBDA{r}\LAMBDA{P}\alpha^{-1}(\{\pair{\sigma_f}{\sigma}\mid\exists\sigma'\mathrel{.}\pair{\sigma}{\sigma'}\in r\wedge \pair{\sigma'}{\sigma_f}\in \alpha^{-1}(P))\})}\\[-0.5ex]\rightexplanation{renaming $\sigma$ into $\sigma'$ and vice-versa}\\[-0.5ex]
=
\formulaexplanation{\LAMBDA{r}\LAMBDA{Q}\{\pair{\sigma}{\sigma_{\mskip-2muf}}\mid\exists\sigma'\mathrel{.}\pair{\sigma}{\sigma'}\in r\wedge \pair{\sigma'}{\sigma_{\mskip-2muf}}\in Q)\}}{def\@. (\ref{eq:def:alpha-1}) of $\alpha^{-1}$}\\
=
\lastformulaexplanation{\textsf{\upshape Pre}}{def\@. (\ref{eq:def:Pre}) of \textsf{\upshape Pre}}{\mbox{\qed}}
\end{calculus}\let\qed\relax
\end{proof}
\end{toappendix}

\subsubsection{The Complement Abstraction}
The complement abstraction is useful to express that a program property does not hold (e.g\@. to contradict a Hoare triple).

Let $\mathcal{X}$ be a set and $X\in\wp(\mathcal{X})$. The complement abstraction is $\alpha^{\neg}(X)\triangleq\neg X$ (where $\neg X$ $\triangleq$ 
$\mathcal{X}\setminus X$ when $X\in\wp(\mathcal{X})$). We have the Galois isomorphisms
\bgroup\arraycolsep2pt%
\begin{eqntabular}{c@{\qquad and\qquad}c}
\pair{\wp(\mathcal{X})}{\subseteq}\GaloiS{\alpha^{\neg}}{\alpha^{\neg}}\pair{\wp(\mathcal{X})}{\supseteq}
&\pair{\wp(\mathcal{X})}{\supseteq}\GaloiS{\alpha^{\neg}}{\alpha^{\neg}}\pair{\wp(\mathcal{X})}{\subseteq}
\label{eq-complement-GC}
\end{eqntabular}
\egroup
(which follow from $X\subseteq Y\Leftrightarrow \neg X\supseteq \neg Y$ and  $\neg \neg X = X$ and implies De Morgan laws $\alpha^{\neg}(\bigcup X)=\bigcap \alpha^{\neg}(X)$ and $\alpha^{\neg}(\bigcap X)=\bigcup \alpha^{\neg}(X)$ since, in a Galois connection, $\alpha$ preserves existing joins and $\gamma$ preserves existing meets).

\subsubsection{The Emptiness and Non-Emptiness Abstraction}
Negation is sometimes equivalent to an emptiness or non-emptiness check. For example, $\neg(A\subseteq B)\Leftrightarrow A\cap \neg B\neq\emptyset$. These are abstractions.
\bgroup\arraycolsep0.28\arraycolsep\begin{eqntabular}[fl]{@{}rcl@{\quad}rcl@{}}
\rlap{Emptiness}&&&\rlap{Non-emptiness}\label{eq:non-emptiness-abstraction}
\\
\alpha^{\accentset{\rightarrow}{\emptyset}}(\tau) &\triangleq& \{\pair{P}{Q}\mid Q\cap\tau(P)=\emptyset\}
&
\alpha^{\accentset{\rightarrow}{\otimes}}(\tau)  &\triangleq& 
\alpha^{\neg}\comp\alpha^{\accentset{\rightarrow}{\emptyset}}(\tau) 
\colsep{=}
\{\pair{P}{Q}\mid Q\cap\tau(P)\neq\emptyset\}
\nonumber\\
\alpha^{\accentset{\leftarrow}{\emptyset}}(\tau)  &\triangleq&  \alpha^{-1}(\alpha^{\accentset{\rightarrow}{\emptyset}}(\tau)) \colsep{=}\{\pair{P}{Q}\mid P\cap \tau(Q)=\emptyset\}
&
\alpha^{\accentset{\leftarrow}{\otimes}}(\tau) &\triangleq& \alpha^{-1}(\alpha^{\accentset{\rightarrow}{\otimes}}(\tau)) \colsep{=}\{\pair{P}{Q}\mid P\cap\tau(Q)\neq\emptyset\}\nonumber
\end{eqntabular}
\egroup
We have \proofinapx
\bgroup\abovedisplayskip-10pt\belowdisplayskip0pt
\begin{eqntabular}{c}
\pair{\wp(\mathcal{X})\rightarrow\wp(\mathcal{Y})}{\stackrel{.}{\subseteq}}
\galois{\alpha^{\accentset{\rightarrow}{\emptyset}}}{\gamma^{\accentset{\rightarrow}{\emptyset}}}
\pair{\wp(\mathcal{X}\times\mathcal{Y})}{\supseteq}
\label{eq:GC:Emptiness}
\end{eqntabular}\egroup
and similarly Galois connections for the other cases $\alpha^{\accentset{\leftarrow}{\emptyset}}$,  $\alpha^{\accentset{\rightarrow}{\otimes}}$, and $\alpha^{\accentset{\leftarrow}{\otimes}}$.
\begin{toappendix}
\begin{proof}[Proof of (\ref{eq:GC:Emptiness})]
\begin{calculus}[$\Leftrightarrow$~]
\hyphen{5}\formula{\alpha^{\accentset{\rightarrow}{\emptyset}}(\mathop{\stackrel{.}{\bigcup}}\limits_{i\in\Delta}\tau_i)}\\
=
\formula{\{\pair{P}{Q}\mid (\mathop{\stackrel{.}{\bigcup}}\limits_{i\in\Delta}\tau_i)(P)\subseteq \neg Q\}}\\[-0.5ex]\rightexplanation{def\@. (\ref{eq:non-emptiness-abstraction}) of $\alpha^{\accentset{\rightarrow}{\emptyset}}(\tau)$ = $\{\pair{P}{Q}\mid Q\cap\tau(P)=\emptyset\}$ = $\{\pair{P}{Q}\mid\tau(P)\subseteq\neg Q\}$}\\
=
\formulaexplanation{\{\pair{P}{Q}\mid (\mathop{\stackrel{.}{\bigcup}}\limits_{i\in\Delta}\tau_i(P))\subseteq \neg Q\}}{pointwise def\@. $\stackrel{.}{\bigcup}$}\\
=
\formulaexplanation{\{\pair{P}{Q}\mid \forall i\in\Delta\mathrel{.}\tau_i(P)\subseteq \neg Q\}}{pointwise def\@. ${\bigcup}$ and $\subseteq$}\\
=
\formulaexplanation{\bigcap_{i\in\Delta}\{\pair{P}{Q}\mid \tau_i(P)\subseteq \neg Q\}}{def\@. $\bigcap$}\\
=
\formulaexplanation{\bigcap_{i\in\Delta}\alpha^{\accentset{\rightarrow}{\emptyset}}(\tau_i)}{def\@. (\ref{eq:non-emptiness-abstraction}) of $\alpha^{\accentset{\rightarrow}{\emptyset}}$}
\end{calculus}
\noindent\rlap{\hyphen{5}}\phantom{$\Leftrightarrow$~} The other $\alpha^{\accentset{\leftarrow}{\emptyset}}$, $\alpha^{\accentset{\leftarrow}{\emptyset}}$, and $\alpha^{\accentset{\leftarrow}{\otimes}}$ follow from proposition \ref{prop:GC-composition} by composition of the Galois connection (\ref{eq:GC:Emptiness}) for $\alpha^{\accentset{\rightarrow}{\emptyset}}$ and those
(\ref{eq-complement-GC}) for ${\alpha^{\neg}}$ and (\ref{eq:def:alpha-gamma-d1}) for $\alpha^{-1}$.
\end{proof}
\end{toappendix}

\subsubsection{The Complement Dual Abstractions}

Pratt's ``Duality Principle for Programs'' \cite[section 1.2]{DBLP:conf/focs/Pratt76}, is similar the complement duality in \href{https://en.wikipedia.org/wiki/Classical_logic}{classical logic} i.e\@. something not false is true. 

This
can stated for functions $f$ by defining the complement dual abstraction $\alpha^{\sim}$ of functions and its pointwise extension
$\mathord{\stackrel{.}{\alpha}}^{\sim}$ below, which yields the Galois connections as follows \proofinapx
\bgroup
\begin{eqntabular}[fl]{l@{\qquad\qquad\qquad\qquad}c}
&\alpha^{\sim}(f)\colsep{\triangleq}\neg\comp f\comp\neg\qquad\qquad\mathord{\stackrel{.}{\alpha}}^{\sim}(F)\colsep{\triangleq}\LAMBDA{x}\alpha^{\sim}(F(x))\label{eq:eq-function-complement-GC}\\
\rlap{with connections \proofinapx}&\pair{\wp(\mathcal{X})\rightarrow\wp(\mathcal{Y})}{\stackrel{.}{\subseteq}}\GaloiS{\alpha^{\sim}}{\alpha^{\sim}}\pair{\wp(\mathcal{X})\rightarrow\wp(\mathcal{Y})}{\stackrel{.}{\supseteq}}
\nonumber\\
&\pair{\wp(\mathcal{Z})\rightarrow\wp(\mathcal{X})\rightarrow\wp(\mathcal{Y})}{\stackrel{..}{\subseteq}}\GaloiS{\mathord{\stackrel{.}{\alpha}}^{\sim}}{\mathord{\stackrel{.}{\alpha}}^{\sim}}\pair{\wp(\mathcal{Z})\rightarrow\wp(\mathcal{X})\rightarrow\wp(\mathcal{Y})}{\stackrel{..}{\supseteq}}\nonumber
\end{eqntabular}\egroup
where $\stackrel{.}{\subseteq}$ is the pointwise extension of $\subseteq$, that is, $f\stackrel{.}{\subseteq}g\Leftrightarrow\forall X\in \mathcal{X}\mathrel{.}f(X)\subseteq g(X)$, $\stackrel{..}{\subseteq}$ is the pointwise extension of  $\stackrel{.}{\subseteq}$, etc.  
\begin{toappendix}
\begin{proof}[Proof of (\ref{eq:eq-function-complement-GC})]
\begin{calculus}[$\Leftrightarrow$\ ]
\hyphen{5}\formula{\alpha^{\sim}(f)\mathrel{{\stackrel{.}{\subseteq}}/{=}} g}\\
$\Leftrightarrow$
\formulaexplanation{\forall X\in\wp(\mathcal{X})\mathrel{.}\alpha^{\sim}(f)X \mathrel{{\subseteq}/{=}} g(X)}{pointwise def\@. ${\stackrel{.}{\subseteq}}$ or $=$}\\
$\Leftrightarrow$
\formulaexplanation{\forall X\in\wp(\mathcal{X})\mathrel{.}\neg\comp f\comp\neg(X) \mathrel{{\subseteq}/{=}} g(X)}{def\@. $\alpha^{\sim}$}\\
$\Leftrightarrow$
\formulaexplanation{\forall X\in\wp(\mathcal{X})\mathrel{.}f\comp\neg(X) \mathrel{{\supseteq}/{=}} \neg g(X)}{$A \subseteq B$ iff $\neg B \subseteq \neg A$ i.e\@. $\neg A \supseteq \neg B$ and $A=B$ $\Leftrightarrow$ $\neg A=\neg B$}\\
$\Leftrightarrow$
\formulaexplanation{\forall Y\in\wp(\mathcal{X})\mathrel{.}f(Y) \mathrel{{\supseteq}/{=}} \neg g(\neg Y)}{letting $X=\neg Y$ and $\neg\comp\neg$ is the identity}\\
$\Leftrightarrow$
\formulaexplanation{f\mathrel{{\stackrel{.}{\supseteq}}/{=}}  \neg \comp g \comp\neg}{pointwise def\@. $\stackrel{.}{\supseteq}$ and function composition $\comp$}\\
$\Leftrightarrow$
\lastformulaexplanation{f\mathrel{{\stackrel{.}{\supseteq}}/{=}}  \alpha^{\sim}(g)}{def\@. $\alpha^{\sim}$}{\mbox{\qed}}
\end{calculus}\let\qed\relax
\end{proof}
\end{toappendix}

\noindent Using this Galois connection (\ref{eq:eq-function-complement-GC}), we define the dual complement transformers \proofinapx
\begin{eqntabular}{rclcl}
\widetilde{\textsf{\upshape Post}}&\triangleq&\mathord{\stackrel{.}{\alpha}}^{\sim}(\textsf{\upshape Post})&=&\LAMBDA{r}\LAMBDA{P}\{\pair{\sigma_0}{\sigma'}\mid\forall \sigma \mathrel{.}\pair{\sigma}{\sigma'}\in r\Rightarrow\pair{\sigma_0}{\sigma}\in P\}
\label{eq:def:Postt-Pret}\\
\widetilde{\textsf{\upshape Pre}}&\triangleq&\mathord{\stackrel{.}{\alpha}}^{\sim}(\textsf{\upshape Pre})&=&\LAMBDA{r}\LAMBDA{Q}\{\pair{\sigma}{\sigma_{\mskip-2muf}}\mid\forall \sigma' \mathrel{.}\pair{\sigma}{\sigma'}\in  r \Rightarrow\pair{\sigma'}{\sigma_{\mskip-2muf}}\in Q\}
\nonumber
\end{eqntabular}
\begin{toappendix}
\begin{proof}[Proof of (\ref{eq:def:Postt-Pret})]
\begin{calculus}
\hyphen{6}\formula{\widetilde{\textsf{\upshape Post}}}\\
=
\formulaexplanation{\mathord{\stackrel{.}{\alpha}}^{\sim}(\textsf{\upshape Post})}{def\@. (\ref{eq:def:Postt-Pret}) of ${\widetilde{\textsf{\upshape Post}}}$}\\
=
\formulaexplanation{\LAMBDA{r}\alpha^{\sim}(\textsf{\upshape Post}(r))}{pointwise def\@. $\mathord{\stackrel{.}{\alpha}}^{\sim}$}\\
=
\formulaexplanation{\LAMBDA{r}\neg\comp \textsf{\upshape Post}(r)\comp\neg}{def\@. $\alpha^{\sim}$}\\
=
\formulaexplanation{\LAMBDA{r}\LAMBDA{P}\neg(\textsf{\upshape Post}(r)(\neg P))}{def\@. function composition $\comp$}\\
=
\formulaexplanation{\LAMBDA{r}\LAMBDA{P}\neg\{\pair{\sigma_0}{\sigma'}\mid\exists\sigma\mathrel{.}\pair{\sigma_0}{\sigma}\in (\neg P)\wedge\pair{\sigma}{\sigma'}\in r\}
}{def\@. (\ref{eq:def:Post}) of $\textsf{\upshape Post}$}\\
=
\formulaexplanation{\LAMBDA{r}\LAMBDA{P}\{\pair{\sigma_0}{\sigma'}\mid\forall \sigma \mathrel{.}\pair{\sigma_0}{\sigma}\not\in (\neg P)\vee\pair{\sigma}{\sigma'}\not\in r\}}{def\@.negation  $\neg$}\\
=
\formulaexplanation{\LAMBDA{r}\LAMBDA{P}\{\pair{\sigma_0}{\sigma'}\mid\forall \sigma \mathrel{.}\pair{\sigma}{\sigma'}\in r\Rightarrow\pair{\sigma_0}{\sigma}\in P\}}{def\@. implication $\Rightarrow$ and $\neg$}\\[1ex]
\hyphen{6}\formula{\widetilde{\textsf{\upshape Pre}}}\\
=
\formulaexplanation{\mathord{\stackrel{.}{\alpha}}^{\sim}(\textsf{\upshape Pre})}{def\@. (\ref{eq:def:Postt-Pret}) of ${\widetilde{\textsf{\upshape Pre}}}$}\\
=
\formulaexplanation{\LAMBDA{r}\alpha^{\sim}(\textsf{\upshape Pre}(r))}{pointwise def\@. $\mathord{\stackrel{.}{\alpha}}^{\sim}$}\\
=
\formulaexplanation{\LAMBDA{r}\neg\comp \textsf{\upshape Pre}(r)\comp\neg}{def\@. $\alpha^{\sim}$}\\
=
\formulaexplanation{\LAMBDA{r}\LAMBDA{Q}\neg(\textsf{\upshape Pre}(r)(\neg Q))}{def\@. function composition $\comp$}\\
=
\formulaexplanation{\LAMBDA{r}\LAMBDA{Q}\neg\{\pair{\sigma}{\sigma_{\mskip-2muf}}\mid\exists\sigma'\mathrel{.}\pair{\sigma}{\sigma'}\in r\wedge \pair{\sigma'}{\sigma_{\mskip-2muf}}\in (\neg Q))\}}{def\@. (\ref{eq:def:Pre}) of $\textsf{\upshape Pre}$}\\
=
\formulaexplanation{\LAMBDA{r}\LAMBDA{Q}\{\pair{\sigma}{\sigma_{\mskip-2muf}}\mid \forall \sigma'\mathrel{.}\pair{\sigma}{\sigma'}\not\in  r \vee\pair{\sigma'}{\sigma_{\mskip-2muf}}\in Q\}}{def\@. negation  $\neg$}\\
=
\lastformulaexplanation{\LAMBDA{r}\LAMBDA{Q}\{\pair{\sigma}{\sigma_{\mskip-2muf}}\mid\forall \sigma' \mathrel{.}\pair{\sigma}{\sigma'}\in  r \Rightarrow\pair{\sigma'}{\sigma_{\mskip-2muf}}\in Q\}}{def\@. implication $\Rightarrow$}{\mbox{\qed}}
\end{calculus}\let\qed\relax
\end{proof}
\end{toappendix}

\noindent
If $r\in\wp(\mathcal{X}\times\mathcal{Y})$ then \proofinapx
\begin{eqntabular}[fl]{c@{\ \ \ \ }c@{}}
\pair{\wp(\mathcal{Z}\times\mathcal{X})}{\subseteq}\galois{\textsf{\upshape Post}(r)}{\mathord{\stackrel{.}{\alpha}^{-1}}(\widetilde{\textsf{\upshape Pre}}(r))}\pair{\wp(\mathcal{Z}\times\mathcal{Y})}{{\subseteq}}
&
\pair{\wp(\mathcal{X}\times\mathcal{Z})}{\subseteq}\galois{\textsf{\upshape Pre}(r)}{\mathord{\stackrel{.}{\alpha}^{-1}}(\widetilde{\textsf{\upshape Post}}(r))}\pair{\wp(\mathcal{Y}\times\mathcal{Z})}{\subseteq}\label{eq:Post-Pret:Pre-Postt-GC}
\end{eqntabular}
\begin{toappendix}
\begin{proof}[Proof of (\ref{eq:Post-Pret:Pre-Postt-GC})]
\begin{calculus}[$\Leftrightarrow$~]
\hyphen{6}\formulaexplanation{\textsf{\upshape Post}(r)P\subseteq Q}{$P,Q\in\wp(\mathcal{X}\times\mathcal{X})$}\\
$\Leftrightarrow$
\formulaexplanation{\{\pair{\sigma_0}{\sigma'}\mid\exists\sigma\mathrel{.}\pair{\sigma_0}{\sigma}\in P\wedge\pair{\sigma}{\sigma'}\in r\}\subseteq Q}{def\@. (\ref{eq:def:Post}) of $\textsf{\upshape Post}$}\\
$\Leftrightarrow$
\formulaexplanation{\forall \pair{\sigma_0}{\sigma'}\mathrel{.}(\exists\sigma\mathrel{.}\pair{\sigma_0}{\sigma}\in P\wedge\pair{\sigma}{\sigma'}\in r)\Rightarrow \pair{\sigma_0}{\sigma'}\in Q}{def\@. $\subseteq$}\\
$\Leftrightarrow$
\formulaexplanation{\forall {\sigma_0},{\sigma'},\sigma\mathrel{.}(\pair{\sigma_0}{\sigma}\in P\wedge\pair{\sigma}{\sigma'}\in r)\Rightarrow \pair{\sigma_0}{\sigma'}\in Q}{def\@. $\Rightarrow$ and $\exists/\forall$}\\
$\Leftrightarrow$
\formulaexplanation{\forall {\sigma_0},\sigma\mathrel{.}(\pair{\sigma_0}{\sigma}\in P\Rightarrow(\forall{\sigma'}\mathrel{.}\pair{\sigma}{\sigma'}\in r\Rightarrow \pair{\sigma_0}{\sigma'}\in Q)}{def\@. $\Rightarrow$ and $\wedge$}\\
$\Leftrightarrow$
\formulaexplanation{P\subseteq\{ \pair{\sigma_0}{\sigma}\mid \forall \sigma' \mathrel{.}\pair{\sigma}{\sigma'}\in  r \Rightarrow\pair{\sigma_0}{\sigma'}\in Q\}}{def\@. $\subseteq$}\\
$\Leftrightarrow$
\formulaexplanation{P\subseteq\{\pair{\sigma_{\mskip-2muf}}{\sigma}\mid\forall \sigma' \mathrel{.}\pair{\sigma}{\sigma'}\in  r \Rightarrow\pair{\sigma_{\mskip-2muf}}{\sigma'}\in  Q\}}{renaming}\\
$\Leftrightarrow$
\formulaexplanation{P\subseteq\alpha^{-1} (\{\pair{\sigma}{\sigma_{\mskip-2muf}}\mid\forall \sigma' \mathrel{.}\pair{\sigma}{\sigma'}\in  r \Rightarrow\pair{\sigma'}{\sigma_{\mskip-2muf}}\in \alpha^{-1}(Q)\})}{def\@. (\ref{eq:def:alpha-1}) of $\alpha^{-1}$}\\
$\Leftrightarrow$
\formula{P\subseteq\alpha^{-1}\comp (\LAMBDA{Q}\{\pair{\sigma}{\sigma_{\mskip-2muf}}\mid\forall \sigma' \mathrel{.}\pair{\sigma}{\sigma'}\in  r \Rightarrow\pair{\sigma'}{\sigma_{\mskip-2muf}}\in Q\})\comp\alpha^{-1}(Q)}\\[-0.5ex]\rightexplanation{def\@. function application and composition $\comp$}\\
$\Leftrightarrow$
\formulaexplanation{P\subseteq{\mathord{\stackrel{.}{\alpha}^{-1}}(\LAMBDA{Q}\{\pair{\sigma}{\sigma_{\mskip-2muf}}\mid\forall \sigma' \mathrel{.}\pair{\sigma}{\sigma'}\in  r \Rightarrow\pair{\sigma'}{\sigma_{\mskip-2muf}}\in Q\})}Q}{def\@. (\ref{eq:def:alpha-1}) of $\mathord{\stackrel{.}{\alpha}^{-1}}$}\\
$\Leftrightarrow$
\formulaexplanation{P\subseteq{\mathord{\stackrel{.}{\alpha}^{-1}}(\widetilde{\textsf{\upshape Pre}}(r))}Q}{def\@. (\ref{eq:def:Postt-Pret}) of $\widetilde{\textsf{\upshape Pre}}$}\\[1ex]
\hyphen{6}\formula{\textsf{\upshape Pre}(r)Q \subseteq P}\\
$\Leftrightarrow$
\formulaexplanation{\{\pair{\sigma}{\sigma_{\mskip-2muf}}\mid\exists\sigma'\mathrel{.}\pair{\sigma}{\sigma'}\in r\wedge \pair{\sigma'}{\sigma_{\mskip-2muf}}\in Q)\} \subseteq P}{def\@. (\ref{eq:def:Pre}) of $\textsf{\upshape Pre}$}\\
$\Leftrightarrow$
\formulaexplanation{\forall{\sigma},{\sigma_{\mskip-2muf}}\mathrel{.}(\exists\sigma'\mathrel{.}\pair{\sigma}{\sigma'}\in r\wedge \pair{\sigma'}{\sigma_{\mskip-2muf}}\in Q)\}) \Rightarrow \pair{\sigma}{\sigma_{\mskip-2muf}}\in P}{def\@. $\subseteq$}\\
$\Leftrightarrow$
\formulaexplanation{\forall{\sigma},{\sigma_{\mskip-2muf}},\sigma'\mathrel{.}(\pair{\sigma}{\sigma'}\in r\wedge \pair{\sigma'}{\sigma_{\mskip-2muf}}\in Q)\}) \Rightarrow \pair{\sigma}{\sigma_{\mskip-2muf}}\in P}{def\@. $\Rightarrow$ and $\forall/\exists$}\\
$\Leftrightarrow$
\formulaexplanation{\forall{\sigma},{\sigma_{\mskip-2muf}},\sigma'\mathrel{.}(\pair{\sigma'}{\sigma_{\mskip-2muf}}\in Q) \Rightarrow(\pair{\sigma}{\sigma'}\in r\Rightarrow \pair{\sigma}{\sigma_{\mskip-2muf}}\in P)}{def\@. $\Rightarrow$ and $\wedge$}\\
$\Leftrightarrow$
\formulaexplanation{\forall{\sigma_{\mskip-2muf}},\sigma'\mathrel{.}(\pair{\sigma'}{\sigma_{\mskip-2muf}}\in Q) \Rightarrow(\forall{\sigma}\mathrel{.}\pair{\sigma}{\sigma'}\in r\Rightarrow \pair{\sigma}{\sigma_{\mskip-2muf}}\in P)}{def\@. $\Rightarrow$ and $\forall$}\\
$\Leftrightarrow$
\formulaexplanation{Q\subseteq\{\pair{\sigma'}{\sigma_{\mskip-2muf}}\mid \forall{\sigma}\mathrel{.}\pair{\sigma}{\sigma'}\in r\Rightarrow \pair{\sigma}{\sigma_{\mskip-2muf}}\in P\}}{def\@. $\subseteq$}\\
$\Leftrightarrow$
\formulaexplanation{Q\subseteq\alpha^{-1}(\{\pair{\sigma_{\mskip-2muf}}{\sigma'}\mid \forall{\sigma}\mathrel{.}\pair{\sigma}{\sigma'}\in r\Rightarrow \pair{\sigma_{\mskip-2muf}}{\sigma}\in \alpha^{-1}(P)\})}{def\@. (\ref{eq:def:alpha-1}) of $\alpha^{-1}$}\\
$\Leftrightarrow$
\formulaexplanation{Q\subseteq\alpha^{-1}(\{\pair{\sigma_0}{\sigma'}\mid \forall{\sigma}\mathrel{.}\pair{\sigma}{\sigma'}\in r\Rightarrow \pair{\sigma_0}{\sigma}\in \alpha^{-1}(P)\})}{renaming}\\
$\Leftrightarrow$
\formulaexplanation{Q\subseteq\alpha^{-1}(\widetilde{\textsf{\upshape Post}}(r)(\alpha^{-1}(P)))}{def\@. (\ref{eq:def:Postt-Pret}) of $\widetilde{\textsf{\upshape Post}}$}\\
$\Leftrightarrow$
\formulaexplanation{Q\subseteq{\alpha}^{-1}\comp\widetilde{\textsf{\upshape Post}}(r)\comp\alpha^{-1}(P)}{def\@. function composition $\comp$}\\
$\Leftrightarrow$
\lastformulaexplanation{Q\subseteq\mathord{\stackrel{.}{\alpha}^{-1}}(\widetilde{\textsf{\upshape Post}}(r))P}{def\@. (\ref{eq:def:alpha-1}) of $\mathord{\stackrel{.}{\alpha}^{-1}}$}{\mbox{\qed}}
\end{calculus}\let\qed\relax
\end{proof}
\end{toappendix}

\subsection{The  Hierarchical Taxonomy  of Forward and Backward Transformational Logics}
The composition of abstractions applied to $\textsf{\upshape Post}\sqb{\texttt{\small S}}_\bot$ of Fig\@.  \ref{fig:Forward-semantics-logics} 
can also be applied to $\widetilde{\textsf{\upshape Post}}\sqb{\texttt{\small S}}_\bot$, $\textsf{\upshape Pre}\sqb{\texttt{\small S}}_\bot$, and $\widetilde{\textsf{\upshape Pre}}\sqb{\texttt{\small S}}_\bot$ to get Fig\@.  \ref{fig:taxonomy}. Fig\@.  \ref{fig:Forward-semantics-logics} can be recognized at the bottom right of Fig\@.  \ref{fig:taxonomy}.
\begin{figure}[ht]
\includegraphics[width=0.925\textwidth]{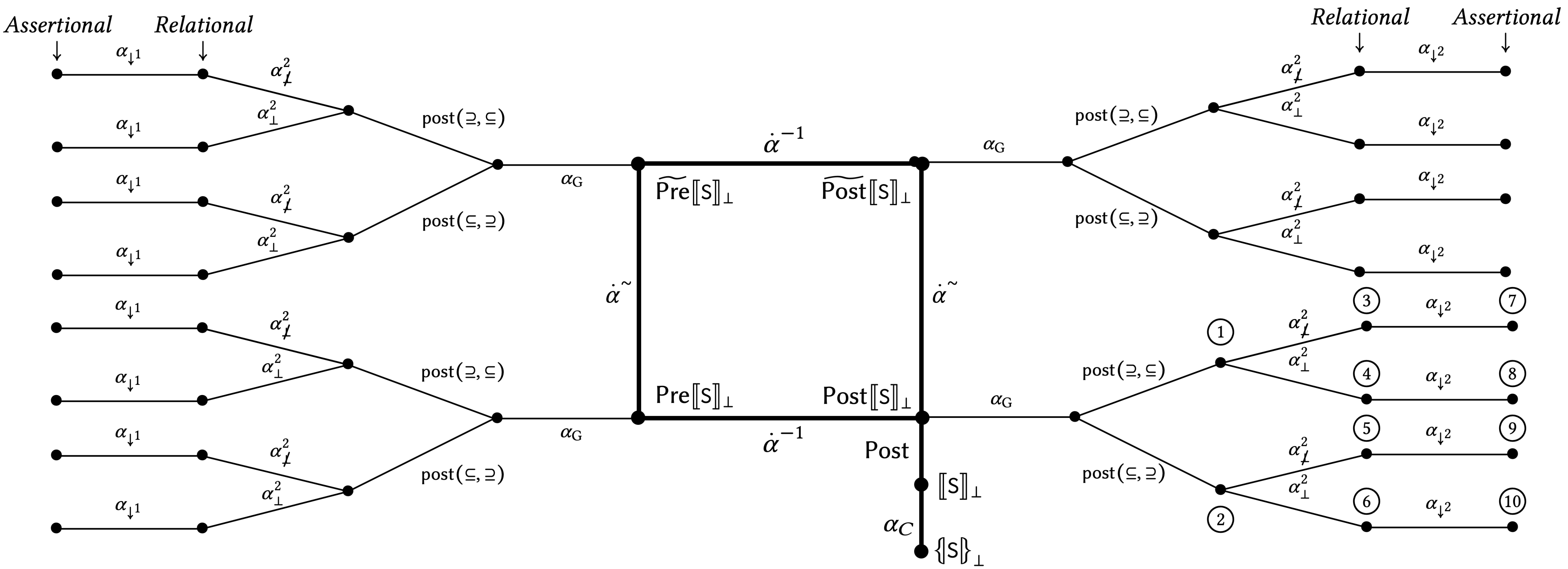}
\vskip-1em\caption{Hierarchical taxonomy of transformational logics\label{fig:taxonomy}}
\end{figure}
We get 40 transformational logics with 40 different proof systems which understanding is reduced to the composition of 9 abstractions (plus 2 to get $\textsf{\upshape Post}\sqb{\texttt{\small S}}_\bot$ by abstraction of the collecting semantics). Adding the negation abstraction (\ref{eq-complement-GC}), we obtain 40 more logics to disprove program properties (see sections \ref{PartialpossibleAccessibilityOfSomeFinalStateFromSomeInitialState} to 
\ref {PartialpossibleAccessibilityOfSomeFinalStateFromSomeInitialState} and 
\ref{PartialDefiniteInaccessibilityOfAllFinalStatesFromAllNoninitialStates} to \ref{TotalDefiniteAccessibilityOfSomeFinalStateFromSomeInitialState} for assertional logics), 160 logics with symbolic inversion in Sect\@. \ref{sec:SymbolicInversion}, etc.

\subsection{Abstraction for Assertional Logics}
Theories of forward assertional logics in Fig\@.  \ref{fig:Forward-semantics-logics} are abstractions of theories of relational logics by $\alpha_{\downarrow^2}$ (and backward ones by $\alpha_{\downarrow^1}$). The more classic view \cite{DBLP:conf/focs/Pratt76} and recent followers a.o\@. \cite{DBLP:conf/sefm/VriesK11,DBLP:journals/pacmpl/OHearn20,DBLP:journals/pacmpl/ZhangK22} directly abstract the program semantics by the assertional transformer
\textsf{\textup{post}} (\ref{eq:def:post}) which is the abstraction of the relational transformer \textsf{\textup{post}} (\ref{eq:def:Post}), as follows 
\bgroup\arraycolsep0.33\arraycolsep\abovedisplayskip0.5\abovedisplayskip\belowdisplayskip0.5\belowdisplayskip
\begin{eqntabular}[fl]{c}
\alpha^A_2(\theta)\colsep{\triangleq}\alpha_{\downarrow^2}\comp \theta\comp \gamma_{\downarrow^2}
\quad
\gamma^A_2\colsep{\triangleq}\LAMBDA{\theta}\gamma_{\downarrow^2}\comp \theta\comp \alpha_{\downarrow^2}(Q)
\quad
\mathord{\stackrel{.}{\alpha}^A_2}(\Theta) \colsep{\triangleq}\LAMBDA{r}\alpha^A_2(\Theta(r))
\quad
\mathord{\stackrel{.}{\gamma}}^A_2\colsep{\triangleq}\LAMBDA{r}\gamma^A_2(r)
\label{eq:abstraction:Post:post}
\\
\alpha^A_1(\theta)\colsep{\triangleq}\alpha_{\downarrow^1}\comp \theta\comp \gamma_{\downarrow^1}
\quad
\gamma^A_1\colsep{\triangleq}\LAMBDA{\theta}\gamma_{\downarrow^1}\comp \theta\comp \alpha_{\downarrow^1}(Q)
\quad
\mathord{\stackrel{.}{\alpha}^A_1}(\Theta) \colsep{\triangleq}\LAMBDA{r}\alpha^A_1(\Theta(r))
\quad
\mathord{\stackrel{.}{\gamma}}^A_1\colsep{\triangleq}\LAMBDA{r}\gamma^A_1(r)
\nonumber
\end{eqntabular}\egroup
we have Galois connections\footnote{We write $\mathrel{\smash{\stackrel{i}{\longrightarrow}}}$,  $\stackrel{c}{\longrightarrow}$,  $\stackrel{\sqcup}{\longrightarrow}$,  $\stackrel{\sqcupdot}{\longrightarrow}$,  $\stackrel{\sqcap}{\longrightarrow}$, and $\stackrel{\sqcapdot}{\longrightarrow}$ respectively for increasing, continuous, non-empty join, arbitrary join (including empty), non-empty meet, and arbitrary meet preserving functions.} \proofinapx
\bgroup\abovedisplayskip0.25\abovedisplayskip\belowdisplayskip0.5\belowdisplayskip\begin{eqntabular}{c}
\pair{\wp(\mathcal{Z}\times\mathcal{X})\mathrel{\smash{\stackrel{i}{\longrightarrow}}}\wp(\mathcal{Z}\times\mathcal{Y})}{\stackrel{.}{{\subseteq}}}
\galois{\alpha^A_2}{\gamma^A_2}
\pair{\wp(\mathcal{X})\mathrel{\smash{\stackrel{i}{\longrightarrow}}}\wp(\mathcal{Y})}{\stackrel{.}{{\subseteq}}}
\label{eq:GC:alpha-gamma-A}\\
\pair{\wp(\mathcal{X}\times\mathcal{Y})\rightarrow\wp(\mathcal{Z}\times\mathcal{X})\mathrel{\smash{\stackrel{i}{\longrightarrow}}}\wp(\mathcal{Z}\times\mathcal{Y})}{\stackrel{..}{{\subseteq}}}
\galois{\mathord{\stackrel{.}{\alpha}}^A_2}{\mathord{\stackrel{.}{\gamma}}^A_2}
\pair{\wp(\mathcal{X}\times\mathcal{Y})\rightarrow\wp(\mathcal{X})\mathrel{\smash{\stackrel{i}{\longrightarrow}}}\wp(\mathcal{Y})}{\stackrel{..}{{\subseteq}}}\nonumber
\end{eqntabular}\egroup
\begin{toappendix}
\begin{proof}[Proof of (\ref{eq:GC:alpha-gamma-A})]This is an application of \cite[Theorem 11.78]{Cousot-PAI-2021} and \cite[Exercise 11.21]{Cousot-PAI-2021}. We provide the proof for the appendix to be self-contained.
\begin{calculus}[$\Leftrightarrow$\ ]
\hyphen{5} \formulaexplanation{\alpha^A_2(\Theta)\stackrel{.}{{\subseteq}} \theta}{where $\Theta\in\wp(\mathcal{Z}\times\mathcal{X})\mathrel{\smash{\stackrel{i}{\longrightarrow}}}\wp(\mathcal{Z}\times\mathcal{Y})$
and $\theta\in\wp(\mathcal{X})\mathrel{\smash{\stackrel{i}{\longrightarrow}}}\wp(\mathcal{Y})$}\\
$\Leftrightarrow$
\formulaexplanation{\alpha^A_2(\Theta)P\subseteq\theta(P)}{pointwise def\@. $\stackrel{.}{{\subseteq}}$}\\
$\Leftrightarrow$
\formulaexplanation{\alpha_{\downarrow^2}\comp\Theta\comp \gamma_{\downarrow^2}(P)\subseteq\theta(P)}{def\@. (\ref{eq:def:alpha-gamma-d2}) of $\alpha^A_2$}\\
$\Leftrightarrow$
\formulaexplanation{ \Theta\comp \gamma_{\downarrow^2}(P)\subseteq\gamma_{\downarrow^2}\comp \theta(P)}{Galois connection (\ref{eq:alpha:relationbal-assertional:GC})}\\
$\Leftrightarrow$
\formula{ \Theta(Q)\subseteq\gamma_{\downarrow^2}\comp \theta\comp \alpha_{\downarrow^2}(Q)}\\

\explanation{($\Rightarrow$)\quad choosing $P=\alpha_{\downarrow^2}(Q)$, ${\gamma_{\downarrow^2}}\comp\alpha_{\downarrow^2}$ is extensive (by the Galois connection (\ref{eq:alpha:relationbal-assertional:GC}) for $\alpha_{\downarrow^2}$), and $\Theta$ is increasing by hypothesis;\\
($\Leftarrow$)\quad choosing $Q=\gamma_{\downarrow^2}(P)$, $\alpha_{\downarrow^2}\comp\gamma_{\downarrow^2}$ is reductive (by the Galois connection (\ref{eq:alpha:relationbal-assertional:GC}) for $\gamma_{\downarrow^2}$), and $\theta$ is increasing by hypothesis}\\

$\Leftrightarrow$
\formulaexplanation{ \Theta\stackrel{.}{\subseteq}\gamma_{\downarrow^2}\comp \theta\comp \alpha_{\downarrow^2}}{pointwise def\@. $\stackrel{.}{\subseteq}$}\\
$\Leftrightarrow$
\formulaexplanation{ \Theta\stackrel{.}{\subseteq}\gamma^A_2(\theta)}{def\@. $\gamma^A_2$}\\[1ex]

\hyphen{5}
\formulaexplanation{\mathord{\stackrel{.}{\alpha}}^A_2(\Theta)\stackrel{..}{{\subseteq}} \bar{\Theta}}{where $\Theta\in\wp(\mathcal{X}\times\mathcal{Y})\rightarrow\wp(\mathcal{Z}\times\mathcal{X})\mathrel{\smash{\stackrel{i}{\longrightarrow}}}\wp(\mathcal{Z}\times\mathcal{Y})$}\\
$\Leftrightarrow$
\formulaexplanation{{\mathord{\stackrel{.}{\alpha}}^A_2}(\Theta)(r)\stackrel{.}{{\subseteq}}\bar{\Theta}(r)}{pointwise def\@. $\stackrel{..}{{\subseteq}}$}\\
$\Leftrightarrow$
\formulaexplanation{\alpha^A_2(\Theta(r)) \stackrel{.}{{\subseteq}}\bar{\Theta}(r)}{def\@. (\ref{eq:abstraction:Post:post}) of $\mathord{\stackrel{.}{\alpha}^A_2}$}\\
$\Leftrightarrow$
\formulaexplanation{\Theta(r) \stackrel{.}{{\subseteq}}\gamma^A_2(\bar{\Theta}(r))}{Galois connection (\ref{eq:GC:alpha-gamma-A}) for $\alpha^A_2$}\\
$\Leftrightarrow$
\formulaexplanation{\Theta(r) \stackrel{.}{{\subseteq}}\mathord{\stackrel{.}{\gamma}}^A_2(\bar{\Theta})(r)}{def\@. $\mathord{\stackrel{.}{\gamma}}^A_2\triangleq\LAMBDA{r}\gamma^A_2(r)$}\\
$\Leftrightarrow$
\lastformulaexplanation{\Theta \stackrel{..}{{\subseteq}}{\mathord{\stackrel{.}{\gamma}}^A_2}(\bar{\Theta})}{pointwise def\@. $\stackrel{..}{{\subseteq}}$}{\mbox{\qed}}
\end{calculus}\let\qed\relax
\end{proof}
\end{toappendix}

These abstractions of the relational transformers yield the  following generalization of the classic predicate transformers  $\wp(\Sigma)\rightarrow\wp(\Sigma)$ \cite{DBLP:conf/focs/Pratt76}, by extension to nontermination $\bot$. \proofinapx
\begin{eqntabular}{rclclcl}
\textsf{\textup{post}}&=&\mathord{\stackrel{.}{\alpha}^A_2}(\textsf{\textup{post}})
&=&\rlap{$\LAMBDA{r}\LAMBDA{Q}\{\sigma' \mid\exists\sigma\mathrel{.}\sigma\in P\wedge\pair{\sigma}{\sigma'}\in r\}$,\qquad as in  (\ref{eq:def:post})}\nonumber
\\
\widetilde{\textsf{\upshape post}}&\triangleq&\mathord{\stackrel{.}{\alpha}^A_2}(\widetilde{\textsf{\upshape Post}})&=&\mathord{\stackrel{.}{\alpha}}^{\sim}\comp\textsf{\upshape post}&=&\LAMBDA{r}\LAMBDA{P}\{\sigma'\mid\forall\sigma\mathrel{.}\pair{\sigma}{\sigma'}\in r\Rightarrow \sigma\in P\}\phantom{\sqb{\texttt{\small S}}_{\bot}}\label{eq:def:psott-pre-pret}
\\
\textsf{\upshape pre}&\triangleq&\mathord{\stackrel{.}{\alpha}^A_1}(\textsf{\upshape Pre})&=&\mathord{\stackrel{.}{\alpha}}^{-1}\comp\textsf{\upshape post}&=&\LAMBDA{r}\LAMBDA{Q}\{\sigma \mid\exists\sigma'\mathrel{.}\pair{\sigma}{\sigma'}\in r\wedge\sigma'\in Q\}\nonumber\\
\widetilde{\textsf{\upshape pre}}&\triangleq&\mathord{\stackrel{.}{\alpha}^A_1}(\widetilde{\textsf{\upshape Pre}})&=&\mathord{\stackrel{.}{\alpha}}^{\sim}\comp\textsf{\upshape pre}&=&\LAMBDA{r}\LAMBDA{Q}\{\sigma \mid\forall\sigma'\mathrel{.}\pair{\sigma}{\sigma'}\in r\Rightarrow \sigma'\in Q\}\nonumber
\end{eqntabular}
\begin{toappendix}
\begin{proof}[Proof of (\ref{eq:def:psott-pre-pret})]
\begin{calculus}
\hyphen{6}\formula{\mathord{\stackrel{.}{\alpha}^A_2}(\textsf{\textup{post}})}\\
=
\formulaexplanation{\LAMBDA{r}\alpha^A_2(\textsf{\textup{post}}(r))}{def\@. (\ref{eq:abstraction:Post:post}) of $\mathord{\stackrel{.}{\alpha}^A_2}$}\\
=
\formulaexplanation{\LAMBDA{r}\alpha_{\downarrow^2}\comp \textsf{\textup{post}}(r)\comp \gamma_{\downarrow^2}}{def\@. (\ref{eq:abstraction:Post:post}) of $\alpha^A_2$}\\
=
\formulaexplanation{\LAMBDA{r}\LAMBDA{P}\alpha_{\downarrow^2}(\textsf{\textup{post}}(r)( \gamma_{\downarrow^2}(P)))}{def\@. function composition $\comp$}\\
=
\formulaexplanation{\LAMBDA{r}\LAMBDA{P}\alpha_{\downarrow^2}(\{\pair{\sigma_0}{\sigma'}\mid\exists\sigma\mathrel{.}\pair{\sigma_0}{\sigma}\in \gamma_{\downarrow^2}(P)\wedge\pair{\sigma}{\sigma'}\in r\})}{def\@. (\ref{eq:def:post}) of $\textsf{\textup{post}}$}\\
=
\formula{\LAMBDA{r}\LAMBDA{P}\{\sigma'\mid\exists \sigma_0\mathrel{.}\pair{\sigma_0}{\sigma'}\in \{\pair{\sigma_0}{\sigma'}\mid\exists\sigma\mathrel{.}\pair{\sigma_0}{\sigma}\in \Sigma\times P\wedge\pair{\sigma}{\sigma'}\in r\}\}}\\[-0.5ex]\rightexplanation{def\@. (\ref{eq:def:alpha-gamma-d2}) of $\alpha_{\downarrow^2}$ and $\gamma_{\downarrow^2}$}\\
=
\formulaexplanation{\LAMBDA{r}\LAMBDA{P}\{\sigma'\mid\exists \sigma_0\mathrel{.}\exists\sigma\mathrel{.}\pair{\sigma_0}{\sigma}\in \Sigma\times P\wedge\pair{\sigma}{\sigma'}\in r\}}{def\@. $\in$}\\
=
\formulaexplanation{\LAMBDA{r}\LAMBDA{P}\{\sigma'\mid\exists\sigma\mathrel{.}{\sigma}\in P\wedge\pair{\sigma}{\sigma'}\in r\}}{taking $ \sigma_0\in\Sigma$}\\
=
\formulaexplanation{\textsf{\textup{post}}}{def\@. (\ref{eq:def:post}) of $\textsf{\textup{post}}$}{\mbox{\qed}}\\[1em]

\hyphen{6}\formula{\mathord{\stackrel{.}{\alpha}^A_2}(\widetilde{\textsf{\upshape Post}})}\\
=
\formulaexplanation{\LAMBDA{r}\alpha^A_2(\widetilde{\textsf{\upshape Post}}(r))}{def\@. (\ref{eq:abstraction:Post:post}) of $\mathord{\stackrel{.}{\alpha}^A_2}$}\\
=
\formulaexplanation{\LAMBDA{r}\alpha_{\downarrow^2}\comp \widetilde{\textsf{\upshape Post}}(r)\comp \gamma_{\downarrow^2}}{def\@. (\ref{eq:abstraction:Post:post}) of $\alpha^A_2$}\\
=
\formulaexplanation{\LAMBDA{r}\alpha_{\downarrow^2}\comp \LAMBDA{P}\{\pair{\sigma_0}{\sigma'}\mid\forall \sigma \mathrel{.}\pair{\sigma}{\sigma'}\in r\Rightarrow\pair{\sigma_0}{\sigma}\in P\}\comp \gamma_{\downarrow^2}}{def\@. (\ref{eq:def:Postt-Pret}) of $\widetilde{\textsf{\upshape Post}}$}\\
=
\formulaexplanation{\LAMBDA{r}\LAMBDA{P}\alpha_{\downarrow^2}(\{\pair{\sigma_0}{\sigma'}\mid\forall \sigma \mathrel{.}\pair{\sigma}{\sigma'}\in r\Rightarrow\pair{\sigma_0}{\sigma}\in \gamma_{\downarrow^2}(P)\})}{def\@. function composition $\comp$}\\
=
\formulaexplanation{\LAMBDA{r}\LAMBDA{P}\{{\sigma'}\mid\forall \sigma \mathrel{.}\pair{\sigma}{\sigma'}\in r\Rightarrow\pair{\sigma_0}{\sigma}\in \mathcal{X}\times P\}}{def\@. (\ref{eq:def:alpha-gamma-d2}) of $\alpha_{\downarrow^2}$ and $\gamma_{\downarrow^2}$}\\
=
\formulaexplanation{\LAMBDA{r}\LAMBDA{P}\{{\sigma'}\mid\forall \sigma \mathrel{.}\pair{\sigma}{\sigma'}\in r\Rightarrow{\sigma}\in P\}}{def\@. pairs and $\in$, Q.E.D\@. for case $\widetilde{\textsf{\upshape post}}$ of (\ref{eq:def:psott-pre-pret})}\\
=
\formulaexplanation{\LAMBDA{r}\LAMBDA{P}\neg\{{\sigma'}\mid\exists \sigma \mathrel{.}\pair{\sigma}{\sigma'}\in r\wedge{\sigma}\not\in P\}}{def\@. $\neg$}\\
=
\formulaexplanation{\LAMBDA{r}\LAMBDA{P}\neg\textsf{\textup{post}}(r)(\neg P)}{def\@. (\ref{eq:def:post}) of $\textsf{\textup{post}}$ }\\
=
\formulaexplanation{\LAMBDA{r}\LAMBDA{P}\neg\comp \textsf{\textup{post}}(r)\comp \neg (P)}{def\@. function composition $\comp$ }\\
=
\formulaexplanation{\LAMBDA{r}\alpha^{\sim}(\textsf{\textup{post}}(r))}{def\@. $\alpha^{\sim}(f)\triangleq\neg\comp f\comp \neg$ in (\ref{eq:eq-function-complement-GC}) and {\text{\boldmath$\lambda$}}-calculus}\\
=
\formulaexplanation{\LAMBDA{r}\mathord{\stackrel{.}{\alpha}}^{\sim}(\textsf{\textup{post}})(r)}{pointwise def\@. $\mathord{\stackrel{.}{\alpha}}^{\sim}(\textsf{\textup{post}})\triangleq\LAMBDA{r}{\alpha}^{\sim}(\textsf{\textup{post}}(r))$ in (\ref{eq:eq-function-complement-GC})}\\
=
\formulaexplanation{\mathord{\stackrel{.}{\alpha}}^{\sim}\comp\textsf{\textup{post}}}{def\@. function composition $\comp$ and {\text{\boldmath$\lambda$}}-calculus, Q.E.D\@. for case $\widetilde{\textsf{\upshape post}}$ of (\ref{eq:def:psott-pre-pret})}\\[1em]

\hyphen{6}\formula{\mathord{\stackrel{.}{\alpha}^A_1}(\widetilde{\textsf{\upshape Pre}})}\\
=
\formulaexplanation{\LAMBDA{r}\alpha^A_2(\widetilde{\textsf{\upshape Pre}}(r))}{def\@. (\ref{eq:abstraction:Post:post}) of $\mathord{\stackrel{.}{\alpha}^A_1}$}\\
=
\formulaexplanation{\LAMBDA{r}\alpha_{\downarrow^1}\comp \widetilde{\textsf{\upshape Pre}}(r)\comp \gamma_{\downarrow^1}}{def\@. (\ref{eq:abstraction:Post:post}) of $\alpha^A_1$}\\
=
\formulaexplanation{\LAMBDA{r}\alpha_{\downarrow^1}\comp \LAMBDA{Q}\{\pair{\sigma}{\sigma_{\mskip-2muf}}\mid\forall \sigma' \mathrel{.}\pair{\sigma}{\sigma'}\in  r \Rightarrow\pair{\sigma'}{\sigma_{\mskip-2muf}}\in Q\}\comp \gamma_{\downarrow^1}}{def\@. (\ref{eq:def:Postt-Pret}) of $\widetilde{\textsf{\upshape Pre}}$}\\
=
\formulaexplanation{\LAMBDA{r}\LAMBDA{Q}\alpha_{\downarrow^1}(\{\pair{\sigma}{\sigma_{\mskip-2muf}}\mid\forall \sigma' \mathrel{.}\pair{\sigma}{\sigma'}\in  r \Rightarrow\pair{\sigma'}{\sigma_{\mskip-2muf}}\in \gamma_{\downarrow^1}(Q)\})}{def\@. function composition $\comp$}\\
=
\formulaexplanation{\LAMBDA{r}\LAMBDA{Q}\{{\sigma}\mid\forall \sigma' \mathrel{.}\pair{\sigma}{\sigma'}\in  r \Rightarrow{\sigma'}\in Q\}}{def\@. (\ref{eq:def:alpha-gamma-d1}) of $\alpha_{\downarrow^1}$ and $\gamma_{\downarrow^1}$ Q.E.D\@. for case $\widetilde{\textsf{\upshape pre}}$ of (\ref{eq:def:psott-pre-pret})}\\
=
\formulaexplanation{\LAMBDA{r}\LAMBDA{Q}\neg\{{\sigma}\mid\exists \sigma' \mathrel{.}\pair{\sigma}{\sigma'}\in  r \wedge{\sigma'}\in \neg(Q)\}}{def\@. $\neg$}\\
=
\formulaexplanation{\LAMBDA{r}\LAMBDA{P}\neg\textsf{pre}(r)(\neg P)}{def\@. (\ref{eq:def:psott-pre-pret}) of $\textsf{pre}$ }\\
=
\formulaexplanation{\LAMBDA{r}\LAMBDA{P}\neg\comp \textsf{pre}(r)\comp \neg (P)}{def\@. function composition $\comp$ }\\
=
\formulaexplanation{\LAMBDA{r}\alpha^{\sim}(\textsf{pre}(r))}{def\@. $\alpha^{\sim}(f)\triangleq\neg\comp f\comp \neg$ in (\ref{eq:eq-function-complement-GC}) and {\text{\boldmath$\lambda$}}-calculus}\\
=
\formulaexplanation{\LAMBDA{r}\mathord{\stackrel{.}{\alpha}}^{\sim}(\textsf{pre})(r)}{pointwise def\@. $\mathord{\stackrel{.}{\alpha}}^{\sim}(\textsf{pre})\triangleq\LAMBDA{r}{\alpha}^{\sim}(\textsf{pre}(r))$ in (\ref{eq:eq-function-complement-GC})}\\
=
\formulaexplanation{\mathord{\stackrel{.}{\alpha}}^{\sim}\comp\textsf{pre}}{def\@. function composition $\comp$ and {\text{\boldmath$\lambda$}}-calculus, Q.E.D\@. for case ${\textsf{\upshape pre}}$ of (\ref{eq:def:psott-pre-pret})}\\[1em]

\hyphen{6}\formula{\mathord{\stackrel{.}{\alpha}^A_1}({\textsf{\upshape Pre}})}\\
=
\formulaexplanation{\LAMBDA{r}\alpha^A_2(\widetilde{\textsf{\upshape Pre}}(r))}{def\@. (\ref{eq:abstraction:Post:post}) of $\mathord{\stackrel{.}{\alpha}^A_1}$}\\
=
\formulaexplanation{\LAMBDA{r}\alpha_{\downarrow^1}\comp {\textsf{\upshape Pre}}(r)\comp \gamma_{\downarrow^1}}{def\@. (\ref{eq:abstraction:Post:post}) of $\alpha^A_1$}\\
=
\formulaexplanation{\LAMBDA{r}\alpha_{\downarrow^1}\comp \LAMBDA{Q}\{\pair{\sigma}{\sigma_{\mskip-2muf}}\mid\exists \sigma' \mathrel{.}\pair{\sigma}{\sigma'}\in  r \wedge\pair{\sigma'}{\sigma_{\mskip-2muf}}\in Q\}\comp \gamma_{\downarrow^1}}{def\@. (\ref{eq:def:Postt-Pret}) of ${\textsf{\upshape Pre}}$}\\
=
\formulaexplanation{\LAMBDA{r}\LAMBDA{Q}\alpha_{\downarrow^1}(\{\pair{\sigma}{\sigma_{\mskip-2muf}}\mid\exists \sigma' \mathrel{.}\pair{\sigma}{\sigma'}\in  r \wedge\pair{\sigma'}{\sigma_{\mskip-2muf}}\in \gamma_{\downarrow^1}(Q)\})}{def\@. function composition $\comp$}\\
=
\formulaexplanation{\LAMBDA{r}\LAMBDA{Q}\{{\sigma}\mid\exists \sigma' \mathrel{.}\pair{\sigma}{\sigma'}\in  r \wedge{\sigma'}\in Q\}}{def\@. (\ref{eq:def:alpha-gamma-d1}) of $\alpha_{\downarrow^1}$ and $\gamma_{\downarrow^1}$ Q.E.D\@. for case ${\textsf{\upshape pre}}$ of (\ref{eq:def:psott-pre-pret})}\\
=
\formulaexplanation{\LAMBDA{r}\LAMBDA{Q}\neg\{{\sigma}\mid\forall \sigma' \mathrel{.}\pair{\sigma}{\sigma'}\in  r \Rightarrow{\sigma'}\in \neg(Q)\}}{def\@. $\neg$}\\
=
\formulaexplanation{\LAMBDA{r}\LAMBDA{P}\neg\widetilde{\textsf{pre}}(r)(\neg P)}{def\@. (\ref{eq:def:psott-pre-pret}) of $\widetilde{\textsf{pre}}$ }\\
=
\formulaexplanation{\LAMBDA{r}\LAMBDA{P}\neg\comp \widetilde{\textsf{pre}}(r)\comp \neg (P)}{def\@. function composition $\comp$ }\\
=
\formulaexplanation{\LAMBDA{r}\alpha^{\sim}(\widetilde{\textsf{pre}}(r))}{def\@. $\alpha^{\sim}(f)\triangleq\neg\comp f\comp \neg$ in (\ref{eq:eq-function-complement-GC}) and {\text{\boldmath$\lambda$}}-calculus}\\
=
\formulaexplanation{\LAMBDA{r}\mathord{\stackrel{.}{\alpha}}^{\sim}(\widetilde{\textsf{pre}})(r)}{pointwise def\@. $\mathord{\stackrel{.}{\alpha}}^{\sim}(\widetilde{\textsf{pre}})\triangleq\LAMBDA{r}{\alpha}^{\sim}(\widetilde{\textsf{pre}}(r))$ in (\ref{eq:eq-function-complement-GC})}\\
=
\lastformulaexplanation{\mathord{\stackrel{.}{\alpha}}^{\sim}\comp\widetilde{\textsf{pre}}}{def\@. function composition $\comp$ and {\text{\boldmath$\lambda$}}-calculus, Q.E.D\@. for case $\widetilde{\textsf{\upshape pre}}$ of (\ref{eq:def:psott-pre-pret})}{\mbox{\qed}}

\end{calculus}\let\qed\relax
\end{proof}
\end{toappendix}
\noindent The classic transformers (\ref{eq:def:psott-pre-pret}) are illustrated by Fig\@.  \ref{fig:Property-transformers}\ifshort\ in the appendix \proofinapx\fi.
\begin{toappendix}
\begin{figure}[ht]
\noindent\begin{minipage}[t]{0.28\textwidth}
\def\mstrut{\mbox{\rule{0pt}{0.0em}}}%
\def\reduce{-0.525ex}
\def\tick{\times}%
\def\none{}%
\begin{eqntabular*}[fl]{@{}|c|c|c|c|c|c|@{}}
\mbox{\rule{0pt}{3.em}}%
\hskip0.6em\turnbox{90}{\small$\textsf{\upshape wp}(\texttt{\small S}, Q_{\not\bot})$}\hskip0.2em
&\hskip0.6em\turnbox{90}{\small$\textsf{\upshape wlp}(\texttt{\small S}, Q_{\not\bot})$}\hskip0.2em
&\hskip0.6em\turnbox{90}{\small$\widetilde{\textsf{\upshape pre}}\sqb{\texttt{\small S}}Q$}\hskip0.2em
&\hskip0.6em\turnbox{90}{\small$\widetilde{\textsf{\upshape pre}}\sqb{\texttt{\small S}}_{\bot}Q$}\hskip0.2em
&\hskip0.6em\turnbox{90}{\small$\textsf{\upshape pre}\sqb{\texttt{\small S}}Q$}\hskip0.2em
&\hskip0.6em\turnbox{90}{\small$\textsf{\upshape pre}\sqb{\texttt{\small S}}_{\bot}Q$}\hskip0.2em\\\hline 
\mstrut\none&\none&\tick&\tick&\none&\none\\[\reduce]\hline 
\mstrut\none&\none&\tick&\none&\none&\tick\\[\reduce]\hline 
\mstrut\none&\tick&\tick&\tick&\tick&\tick\\[\reduce]\hline 
\mstrut\tick&\tick&\tick&\tick&\tick&\tick\\[\reduce]\hline 
\mstrut\none&\none&\none&\none&\tick&\tick\\[\reduce]\hline 
\mstrut\none&\none&\none&\none&\tick&\tick\\[\reduce]\hline 
\mstrut\none&\none&\none&\none&\none&\none\\[\reduce]\hline 
\mstrut\none&\none&\none&\none&\none&\tick\\[\reduce]\hline 
\mstrut\none&\none&\tick&\tick&\none&\none\\[\reduce]\hline 
\mstrut\none&\none&\tick&\tick&\none&\none\\[\reduce]\hline 
\end{eqntabular*}
\end{minipage}%
\begin{minipage}[t]{0.18\textwidth}
\raisebox{-15.25em}[0pt][0pt]{\includegraphics[scale=0.2]{./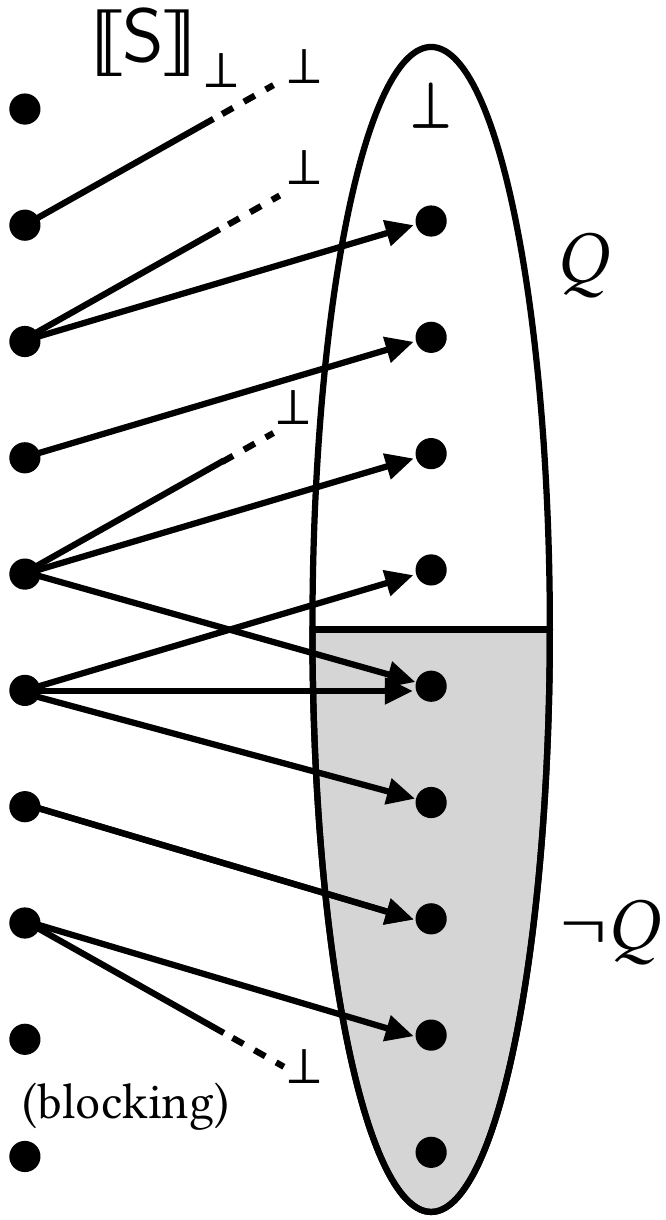}}
\end{minipage}%
\begin{minipage}[t]{0.1425\textwidth}
\raisebox{-15.25em}[0pt][0pt]{\includegraphics[scale=0.2]{./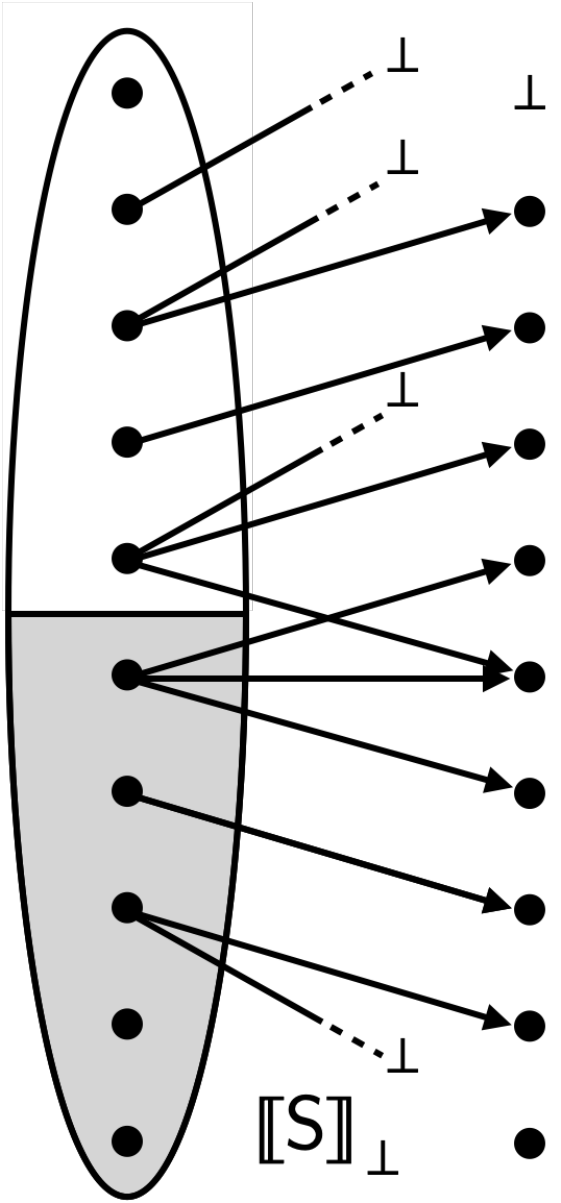}}
\end{minipage}%
\begin{minipage}[t]{0.28\textwidth}
\def\mstrut{\mbox{\rule{0pt}{0.0em}}}%
\def\reduce{-0.525ex}
\def\tick{\times}%
\def\none{}%
\begin{eqntabular*}[fl]{@{}|c|c|c|c|c|c|@{}}
\mbox{\rule{0pt}{3em}}%
\hskip0.6em\turnbox{90}{\small$\textsf{\upshape post}\sqb{\texttt{\small S}}_{\bot}Q$}\hskip0.2em\hskip0.2em
&\hskip0.6em\turnbox{90}{\small$\textsf{\upshape post}\sqb{\texttt{\small S}}Q$}\hskip0.2em
&\hskip0.6em\turnbox{90}{\small$\widetilde{\textsf{\upshape post}}\sqb{\texttt{\small S}}_{\bot}Q$}\hskip0.2em
&\hskip0.6em\turnbox{90}{\small$\widetilde{\textsf{\upshape post}}\sqb{\texttt{\small S}}Q$}\hskip0.2em
&\hskip0.6em\turnbox{90}{\small$\textsf{\upshape slp}(\texttt{\small S}, Q)$}\hskip0.2em
&\hskip0.6em\turnbox{90}{\small$\textsf{\upshape sp}(\texttt{\small S}, Q)$}\hskip0.2em\\\hline 
\mstrut\tick&\none&\none&\tick&\none&\none\\[\reduce]\hline 
\mstrut\tick&\tick&\tick&\tick&\tick&\none\\[\reduce]\hline 
\mstrut\tick&\tick&\tick&\tick&\tick&\tick\\[\reduce]\hline 
\mstrut\tick&\tick&\tick&\tick&\tick&\none\\[\reduce]\hline 
\mstrut\none&\none&\none&\none&\none&\none\\[\reduce]\hline 
\mstrut\tick&\tick&\none&\none&\tick&\none\\[\reduce]\hline 
\mstrut\none&\none&\none&\none&\none&\none\\[\reduce]\hline 
\mstrut\none&\none&\none&\none&\none&\none\\[\reduce]\hline 
\mstrut\none&\none&\none&\none&\none&\none\\[\reduce]\hline 
\mstrut\none&\none&\none&\none&\none&\none\\[\reduce]\hline 
\end{eqntabular*}
\end{minipage}%
\vskip-1ex%
\caption{Illustration of property transformers\label{fig:Property-transformers}}
\end{figure}

In Fig\@. \ref{fig:Property-transformers}, the points $\bullet$  represent states, the arrows between two states or a state and $\bot$ represent a pair in $\sqb{\texttt{\small S}}_{\bot}$. For the angelic semantics $\sqb{\texttt{\small S}}$, the states and arrows marked $\bot$ should be ignored. The consequent states on the left are partitioned into $Q$ and $\neg Q$. In the column for each transformer $\tau$ in the left table, tags $\times$ indicate that the antecedent state on the same line belongs to $\tau(Q)$. Similarly, the  antecedent states on the left are partitioned into $Q$ and $\neg Q$. In the column for each transformer $\tau$ in the right table, consequent states belonging to $\tau(Q)$ are tagged $\times$ on the same line. 
\end{toappendix}

Given a relation $r\in\wp(\mathcal{X}\times\mathcal{Y})$, in addition to (\ref{eq:def:post:GC}), these classic transformers are also connected as follows \cite[Chapter 12]{Cousot-PAI-2021}\ifshort,  (d) is proved in sect\@. \ref{sec:apx:eq:eq:post-t-pre-t} of the appendix \proofinapx\else.\fi
\begin{eqntabular}{L@{\hskip1ex}c@{\quad}L@{\hskip1ex}c@{\quad}}
(a)&\pair{\wp(\mathcal{X}\times \mathcal{Y})}{\subseteq}\galois{\textsf{\upshape post}}{\textsf{\upshape post}^{-1}}\pair{\wp(\mathcal{X})\stackrel{\cupdot}{\longrightarrow}\wp(\mathcal{Y})}{\stackrel{.}{\subseteq}}
&
(b)&\pair{\wp(\mathcal{X})}{\subseteq}\galois{\textsf{\upshape pre}(r)}{\widetilde{\textsf{\upshape post}}(r)}\pair{\wp(\mathcal{Y})}{\subseteq}
\label{eq:post-t-pre-t}\\
(c)&\pair{\wp(\mathcal{X}\times \mathcal{Y})}{\subseteq}\galois{\textsf{\upshape pre}}{\textsf{\upshape pre}^{-1}}\pair{\wp(\mathcal{X})\stackrel{\cupdot}{\longrightarrow}\wp(\mathcal{Y})}{\stackrel{.}{\subseteq}}
&
(d)&\textsf{\textup{post}}(R)P\cap Q\neq \emptyset\Leftrightarrow P\cap \mathsf{pre}(R)Q\neq\emptyset\nonumber
\end{eqntabular}
\begin{toappendix}
\label{sec:apx:eq:eq:post-t-pre-t}
\begin{proof}[proof of  (\ref{eq:post-t-pre-t}.d)]
Let us show that $\textsf{\textup{post}}(R)P\cap Q\neq \emptyset
\Leftrightarrow
\exists \sigma\in P\mathrel{.}\exists\sigma'\not\in Q\mathrel{.}\pair{\sigma}{\sigma'}\in R\nonumber\\[-0.5ex]
\Leftrightarrow$
\begin{calculus}[$\Leftrightarrow$~]
\formula{\textsf{\textup{post}}(R)P\cap Q\neq \emptyset}\\
$\Leftrightarrow$
\formulaexplanation{\{\sigma'\mid\exists\sigma\in P\mathrel{.}\pair{\sigma}{\sigma'}\in R\} \cap Q\neq \emptyset}{def\@. \textsf{\textup{post}}}\\
$\Leftrightarrow$
\formulaexplanation{\exists\sigma'\mathrel{.}\sigma'\in\{\sigma'\mid\exists\sigma\in P\mathrel{.}\pair{\sigma}{\sigma'}\in R\} \wedge\sigma'\in Q}{def\@. $\cap$ and $\emptyset$}\\
$\Leftrightarrow$
\formulaexplanation{\exists\sigma'\mathrel{.}\exists\sigma\in P\mathrel{.}\pair{\sigma}{\sigma'}\in R \wedge\sigma'\in Q}{def\@. $\in$}\\
$\Leftrightarrow$
\formulaexplanation{\exists\sigma\in P\mathrel{.}\exists\sigma'\in Q\mathrel{.}\pair{\sigma}{\sigma'}\in R}{commutativity}\\
$\Leftrightarrow$
\formulaexplanation{\exists\sigma\mathrel{.}\sigma\in P\wedge\sigma\in\{\sigma\mid\exists\sigma'\in Q\mathrel{.}\pair{\sigma}{\sigma'}\in R\}}{def\@. $\in$}\\
$\Leftrightarrow$
\formulaexplanation{P\cap\{\sigma\mid\exists\sigma'\in Q\mathrel{.}\pair{\sigma}{\sigma'}\in R\}\neq\emptyset}{def\@. $\cap$ and $\emptyset$}\\
$\Leftrightarrow$
\lastformulaexplanation{P\cap \textsf{pre}(R) Q\neq\emptyset}{\hyperlink{lem:slides:6}{\textsc{lem\@. 6}}}{\mbox{\qed}}
\end{calculus}\let\qed\relax
\end{proof}
\end{toappendix}
\begin{example}\label{ex-Hoare-incorrectness}Hoare incorrectness logic is 
$\neg(\{P\}\,\texttt{\small S}\{Q\})$
$\Leftrightarrow$
$\neg(\textsf{\textup{post}}\sqb{\texttt{\small S}}P\subseteq Q)$
$\Leftrightarrow$
$\textsf{\textup{post}}\sqb{\texttt{\small S}}P\cap\neg Q\neq\emptyset$
$\Leftrightarrow$
$\exists \sigma\in P\mathrel{.}\exists\sigma'\not\in Q\mathrel{.}\pair{\sigma}{\sigma'}\in\sqb{\texttt{\small S}}$
$\Leftrightarrow$
$P\cap \mathsf{pre}\sqb{\texttt{\small S}}\neg Q\neq\emptyset$ by. This is different from incorrectness logic \cite{DBLP:journals/pacmpl/OHearn20}, that is
$[P]\,\texttt{\small S}\,[Q]$
$\Leftrightarrow$
$\neg Q\subseteq \textsf{\textup{post}}\sqb{\texttt{\small S}}P$
$\Leftrightarrow$
$\forall \sigma'\not\in Q\mathrel{.}\exists\sigma\in P\mathrel{.}\pair{\sigma}{\sigma'}\in\sqb{\texttt{\small S}}$. 
The incorrectness Hoare logic is designed in Sect\@. \ref{apx:DesignHoareIncorrectnessLogic} in the appendix.
\end{example}
All transformers in (\ref{eq:Post-Pret:Pre-Postt-GC}),  (\ref{eq:def:post:GC}), and (\ref{eq:post-t-pre-t}) inherit the properties of Galois connections. For example, the lower adjoint preserves arbitrary joins and dually the upper adjoint preserves arbitrary meets. This implies, for example, the healthiness conditions postulated for transformers \cite{DBLP:books/daglib/0067387,DBLP:journals/jacm/Hoare78}.
\begin{remark}\label{rem:pre-does-not-preserve-meets}By (\ref{eq:def:post:GC}), \textsf{pre} preserves joins ($\cup$) but maybe   not meets ($\cap$). Same for \textsf{\textup{post}}. \proofinapx
\end{remark}
\begin{toappendix}
\label{sec:apx:rem:pre-does-not-preserve-meets}
\begin{proof}[proof of remark \ref{rem:pre-does-not-preserve-meets}]Define $t_i=\{\pair{n}{n+i}\mid n\in\mathbb{N}\}$. If $\pair{n}{m}\in\bigcap_{i\in\mathbb{N}}t_i$ then it belongs to all $t_i$ so
$\forall i\in \mathbb{N}\mathrel{.}m=n+i$, a contradiction. So $\bigcap_{i\in\mathbb{N}}t_i=\emptyset$
hence $\textsf{pre}(\bigcap_{i\in\mathbb{N}}t_i)\mathbb{Z}=\emptyset$. On the other hand, $\textsf{pre}(t_i)\mathbb{Z}$ = $\{n\mid\exists m\in\mathbb{Z}\mathrel{.}\pair{n}{m}\in t_i\}$ = $\{n\mid\exists m\in\mathbb{Z}\mathrel{.}m=n+i\}$ = $\mathbb{Z}$. Obviously $\textsf{pre}(\bigcap_{i\in\mathbb{N}}t_i)\mathbb{Z}=\emptyset\neq\mathbb{Z}=\bigcap_{i\in\mathbb{N}}\textsf{pre}(t_i)\mathbb{Z}$.

\medskip

Take $t=\{\pair{1}{m}\mid m\in\mathbb{N}\}$ and $X_i=\{n\in\mathbb{N}\mid n\geqslant i\}$. If $m\in\bigcap_{i\in\mathbb{N}}X_i$ then for all $i\in\mathbb{N}$, $m\in X_i$, in contradiction with $m\notin X_{m+1}$ proving $\bigcap_{i\in\mathbb{N}}X_i=\emptyset$ so that $\textsf{pre}(t)(\bigcap_{i\in\mathbb{N}}X_i)=\emptyset$. Now $\textsf{pre}(t)X_i$ = $\{n\mid\exists m\in X_i\mathrel{.}
\pair{n}{m}\in t_i\}$ = $\{1\}$ since no $X_i$ is empty. It follows that $\textsf{pre}(t)(\bigcap_{i\in\mathbb{N}}X_i)=\emptyset\neq\{1\}=\bigcap_{i\in\mathbb{N}}\textsf{pre}(t)X_i$.
\end{proof}
\end{toappendix}

\subsection{To Terminate or Not to Terminate Abstraction for Transformers}\label{sec:terminate-not-terminate-transformers}

We have shown in Sect\@. \ref{sec:terminate-not-terminate-properties} that we can abstract antecedant-consequence pairs by (\ref{eq:def:alpha-not-bot}) or (\ref{eq:def:alpha-bot}) to
take nontermination into account (e.g\@. total correctness) or not (partial correctness). An equivalent alternative uses the natural semantics 
$\sqb{\texttt{\large S}}_{\bot}$ or the angelic one $\sqb{\texttt{\large S}}$ in (\ref{eq:angelic-semantics}). We can also abstract transformers, which we do in the assertional case, by
\bgroup\arraycolsep0.5\arraycolsep\abovedisplayskip0.5\abovedisplayskip\belowdisplayskip0.5\belowdisplayskip
\begin{eqntabular}{rcl@{\qquad\quad}rcl@{\qquad\quad}rcl@{\qquad}}
{\alpha}_{\mskip-1mu\not\bot}(P)&\triangleq&P\setminus\{\bot\}&
\overrightarrow{\alpha}_{\mskip-7mu\not\bot}(\theta)&\triangleq&{\alpha}_{\not\bot}\comp\theta&
\overleftarrow{\alpha}_{\mskip-7mu\not\bot}(\theta)&\triangleq&\theta\comp{\gamma}_{\not\bot}\label{eq:assertional-not-bot}\\
{\gamma}_{\mskip-1mu\not\bot}(Q)&\triangleq& Q\cup\{\bot\}
&
\overrightarrow{\gamma}_{\mskip-7mu\not\bot}(\bar\theta)&\triangleq&{\gamma}_{\not\bot}\comp\bar{\theta}
&
\overleftarrow{\gamma}_{\mskip-7mu\not\bot}(\bar\theta)&\triangleq&\bar\theta\comp{\alpha}_{\not\bot}
\end{eqntabular}\egroup
\noindent
which yield Galois connections \proofinapx
\bgroup\abovedisplayskip0.0\abovedisplayskip\belowdisplayskip0pt\begin{eqntabular}{c@{\quad}}
\pair{\wp(\Sigma_\bot)}{\subseteq}\galois{{\alpha}_{\mskip-1mu\not\bot}}{{\gamma}_{\mskip-1mu\not\bot}}\pair{\wp(\Sigma)}{\subseteq}
\qquad
\pair{\mathcal{X}\rightarrow\wp(\Sigma_\bot)}{\stackrel{.}{\subseteq}}\galois{\overrightarrow{\alpha}_{\mskip-7mu\not\bot}}{\overrightarrow{\gamma}_{\mskip-7mu\not\bot}}\pair{\mathcal{X}\rightarrow\wp(\Sigma)}{\stackrel{.}{\subseteq}}
\label{eq:GC-assertional-not-bot}\\[-1ex]
\pair{\wp(\Sigma_\bot)\stackrel{i}{\longrightarrow}\wp(\Sigma)}{\stackrel{.}{\subseteq}}\galois{\overleftarrow{\alpha}_{\mskip-7mu\not\bot}}{\overleftarrow{\gamma}_{\mskip-7mu\not\bot}}\pair{\wp(\Sigma)\stackrel{i}{\longrightarrow}\wp(\Sigma)}{\stackrel{.}{\subseteq}}
\nonumber
\end{eqntabular}\egroup
\begin{toappendix}
\begin{proof}[Proof of (\ref{eq:GC-assertional-not-bot})]
\begin{calculus}[$\Leftrightarrow$\ ]
\hyphen{5}\formula{{\alpha}_{\mskip-1mu\not\bot}(P)\subseteq Q}\\
$\Leftrightarrow$
\formulaexplanation{P\setminus\{\bot\}\subseteq Q}{def\@. (\ref{eq:assertional-not-bot}) of $\alpha_{\mskip-1mu\not\bot}$}\\
$\Leftrightarrow$
\formulaexplanation{P\subseteq Q\cup\{\bot\}}{def\@. $\subseteq$}\\
$\Leftrightarrow$
\formulaexplanation{P\subseteq {\gamma}_{\mskip-1mu\not\bot}(Q)}{by defining ${\gamma}_{\mskip-1mu\not\bot}(Q)\triangleq Q\cup\{\bot\}$}\\[1ex]
\hyphen{5}\formula{\overrightarrow{\alpha}_{\mskip-7mu\not\bot}(\theta) \stackrel{.}{\subseteq}\bar{\theta}}\\
$\Leftrightarrow$
\formulaexplanation{\forall P\in\mathcal{X}\mathrel{.}\overrightarrow{\alpha}_{\mskip-7mu\not\bot}(\theta)P \subseteq\bar{\theta}(P)}{pointwise def\@. $\stackrel{.}{\subseteq}$}\\
$\Leftrightarrow$
\formulaexplanation{\forall P\in\mathcal{X}\mathrel{.}{\alpha}_{\not\bot}(\theta(P)) \subseteq\bar{\theta}(P)}{def\@. (\ref{eq:assertional-not-bot}) of $\overrightarrow{\alpha}_{\mskip-7mu\not\bot}$}\\
$\Leftrightarrow$
\formulaexplanation{\forall P\in\mathcal{X}\mathrel{.}\theta(P)\subseteq{\gamma}_{\not\bot}(\bar{\theta}(P))}{Galois connection $\pair{{\alpha}_{\mskip-1mu\not\bot}}{{\gamma}_{\mskip-1mu\not\bot}}$ (\ref{eq:GC-assertional-not-bot})}\\
$\Leftrightarrow$
\formulaexplanation{\theta\stackrel{.}{\subseteq}\LAMBDA{P}{\gamma}_{\not\bot}(\bar{\theta}(P))}{pointwise def\@. $\subseteq$}\\
$\Leftrightarrow$
\formulaexplanation{\theta\stackrel{.}{\subseteq}\overrightarrow{\gamma}_{\mskip-7mu\not\bot}(\bar\theta)}{by defining $\overrightarrow{\gamma}_{\mskip-7mu\not\bot}(\bar\theta)\triangleq\LAMBDA{P}{\gamma}_{\not\bot}(\bar{\theta}(P))$}\\[1ex]
\hyphen{5} The compositions $\overleftarrow{\alpha}_{\mskip-7mu\not\bot}$ and $\overleftarrow{\gamma}_{\mskip-7mu\not\bot}$ of increasing functions is increasing, Moreover,\\
\formula{\overleftarrow{\alpha}_{\mskip-7mu\not\bot}(\theta) \stackrel{.}{\subseteq}\bar{\theta} }\\
$\Leftrightarrow$
\formulaexplanation{\forall Q\in\wp(\Sigma_\bot)\mathrel{.}\overleftarrow{\alpha}_{\mskip-7mu\not\bot}(\theta)Q \subseteq\bar{\theta}(Q)}{pointwise def\@. $\stackrel{.}{\subseteq}$}\\
$\Leftrightarrow$
\formulaexplanation{\forall Q\in\wp(\Sigma_\bot)\mathrel{.}\theta({\gamma}_{\not\bot}(Q))\subseteq\bar{\theta}(Q)}{def\@. (\ref{eq:GC-assertional-not-bot}) of $\overleftarrow{\alpha}_{\mskip-7mu\not\bot}$}\\
$\Rightarrow$
\formulaexplanation{\forall P\in\wp(\Sigma)\mathrel{.}\theta({\gamma}_{\not\bot}({\alpha}_{\not\bot}(P)))\subseteq\bar{\theta}({\alpha}_{\not\bot}(P))}{for $Q={\alpha}_{\not\bot}(P)$}\\
$\Rightarrow$
\formula{\forall P\in\wp(\Sigma)\mathrel{.}\theta(P)\subseteq\bar{\theta}({\alpha}_{\not\bot}(P))}\\[-0.5ex]
\explanation{since ${\gamma}_{\not\bot}\comp{\alpha}_{\not\bot}$ is extensive by Galois connection $\pair{{\alpha}_{\mskip-1mu\not\bot}}{{\gamma}_{\mskip-1mu\not\bot}}$ (\ref{eq:GC-assertional-not-bot}) and $\theta$ is increasing by hypothesis}\\
$\Leftrightarrow$
\formulaexplanation{\theta\stackrel{.}{\subseteq}\bar{\theta}({\alpha}_{\not\bot})}{pointwise def\@. $\stackrel{.}{\subseteq}$}\\
\quad Conversely,\\
\formula{\forall P\in\wp(\Sigma)\mathrel{.}\theta(P)\subseteq\bar{\theta}({\alpha}_{\not\bot}(P))}\\
$\Rightarrow$
\formulaexplanation{\forall Q\in\wp(\Sigma_{\bot})\mathrel{.}\theta({\gamma}_{\not\bot}(Q))\subseteq\bar{\theta}({\alpha}_{\not\bot}({\gamma}_{\not\bot}(Q)))}{for $P={\gamma}_{\not\bot}(Q)$}\\
$\Rightarrow$
\formula{\forall Q\in\wp(\Sigma_\bot)\mathrel{.}\theta({\gamma}_{\not\bot}(Q))\subseteq\bar{\theta}(Q)}\\[-0.5ex]
\lastexplanation{since ${\alpha}_{\not\bot}\comp{\gamma}_{\not\bot}$ is reductive by Galois connection $\pair{{\alpha}_{\mskip-1mu\not\bot}}{{\gamma}_{\mskip-1mu\not\bot}}$ (\ref{eq:GC-assertional-not-bot}) and $\bar\theta$ is increasing by hypothesis}{\mbox{\qed}}
\end{calculus}\let\qed\relax
\end{proof}
\end{toappendix}
\vspace*{-1.5em}
\subsection{Abstract Logics}
Finally logics may refer to any abstraction of the antecedents and consequents of a transformational logics. For example, \cite{DBLP:conf/oopsla/CousotCLB12} is an abstraction of Hoare logic such
that $\{\bar{P}\}\,\texttt{\small S}\,\{\bar{Q}\}$ means Hoare triple $\{\gamma_1(\bar{P})\}\,\texttt{\small S}\,\{\gamma_2(\bar{Q})\}$. Without appropriate
hypotheses on the abstraction, some rules of Hoare logic like disjunction and conjunction may be invalid in the abstract, see counter-examples and sufficient hypotheses in \cite[pages 219--221]{DBLP:conf/oopsla/CousotCLB12}. Similarly, \cite{DBLP:journals/entcs/GotsmanBC11} provides a counterexample showing the unsoundness of the conjunction rule. This is an argument for the use of a principled method for designing logics.

Another abstract logic \cite{DBLP:journals/jacm/BruniGGR23} combines an over approximation (for correctness) and an under approximation (for incorrectness) in the same abstract domain. The ``(relax)'' rule  requires that the under approximation uses abstract properties $\alpha(P)$ that exactly represent concrete properties $P$ by requiring that $\gamma\comp\alpha(P)=P$. This restricts the concrete points that can be used in the under approximation, and will be a source of incompleteness and imprecision for most static analyses.

Under approximation is the order semi\-dual of an over approximation, with abstraction $\pair{\wp(\Sigma_{\bot})}{\subseteq}\galois{\alpha}{\gamma}\pair{\mathcal{A}}{\sqsupseteq}$ exploited e.g\@. in \cite{DBLP:conf/cav/BallKY05}. The study by
\cite{DBLP:conf/fossacs/AscariBG22} provides a number of classic abstract domain examples showing the imprecision of such under approximation static analyses, but for few exceptions like \cite{DBLP:journals/scp/Mine14,DBLP:conf/pldi/AsadiC0GM21}.

These under approximation approaches are based on Th\@. \ref{th:Fixpoint-Underapproximation} for
 fixpoint under approximation by transfinite iterates. Termination proofs do not use an under approximation but instead an over approximation and a variant function as, e.g., in Th\@. \ref{th:Fixpoint-Underapproximation-Variant}. Alternatively, over approximating static analysis is classic and variant functions can also be inferred by abstract interpretation \cite{DBLP:conf/sas/Urban13,DBLP:conf/sas/UrbanM14,DBLP:conf/esop/UrbanM14,DBLP:conf/vmcai/UrbanM15,DBLP:conf/tacas/Urban15,DBLP:conf/cav/DSilvaU15,DBLP:conf/tacas/UrbanGK16}.
\subsection*{}
\vskip-2em
\subsection{The Subhierarchy of Assertional Logics}\label{sec:subhierarchy-assertional-logics}

Comparing logics means comparing their theories, that is their expressivity, through their respective abstractions of the collecting semantics (as formalized by fixpoint abstraction
in Sect\@. \ref{sect:FixpointAbstraction}), and comparing the induction principles induced by their abstractions (as formalized in Sect\@. \ref{sect:FixpointInduction} by fixpoint induction). For example, figure \ref{fig:taxonomy-assertional} shows that Hoare logic and subgoal induction are different but equivalent abstractions of the collecting semantics so have the same theory and equivalent but different proof systems.

\begin{wrapfigure}{r}{0.525\textwidth}
\vskip-1.5em
\includegraphics[width=0.515\textwidth]{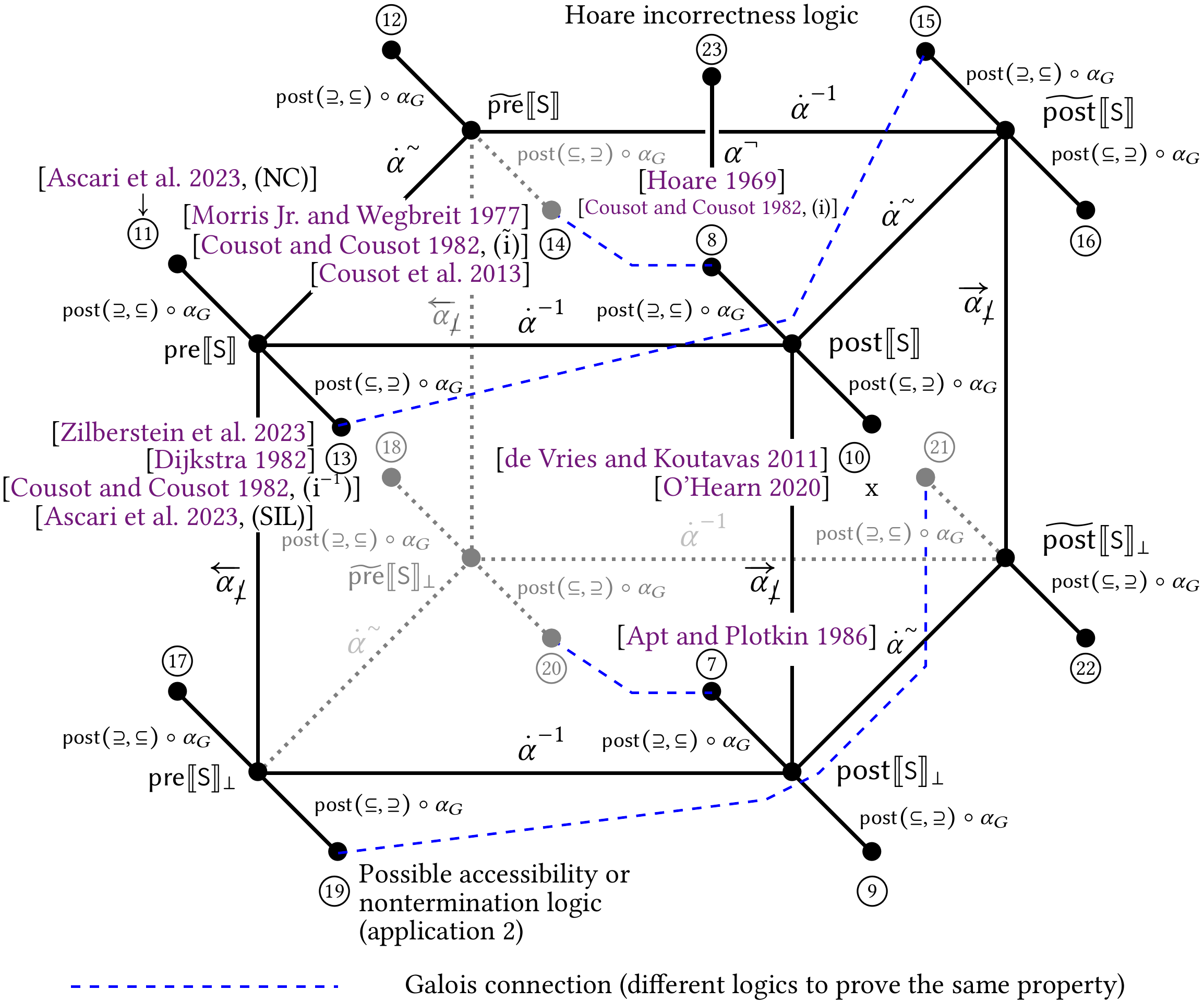}
\caption{Taxonomy of assertional  logics\label{fig:taxonomy-assertional}}
\end{wrapfigure}
These abstractions yield the hierarchical taxonomy of assertional transformational logics of Fig\@. \ref{fig:taxonomy-assertional},
which is a subset of Fig\@. \ref{fig:taxonomy}. Fig\@. \ref{fig:taxonomy-assertional}\ifshort, with a larger instance in the appendix \proofinapx,\fi\ is commented thereafter.
\begin{toappendix}
\begin{figure}[ht]
\mbox{}\hskip-1cm\begin{tikzpicture}[x=5mm,y=5mm,scale=0.6]%
\draw (3,3)   node[name=LFG] {\raisebox{-4pt}[0pt][0pt]{\begin{tabular}[t]{c}\llap{$\textsf{pre}\sqb{\texttt{\small S}}_{\bot}$\ }\huge$\bullet$\end{tabular}}};
\draw (11,11) node[name=LBG,opacity=0.5] {\raisebox{-4pt}[0pt][0pt]{\begin{tabular}[t]{c}\llap{\raisebox{-4pt}[0pt][0pt]{$\widetilde{\textsf{pre}}\sqb{\texttt{\small S}}_{\bot}$\ \ \ }}\huge$\color{gray}\bullet$\end{tabular}}};
\draw (3,19)  node[name=HFG] {\raisebox{-4pt}[0pt][0pt]{\begin{tabular}[t]{c}\llap{$\textsf{pre}\sqb{\texttt{\small S}}$\ }\huge$\bullet$\end{tabular}}};
\draw (11,27) node[name=HBG] {\raisebox{-4pt}[0pt][0pt]{\begin{tabular}[t]{c}\rlap{\hskip1em\raisebox{10pt}[0pt][0pt]{$\widetilde{\textsf{pre}}\sqb{\texttt{\small S}}$}}\huge$\bullet$\end{tabular}}};
\draw (23,3) node[name=LFD] {\raisebox{-4pt}[0pt][0pt]{\begin{tabular}[t]{c}\huge$\bullet$\rlap{\raisebox{2pt}[0pt][0pt]{\normalsize\ \ \ \ $\textsf{\textup{post}}\sqb{\texttt{\small S}}_{\bot}$}}\end{tabular}}};
\draw (31,11) node[name=LBD] {\raisebox{-4pt}[0pt][0pt]{\begin{tabular}[t]{c}\huge$\bullet$\rlap{\raisebox{1ex}{\normalsize\ \ \ $\widetilde{\textsf{\textup{post}}}\sqb{\texttt{\small S}}_{\bot}$}}\end{tabular}}};
\draw (23,19) node[name=HFD] {\small\raisebox{-4pt}[0pt][0pt]{\begin{tabular}[t]{c}\huge$\bullet$\rlap{\raisebox{2pt}[0pt][0pt]{\normalsize\ \ \ $\textsf{\textup{post}}\sqb{\texttt{\small S}}$}}\end{tabular}}};
\draw (31,27) node[name=HBD] {\small\raisebox{-4pt}[0pt][0pt]{\begin{tabular}[t]{c}\huge$\bullet$\rlap{\raisebox{1ex}{\normalsize\ \ \ $\widetilde{\textsf{\textup{post}}}\sqb{\texttt{\small S}}$}}\end{tabular}}};
\draw[very thick] (LFG.center) --  (LFD.center) node[draw=none,fill=none,font=\scriptsize,midway,above] {\large\quad$\stackrel{.}{\alpha}^{-1}$};
\draw[very thick] (LFD.center) --  (LBD.center) 
node[draw=none,fill=none,font=\scriptsize,midway,above] {\normalsize\ $\mathord{\stackrel{.}{\alpha}}^{\sim}$};
\draw[very thick,dotted,opacity=0.5] (LBD.center) --  (LBG.center) node[draw=none,fill=none,font=\scriptsize,midway,above] {\large$\color{gray}\mathord{\stackrel{.}{\alpha}}^{-1}$\qquad\qquad};
\draw[very thick,dotted,opacity=0.5] (LBG.center) --  (LFG.center) node[draw=none,fill=none,font=\scriptsize,midway,above] {\large$\color{gray}\mathord{\stackrel{.}{\alpha}}^{\sim}$};
\draw[very thick] (HFG.center) --  (HFD.center) node[draw=none,fill=none,font=\scriptsize,midway,above] {{\large\quad$\stackrel{.}{\alpha}^{-1}$}};
\draw[very thick] (HFD.center) --  (HBD.center) node[draw=none,fill=none,font=\scriptsize,midway,above] {\large$\mathord{\stackrel{.}{\alpha}}^{\sim}$};
\draw[very thick] (HBD.center) --  (HBG.center) node[draw=none,fill=none,font=\scriptsize,midway,above] {\rlap{\large\quad\ \ $\stackrel{.}{\alpha}^{-1}$}};
\draw[very thick] (HBG.center) --  (HFG.center) node[draw=none,fill=none,font=\scriptsize,midway,above] {\raisebox{17pt}[0pt][0pt]{\hskip3.6em\large$\mathord{\stackrel{.}{\alpha}}^{\sim}$}};
\draw[very thick] (LFG.center) --  (HFG.center) node[draw=none,fill=none,font=\scriptsize,midway,above] {\llap{\raisebox{-2em}[0pt][0pt]{\large$\overleftarrow{\alpha}_{\mskip-7mu\not\bot}$~~}}};
\draw[very thick] (LFD.center) --  (HFD.center) node[draw=none,fill=none,font=\scriptsize,midway,above] {\llap{\raisebox{-2em}[0pt][0pt]{\large$\overrightarrow{\alpha}_{\mskip-7mu\not\bot}$~~}}};
\draw[very thick] (LBD.center) --  (HBD.center) node[draw=none,fill=none,font=\scriptsize,midway,above] {\llap{\raisebox{1em}[0pt][0pt]{\large$\overrightarrow{\alpha}_{\mskip-7mu\not\bot}$~~}}};
\draw[very thick,dotted,opacity=0.5] (LBG.center) --  (HBG.center) node[draw=none,fill=none,font=\scriptsize,midway,above] {\llap{\raisebox{0.45em}[0pt][0pt]{\normalsize$\overleftarrow{\alpha}_{\mskip-7mu\not\bot}$~~}}};
\draw (0,6)   node[name=LFGO] {\small\raisebox{-4pt}[0pt][0pt]{\begin{tabular}[b]{c}\circled{17}\\\huge$\bullet$\\\end{tabular}}};
\draw (6,0)   node[name=LFGU] {\small\raisebox{-4pt}[0pt][0pt]{\begin{tabular}[t]{c}\huge$\bullet$\\\circled{19} \rlap{\raisebox{1.5ex}[0pt][0pt]{Possible accessibility or}}\\[-0.5ex]
\rlap{\raisebox{1.5ex}[0pt][0pt]{\ \ \ nontermination logic}}\\[-0.5ex]\rlap{\raisebox{1.5ex}[0pt][0pt]{\ \ \ (application 2)}}\end{tabular}}};
\draw[very thick] (LFG.center) --  (LFGO.center) node[draw=none,fill=none,font=\scriptsize,midway,below] {\llap{\raisebox{0pt}[0pt][0pt]{$\textsf{\textup{post}}({\supseteq},{\subseteq})\comp\alpha_G$\ }}}; 
\draw[very thick] (LFG.center) --  (LFGU.center) node[draw=none,fill=none,font=\scriptsize,midway,above]  {\rlap{\raisebox{-3pt}[0pt][0pt]{\ \ \ $\textsf{\textup{post}}({\subseteq},{\supseteq})\comp\alpha_G$}}};
\draw (14,8) node[name=LBGU] {\small\raisebox{-4pt}[0pt][0pt]{\begin{tabular}[t]{c}\huge$\color{gray}\bullet$\\[-0.75ex]\circledgray{20}\end{tabular}}};
\draw (8,14) node[name=LBGO] {\small\raisebox{-4pt}[0pt][0pt]{\begin{tabular}[b]{c}\circledgray{18}\\\huge$\color{gray}\bullet$\end{tabular}}};
\draw[very thick,dotted,opacity=0.5] (LBG.center) --  (LBGO.center) node[draw=none,fill=none,font=\scriptsize,midway,below] {\llap{\raisebox{-5pt}[0pt][0pt]{$\textsf{\textup{post}}({\supseteq},{\subseteq})\comp\alpha_G$}}} ; 
\draw[very thick,dotted,opacity=0.5] (LBG.center) --  (LBGU.center) node[draw=none,fill=none,font=\scriptsize,midway,above] {$\rlap{\ \raisebox{-2pt}[0pt][0pt]{$\textsf{\textup{post}}({\subseteq},{\supseteq})\comp\alpha_G$}}$};
\draw (0,22)  node[name=HFGO] {\small\raisebox{-4pt}[0pt][0pt]{\begin{tabular}[b]{c}{\raisebox{1ex}[0pt][0pt]{\cite[(NC)]{DBLP:journals/corr/abs-2310-18156}}} \\[-1.75ex]\llap{$\downarrow$\hskip1em}\\[-0.33ex]
\llap{\circled{11}\hskip0.6em}\\\huge$\bullet$\end{tabular}}};
\draw [ultra thick, white] (3,12) --  (3,16.3);;
\draw (6,16)  node[name=HFGU] {\small\raisebox{-5pt}[0pt][0pt]{\begin{tabular}[t]{c}\llap{\cite{DBLP:journals/pacmpl/ZilbersteinDS23}\hskip0.2em} \huge$\bullet$\\[-0.7ex]\llap{\cite{Dijkstra-EWD-576}} \circled{13}\\[-0.5ex]\llap{\cite[($\mskip1mu{\textrm{\upshape i}^{\scriptscriptstyle -1}}\mskip1mu$)]{CousotCousot82-TNPC}\hskip-2ex}\\[-0.5ex]\llap{\cite[(SIL)]{DBLP:journals/corr/abs-2310-18156}\hskip0.8em}\end{tabular}}};
\draw[very thick] (HFG.center) --  (HFGO.center) node[draw=none,fill=none,font=\scriptsize,midway,below] {\llap{\raisebox{0pt}[0pt][0pt]{$\textsf{\textup{post}}({\supseteq},{\subseteq})\comp\alpha_G$\ }}} ;
\draw[very thick] (HFG.center) --  (HFGU.center) node[draw=none,fill=none,font=\scriptsize,midway,above] {{$\rlap{\ \ \ \raisebox{-5pt}[0pt][0pt]{$\textsf{\textup{post}}({\subseteq},{\supseteq})\comp\alpha_G$}}$}};
\draw [fill=white,white] (5,21) rectangle (8.5,24.5); 
\draw [ultra thick, white] (11,21) --  (11,24.5);;
\draw (14,24) node[name=HBGO] {\small\raisebox{-4pt}[0pt][0pt]{\begin{tabular}[t]{c}\llap{\cite{DBLP:journals/cacm/MorrisW77}\hskip0.3em}\huge$\color{gray}\bullet$\\[-0.5ex]\llap{\cite[($\mskip1mu\tilde{\textrm{\upshape i}}\mskip1mu$)]{CousotCousot82-TNPC}} \circled{14}\\[-0.5ex]\llap{\cite{DBLP:conf/vmcai/CousotCFL13}\hskip0.7em}\end{tabular}}}; 
\draw (8,30) node[name=HBGU] {\small\raisebox{-4pt}[0pt][0pt]{\begin{tabular}[b]{c}\circled{12}\\\huge$\bullet$\end{tabular}}}; 
\draw[very thick,dotted,opacity=0.5] (HBG.center) --  (HBGO.center) node[draw=none,fill=none,font=\scriptsize,midway,above] {{$\rlap{\ \raisebox{0pt}[0pt][0pt]{$\textsf{\textup{post}}({\subseteq},{\supseteq})\comp\alpha_G$}}$}};
\draw[very thick] (HBG.center) --  (HBGU.center) node[draw=none,fill=none,font=\scriptsize,midway,below] {\llap{\raisebox{0pt}[0pt][0pt]{$\textsf{\textup{post}}({\supseteq},{\subseteq})\comp\alpha_G$\ }}}; 
\draw [ultra thick, white] (23,7) --  (23,9);;
\draw (20,6)   node[name=LFDO] {\small\raisebox{-4pt}[0pt][0pt]{\begin{tabular}[b]{c}\hskip2.5em\cite{DBLP:journals/jacm/AptP86}\\[-0.5ex]\circled{7}\\[-0.25ex]\huge$\bullet$\end{tabular}}};
\draw (26,0)   node[name=LFDU] {\small\raisebox{-4pt}[0pt][0pt]{\begin{tabular}[t]{c}\huge$\bullet$\\\circled{9}\end{tabular}}};
\draw[very thick] (LFD.center) --  (LFDO.center) node[draw=none,fill=none,font=\scriptsize,midway,below] {$\llap{\raisebox{-1pt}[0pt][0pt]{$\textsf{\textup{post}}({\supseteq},{\subseteq})\comp\alpha_G$}}$};
\draw[very thick] (LFD.center) --  (LFDU.center) node[draw=none,fill=none,font=\scriptsize,midway,below] {$\rlap{\raisebox{0pt}[0pt][0pt]{\ \ \ \ \ $\textsf{\textup{post}}({\subseteq},{\supseteq})\comp\alpha_G$}}$};
\draw (34,8) node[name=LBDU] {\small\raisebox{-4pt}[0pt][0pt]{\begin{tabular}[t]{c}\huge$\bullet$\\[-0.75ex]\circled{22}\end{tabular}}};
\draw (28,14) node[name=LBDO] {\small\raisebox{-4pt}[0pt][0pt]{\begin{tabular}[b]{c}\quad \circledgray{21}\\\huge$\color{gray}\bullet$\end{tabular}}};
\draw[very thick] (LBD.center) --  (LBDU.center) node[draw=none,fill=none,font=\scriptsize,midway,above] {$\rlap{\ \raisebox{0pt}[0pt][0pt]{$\textsf{\textup{post}}({\subseteq},{\supseteq})\comp\alpha_G$}}$}; 
\draw [ultra thick, white] (23,13) --  (23,15.5);;
\draw (20,22)  node[name=HFDO] {};
\draw (20,29)   node[name=negHL] {\raisebox{-4pt}[0pt][0pt]{\begin{tabular}[b]{c}Hoare incorrectness logic\\\circled{23}\\[-0.25ex]\huge$\bullet$\end{tabular}}};
\draw [ultra thick, white] (19.75,27) --  (20.25,27);;
\draw[very thick] (HFDO.center) --  (negHL.center) node[draw=none,fill=none,font=\scriptsize,midway,right] {\raisebox{3pt}[0pt][0pt]{\large$\alpha^{\neg}$}};
\draw [ultra thick, white] (20,21.75) --  (20,25.65);;
\draw (20,22)  node[name=HFDO] {\small\raisebox{-5pt}[0pt][0pt]{\begin{tabular}[b]{c}\cite{DBLP:journals/cacm/Hoare69}\\[-0.76ex]{\hskip-2.5pt\scriptsize\cite[(i)]{CousotCousot82-TNPC}}\\\circled{8}\\[-1pt]\huge$\bullet$\end{tabular}}};
\draw (26,16)  node[name=HFDU] {\small\raisebox{-4pt}[0pt][0pt]{\begin{tabular}[t]{c}\huge$\bullet$\\[-0.5ex]\llap{\cite{DBLP:conf/sefm/VriesK11}\ \circled{10}}\\[-0.5ex]\llap{\cite{DBLP:journals/pacmpl/OHearn20}\hskip1.2em}x\end{tabular}}};
\draw[very thick] (HFD.center) --  (HFDO.center) node[draw=none,fill=none,font=\scriptsize,midway,below] {\llap{\raisebox{0pt}[0pt][0pt]{$\textsf{\textup{post}}({\supseteq},{\subseteq})\comp\alpha_G$\ }}} ; 
\draw[very thick] (HFD.center) --  (HFDU.center) node[draw=none,fill=none,font=\scriptsize,midway,above] {{$\rlap{\ \ \ \raisebox{-5pt}[0pt][0pt]{$\textsf{\textup{post}}({\subseteq},{\supseteq})\comp\alpha_G$}}$}};
\draw (28,30) node[name=HBDO] {\small\raisebox{10pt}{\begin{tabular}[b]{c}\circled{15}\\\huge$\color{black}\bullet$\end{tabular}}};
\draw (34,24) node[name=HBDU] {\small\raisebox{-4pt}[0pt][0pt]{\begin{tabular}[t]{c}\huge$\bullet$\\[-0.75ex]\circled{16}\end{tabular}}};
\draw[very thick] (HBD.center) --  (HBDO.center) node[draw=none,fill=none,font=\scriptsize,midway,above] {\rlap{\raisebox{0pt}[0pt][0pt]{$\textsf{\textup{post}}({\supseteq},{\subseteq})\comp\alpha_G$\ }}} ; 
\draw[very thick] (HBD.center) --  (HBDU.center) node[draw=none,fill=none,font=\scriptsize,midway,above] {{$\rlap{\ \raisebox{0pt}[0pt][0pt]{$\textsf{\textup{post}}({\subseteq},{\supseteq})\comp\alpha_G$}}$}};
\draw [dashed, blue, thick] (HFGU.center) -- (23,20) -- (HBDO.center) ; 
\draw [dashed, blue, thick] (LFGU.center) -- (22,2) -- (24,3) -- (28,7) -- (LBDO.center) ; 
\draw [ultra thick, white] (28,11.5) --  (28,12.5);
\draw[very thick,dotted,opacity=0.5] (LBD.center) --  (LBDO.center) node[draw=none,fill=none,font=\scriptsize,midway,below] {$\llap{\raisebox{-3pt}[0pt][0pt]{$\textsf{\textup{post}}({\supseteq},{\subseteq})\comp\alpha_G$}}$};\draw [dashed, blue, thick] (HBGO.center) --(17,22) -- (HFDO.center) ;
\draw [dashed, blue, thick] (LBGU.center) --(17,6) -- (LFDO.center) ;
\draw (11,11) node {\small\raisebox{-4pt}[0pt][0pt]{\begin{tabular}[t]{c}\huge$\color{gray}\bullet$\end{tabular}}};
\draw (14,8) node {\small\raisebox{-4pt}[0pt][0pt]{\begin{tabular}[t]{c}\huge$\color{gray}\bullet$\end{tabular}}};
\draw (8,14) node {\small\raisebox{-4pt}[0pt][0pt]{\begin{tabular}[t]{c}\huge$\color{gray}\bullet$\end{tabular}}};
\draw (14,24) node{\small\raisebox{-4pt}[0pt][0pt]{\begin{tabular}[t]{c}\huge$\color{gray}\bullet$\end{tabular}}};
\draw (28,14) node{\small\raisebox{-4pt}[0pt][0pt]{\begin{tabular}[b]{c}\huge$\color{gray}\bullet$\end{tabular}}};
\draw [dashed, blue, thick] (-4,-5) -- (6,-5);
\draw (22,-5) node {Galois connection (different logics to prove the same property)};
\end{tikzpicture}
\vskip-2mm
\newcounter{myfigure}
\setcounter{myfigure}{\value{figure}}
\setcounter{figure}{2}%
\caption{Hierarchical taxonomy of transformational assertional logics\label{fig:taxonomy-assertional}}
\setcounter{figure}{\value{myfigure}}
\end{figure}
\end{toappendix}

\makeatletter
\renewcommand\subsection{\@startsection{subsection}{2}{\z@}%
  {-0.01\baselineskip \@plus -1\p@ \@minus -.1\p@}%
  {0.01\baselineskip}%
  {\ACM@NRadjust\@subsecfont}}
\makeatother

We use \emph{universal} to mean for all initial or final states and \emph{existential} to mean there exists at least one initial or final state. We use \emph{reachability} (often forward) for initial to final states and \emph{accessibility} (often backward) for final to initial states. We use  \emph{definite} to mean ``for all executions'' and \emph{possible} to mean ``for some execution'' (maybe none). In both cases, the qualification does not exclude possible nontermination or blocking states, which is emphasized by \emph{partial\/}. We use \emph{total} to mean that all executions must be finite. We use \emph{blocking} to mean a state, which is not final, but from which execution cannot go on. No such blocking states exist in the semantics $\sqb{\texttt{\small S}}_\bot$ of statements $\texttt{\small S}$ in Sect\@. \ref{sec:natural-relational-semantics-deductive} and \ref{sect:FixpointNaturalRelationalSemantics} but would correspond e.g.\ to an aborted execution after a runtime error (like a division by zero). 

\smallskip

The taxonomy for \textbf{direct proofs} (the hypothesis implies the conclusion) is illustrated in Fig\@.~\ref{fig:taxonomy-assertional}.
\newcommand{\PartialDefiniteAccessibilityOfSomeFinalStateFromAllInitialStates}{\normalfont$\textsf{\upshape post}\sqb{\texttt{\small S}}P\subseteq Q \Leftrightarrow P\subseteq\widetilde{\textsf{\upshape pre}}\sqb{\texttt{\small S}}Q$, $P, Q\in\wp(\Sigma)$}
\subsubsection{Partial Definite Accessibility of Some Final State From All Initial States \protect\PartialDefiniteAccessibilityOfSomeFinalStateFromAllInitialStates}\label{PartialDefiniteAccessibilityOfSomeFinalStateFromAllInitialStates}
Partial correctness, allowing blocking states, characterizes executions starting from any initial state in $P$, which, if  terminating normally, do terminate in a state of $Q$ and no other one. So blocking states are not excluded. This is Naur \cite{Naur66-1}, Hoare \cite{DBLP:journals/cacm/Hoare69} partial correctness, and Dijkstra weakest liberal preconditions \cite{DBLP:books/ph/Dijkstra76} and the partial correctness part of
Turing \cite{Turing49-program-proof} and Floyd
\cite{Floyd67-1} total correctness.

\smallskip

\noindent \circled{8} 
 $\textsf{\textup{post}}({\supseteq},{\subseteq})\comp\alpha_G(\textsf{\upshape post}\sqb{\texttt{\small S}})\triangleq\{\pair{P}{Q}\in \wp(\Sigma)\times \wp(\Sigma)\mid \textsf{\upshape post}\sqb{\texttt{\small S}}P\subseteq Q\}$ yields the theory of Hoare logic \cite{DBLP:journals/cacm/Hoare69}. 
This claim can be substantiated by (re)constructing Hoare logic by  abstracting the angelic relational semantics $\sqb{\texttt{\small S}}=\alpha^{\demoniac}(\sqb{\texttt{\small S}}_{\bot})$ by $\textsf{\textup{post}}({\supseteq},{\subseteq})\comp\alpha_G\comp\textsf{\upshape post}$. This has been done, e.g., in \cite[Theorem 1, page 79]{DBLP:journals/siamcomp/Cook78} (modulo a later correction in \cite{DBLP:journals/siamcomp/Cook81}) as well as in 
\cite[Chapter 26]{Cousot-PAI-2021}, although using the intermediate abstractions into an equational semantics and then verification conditions to explain Turing-Floyd's transition based invariance proof method.

\smallskip

\noindent \circled{14} By Galois connection (\ref{eq:def:post:GC}), $\textsf{\textup{post}}({\subseteq},{\supseteq})\comp\alpha_G(\widetilde{\textsf{\upshape pre}}\sqb{\texttt{\small S}})\triangleq\{\pair{P}{Q}\in \wp(\Sigma)\times \wp(\Sigma)\mid P\subseteq\widetilde{\textsf{\upshape pre}}\sqb{\texttt{\small S}} Q\}$ is equivalent and yields the theory of a logic axiomatizing subgoal induction \cite{DBLP:journals/cacm/MorrisW77} or necessary preconditions \cite{DBLP:conf/vmcai/CousotCL11,DBLP:conf/vmcai/CousotCFL13}. 

\smallskip

Hoare and subgoal induction logics can be used to prove universal partial correctness ($Q$ is good, as in static accessibility analysis \cite{DBLP:conf/popl/CousotC77}) and universal partial incorrectness ($Q$ is bad, as in necessary preconditions analyses \cite{DBLP:conf/vmcai/CousotCL11,DBLP:conf/vmcai/CousotCFL13}).
Both logics can be also  used to prove bounded termination, by introducing a counter incremented in loops and proved to be bounded \cite{DBLP:journals/acta/LuckhamS77}. However, this is incomplete for unbounded nondeterminism. $\textsf{\upshape post}\sqb{\texttt{\small S}}P\subseteq \emptyset$ $\Leftrightarrow$
$P\subseteq\widetilde{\textsf{\upshape pre}}\sqb{\texttt{\small S}}\emptyset$ $\Leftrightarrow$
$P\subseteq\neg{\textsf{\upshape pre}}\sqb{\texttt{\small S}}\Sigma$ $\Leftrightarrow$ ${\textsf{\upshape pre}}\sqb{\texttt{\small S}}\Sigma \subseteq \neg P$
is
\emph{definite nontermination from all initial states} (executions from any initial state of $P$ do not terminate).

Subgoal induction is exploited in necessary preconditions analyses \cite{DBLP:conf/vmcai/CousotCL11,DBLP:conf/vmcai/CousotCFL13}. 
Finding $P$ such that ${\textsf{\upshape post}\sqb{\texttt{\small S}}}P\subseteq Q$ is equivalent to finding $P$ such that $P\subseteq{\widetilde{\textsf{\upshape pre}}\sqb{\texttt{\small S}}}Q$ for the given error postcondition $Q$, which the necessary precondition analysis does by under approximating ${\widetilde{\textsf{\upshape pre}}\sqb{\texttt{\small S}}}$ defined structurally on the programming language and using fixpoint under approximation to handle iteration and recursion.

\newcommand{\TotalDefiniteAccessibilityOfSomeFinalStateFromAllInitialStates}{\normalfont$\textsf{\upshape post}\sqb{\texttt{\small S}}_{\bot}P\subseteq Q \Leftrightarrow P\subseteq\widetilde{\textsf{\upshape pre}}\sqb{\texttt{\small S}}_{\bot}Q$, $P, Q\in\wp(\Sigma)$}

\subsubsection{Total Definite Accessibility of Some Final States From All Initial States   \protect\TotalDefiniteAccessibilityOfSomeFinalStateFromAllInitialStates}\label{TotalDefiniteAccessibilityOfSomeFinalStateFromAllInitialStates}
Total correctness, allowing blocking states, characterizes executions from any initial state in $P$ that do terminate normally in a final state satisfying $Q$ or block. Taking $Q=\Sigma$ is universal definite termination.

\smallskip

\noindent\circled{7} The 
Turing \cite{Turing49-program-proof} \& Floyd \cite{Floyd67-1} proof method uses an invariant and a variant function into a well-founded set. The abstraction  $\textsf{\textup{post}}({\supseteq},{\subseteq})\comp\alpha_G(\textsf{\upshape post}\sqb{\texttt{\small S}}_{\bot})\triangleq\{\pair{P}{Q}\in \wp(\Sigma)\times \wp(\Sigma)\mid \textsf{\upshape post}\sqb{\texttt{\small S}}_{\bot}P\subseteq Q\}$ yields the theory of
Apt and Plotkin \cite{DBLP:journals/jacm/AptP86} logic in the assertional case (and that of Manna \& Pnueli logic \cite{DBLP:journals/acta/MannaP74} in the relational case). This claims follows from  \cite{DBLP:journals/jacm/AptP86} for an imperative language and \cite{DBLP:journals/tcs/Cousot02} for arbitrary transition systems. The logic can be used to prove definite correctness or incorrectness.

\newcommand{\PartialpossibleAccessibilityOfAllFinalStatesFromSomeInitialState}{\normalfont$Q\subseteq\textsf{\upshape post}\sqb{\texttt{\small S}}_{\bot}P$ $\Leftrightarrow$ $Q\subseteq\textsf{\upshape post}\sqb{\texttt{\small S}}P$, $P,Q\in\wp(\Sigma)$}
\subsubsection{Partial Possible Accessibility of All Final States From Some Initial State 
\protect\PartialpossibleAccessibilityOfAllFinalStatesFromSomeInitialState}\label{PartialpossibleAccessibilityOfAllFinalStatesFromSomeInitialState}
This means that for any final state $\sigma'$ in $Q$ there exists at least one initial state $\sigma$ in $P$  and an execution from $\sigma$ that will terminate in state $\sigma'$. Blocking states $\sigma$ may be included in $P$. Moreover,
this does not preclude executions from $\sigma$ to make nondeterministic choices terminating normally with $\neg Q$ or do not terminate at all.

\smallskip

\noindent \circled{10} By \cite[Definition 1]{DBLP:conf/sefm/VriesK11}, $\textsf{\textup{post}}({\subseteq},{\supseteq})\comp\alpha_G(\textsf{\upshape post}\sqb{{\texttt{\small S}}})$ is the theory of De Vries and Koutavas reversed Hoare logic.  This is also confirmed by the soundness and completeness proofs in  \cite[section 6]{DBLP:conf/sefm/VriesK11} based on a ``weakest postcondition calculus'' defined in \cite[section 5]{DBLP:conf/sefm/VriesK11} as ``wpo($P$, c), [is] the weakest postcondition given a precondition $P$ and program c''. So ``wpo'' is nothing other than $\textsf{\upshape post}$ and ``$\singleton{P}\:c\:\singleton{Q}$ is a valid triple if and only if $Q$ $\Rightarrow$ wpo($P$,c)''.

By \cite[\textsc{Fact 13}]{DBLP:journals/pacmpl/OHearn20}, this is also incorrectness logic requiring any bug in $Q$ to be possibly reachable in finitely many steps from $P$ thus discarding infinite executions as possible errors. 

The difference is in the examples handled where $Q$ is ``good'' for De Vries and Koutavas and ``bad''
for O'Hearn.

\newcommand{\PartialpossibleAccessibilityOfSomeFinalStateFromAllInitialStates}{\normalfont$P\subseteq\textsf{\upshape pre}\sqb{\texttt{\small S}}Q$, $P,Q\in\wp(\Sigma)$}
\subsubsection{Partial Possible Accessibility of Some Final State From All Initial States 
\protect\PartialpossibleAccessibilityOfSomeFinalStateFromAllInitialStates}\label{PartialpossibleAccessibilityOfSomeFinalStateFromAllInitialStates}
This prescribes that all initial states in $P$ have at least one execution that does reach $Q$. 

\smallskip

\noindent \circled{14}  Dijkstra \cite{Dijkstra-EWD-576} shown the equivalence of $\textsf{\upshape post}{\sqb{\texttt{\small S}}}P\subseteq Q$ (that is, Turing-Floyd-Naur-Hoare partial correctness and $P\subseteq\textsf{\upshape pre}{\sqb{\texttt{\small S}}}Q$ (that is, Morris  and Wegbreit subgoal induction, claiming ``subgoal induction is indeed the next variation on an old theme''). By (\ref{eq:def:post:GC}) this should have been $P\subseteq\widetilde{\textsf{\upshape pre}}{\sqb{\texttt{\small S}}}Q$ in general, but Dijkstra considers total deterministic programs for which $\textsf{\upshape pre}$ =
$\widetilde{\textsf{\upshape pre}}$.  This is also the incorrectness part of the outcome logic \cite{DBLP:journals/pacmpl/ZilbersteinDS23}, the induction principle ($\mskip1mu{\textrm{\upshape i}^{\scriptscriptstyle -1}}\mskip1mu$) of \cite[p\@. 100]{CousotCousot82-TNPC}, and (SIL) in \cite{DBLP:journals/corr/abs-2310-18156}.

\newcommand{\PossibleAccessibilityOfSomeFinalStateOrNonterminationFromAllInitialStates}{\normalfont$
P\subseteq\textsf{\upshape pre}\sqb{\texttt{\small S}}_{\bot}Q$, $P\in\wp(\Sigma), Q\in\wp(\Sigma_\bot)$}
\subsubsection{Possible Accessibility of Some Final State or Nontermination From All Initial States \protect\PossibleAccessibilityOfSomeFinalStateOrNonterminationFromAllInitialStates}\label{PossibleAccessibilityOfSomeFinalStateOrNonterminationFromAllInitialStates}
For $Q=\Sigma$, this is \emph{possible termination from all initial states} \circled{17}.  For $Q=\{\bot\}$, this is \emph{possible nontermination from all initial states}. Similarly, \circled{11} is $
P\subseteq\textsf{\upshape pre}\sqb{\texttt{\small S}}Q$, named (NC) in
\cite{DBLP:journals/corr/abs-2310-18156}.

\smallskip

\noindent \circled{19} This logic will be formally developed by calculus in Sect\@. \ref{sec:design:possibleAccessibilityNontermination}.

\smallskip

We can also consider \textbf{disproofs of program properties} by the abstraction $\alpha^{\neg}$ (\ref{eq-complement-GC}) of the theory of a program logic.

\newcommand{\PartialpossibleAccessibilityOfSomeFinalStateFromSomeInitialState}{\normalfont$\textsf{\upshape post}\sqb{\texttt{\small S}}P\cap Q\neq\emptyset$
for $P,Q\in\wp(\Sigma)$ (or 
$\textsf{\upshape post}\sqb{\texttt{\small S}}_{\bot}P\cap Q\neq\emptyset$ for $Q\in\wp(\Sigma_\bot$))}
\subsubsection{Partial Possible Accessibility of Some Final States (or Nontermination) From Some Initial States 
\protect\PartialpossibleAccessibilityOfSomeFinalStateFromSomeInitialState}\label{PartialpossibleAccessibilityOfSomeFinalStateFromSomeInitialState}
This means that at least one execution from at least one initial state in $P$ does terminate in a final state satisfying Q. Taking $Q=\Sigma$ is 
\emph{possible termination from some initial states}. 

\smallskip

\noindent\circled{23} Disproving a Hoare triple using the proof system would require to show that no proof does exist for this triple, a method no one
ever consider. One can use incorrectness logic \cite{DBLP:journals/pacmpl/OHearn20} or provide a counter-example (not supported by a logic).
The \emph{Hoare incorrectness logic} \circled{23} can be used to prove that a Hoare specification is violated with a possible counter-example, since
$\neg\bigl(\{P\}\texttt{\small S}\{Q\}\bigr)$ 
= 
$\neg\bigl(\textsf{\textup{post}}\sqb{S}P\subseteq Q\bigr)$ 
= 
$\textsf{\textup{post}}\sqb{S}P\cap \neg Q\neq\emptyset$. It's nothing but debugging in logic form.

This is weaker that the requirements of incorrectness logic, for which the principle of denial \cite[Fig\@. 1]{DBLP:journals/pacmpl/OHearn20}
states that if $Q\subseteq\textsf{\textup{post}}\sqb{\texttt{\small S}}P \wedge \neg(Q\subseteq Q')$ then $Q\cap\neg Q'\neq \emptyset$ and therefore 
$\textsf{\textup{post}}\sqb{\texttt{\small S}}P\cap \neg Q' \neq \emptyset$
that is, $\neg\bigl(\{P\}\texttt{\small S}\{Q'\}\bigr)$. However the converse is not true since the violation of $\{P\}\texttt{\small S}\{Q\}$
only require one state of $P$ definitely reaches one state not satisfying $Q$.

\ifshort Other contrapositive logics or logics for disproving program properties are considered in the appendix \proofinapx.\fi
\begin{toappendix}
\ifshort 

\section{The subhierarchy of assertional logics, continued}

Continuing Sect\@. \ref{sec:subhierarchy-assertional-logics} on ``The subhierarchy of assertional logics'', we consider other \textbf{contrapositive} logics or properties \textbf{disproving program properties}.\bgroup\let\subsubsection\subsection\fi
\newcommand{\PartialDefiniteInaccessibilityOfAllFinalStatesFromAllNoninitialStates}{\normalfont$\textsf{\upshape pre}\sqb{\texttt{\small S}}Q\subseteq P
\Leftrightarrow
Q\subseteq \widetilde{\textsf{\upshape post}}\sqb{\texttt{\small S}}P
\Leftrightarrow
\neg P\subseteq \widetilde{\textsf{\upshape pre}}\sqb{\texttt{\small S}}\neg Q
\Leftrightarrow
\textsf{\upshape pre}\sqb{\texttt{\small S}}_{\bot}Q\subseteq P
\Leftrightarrow
Q\subseteq \widetilde{\textsf{\upshape post}}\sqb{\texttt{\small S}}_{\bot}P
\Leftrightarrow
\neg P\subseteq \widetilde{\textsf{\upshape pre}}\sqb{\texttt{\small S}}_{\bot}\neg Q$, $P,Q\in\wp(\Sigma)$}
\subsubsection{Partial Definite Inaccessibility of All Final States From All Non-Initial States 
\protect\PartialDefiniteInaccessibilityOfAllFinalStatesFromAllNoninitialStates}
\label{PartialDefiniteInaccessibilityOfAllFinalStatesFromAllNoninitialStates}
This specifies the fact that any executions from initial states not in $P$ can never reach a state in $Q$. This provides a necessary precondition for partial possible accessibility of  some final states from some initial states but does not prevent nontermination. This formalizes by Galois connections the relation inductive principles ($\overline{\textup{I}}$), ($\widetilde{\overline{\textup{I}}}$), ($\overline{\textup{I}^{-1}}$), ($\widetilde{\overline{\textup{I}^{-1}}}$) and  the assertional ones 
($\overline{\textup{i}}$), ($\widetilde{\overline{\textup{i}}}$), ($\overline{\textup{i}^{-1}}$), ($\widetilde{\overline{\textup{i}^{-1}}}$) of \cite{CousotCousot82-TNPC}.

Notice that complement duality can be used to perform under (respectively over) approximation by over (respectively under) approximations of the complement.

\newcommand{\PartialDefiniteInaccessibilityOfSomeFinalStateFromAllNoninitialStates}{\normalfont$\widetilde{\textsf{\upshape post}}\sqb{\texttt{\small S}}P \cap Q\neq\emptyset$
$\Leftrightarrow$ 
$\widetilde{\textsf{\upshape post}}\sqb{\texttt{\small S}}_{\bot}P \cap Q\neq\emptyset$, $P, Q\in\wp(\Sigma)$}
\subsubsection{Partial Definite Inaccessibility of Some Final State From All Non-Initial States 
 \protect\PartialDefiniteInaccessibilityOfSomeFinalStateFromAllNoninitialStates}\label{PartialDefiniteInaccessibilityOfSomeFinalStateFromAllNoninitialStates}
This expresses that there exists at least one state $\sigma'$ in $Q$, which is not accessible by 
any execution from an initial state not in $P$.

\newcommand{\PartialDefiniteAccessibilityOfSomeFinalStateFromSomeInitialState}{\normalfont$\textsf{\upshape pre}\sqb{\texttt{\small S}}Q\cap P\neq\emptyset
\Leftrightarrow
\textsf{\upshape pre}\sqb{\texttt{\small S}}_{\bot}Q\cap P\neq\emptyset$, $P, Q\in\wp(\Sigma)$}
\subsubsection{Partial Definite Accessibility of Some Final State From Some Initial State \protect\PartialDefiniteAccessibilityOfSomeFinalStateFromSomeInitialState}
\label{PartialDefiniteAccessibilityOfSomeFinalStateFromSomeInitialState}
This indicates that there is at least one initial state in $P$ from which there is at least one execution that does terminate in state $Q$. We have $\neg(\{P\}\texttt{\small S}\{Q\})$ $\Leftrightarrow$ 
$\neg(P\subseteq \widetilde{\textsf{\upshape pre}}\sqb{\texttt{\small S}} Q)$ $\Leftrightarrow$ 
$(P\cap \neg(\widetilde{\textsf{\upshape pre}}\sqb{\texttt{\small S}} Q))\neq\emptyset$ $\Leftrightarrow$ 
$P\cap  \textsf{\upshape pre}\sqb{\texttt{\small S}}\neg Q\neq\emptyset$ which is another incorrectness logic
\cite[Ch\@. 50]{Cousot-PAI-2021}.

\newcommand{\PartialpossibleAccessibilityOfSomeNonfinalStateFromAllNoninitialStates}{\normalfont$\widetilde{\textsf{\upshape pre}}\sqb{\texttt{\small S}}Q \subseteq P$
$\Leftrightarrow$
$\widetilde{\textsf{\upshape pre}}\sqb{\texttt{\small S}}_{\bot}Q \subseteq P$
$\Leftrightarrow$
$\neg P\subseteq{\textsf{\upshape pre}}\sqb{\texttt{\small S}}\neg Q$
$\Leftrightarrow$
$\neg P\subseteq{\textsf{\upshape pre}}\sqb{\texttt{\small S}}_{\bot}\neg Q
$, $P, Q\in\wp(\Sigma)$}
\subsubsection{Partial Possible Accessibility of Some Non-Final State From All Non-Initial States \protect\PartialpossibleAccessibilityOfSomeNonfinalStateFromAllNoninitialStates}\label{PartialpossibleAccessibilityOfSomeNonfinalStateFromAllNoninitialStates}
The meaning is that executions from initial states not in $P$ have at least one execution not reaching a state in $Q$. This provides a necessary precondition for universal possible accessibility \cite{DBLP:conf/vmcai/CousotCFL13}. 

\newcommand{\PartialpossibleAccessibilityOfAllNonfinalStatesFromSomeNoninitialState}{\normalfont$\widetilde{\textsf{\upshape post}}\sqb{\texttt{\small S}}P \subseteq Q$
$\Leftrightarrow$
$\neg Q\subseteq {\textsf{\upshape post}}\sqb{\texttt{\small S}}\neg P$ 
$\Leftrightarrow$ 
$\neg Q \subseteq{\textsf{\upshape post}}\sqb{\texttt{\small S}}_{\bot} \neg P$
$\Leftrightarrow$ 
$\neg Q \subseteq \neg \widetilde{\textsf{\upshape post}}\sqb{\texttt{\small S}}_{\bot}   P $
$\Leftrightarrow$ 
$\widetilde{\textsf{\upshape post}}\sqb{\texttt{\small S}}_{\bot}P \subseteq Q
$}
\subsubsection{Partial Possible Accessibility of All Non-Final States From Some Non-Initial State \protect\PartialpossibleAccessibilityOfAllNonfinalStatesFromSomeNoninitialState}
\label{PartialpossibleAccessibilityOfAllNonfinalStatesFromSomeNoninitialState}
The signification is that for any state $\sigma'$ not in $Q$ there exists at least one initial state $\sigma$ not in $P$ and an execution from $\sigma$ that will terminate in state $\sigma'$. Letting $P'=\neg P$ and $Q'=\neg Q$, this is partial possible accessibility of all final states from some initial state  $Q'\subseteq\textsf{\upshape post}\sqb{\texttt{\small S}}P'$ from Sect\@. \ref{PartialpossibleAccessibilityOfAllFinalStatesFromSomeInitialState}. This shows that the under approximation $Q\subseteq\textsf{\upshape post}\sqb{\texttt{\small S}}P$ is equivalent to an over approximation  $\widetilde{\textsf{\upshape post}}\sqb{\texttt{\small S}}\neg P \subseteq \neg Q$ of the complement, that is, a proof by contradiction.

\newcommand{\TotalDefiniteAccessibilityOfSomeFinalStateFromSomeInitialState}{\normalfont$\widetilde{\textsf{\upshape pre}}\sqb{\texttt{\small S}}Q \cap P\neq\emptyset$ 
$\Leftrightarrow$
$\widetilde{\textsf{\upshape pre}}\sqb{\texttt{\small S}}_{\bot}Q \cap P\neq\emptyset$, $P, Q\in\wp(\Sigma)$}
\subsubsection{Total Definite Accessibility of Some Final State From Some Initial State  \protect\TotalDefiniteAccessibilityOfSomeFinalStateFromSomeInitialState}
\label{TotalDefiniteAccessibilityOfSomeFinalStateFromSomeInitialState}
This states that there is at least one initial state in $P$ from which all executions do terminate in $Q$. 
\ifshort\egroup\fi
\end{toappendix}

\smallskip

\subsection{The Combination of Logics}
Program logics are generally composite that is, the result of combining elementary logics which are different abstractions of program executions e.g\@. \cite{DBLP:journals/jacm/BruniGGR23,DBLP:journals/pacmpl/ZilbersteinDS23}.

\subsubsection{The Conjunction/Disjunction of Logics}
We have $\textsf{wlp}(\texttt{\small S},Q)=\textsf{pre}\sqb{\texttt{\small S}}Q\cap\widetilde{\textsf{pre}}\sqb{\texttt{\small S}}Q$ while $\textsf{wp}(\texttt{\small S},Q)=\textsf{pre}\sqb{\texttt{\small S}}_\bot Q\cap\widetilde{\textsf{pre}}\sqb{\texttt{\small S}}_\bot Q$ since blocking states must be prevented
as well as nontermination for \textsf{wp}, see Fig\@. \ref{fig:Property-transformers}\ifshort\ in the appendix\fi.
The relevant abstractions of transformers ${\tau_1}$, ${\tau_2}$ are \proofinapx
\bgroup\arraycolsep2pt\abovedisplayskip0.5\abovedisplayskip\belowdisplayskip0pt%
\begin{eqntabular}[fl]{rcl@{\quad}L@{\quad}l}
\alpha^{\cap}\pair{\tau_1}{\tau_2}r &\triangleq&\tau_1(r) \mathbin{\stackrel{.}{\cap}} \tau_2(r)&{\Large\strut} meet (or conjunction) where $\delta(x)=\pair{x}{x}$ is duplication\label{eq:GC-meet-join}\\[-1ex]
\renumber{$\pair{(\wp(\mathcal{X}\times \mathcal{Y})\rightarrow(\wp(\mathcal{X})\rightarrow\wp(\mathcal{Y})))^2}{\stackrel{..}{\supseteq}}\galoiS{\alpha^{\cap}}{\delta}\pair{\wp(\mathcal{X}\times \mathcal{Y})\rightarrow(\wp(\mathcal{X})\rightarrow\wp(\mathcal{Y}))}{\stackrel{..}{\supseteq}}$\qquad\qquad}\\[1ex]
\alpha^{\cup}\pair{\tau_1}{\tau_2}r &\triangleq&\tau_1(r)\mathbin{\stackrel{.}{\cup}} \tau_2(r)&{\Large\strut} join (or disjunction)\nonumber\\[-1ex]
\renumber{$\pair{(\wp(\mathcal{X}\times \mathcal{Y})\rightarrow(\wp(\mathcal{X})\rightarrow\wp(\mathcal{Y})))^2}{\stackrel{..}{\subseteq}}\galoiS{\alpha^{\cup}}{\delta}\pair{\wp(\mathcal{X}\times \mathcal{Y})\rightarrow(\wp(\mathcal{X})\rightarrow\wp(\mathcal{Y}))}{\stackrel{..}{\subseteq}}$\qquad\qquad}
\end{eqntabular}\egroup
\begin{toappendix}
\begin{proof}[Proof of (\ref{eq:GC-meet-join})]
\begin{calculus}[$\Leftrightarrow$\ ]
\hyphen{6} \formula{\alpha^{\cap}\pair{\tau_1}{\tau_2} \stackrel{..}{\supseteq} \bar{\tau}}\\
$\Leftrightarrow$
\formulaexplanation{\alpha^{\cap}\pair{\tau_1}{\tau_2}(r) \stackrel{.}{\supseteq} \bar{\tau}(r)}{pointwise def\@. $\stackrel{..}{\subseteq}$}\\
$\Leftrightarrow$
\formulaexplanation{\tau_1(r) \mathbin{\stackrel{.}{\cap}} \tau_2(r) \stackrel{.}{\supseteq} \bar{\tau}(r)}{def\@. $\alpha^{\cap}$}\\
$\Leftrightarrow$
\formulaexplanation{\tau_1(r) \stackrel{.}{\supseteq} \bar{\tau}(r) \wedge \tau_2(r) \stackrel{.}{\supseteq} \bar{\tau}(r)}{def\@. $\mathbin{\stackrel{.}{\cap}}$}\\
$\Leftrightarrow$
\formulaexplanation{\pair{\tau_1(r)}{\tau_2(r)} \stackrel{..}{\supseteq} \pair{\bar{\tau}(r)}{\bar{\tau}(r)}}{componentwise def\@. $\stackrel{..}{\supseteq}$ for pairs}\\
$\Leftrightarrow$
\formulaexplanation{\pair{\tau_1(r)}{\tau_2(r)} \stackrel{..}{\supseteq} \delta({\bar{\tau}(r)})}{def\@. $\delta$}\\[1ex]

\hyphen{6} \lastdiscussion{Similarly $\alpha^{\cup}\pair{\tau_1}{\tau_2} \stackrel{..}{\subseteq} \bar{\tau}$
$\Leftrightarrow$ $\pair{\tau_1(r)}{\tau_2(r)} \stackrel{..}{\subseteq} \delta({\bar{\tau}(r)})$ by $\subseteq$-order duality.}{\mbox{\qed}}
\end{calculus}\let\qed\relax
\end{proof}
\end{toappendix}

\subsubsection{The Product of Logics}

One can imagine a Cartesian product $\{DT,PT,NT\}\texttt{\small S}\{Q,R\}$ meaning that every execution of \texttt{\small S} starting with an initial state of $DT$ will definitely terminate in a final state in $Q$, every execution of \texttt{\small S} starting with an initial state of $PT$ will either terminate in a final state in R or not terminate, and every execution of \texttt{\small S} starting with an initial state of $NT$ will never terminate. $Q$ and $R$ could further be decomposed into a product of good and bad states. 

Similarly, \cite[section 4]{DBLP:journals/pacmpl/OHearn20} uses the notation $[p]C[ok:q][er:r]$ as a shorthand for $[p]C[ok:q]$ and $[p]C[er:r]$ resulting in a single deductive system instead of two independent ones. The definition of the relational semantics in (\ref{eq:def:semantics}) will use such a grouping to set apart \texttt{\small break}s. 

The relevant Cartesian abstraction $\alpha^{\times}$ merges two transformers into a single one. We assume that 
$\pair{\mathcal{X}\rightarrow \mathcal{Y}_1}{{\sqsubseteq}_1}$ and $\pair{\mathcal{X}\rightarrow \mathcal{Y}_2}{{\sqsubseteq}_2}$ are posets,
$\tau_1\in\mathcal{X}\rightarrow \mathcal{Y}_1$ and $\tau_2\in\mathcal{X}\rightarrow \mathcal{Y}_2$. \proofinapx

\bgroup\arraycolsep2pt\abovedisplayskip0.5\abovedisplayskip\belowdisplayskip0.5\belowdisplayskip
\begin{eqntabular}[fl]{L@{}rcl}
&\alpha^{\times}\pair{\tau_1}{\tau_2}(P)&\triangleq& \pair{\tau_1(P)}{\tau_2(P)}\label{eq:Cartesian-product}\stepcounter{equation}
\renumber{{\Large\strut} Cartesian product\quad(\ref{eq:Cartesian-product})}\\[-0.5ex]
&\gamma^{\times}(\bar{\tau})&\triangleq&\pair{\LAMBDA{P}\textsf{let\ }\pair{P_1}{P_2}=\bar{\tau}(P)\textsf{\ in }P_1}{\LAMBDA{P}\textsf{let\ }\pair{P_1}{P_2}=\bar{\tau}(P)\textsf{\ in }P_2}\nonumber\\
\rlap{with Galois connection}\renumber{$\pair{\mathcal{X}\rightarrow \mathcal{Y}_1\times \mathcal{X}\rightarrow \mathcal{Y}_2}{\mathord{\stackrel{.}{\sqsubseteq}_1}.\mathord{\stackrel{.}{\sqsubseteq}_2}}
\galois{\alpha^{\times}}{\gamma^{\times}}\pair{\mathcal{X}\rightarrow (\mathcal{Y}_1\times \mathcal{Y}_2)}{\stackrel{\mbox{\raisebox{-3pt}[0pt][0pt]{\huge .}}}{(\mathord{\sqsubseteq}_1,\mathord{\sqsubseteq}_2)}}$\quad\phantom{(\ref{eq:Cartesian-product})}}
\end{eqntabular}\egroup
\begin{toappendix}
\begin{proof}[Proof of (\ref{eq:Cartesian-product})]
\begin{calculus}[$\Leftrightarrow$\ ]
\formula{\alpha^{\times}\pair{\tau_1}{\tau_2} \stackrel{\mbox{\raisebox{-3pt}[0pt][0pt]{\huge .}}}{(\mathord{\sqsubseteq}_1,\mathord{\sqsubseteq}_2)} \bar{\tau}}\\
$\Leftrightarrow$
\formulaexplanation{\forall P\mathrel{.}\alpha^{\times}\pair{\tau_1}{\tau_2}(P) \mathrel{(\mathord{\sqsubseteq}_1,\mathord{\sqsubseteq}_2)} \bar{\tau}(P)}{pointwise def\@. $\stackrel{\mbox{\raisebox{-3pt}[0pt][0pt]{\huge .}}}{(\mathord{\sqsubseteq}_1,\mathord{\sqsubseteq}_2)}$}\\
$\Leftrightarrow$
\formulaexplanation{\forall P\mathrel{.}\pair{\tau_1(P)}{\tau_2(P)} \mathrel{(\mathord{\sqsubseteq}_1,\mathord{\sqsubseteq}_2)} \bar{\tau}(P)}{def\@. $\alpha^{\times}$}\\
$\Leftrightarrow$
\formulaexplanation{\forall P\mathrel{.}\pair{\tau_1(P)}{\tau_2(P)} \mathrel{(\mathord{\sqsubseteq}_1,\mathord{\sqsubseteq}_2)} \textsf{\ let\ }\pair{P_1}{P_2} = \bar{\tau}(P)\textsf{\ in\ } \pair{P_1}{P_2}}{$\bar{\tau}(P)\in\mathcal{Y}_1\times \mathcal{Y}_2$}\\
$\Leftrightarrow$
\formulaexplanation{\forall P\mathrel{.}\textsf{\ let\ }\pair{P_1}{P_2} = \bar{\tau}(P)\textsf{\ in\ } 
\tau_1(P)\mathrel{\mathord{\sqsubseteq}_1}P_1
\wedge
\alpha_2(P)\mathrel{\mathord{\sqsubseteq}_2}P_2}{componentwise def\@. of $\mathrel{(\mathord{\sqsubseteq}_1,\mathord{\sqsubseteq}_2)}$}\\
$\Leftrightarrow$
\formula{\tau_1\mathrel{\mathord{\stackrel{.}{\sqsubseteq}_1}}
\LAMBDA{P}\textsf{\ let\ }\pair{P_1}{P_2} = \bar{\tau}(P)\textsf{\ in\ } P_1
\wedge
\tau_2\mathrel{\mathord{\stackrel{.}{\sqsubseteq}_2}}
\LAMBDA{P}\textsf{\ let\ }\pair{P_1}{P_2} = \bar{\tau}(P)\textsf{\ in\ } P_2}\\\rightexplanation{pointwise def\@. $\mathord{\stackrel{.}{\sqsubseteq}_1}$ and $\mathord{\stackrel{.}{\sqsubseteq}_2}$}\\
$\Leftrightarrow$
\formulaexplanation{\textsf{\ let\ }\bar{\tau}_1=(\LAMBDA{P}\textsf{\ let\ }\pair{P_1}{P_2} = \bar{\tau}(P)\textsf{\ in\ } P_1)
\textsf{\ and\ }\bar{\tau}_2=(\LAMBDA{P}\textsf{\ let\ }\pair{P_1}{P_2} = \bar{\tau}(P)\textsf{\ in\ } P_2)
\textsf{\ in\ }
\tau_1\mathrel{\mathord{\stackrel{.}{\sqsubseteq}_1}}
\bar{\tau}_1
\wedge
\tau_2\mathrel{\mathord{\stackrel{.}{\sqsubseteq}_2}}
\bar{\tau}_2}{def\@. $\textsf{let}$}\\
$\Leftrightarrow$
\formulaexplanation{\textsf{let\ }\pair{\bar{\tau}_1}{\bar{\tau}_2} = \gamma^{\times}(\bar{\tau}) \textsf{\ in\ }\tau_1\mathrel{\stackrel{.}{\sqsubseteq}_1} \bar{\tau}_1\wedge\tau_2\mathrel{\stackrel{.}{\sqsubseteq}_2} \bar{\tau}_2}{def\@. $\gamma^{\times}$}\\
$\Leftrightarrow$
\formulaexplanation{\textsf{let\ }\pair{\bar{\tau}_1}{\bar{\tau}_2} = \gamma^{\times}(\bar{\tau}) \textsf{\ in\ }\pair{\tau_1}{\tau_2} \mathrel{\mathord{\stackrel{.}{\sqsubseteq}_1}.\mathord{\stackrel{.}{\sqsubseteq}_2}} \pair{\bar{\tau}_1}{\bar{\tau}_2}}{componentwise def\@. $\mathrel{\mathord{\stackrel{.}{\sqsubseteq}_1}.\mathord{\stackrel{.}{\sqsubseteq}_2}}$}\\
$\Leftrightarrow$
\lastformulaexplanation{\pair{\tau_1}{\tau_2} \mathrel{\mathord{\stackrel{.}{\sqsubseteq}_1}.\mathord{\stackrel{.}{\sqsubseteq}_2}} \gamma^{\times}(\bar{\tau})}{def\@. \textsf{let}}{\mbox{\qed}}
\end{calculus}\let\qed\relax
\end{proof}
\end{toappendix}
\begin{example}
We mentioned the origin \cite{DBLP:conf/ac/Park79} of relational semantics that Park encodes by 
$\alpha^{\times}\pair{\alpha^{\angelic}}{\LAMBDA{S}\alpha^{\neg}((\mathord{\stackrel{.}{\alpha}}^{-1}(\textsf{\upshape post}))(S)\{\bot\})}\sqb{\texttt{\small S}}_{\bot}$ i.e\@.
the input-output relation $\sqb{\texttt{\small S}}$ computed by \texttt{\small S} and the definite termination domain of \texttt{\small S} which is the complement of  possible nontermination.
\end{example}
\begin{example}\label{ex:Dijkstra-weakest-preconditions}
Dijkstra's weakest precondition $\textsf{\upshape wp}(\texttt{\small S},Q)$ \cite{DBLP:books/ph/Dijkstra76} is 
$\LAMBDA{Q}\alpha^{\cap}(
\textsf{\upshape pre}\sqb{\texttt{\small S}}_{\bot},\allowbreak
\widetilde{\textsf{\upshape pre}}\sqb{\texttt{\small S}}_{\bot})$, with  $Q\in\wp(\Sigma)$,
$\textsf{\upshape pre}=\mathord{\stackrel{.}{\alpha}^{-1}}(\textsf{\upshape post})$, and
$\widetilde{\textsf{\upshape pre}}=\stackrel{.}{\alpha^{\sim}}(\textsf{\upshape pre})$.
The weakest liberal condition $\textsf{\upshape wlp}(\texttt{\small S}, Q)$  is
 $\LAMBDA{Q}\allowbreak\alpha^{\cap}((\widetilde{\textsf{\upshape pre}}\sqb{\texttt{\small S}}_{\bot})\comp\alpha^{\demoniac},\allowbreak
 (\widetilde{\textsf{\upshape pre}}\sqb{\texttt{\small S}}_{\bot})\comp\alpha^{\demoniac})$
 =
$\alpha^{\cap}(
\textsf{\upshape pre}\sqb{\texttt{\small S}},\allowbreak
\widetilde{\textsf{\upshape pre}}\sqb{\texttt{\small S}})$. 
\end{example}

\subsubsection{The Reduced Product of Logics}
The components are usually not independent. For example one uses invariants of Hoare logic to prove termination, or definite termination implies possible termination. Another example is adversarial logic \cite{DBLP:conf/sas/Vanegue22} to describe the possible interaction between a program and an attacker. These are reductions (\ref{eq:reduced-product}) that have been studied in the context of program analysis \cite[chapter 29]{Cousot-PAI-2021} but also apply to any abstraction, including logics, e.g.\ \cite{DBLP:journals/jacm/BruniGGR23}. 

The functor $\alpha^{\ostar}$, inspired by the reduced product in abstract interpretation \cite[section 10.1]{DBLP:conf/popl/CousotC79},  is the Cartesian product where the information of one component is propagated, in abstract form, to the other. This is useful for combining program logics dealing with properties that are not independent.

Assume two abstractions of a (collecting) semantics in $\pair{\mathcal{S}}{\sqsubseteq}$ into different transformers $\pair{\mathcal{S}}{\sqsubseteq}\galois{\alpha_1}{\gamma_1}\pair{\mathcal{X}\rightarrow \mathcal{Y}_1}{\mathrel{\stackrel{.}{\mathord{\sqsubseteq}}_1}}$
and $\pair{\mathcal{S}}{\sqsubseteq}\galois{\alpha_2}{\gamma_2}\pair{\mathcal{X}\rightarrow \mathcal{Y}_1}{\mathrel{\stackrel{.}{\mathord{\sqsubseteq}}_2}}$. Assume that $\triple{\mathcal{X}\rightarrow (\mathcal{Y}_1\times \mathcal{Y}_2)}{\stackrel{\mbox{\raisebox{-3pt}[0pt][0pt]{\huge .}}}{(\mathord{\sqsubseteq}_1,\mathord{\sqsubseteq}_2)}}{\sqcap}$ is a complete lattice.

The reduced product combines two abstractions of the semantics $S$ into transformers $\tau_1$ and
$\tau_2$ into an abstraction of the semantics $S$ into a single transformer with $\alpha^{\ostar,\sqcap}\triangleq\rho\,\comp\,\alpha^{\times}$ where the reduction operator is
$\rho(\bar{\tau})\triangleq
\bigsqcap\{\bar{\tau}'\mid\textsf{let\ }\pair{\tau_1}{\tau_2}=\gamma^{\times}(\bar{\tau})\textsf{\ and\ }\pair{\tau'_1}{\tau'_2}=\gamma^{\times}(\bar{\tau}')\textsf{\ in\ }
\gamma_1(\tau_1)\sqcap\gamma_2(\tau_2)\sqsubseteq \gamma_1(\tau'_1)\wedge\gamma_1(\tau_1)\sqcap\gamma_2(\tau_2)\sqsubseteq \gamma_2(\tau'_2)\}$. By \cite[Theorem 36.24]{Cousot-PAI-2021}, we have the Galois connection
\bgroup\abovedisplayskip0.5\abovedisplayskip\begin{eqntabular}{c}
\pair{\mathcal{X}\rightarrow \mathcal{Y}_1\times \mathcal{X}\rightarrow \mathcal{Y}_2}{\stackrel{.}{\sqsubseteq}_1.\stackrel{.}{\sqsubseteq}_2}
\galois{\rho\,\comp\,\alpha^{\times}}{\gamma^{\times}}\pair{\mathcal{X}\rightarrow (\mathcal{Y}_1\times \mathcal{Y}_2)}{\stackrel{\mbox{\raisebox{-3pt}[0pt][0pt]{\huge .}}}{(\mathord{\sqsubseteq}_1,\mathord{\sqsubseteq}_2)}}
\label{eq:reduced-product}
\end{eqntabular}\egroup
\begin{example}Continuing example \ref{ex:fact:spec}, the reduced product of Hoare logic \cite{DBLP:journals/jacm/Hoare78} (abstracting \textsf{\textup{post}}) and subgoal induction logic \cite{DBLP:journals/cacm/MorrisW77} (in Dijkstra's version \cite{Dijkstra-EWD-576} abstracting \textsf{pre}) for the factorial with consequent specification $f=!\underline{n}$ is $\{ n=\underline{n}\geqslant0\wedge f=1\}\,\texttt{\small fact}\,\{\underline{n}\geqslant0\wedge f=!\underline{n}\}$.
\end{example}

\subsection{Symbolic Inversion}\label{sec:SymbolicInversion}

Let us consider one more useful abstraction of transformers allowing for their inversion using symbolic execution. This reversal abstraction $\alpha^{\leftrightarrow}$ from \cite[Theorem 10-13]{Cousot81-1} allows to prove  backward properties using a forward proof system by using auxiliary variables for initial values of variables (as in symbolic execution) 
and conversely (as an inverse symbolic execution starting with symbolic final values of variables). 
Given $\mathcal{D}\triangleq\wp(\mathcal{X}\times\mathcal{Y})\rightarrow(\wp(\mathcal{X})\rightarrow\wp(\mathcal{Y}))$, $\mathcal{D}'\triangleq\wp(\mathcal{X}\times\mathcal{Y})\rightarrow(\wp(\mathcal{Y})\rightarrow\wp(\mathcal{X}))$,  and $\tau=\textsf{\textup{post}}$, we have~\proofinapx\ (and similarly for  $\tau\in\{\textsf{pre},\textsf{\textup{post}},\textsf{Pre}\}$)
\bgroup\arraycolsep=2pt
\begin{eqntabular}{c@{\qquad}}
\alpha^{\leftrightarrow}(\tau)(r)P\colsep{\triangleq}{\{\sigma'\mid\exists\sigma\in P\mathrel{.}
\sigma\in
\tau(r^{-1})\{\overline{\varsigma}\mid
\overline{\varsigma}=\sigma'\}\}}\qquad\quad
\pair{\mathcal{D}}{\stackrel{...}{\subseteq}}\GaloiS{\alpha^{\leftrightarrow}}{\alpha^{\leftrightarrow}}\pair{\mathcal{D}'}{\stackrel{...}{\subseteq}}\label{eq:alpha-leftrightarrow}
\end{eqntabular}\egroup
\begin{toappendix}
\begin{proof}[Proof of (\ref{eq:alpha-leftrightarrow})] We use $\{\varsigma\mid\varsigma=\sigma \}$ rather than $\{\sigma\}$ to make bindings more clear.
\begin{calculus}
\formula{\textsf{\upshape post}(r)P}\\
=\formulaexplanation{\{\sigma'\mid\exists\sigma\in P\mathrel{.}\pair{\sigma}{\sigma'}\in r\}}{def\@. $\textsf{\upshape post}$}\\
=\formulaexplanation{\{\sigma'\mid\exists\sigma\in P\mathrel{.}
\exists\overline{\sigma}\mathrel{.}\overline{\sigma}=\sigma'\wedge\pair{\sigma}{\overline{\sigma}}\in r\}}{def\@. $=$}\\
=\formulaexplanation{\{\sigma'\mid\exists\sigma\in P\mathrel{.}
\sigma\in\{\sigma\mid\exists\overline{\sigma}\mathrel{.}
\overline{\sigma}=\sigma'\wedge\pair{\sigma}{\overline{\sigma}}\in r\}\}}{def\@. $\in$}\\
=\formulaexplanation{\{\sigma'\mid\exists\sigma\in P\mathrel{.}
\sigma\in\{\sigma\mid\exists\overline{\sigma}\mathrel{.}
\overline{\sigma}\in\{\overline{\varsigma}\mid
\overline{\varsigma}=\sigma'\}\wedge\pair{\sigma}{\overline{\sigma}}\in r\}\}}{def\@. $\in$}\\
=\formulaexplanation{\{\sigma'\mid\exists\sigma\in P\mathrel{.}
\sigma\in\{\sigma\mid\exists\overline{\sigma}\mathrel{.}
\pair{\sigma}{\overline{\sigma}}\in r\wedge
\overline{\sigma}\in\{\overline{\varsigma}\mid
\overline{\varsigma}=\sigma'\}\}\}}{commutativity $\wedge$}\\
=\formulaexplanation{\{\sigma'\mid\exists\sigma\in P\mathrel{.}
\sigma\in
\textsf{\upshape pre}(r)\{\overline{\varsigma}\mid
\overline{\varsigma}=\sigma'\}\}}{def\@. $\textsf{\upshape pre}$}\\
=\formulaexplanation{\{\sigma'\mid\exists\sigma\in P\mathrel{.}
\sigma\in
\textsf{\upshape post}(r^{-1})\{\overline{\varsigma}\mid
\overline{\varsigma}=\sigma'\}\}}{def\@. $\textsf{\upshape post}$}\\
= \formulaexplanation{\alpha^{\leftrightarrow}(\textsf{\upshape post})(r)P}{def\@. $\alpha^{\leftrightarrow}$}
\end{calculus}

\medskip

\noindent
The proof for \textsf{\upshape pre} is similar with
$\mathcal{D}=(\wp(\mathcal{X}\times\mathcal{Y})\rightarrow(\wp(\mathcal{Y})\rightarrow\wp(\mathcal{X})))\rightarrow(\wp(\mathcal{X}\times\mathcal{Y})\rightarrow(\wp(\mathcal{Y})\rightarrow\wp(\mathcal{X})))$. For \textsf{\upshape Post}, we have
\begin{calculus}
\formula{\textsf{\upshape Post}(r)P}\\
=\formulaexplanation{\{\pair{\sigma_0}{\sigma'}\mid\exists\sigma\mathrel{.}
\pair{\sigma_0}{\sigma}\in P\wedge
\pair{\sigma}{\sigma'}\in r\}}{def\@. $\textsf{\upshape post}$}\\
=\formulaexplanation{\{\pair{\sigma_0}{\sigma'}\mid\exists\sigma\mathrel{.}
\pair{\sigma_0}{\sigma}\in P\wedge\pair{\sigma'}{\sigma}\in r^{-1}\}}{def\@. inverse $r^{-1}$}\\
=\formulaexplanation{\{\pair{\sigma_0}{\sigma'}\mid\exists\sigma\mathrel{.}
\pair{\sigma_0}{\sigma}\in P\wedge
\pair{\sigma_0}{\sigma'}\in\{\pair{\varsigma_0}{\varsigma'}\mid \exists \pair{\varsigma_0}{\varsigma}\in\{\pair{\overline{\varsigma}_0}{\overline{\varsigma}}\mid\overline{\varsigma}_0 =\sigma_0\wedge \overline{\varsigma}=\sigma \}\mathrel{.}\pair{\varsigma'}{\varsigma}\in r^{-1}\}\}}{def\@. membership $\in$}\\
=\formulaexplanation{\{\pair{\sigma_0}{\sigma'}\mid\exists\sigma\mathrel{.}
\pair{\sigma_0}{\sigma}\in P\wedge\pair{\sigma_0}{\sigma'}\in\textsf{\upshape Post}(r^{-1})\{\pair{\overline{\varsigma}_0}{\overline{\varsigma}}\mid\overline{\varsigma}_0 =\sigma_0\wedge \overline{\varsigma}=\sigma \}\}}{def\@. $\textsf{\upshape Post}$}\\
= \formulaexplanation{\alpha^{\leftrightarrow}(\textsf{\upshape Post})(r)P}{def\@. $\alpha^{\leftrightarrow}$}
\end{calculus}

\medskip

\noindent where in this case, we have the definition
\begin{eqntabular*}{rcl}
\alpha^{\leftrightarrow}(\tau)(r)P
&\triangleq&
\{\pair{\sigma_0}{\sigma'}\mid\exists\sigma\mathrel{.}
\pair{\sigma_0}{\sigma}\in P\wedge\pair{\sigma_0}{\sigma'}\in\tau(r^{-1})(\{\pair{\overline{\varsigma}_0}{\overline{\varsigma}}\mid\overline{\varsigma}_0 =\sigma_0\wedge \overline{\varsigma}=\sigma \}) \}
\end{eqntabular*}
The case of \textsf{\upshape Pre} is similar with 
$\mathcal{D}=(\wp(\mathcal{X}\times\mathcal{Y})\rightarrow(\wp(\mathcal{Y}\times\mathcal{Z})\rightarrow\wp(\mathcal{X}\times\mathcal{Z})))\rightarrow(\wp(\mathcal{X}\times\mathcal{Y})\rightarrow(\wp(\mathcal{Y}\times\mathcal{Z})\rightarrow\wp(\mathcal{X}\times\mathcal{Z})))$.
\end{proof}
\end{toappendix}
\begin{example}\label{ex:forward-symbolic-execution}
Consider the straight-line program \texttt{\small x = x+y; y = 2*x+y}. A forward symbolic execution $\textsf{\upshape post}\sqb{\texttt{\small S}}\{\sigma\mid {\sigma}_{\texttt{x}}=\underline{\sigma}_{\texttt{x}}\wedge
{\sigma}_{\texttt{y}}=\underline{\sigma}_{\texttt{y}}\}$ with Hoare logic for  initial auxiliary variables $\underline{x}$, $\underline{y}$ is 
\begin{center}
\ttfamily\small
$\{ x=\underline{x}\wedge y=\underline{y} \} 
~\texttt{\small x = x + y;}~
\{ x = \underline{x}+\underline{y} \wedge y=\underline{y} \} 
~\texttt{\small y = 2*x+y;}~
\{ x = \underline{x}+\underline{y} \wedge y=2\underline{x}+3\underline{y} \} $
\end{center}
This information can be inferred automatically by forward static analyses using affine equalities \cite{DBLP:journals/acta/Karr76} or inequalities \cite{DBLP:conf/popl/CousotH78}.
This can be used to get a precondition $\textsf{\upshape pre}\sqb{\texttt{\small S}}Q$ 
ensuring that a postcondition $Q$ holds be defining
$\textsf{\upshape pre}\sqb{\texttt{\small S}}Q$ = 
$\{\underline{\sigma}\mid\exists \overline{\sigma}\in Q\mathrel{.}\overline{\sigma}\in\textsf{\upshape post}\sqb{\texttt{\small S}}\{\sigma\mid {\sigma}_{\texttt{x}}=\underline{\sigma}_{\texttt{x}}\wedge
{\sigma}_{\texttt{y}}=\underline{\sigma}_{\texttt{y}}\}\}$ which, in our example, is
$\{\pair{x}{y}\mid\exists\pair{\overline{x}}{\overline{y}}\in Q\wedge \overline{x} = {x}+{y} \wedge \overline{y}=2{x}+3{y}\}$ so that, e.g., for $Q=\{\pair{\overline{x}}{\overline{y}}\mid \overline{x}=\overline{y}\}$ stating that $x=y$ on exit, we get the precondition
$\{\pair{z}{-z/3}\mid z\in\mathbb{Z}\}$.

Inversely, using subgoal induction, a backward execution $\textsf{\upshape pre}\sqb{\texttt{\small S}}\{\sigma\mid {\sigma}_{\texttt{x}}=\overline{\sigma}_{\texttt{x}}\wedge
{\sigma}_{\texttt{y}}=\overline{\sigma}_{\texttt{y}}\}$ is
\begin{center}
\ttfamily\small
$\{ x=3\overline{x}-\overline{y}\wedge y=\overline{y}-2\overline{x}  \} 
~x = x + y;~
\{ x = \overline{x}\wedge y=\underline{y}-2\overline{x} \} 
~y = 2*x+y;~
\{ x=\overline{x}\wedge y=\overline{y} \} $
\end{center}
This information can be used to get a postcondition $\textsf{\upshape post}\sqb{\texttt{\small S}}P$ hence holding for states reachable from the precondition $P$ as
$\{\sigma\mid\exists\underline{\sigma}\in P \mathrel{.}\underline{\sigma}\in\textsf{\upshape pre}\sqb{\texttt{\small S}}\{\sigma\mid {\sigma}_{\texttt{x}}=\overline{\sigma}_{\texttt{x}}\wedge
{\sigma}_{\texttt{y}}=\overline{\sigma}_{\texttt{y}}\}\}$.
For our example, we get
$\{\pair{x}{y}\mid\exists\pair{\underline{x}}{\underline{y}}\in P \mathrel{.}
\underline{x}=3{x}-{y}\wedge \underline{y}={y}-2{x}\}$ which, e.g., for $P=\{\pair{x}{y}\mid x=y \}$, yields
$\{\pair{x}{y}\mid 5x=2y\}$. This calculation is mechanizable using the operations of the abstract domains for affine equalities \cite{DBLP:journals/acta/Karr76}  or inequalities \cite{DBLP:conf/popl/CousotH78}.
\end{example}

\vspace*{-2em}

\part{}{{\LARGE\bfseries  Part II: Design of the Proof Rules of Logics by Abstraction of \phantom{{\LARGE\bfseries  Part II: }}Their Theory}}
\setcounter{section}{0}%
\def\thesection{II.\originalthesection}%
\vskip2ex

Given the theory $\alpha(\sqb{\texttt{\small S}}_\bot)$ of a logic defined by an abstraction $\alpha$  of the natural relational semantics $\sqb{\texttt{\small S}}_\bot$, we now consider the problem of designing
the proof/deductive system for that logic. 
The abstraction $\alpha$ can be decomposed into $\alpha_a\comp\alpha_t$ where $\alpha_t$ abstracts the natural relational semantics $\sqb{\texttt{\small S}}_\bot)$ into
an exact transformer (isomorphically its antecedant-consequent graph) which is then over or under approximated by $\alpha_a$.

We first express the natural relational semantics in structural fixpoint form in Sect\@. \ref{sect:FixpointNaturalRelationalSemantics}.
Then we use fixpoint abstraction of Sect\@. \ref{sect:FixpointAbstraction} and structural induction to express the exact transformer
$\alpha_t(\sqb{\texttt{\small S}}_\bot)$ in structural fixpoint form. The approximation abstraction $\alpha_a$ is then handled using the fixpoint
induction principles of Sect\@. \ref{sect:FixpointInduction} to under or over approximate the transformer by  $\alpha_a\comp\alpha_t(\sqb{\texttt{\small S}}_\bot)$.
 \cite{Aczel:1977:inductive-definitions} has shown that set theoretic fixpoints can be expressed as proof/deductive systems and conversely. We recall his method in Sect\@. \ref{sect:SemanticsDeductiveSystems}.
This yields a method of designing proof system by calculus in Sect\@. \ref{sec:equivalence-fxp-deductive-system}. This is applied to two new example logics. The first example in section 
\ref{Calculational-Design-of-the-Extended-Hoare-Logic} is a forward transformational logic to express correct reachability of a postcondition (as in  Hoare and Manna partial correctness logics), termination (as in
Apt \& Plokin and Manna \& Pnueli logics) as well as nontermination, all cases being expressible by a single formula of the logic (depending on initial values). The second example in section 
\ref{sec:design:possibleAccessibilityNontermination} is a backward transformational logic to express correct accessibility of a postcondition or nontermination.

\section{Structural Fixpoint Natural Relational Semantics}\label{sect:FixpointNaturalRelationalSemantics}

We define the relational natural semantic $\sqb{\texttt{\small S}}_{\bot}\in\wp(\Sigma\times\Sigma_\bot)$ of statements $\texttt{\small S}$ by structural induction on the program syntax and iteration defined as extremal fixpoints of increasing (monotone/isotone) functions on complete lattices \cite{Tarski-fixpoint}.

The definition is in Milner/Tofte style \cite{DBLP:journals/tcs/MilnerT91}, except that finite behaviors in $\wp(\Sigma\times\Sigma)$ are in inductive style with least fixpoints (\textsf{\upshape lfp}) and infinite behaviors in $\wp(\Sigma\times\{\bot\})$ are in co-inductive style with greatest fixpoints (\textsf{\upshape gfp}), as in \cite{DBLP:conf/popl/CousotC92,DBLP:journals/iandc/CousotC09}. Milner/Tofte define both finite and infinite behaviors in co-inductive style \cite{DBLP:journals/tcs/MilnerT91,DBLP:conf/esop/Leroy06}, which looks more uniform. However, some fixpoint approximation techniques are more precise for least fixpoints than for  greatest fixpoints \cite[Chapter 18]{Cousot-PAI-2021}, which will be essential to prove completeness of proof methods\footnote{For example, Park induction Th\@. \ref{th:Fixpoint-Overapproximation} can be used to over approximate least fixpoints with an invariant only while approximating greatest fixpoints in the dual of Th\@. \ref{th:Fixpoint-Underapproximation-Variant} involves a variant function.}.

Given the assignment $\sigma[\texttt{\small x} \leftarrow v]$ of value $v\in\mathbb{V}$ to variable $\texttt{\small x}\in\mathbb{X}$ in state $\sigma\in\Sigma\triangleq\mathbb{X}\rightarrow \mathbb{V}$ and the identity relation  $\textsf{\upshape id}\triangleq\{\pair{\sigma}{\sigma}\mid \sigma\in\Sigma_{\bot}\}
$, the basic statements have the following semantics. They all terminate and do not exit loops, but for \texttt{\small break}, that exits the closest outer  loop (which existence must be  checked syntactically) without changing the values of variables.
\ifshort
\bgroup\arraycolsep=0.45\arraycolsep\begin{eqntabular}[fl]{rcl@{\quad}rcl@{\quad}rcl}
\sqb{\texttt{\small x = A}}^e&\triangleq&\{\pair{\sigma}{\sigma[\texttt{\small x}\leftarrow\mathcal{A}\sqb{\texttt{\small A}}\sigma]}\mid \sigma\in\Sigma\}
&
\sqb{\texttt{\small x = A}}^b&\triangleq&\emptyset
&
\sqb{\texttt{\small x = A}}^\bot&\triangleq&\emptyset
\nonumber\\
\sqb{\texttt{\small x = [a, b]}}^e&\triangleq&
\{\pair{\sigma}{\sigma[\texttt{\small x}\leftarrow i]}\mid \smash{\begin{array}[t]{@{}l@{}}\sigma\in\Sigma\wedge{}\\[-0.5ex] a-1 <i< b+1\}\end{array}}
&
\sqb{\texttt{\small x = [a,b]}}^b&\triangleq&\emptyset
&
\sqb{\texttt{\small x = [a,b]}}^\bot&\triangleq&\emptyset
\label{eq:def:sem:basis}\\
\sqb{\texttt{\small break}}^e&\triangleq&\emptyset
&
\sqb{\texttt{\small break}}^b&\triangleq&\textsf{\upshape id}
&
\sqb{\texttt{\small break}}^\bot&\triangleq&\emptyset
\nonumber\\
\sqb{\texttt{\small skip}}^e&\triangleq&\textsf{\upshape id}
&
\sqb{\texttt{\small skip}}^b&\triangleq&\emptyset
&
\sqb{\texttt{\small skip}}^\bot&\triangleq&\emptyset
\nonumber
\end{eqntabular}\egroup
\else

\bgroup\arraycolsep=0.475\arraycolsep\begin{eqntabular}[fl]{lcl@{\quad}lcl}
\sqb{\texttt{\small x = A}}^e&\triangleq&\{\pair{\sigma}{\sigma[\texttt{\small x}\leftarrow\mathcal{A}\sqb{\texttt{\small A}}\sigma]}\mid \sigma\in\Sigma\}
&
\sqb{\texttt{\small x = [a, b]}}^e&\triangleq&
\{\pair{\sigma}{\sigma[\texttt{\small x}\leftarrow i]}\mid \smash{\begin{array}[t]{@{}l@{}}\sigma\in\Sigma\wedge{}\\[-0.5ex]\quad a-1 <i< b+1\}\end{array}}
\nonumber\\
\sqb{\texttt{\small x = A}}^b&\triangleq&\emptyset
&
\sqb{\texttt{\small x = [a,b]}}^b&\triangleq&\emptyset
\nonumber\\
\sqb{\texttt{\small x = A}}^\bot&\triangleq&
\emptyset
&
\sqb{\texttt{\small x = [a,b]}}^\bot&\triangleq&
\emptyset
\nonumber\\
\sqb{\texttt{\small break}}^e&\triangleq&\emptyset
&
\sqb{\texttt{\small skip}}^e&\triangleq&\textsf{\upshape id}
\label{eq:def:sem:basis}\\
\sqb{\texttt{\small break}}^b&\triangleq&\textsf{\upshape id}
&
\sqb{\texttt{\small skip}}^b&\triangleq&\emptyset
\nonumber\\
\sqb{\texttt{\small break}}^\bot&\triangleq&
\emptyset
&
\sqb{\texttt{\small skip}}^\bot&\triangleq&
\emptyset
\nonumber
\end{eqntabular}\egroup
\fi
 For the conditional, we let $\sqb{\texttt{\small B}}\triangleq\{\pair{\sigma}{\sigma}\mid\sigma\in\mathcal{B}\sqb{\texttt{\small B}}\}$ be the relational semantics of Boolean expressions.
We define ($\fatsemi$ is the composition of relations, see Sect\@. \ref{sect:Relations} in the appendix)
\bgroup\arraycolsep=0.475\arraycolsep\begin{eqntabular}[fl]{lcl@{\qquad\quad}lcl}
\sqb{\texttt{\small S$_1$;S$_2$}}^e&\triangleq&\sqb{\texttt{\small S$_1$}}^e\fatsemi\sqb{\texttt{\small S$_2$}}^e
&
\sqb{\texttt{\small if (B) S$_1$ else S$_2$}}^e&\triangleq&\sqb{\texttt{\small B}}\fatsemi\sqb{\texttt{\small S$_1$}}^e\cup\sqb{\neg\texttt{\small B}}\fatsemi\sqb{\texttt{\small S$_2$}}^e\nonumber\\
\sqb{\texttt{\small S$_1$;S$_2$}}^b&\triangleq&\sqb{\texttt{\small S$_1$}}^b\cup(\sqb{\texttt{\small S$_1$}}^e\fatsemi\sqb{\texttt{\small S$_2$}}^b)
&
\sqb{\texttt{\small if (B) S$_1$ else S$_2$}}^b&\triangleq&\sqb{\texttt{\small B}}\fatsemi\sqb{\texttt{\small S$_1$}}^b\cup\sqb{\neg\texttt{\small B}}\fatsemi\sqb{\texttt{\small S$_2$}}^b
\label{eq:def:sem:seq:if}\\
\sqb{\texttt{\small S$_1$;S$_2$}}^\bot&\triangleq&
\sqb{\texttt{\small S$_1$}}^\bot\cup(\sqb{\texttt{\small S$_1$}}^e\fatsemi\sqb{\texttt{\small S$_2$}}^\bot)
&
\sqb{\texttt{\small if (B) S$_1$ else S$_2$}}^\bot&\triangleq&
\sqb{\texttt{\small B}}\fatsemi\sqb{\texttt{\small S$_1$}}^\bot\cup\sqb{\neg\texttt{\small B}}\fatsemi\sqb{\texttt{\small S$_2$}}^\bot
\nonumber\end{eqntabular}\egroup
For iteration, we define
\begin{eqntabular}{rcll}
{F^e}(X)&\triangleq&\textsf{\upshape id} \cup (\sqb{\texttt{\small B}}\fatsemi\sqb{\texttt{\small S}}^e\fatsemi (X\setminus\Sigma\times\{\bot\})), & X\in\wp(\Sigma\times(\Sigma\cup\{\bot\}))\label{eq:natural-transformer-finite}\\
{F^\bot}(X)&\triangleq&\sqb{\texttt{\small B}}\fatsemi\sqb{\texttt{\small S}}^e\fatsemi X, &X\in\wp(\Sigma\times\{\bot\})\label{eq:natural-transformer-infinite}\\
\sqb{\texttt{\small while (B) S}}^e&\triangleq&
\Lfp{\subseteq}{F^e}\fatsemi(\sqb{\neg\texttt{\small B}}\cup\sqb{\texttt{\small B}}\fatsemi\sqb{\texttt{\small S}}^b)\label{eq:natural-finite}\\
\sqb{\texttt{\small while (B) S}}^b&\triangleq&\emptyset\label{eq:natural-break}\\
\sqb{\texttt{\small while (B) S}}^\bot&\triangleq&(\Lfp{\subseteq}{F^e}\fatsemi\sqb{\texttt{\small B}}\fatsemi\sqb{\texttt{\small S}}^\bot)\cup\Gfp{\subseteq}{F^\bot}
\label{eq:natural-oo}
\end{eqntabular}
The transformers are defined on complete lattices, ${F^e}$ on
$\sextuple{\wp(\Sigma\times\Sigma)}{\subseteq}
{\emptyset}{\Sigma \times\Sigma }{\cup}{\cap}$ and ${F^\bot}$ on
$\sextuple{\wp(\Sigma_\bot\times\{\bot\})}{\subseteq}
{\emptyset}{\vec{\infty}}{\cup}{\cap}$ with $\vec{\infty}\triangleq\Sigma \times\{\bot\}$ and are $\subseteq$-increasing, so do exist \cite{Tarski-fixpoint}. 

Moreover, the natural transformer $F^e$ in (\ref{eq:natural-transformer-finite}) preserves arbitrary joins, so is continuous. By Scott-Kleene fixpoint theorem \cite{ScottStrachey71-PRG6}, its least fixpoint is the reflexive transitive closure  $\Lfp{\subseteq}{F^e}=\bigcup_{n\geqslant 0}(\sqb{\texttt{\small B}}\fatsemi\sqb{\texttt{\small S}}^e)\strut^n={(\sqb{\texttt{\small B}}\fatsemi\sqb{\texttt{\small S}}^e)\strut}^{\ast}$. So $\Lfp{\subseteq}{F^e}$ is a relation between initial states before entering the loop and successive states at loop reentry after any number $n\geqslant 0$ of iterations. If, after $n$ iterations, the test $\texttt{\small B}$ ever becomes false then $\sqb{\texttt{\small B}}=\emptyset$ and so all later terms in the infinite disjunction are empty. 

Then composing $\Lfp{\subseteq}{F^e}={(\sqb{\texttt{\small B}}\fatsemi\sqb{\texttt{\small S}}^e)\strut}^{\ast}$ with $\sqb{\neg\texttt{\small B}}\cup\sqb{\texttt{\small B}}\fatsemi\sqb{\texttt{\small S}}^b$ in 
(\ref{eq:natural-finite})
 yields the relation between initial and final states in case of termination or in case of a break when excuting the loop body \texttt{\small S}.  (\ref{eq:natural-break}) states that a \texttt{\small break} exits the immediately enclosing loop, not any of the outer ones.
 
Composing $\Lfp{\subseteq}{F^e}={(\sqb{\texttt{\small B}}\fatsemi\sqb{\texttt{\small S}}^e)\strut}^{\ast}$ with $\sqb{\texttt{\small B}}\fatsemi\sqb{\texttt{\small S}}^\bot$ in (\ref{eq:natural-oo}) yields the possible cases of nontermination
when the loop body $\texttt{\small S}$ does not terminate after finitely many finite iterations in the loop. 

Finally, the term $\Gfp{\subseteq}{F^\bot}$ in (\ref{eq:natural-oo}) represents infinitely many iterations of terminating body executions. Again if \texttt{\small B} becomes false after finitely many iterations then $\sqb{\texttt{\small B}}=\emptyset$ so that this infinite iteration term is $\emptyset$ (since $\emptyset$ is absorbant for $\fatsemi$). 
As shown by \cite[Example 22]{DBLP:journals/tcs/Cousot02}, ${F^\bot}$ may not be co-continuous when considering unbounded nondeterminism so that transfinite decreasing fixpoint iterations from the supremum might be necessary
\cite{CousotCousot-PJM-82-1-1979}. The following lemma makes clear that $\Gfp{\subseteq}{F^\bot}$ characterises (non)termination \proofinapx
\begin{lemma}[Termination]\label{lem:Fbot:nontermination}
$\Gfp{\subseteq}{F^\bot} = \emptyset$
$\Leftrightarrow$
$\{\sigma\in\mathbb{N}\rightarrow\Sigma\mid\forall i\in\mathbb{N}\mathrel{.}\pair{\sigma_i}{\sigma_{i+1}}\in \sqb{\texttt{\small B}}\fatsemi\sqb{\texttt{\small S}^e}\}=\emptyset
$.
\end{lemma}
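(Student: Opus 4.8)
The plan is to prove the biconditional by establishing its contrapositive form on each side, linking the emptiness of $\Gfp{\subseteq}{F^\bot}$ to the nonexistence of an infinite sequence of states each related to its successor by $\sqb{\texttt{\small B}}\fatsemi\sqb{\texttt{\small S}^e}$. First I would unfold the definition of the greatest fixpoint as the union of all postfixpoints: $\Gfp{\subseteq}{F^\bot} = \bigcup\{X \in \wp(\Sigma\times\{\bot\}) \mid X \subseteq F^\bot(X)\}$. By (\ref{eq:natural-transformer-infinite}), $F^\bot(X) = \sqb{\texttt{\small B}}\fatsemi\sqb{\texttt{\small S}}^e\fatsemi X$, so a pair $\pair{\sigma_0}{\bot} \in X$ with $X$ a postfixpoint forces the existence of some $\sigma_1$ with $\pair{\sigma_0}{\sigma_1}\in\sqb{\texttt{\small B}}\fatsemi\sqb{\texttt{\small S}^e}$ and $\pair{\sigma_1}{\bot}\in X$ again. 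Iterating this observation is exactly what manufactures the infinite sequence.

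For the direction $\Gfp{\subseteq}{F^\bot} \neq \emptyset \Rightarrow$ an infinite sequence exists, I would take any $\pair{\sigma_0}{\bot}$ in the greatest fixpoint, which being a fixpoint satisfies $\Gfp{\subseteq}{F^\bot} = F^\bot(\Gfp{\subseteq}{F^\bot})$. Applying the definition of $F^\bot$ and of relational composition $\fatsemi$ repeatedly, I construct by dependent choice a sequence $\sigma_0, \sigma_1, \sigma_2, \ldots$ with $\pair{\sigma_i}{\sigma_{i+1}}\in\sqb{\texttt{\small B}}\fatsemi\sqb{\texttt{\small S}^e}$ for all $i$, and with each $\pair{\sigma_{i+1}}{\bot}$ remaining in the fixpoint so the construction never stalls. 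This yields a witness $\sigma \in \mathbb{N}\rightarrow\Sigma$ showing the right-hand set is nonempty. Conversely, given such an infinite sequence $\sigma$, I would show that the set $X \triangleq \{\pair{\sigma_i}{\bot}\mid i\in\mathbb{N}\}$ is a postfixpoint of $F^\bot$: each $\pair{\sigma_i}{\bot}$ lies in $F^\bot(X)$ because $\pair{\sigma_i}{\sigma_{i+1}}\in\sqb{\texttt{\small B}}\fatsemi\sqb{\texttt{\small S}^e}$ and $\pair{\sigma_{i+1}}{\bot}\in X$. By the Tarski characterization, $X \subseteq \Gfp{\subseteq}{F^\bot}$, and since $X$ is nonempty so is the greatest fixpoint. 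Taking contrapositives on both implications then gives the stated equivalence.

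The main obstacle I anticipate is the forward construction of the infinite sequence, precisely because $F^\bot$ need not be co-continuous under unbounded nondeterminism, as the text flags via \cite[Example 22]{DBLP:journals/tcs/Cousot02}. This means $\Gfp{\subseteq}{F^\bot}$ is genuinely the union-of-postfixpoints object rather than something reachable by a simple $\omega$-indexed decreasing iteration, so I must not appeal to Kleene-style iteration from the top. The extraction of the sequence therefore relies essentially on the \emph{fixpoint} property (that the greatest fixpoint equals its own image under $F^\bot$) together with the axiom of dependent choice to thread the selections $\sigma_i \mapsto \sigma_{i+1}$ through infinitely many steps. I would state the use of dependent choice explicitly, since that is the one genuinely non-routine logical ingredient; the remaining verifications — unfolding $\fatsemi$ and checking the postfixpoint inequality — are elementary set manipulations that I would not belabor.
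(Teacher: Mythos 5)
Your proof is correct, but it takes a genuinely different route from the paper's. The paper iterates $F^\bot$ \emph{downward} from the supremum $X^0=\Sigma\times\{\bot\}$, computes $X^n=(\sqb{\texttt{\small B}}\fatsemi\sqb{\texttt{\small S}}^e)^n\fatsemi(\Sigma\times\{\bot\})$ by induction, asserts that the decreasing iteration stabilizes at $\omega$ so that $\Gfp{\subseteq}{F^\bot}=\bigcap_{n\in\mathbb{N}}X^n$, and then argues that non-emptiness of every finite power $(\sqb{\texttt{\small B}}\fatsemi\sqb{\texttt{\small S}}^e)^n$ yields an infinite sequence. You instead work with the Tarski/coinductive characterization of $\Gfp{\subseteq}{F^\bot}$ as the union of postfixpoints: you extract the infinite sequence directly from the fixpoint equation $\Gfp{\subseteq}{F^\bot}=F^\bot(\Gfp{\subseteq}{F^\bot})$ by dependent choice, and conversely package an infinite sequence as a nonempty postfixpoint $\{\pair{\sigma_i}{\bot}\mid i\in\mathbb{N}\}$. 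Your route is the more robust one: the paper's convergence-at-$\omega$ step needs $F^\bot$ to preserve meets of decreasing chains (co-continuity), which the surrounding text itself warns may fail under unbounded nondeterminism (the cited ``preserves joins'' is the wrong hypothesis for a \emph{decreasing} iteration), and its final step from ``all finite powers nonempty'' to ``an infinite path exists'' is a K\"onig-style inference that also requires care for infinitely branching relations. Your argument sidesteps both delicate points at the modest cost of an explicit appeal to dependent choice; what you lose is the paper's concrete closed-form description of the iterates, which it reuses as intuition elsewhere. Both directions of your proof check out against the paper's extended definition of $\fatsemi$ on $\Sigma_\bot$, since $\sqb{\texttt{\small B}}\fatsemi\sqb{\texttt{\small S}}^e$ has no pairs with $\bot$ on the left.
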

\begin{toappendix}
\begin{proof}[proof of lemma \ref{lem:Fbot:nontermination}] Let $\pair{X^n}{n\in\mathbb{N}}$ be the iterates of 
${F^\bot}(X)\triangleq\sqb{\texttt{\small B}}\fatsemi\sqb{\texttt{\small S}}^e\fatsemi X$ in (\ref{eq:natural-transformer-infinite}) on the complete lattice $\sextuple{\wp(\Sigma\times\{\bot\})}{\subseteq}{\emptyset}{\Sigma\times\{\bot\}}{\cup}{\cap}$ from the supremum $X^0=\Sigma\times\{\bot\}$. 

Assume that $X^n=(\sqb{\texttt{\small B}}\fatsemi\sqb{\texttt{\small S}}^e)^n\fatsemi(\Sigma\times\{\bot\})$ by induction hypothesis. 

Then by def.\ (\ref{eq:natural-transformer-infinite}) of $F^{\bot}$, we have
$X^{n+1}$ = ${F^\bot}(X^n)$ = $\sqb{\texttt{\small B}}\fatsemi\sqb{\texttt{\small S}}^e\fatsemi(\sqb{\texttt{\small B}}\fatsemi\sqb{\texttt{\small S}}^e)^n\fatsemi(\Sigma\times\{\bot\})$
=
$(\sqb{\texttt{\small B}}\fatsemi\sqb{\texttt{\small S}}^e)^{n+1}\fatsemi\Sigma\times\{\bot\}$. 

Since ${F^\bot}$ preserves joins, by \cite{CousotCousot-PJM-82-1-1979}, the limit is
$\Gfp{\subseteq}{F^\bot}$
=
$X^\omega$
=
$\bigcap_{n\in\mathbb{N}}X^n$
=
$\bigcap_{n\in\mathbb{N}}(\sqb{\texttt{\small B}}\fatsemi\sqb{\texttt{\small S}^e})^n\fatsemi\Sigma\times\{\bot\}$. 

For $X^\omega$ not to be empty, none of the iterates $(\sqb{\texttt{\small B}}\fatsemi\sqb{\texttt{\small S}^e})^n\fatsemi\Sigma\times\{\bot\}$ must be empty, meaning that
$\forall n\in\mathbb{N}\mathrel{.}(\sqb{\texttt{\small B}}\fatsemi\sqb{\texttt{\small S}^e})^n\neq\emptyset$. By definition of the power of a relation, this implies that $\exists\sigma\in\mathbb{N}\rightarrow\Sigma\mathrel{.}\forall i\in\mathbb{N}\mathrel{.}\pair{\sigma_i}{\sigma_{i+1}}\in \sqb{\texttt{\small B}}\fatsemi\sqb{\texttt{\small S}^e}$. The result follows by contraposition.
\end{proof}
\end{toappendix}
Since $\bot\notin\Sigma$, $(\sqb{\texttt{\small S}}^e\cup\sqb{\texttt{\small S}}^\bot)\cap\Sigma=\sqb{\texttt{\small S}}^e$ and  $(\sqb{\texttt{\small S}}^e\cup\sqb{\texttt{\small S}}^\bot)\cap\{\bot\}=\sqb{\texttt{\small S}}^\bot$,  the semantics can be defined as
\bgroup\abovedisplayskip0pt\belowdisplayskip0.75\belowdisplayskip
\begin{eqntabular}{rclcl@{\quad}L}
\sqb{\texttt{\small S}}_{\bot}&\triangleq&\pair{\sqb{\texttt{\small S}}^e\cup\sqb{\texttt{\small S}}^\bot}{\sqb{\texttt{\small S}}^b}
&\in&\wp(\Sigma\times\Sigma_\bot)\times\wp(\Sigma\times\Sigma)
& natural semantics
\label{eq:def:semantics}\\
\sqb{\texttt{\small S}}&\triangleq&\pair{\sqb{\texttt{\small S}}^e}{\sqb{\texttt{\small S}}^b}
&\in&\wp(\Sigma\times\Sigma)\times\wp(\Sigma\times\Sigma)
& angelic semantics\nonumber
\end{eqntabular}\egroup
where $\sextuple{\wp(\Sigma\times\Sigma_\bot)}{\sqsubseteq}{\Sigma\times\{\bot\}}{\Sigma\times\Sigma}{\sqcup}{\sqcap}$ is a complete lattice for the computational ordering $X\sqsubseteq Y\triangleq(X\cap (\Sigma\times\Sigma))\subseteq (Y\cap (\Sigma\times\Sigma))\wedge(X\cap (\Sigma\times\{\bot\}))\supseteq (Y\cap (\Sigma\times\{\bot\}))$. It follows that the definition of termination on normal exit or nontermination can be defined by a single transformer \cite[Theorem 9]{DBLP:journals/tcs/Cousot02} (but termination $\sqb{\texttt{\small S}}^e$ and break $\sqb{\texttt{\small S}}^b$ cannot be mixed without losing information).

This relational natural semantics can be extended to record a relation between the initial and current values of variables. This consists in  considering the Galois connections $\pair{\alpha_{\downarrow^2}}{\gamma_{\downarrow^2}}$ for assertions and
$\pair{\mathord{\stackrel{.}{\alpha}}_{\downarrow^2}}{\mathord{\stackrel{.}{\gamma}}_{\downarrow^2}}$ for relations in (\ref{eq:def:alpha-gamma-d2}).
This can be implemented using auxiliary variables without modification of the semantics \proofinapx. 
\begin{toappendix}

\smallskip

\subsection{Auxiliary Variables}\label{sec:auxiliary:variables}
The relational natural semantics $\sqb{\texttt{\small S}}_{\bot}\in\wp(\Sigma\times\Sigma_\bot)$
of Sect\@. \ref{sect:FixpointNaturalRelationalSemantics} can be extended to record the values of variables
of entry of statements \texttt{\small S}. The states $\sigma\in\Sigma\triangleq\mathbb{X}\rightarrow \mathcal{V}$
are extended with fresh auxiliary variables $\overline{\mathbb{X}}\triangleq \{\bar{x}\mid x\in\mathbb{X}\}$ to
$\sigma\in\overline{\Sigma}\triangleq\overline{\mathbb{X}}\cup\mathbb{X}\rightarrow \mathcal{V}$. We define the projection on auxiliary variables as $\sigma_{\overline{\mathbb{X}}}\in\overline{\mathbb{X}}\rightarrow \mathcal{V}$ such that
$\forall \bar{x}\in\overline{\mathbb{X}}\mathrel{.}\sigma_{\overline{\mathbb{X}}}( \bar{x})=\sigma( \bar{x})$ while the projection on program  variables is $\sigma_{{\mathbb{X}}}\in{\mathbb{X}}\rightarrow \mathcal{V}$ such that
$\forall {x}\in{\mathbb{X}}\mathrel{.}\sigma_{{\mathbb{X}}}( {x})=\sigma( {x})$. A predicate $P\in\wp(\Sigma)$ is extended to $P\in\wp(\overline{\Sigma})=\{\sigma\mid\sigma_{\mathbb{X}}\in P\}$. 
For initialization of fresh auxiliary variables to the values of the program variables, we define $\mathbb{I}_{\overline{\mathbb{X}}}(P)\triangleq\{\sigma\in\overline{\Sigma}\mid \sigma_{\overline{\mathbb{X}}}=\sigma_{{\mathbb{X}}}\wedge\sigma_{{\mathbb{X}}}\in P\}$.
The relations $r\in\wp(\Sigma\times\Sigma_\bot)$ involved in the relational natural semantics
of Sect\@. \ref{sect:FixpointNaturalRelationalSemantics} are extended to $r\in\wp(\overline{\Sigma}\times\overline{\Sigma}_\bot)=\{\pair{\sigma}{\sigma'}\mid \sigma'_{\overline{\mathbb{X}}}=\sigma_{\overline{\mathbb{X}}}\wedge \pair{\sigma_{\mathbb{X}}}{\sigma'_{\mathbb{X}}}\in r\}$ since the values of initial values are kept unchanged by program execution. We leave implicit the fact that batches of fresh auxiliary variables can be successively added to program variables, for example on entry of each loop body in imbricated loops.
\end{toappendix}

\ifshort Nondeterminism can be unbounded, as discussed in the appendix \proofinapx\fi.
\begin{toappendix}
\subsection{Bounded Versus Unbounded Nondeterminism}\label{sec:Nondeterminism}
Unbounded nondeterminism was strongly rejected by  Dijkstra in \cite[chapter 9]{DBLP:books/ph/Dijkstra76} as unimplementable.  Park was rather critical about this restriction \cite{Park69-MI5} since time is unbounded. Later  Dijkstra changed his mind \cite{DBLP:journals/acta/DijkstraG86}, \cite[pages 174--180 of chapter 9]{DBLP:books/daglib/0067387} using arbitrary well-founded sets as by Turing \cite{Turing49-program-proof} and Floyd \cite{Floyd67-1}. This is the approach we use, using ordinals \cite[Ch\@. 2]{Monk-Set-Theory} to formalize well-founded sets used in termination proofs.
\end{toappendix}

\section{Fixpoint Abstraction}\label{sect:FixpointAbstraction}\label{sec:abstraction}

We recall classic  fixpoint abstraction theorems \cite{DBLP:journals/tcs/Cousot02}, \cite[Ch\@. 18]{Cousot-PAI-2021} to abstract the fixpoint definition of the program  relational semantics into a fixpoint definition of transformers (or their graph). \ifshort Abstraction can also be applied to deductive systems \proofinapx.\fi
\begin{theorem}[Fixpoint abstraction \cite{DBLP:conf/popl/CousotC79}]\label{th:fixpoint-abstraction}If $\pair{C}{\sqsubseteq}\galois{\alpha}{\gamma}\pair{A}{\preceq}$ is a Galois connection between complete lattices $\pair{C}{\sqsubseteq}$ and $\pair{A}{\preceq}$, $f\in C\mathrel{\smash{\stackrel{i}{\longrightarrow}}} C$ and $\bar{f}\in A \mathrel{\smash{\stackrel{i}{\longrightarrow}}} A$ are increasing and commuting, that is, $\alpha\comp f=\bar{f}\comp\alpha$, then $\alpha(\Lfp{\sqsubseteq}{f})=\Lfp{\preceq}{\bar{f}}$ (while semi-commutation $\alpha\comp f\mathrel{\dot\preceq}\bar{f}\comp\alpha$ implies $\alpha(\Lfp{\sqsubseteq}{f})\preceq\Lfp{\preceq}{\bar{f}}$).{\Large\strut}
\end{theorem}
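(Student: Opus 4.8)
The plan is to derive the stated equality from two inequalities: I will prove $\alpha(\Lfp{\sqsubseteq}{f})\preceq\Lfp{\preceq}{\bar f}$ using only semi-commutation (which simultaneously settles the parenthetical weaker claim), and then the reverse inequality $\Lfp{\preceq}{\bar f}\preceq\alpha(\Lfp{\sqsubseteq}{f})$ using the full commutation hypothesis. Two standard ingredients are used throughout. First, Tarski's theorem \cite{Tarski-fixpoint}: for an increasing map on a complete lattice the least fixpoint is the \emph{least pre-fixpoint}, so $\Lfp{\sqsubseteq}{f}=\bigsqcap\{x\in C\mid f(x)\sqsubseteq x\}$ and likewise for $\bar f$; in particular, to prove $\Lfp{\sqsubseteq}{f}\sqsubseteq z$ it suffices to check that $z$ is a pre-fixpoint, $f(z)\sqsubseteq z$. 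Second, the adjunction facts recalled in Sect.~\ref{sect:GaloiConnections}: $\alpha(x)\preceq y\Leftrightarrow x\sqsubseteq\gamma(y)$, the reductivity of $\alpha\comp\gamma$ (i.e.\ $\alpha\comp\gamma(y)\preceq y$), and the monotonicity of $\alpha$ and $\gamma$.

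For the first inequality, set $\bar m\triangleq\Lfp{\preceq}{\bar f}$. By the adjunction it is enough to show $\Lfp{\sqsubseteq}{f}\sqsubseteq\gamma(\bar m)$, and by Tarski minimality it suffices to verify that $\gamma(\bar m)$ is a pre-fixpoint of $f$, i.e.\ $f(\gamma(\bar m))\sqsubseteq\gamma(\bar m)$. Again by the adjunction this is equivalent to $\alpha(f(\gamma(\bar m)))\preceq\bar m$. Semi-commutation gives $\alpha(f(\gamma(\bar m)))\preceq\bar f(\alpha(\gamma(\bar m)))$; since $\alpha\comp\gamma$ is reductive and $\bar f$ is increasing, $\bar f(\alpha(\gamma(\bar m)))\preceq\bar f(\bar m)=\bar m$, the last equality because $\bar m$ is a fixpoint of $\bar f$. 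Chaining these yields $\alpha(f(\gamma(\bar m)))\preceq\bar m$, hence $f(\gamma(\bar m))\sqsubseteq\gamma(\bar m)$, and therefore $\alpha(\Lfp{\sqsubseteq}{f})\preceq\bar m$.

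For the reverse inequality, set $m\triangleq\Lfp{\sqsubseteq}{f}$, so $f(m)=m$. Under full commutation $\alpha\comp f=\bar f\comp\alpha$ we get $\bar f(\alpha(m))=\alpha(f(m))=\alpha(m)$, so $\alpha(m)$ is a fixpoint, hence a pre-fixpoint, of $\bar f$; by Tarski minimality $\Lfp{\preceq}{\bar f}\preceq\alpha(m)$. Combined with the first inequality this gives $\alpha(\Lfp{\sqsubseteq}{f})=\Lfp{\preceq}{\bar f}$.

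The delicate point—and the step I expect to be the main obstacle—is the first inequality, where semi-commutation (an inequality, not an equality) must be composed with the reductivity of $\alpha\comp\gamma$ and the monotonicity of $\bar f$ in exactly the order that keeps every $\preceq$ pointing the same way; reversing any one of these silently produces a false claim, and one must also be careful that pre-fixpoint minimality is applied in $C$ through $\gamma(\bar m)$ rather than directly in $A$. An alternative route expresses both least fixpoints as limits of transfinite iterates from $\bot$ and proves $\alpha(f^{\delta}(\bot_C))=\bar f^{\delta}(\bot_A)$ by transfinite induction, using strictness $\alpha(\bot_C)=\bot_A$ at the base, join-preservation of $\alpha$ at limit ordinals, and commutation at successors; I prefer the pre-fixpoint argument above since it avoids ordinals and the attendant continuity bookkeeping.
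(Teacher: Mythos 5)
Your proof is correct. The paper does not give its own proof of this theorem — it recalls it as a classic result with a citation to \cite{DBLP:conf/popl/CousotC79} — and your argument (showing that $\gamma(\Lfp{\preceq}{\bar{f}})$ is a pre-fixpoint of $f$ via the adjunction and semi-commutation for the inequality $\alpha(\Lfp{\sqsubseteq}{f})\preceq\Lfp{\preceq}{\bar{f}}$, and that $\alpha(\Lfp{\sqsubseteq}{f})$ is a fixpoint of $\bar{f}$ under full commutation for the reverse) is exactly the standard Tarski pre-fixpoint proof used in the cited references, so there is nothing to contrast.
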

As a simple application, we will need the following  corollary \proofinapx.
\begin{corollary}[Pointwise abstraction]\label{th:Pointwise-abstraction}
Let $\quadruple{L}{\sqsubseteq}{\top}{\sqcup}$ and $\quadruple{L'}{\mathrel{{\sqsubseteq}'}}{\top'}{\sqcup'}$ be complete lattices. Assume that $F\in(L\rightarrow L')\mathrel{\smash{\stackrel{i}{\longrightarrow}}}(L\rightarrow L')$ is increasing and that for all $Q\in L$,  $\bar{F}_Q\in L'\mathrel{\smash{\stackrel{i}{\longrightarrow}}}L'$ is increasing. Assume $\forall Q\in L\mathrel{.}\forall f\in L\rightarrow L'\mathrel{.}F(f)Q=\bar{F}_Q(f(Q))$. Then $\forall Q\in L\mathrel{.}(\Lfp{\mathrel{\stackrel{.}{{\sqsubseteq}}'}}F)Q=\Lfp{\mathrel{{\sqsubseteq}'}}\bar{F}_Q$.
\end{corollary}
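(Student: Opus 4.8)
The plan is to reduce the claim, for each fixed $Q\in L$, to the fixpoint abstraction theorem (Th\@.~\ref{th:fixpoint-abstraction}) applied to a suitable ``evaluation at $Q$'' Galois connection. First I note that, since $\quadruple{L'}{\sqsubseteq'}{\top'}{\sqcup'}$ is a complete lattice, the function space $L\rightarrow L'$ ordered pointwise by $\stackrel{.}{{\sqsubseteq}}'$ is again a complete lattice, with top element $\LAMBDA{P}\top'$; this is the domain on which $F$ has the least fixpoint $\Lfp{\mathrel{\stackrel{.}{{\sqsubseteq}}'}}F$. Fix $Q\in L$ throughout.

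Next I would exhibit the Galois connection whose lower adjoint is evaluation at $Q$. Define $\alpha_Q(f)\triangleq f(Q)$ and let $\gamma_Q(y)$ be the ``spike'' function mapping $Q$ to $y$ and every $P\neq Q$ to $\top'$. Then for all $f\in L\rightarrow L'$ and $y\in L'$ we have $\alpha_Q(f)\sqsubseteq' y$ iff $f(Q)\sqsubseteq' y$, and since $\gamma_Q(y)$ imposes the constraint $\sqsubseteq' y$ at the argument $Q$ and the vacuous constraint $\sqsubseteq'\top'$ at every other argument, this holds iff $f(P)\sqsubseteq'\gamma_Q(y)(P)$ for all $P$, that is iff $f\mathrel{\stackrel{.}{{\sqsubseteq}}'}\gamma_Q(y)$. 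Hence $\pair{L\rightarrow L'}{\stackrel{.}{{\sqsubseteq}}'}\galois{\alpha_Q}{\gamma_Q}\pair{L'}{\sqsubseteq'}$ is a Galois connection between complete lattices. The choice of this spike concretization $\gamma_Q$ is the one nonroutine idea in the argument, and the verification of the adjunction is the step requiring the most care.

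It then remains to check the hypotheses of Th\@.~\ref{th:fixpoint-abstraction}. Both $F$ and $\bar{F}_Q$ are increasing by assumption, and the required (exact) commutation is immediate from the pointwise hypothesis: for every $f$,
\[
\alpha_Q(F(f))=F(f)Q=\bar{F}_Q(f(Q))=\bar{F}_Q(\alpha_Q(f)),
\]
so $\alpha_Q\comp F=\bar{F}_Q\comp\alpha_Q$. Applying Th\@.~\ref{th:fixpoint-abstraction} yields $\alpha_Q(\Lfp{\mathrel{\stackrel{.}{{\sqsubseteq}}'}}F)=\Lfp{\mathrel{{\sqsubseteq}'}}\bar{F}_Q$, which by definition of $\alpha_Q$ is precisely $(\Lfp{\mathrel{\stackrel{.}{{\sqsubseteq}}'}}F)Q=\Lfp{\mathrel{{\sqsubseteq}'}}\bar{F}_Q$. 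Since $Q\in L$ was arbitrary, this establishes the corollary; the commutation and the invocation of the theorem are formalities once the Galois connection is in place.
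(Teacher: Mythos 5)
Your proposal is correct and follows essentially the same route as the paper's own proof: the same evaluation abstraction $\alpha_Q(f)=f(Q)$ with the spike concretization $\gamma_Q(y)=\LAMBDA{x}\si x=Q\alors y\sinon\top'\fsi$, the same verification of the adjunction and of the commutation $\alpha_Q\comp F=\bar F_Q\comp\alpha_Q$, and the same final appeal to Th\@.~\ref{th:fixpoint-abstraction}. No gaps.
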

\begin{toappendix}
\label{aec:apx:proof:th:Pointwise-abstraction}
\begin{proof}[Proof of corollary \ref{th:Pointwise-abstraction}] Given any $Q\in L$, define $\alpha_Q(f)\triangleq f(Q)$ and $\gamma_Q(y)\triangleq\LAMBDA{x}\si x=Q \alors y \sinon \top\fsi$. We have
$\pair{L\rightarrow L'}{\mathrel{\stackrel{.}{{\sqsubseteq}}'}}\galois{\alpha_Q}{\gamma_Q}\pair{L'}{{\mathrel{{\sqsubseteq}'}}}$ as follows
\begin{calculus}[$\Leftrightarrow$~]
\formula{\alpha_Q(f)\mathrel{{\sqsubseteq}'}y}\\
$\Leftrightarrow$
\formulaexplanation{f(Q)\mathrel{{\sqsubseteq}'}y}{def.\ $\alpha_Q$}\\
$\Leftrightarrow$
\formulaexplanation{f\mathrel{\stackrel{.}{{\sqsubseteq}}'}\LAMBDA{x}\si x=Q \alors y\sinon\top'\fsi}{pointwise def.\ ${\mathrel{\stackrel{.}{{\sqsubseteq}}'}}$ and $\top'$ is supremum for ${\mathrel{{\sqsubseteq}'}}$}\\
$\Leftrightarrow$
\formulaexplanation{f\mathrel{\stackrel{.}{{\sqsubseteq}}'}\gamma_Q(y)}{def.\ $\gamma_Q(y)$}
\end{calculus}
\smallskip

\noindent We have the commutation
\begin{calculus}[$\Leftrightarrow$~]
\formula{\alpha_Q(F(f))}\\
=
\formulaexplanation{F(f)(Q)}{def.\ $\alpha_Q$}\\
=
\formulaexplanation{\bar{F}_Q(f(Q))}{hypothesis}\\
=
\formulaexplanation{\bar{F}_Q(\alpha_Q(f))}{def.\ $\alpha_Q$}
\end{calculus}
\smallskip

\noindent By the fixpoint abstraction Th\@. \ref{th:fixpoint-abstraction} and definition of $\alpha_Q$, we conclude that 
$\Lfp{\mathrel{{\sqsubseteq}'}}\bar{F}_Q = \alpha_Q(\Lfp{\mathrel{\stackrel{.}{{\sqsubseteq}}'}}F) = (\Lfp{\mathrel{\stackrel{.}{{\sqsubseteq}}'}}F)Q$.
\end{proof}
\end{toappendix}
When the abstraction involves the negation abstraction $\alpha^{\neg}$, Park's classic fixpoint theorem \cite[equation (4,1,2)]{DBLP:conf/ac/Park79} is useful (and generalizes to complete Boolean lattices).
\begin{theorem}[Complement dualization]\label{th:Complement-dualization}If $\,X$ is a set and $f\in\wp(X)\mathrel{\smash{\stackrel{i}{\longrightarrow}}} \wp(X)$ is $\,\subseteq$-increasing then 
$\Lfp{\subseteq}{\alpha^{\sim}(f)}=\alpha^{\neg}(\Gfp{\subseteq}{f}).$
\end{theorem}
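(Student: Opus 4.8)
The plan is to recognize $\alpha^{\sim}(f)=\neg\comp f\comp\neg$ as the transport of $f$ across the complement (order-reversing) isomorphism, and then to reduce the claim to the Fixpoint abstraction theorem (Th.~\ref{th:fixpoint-abstraction}) applied on the order-dual lattice. First I would record the standing facts: $\alpha^{\sim}(f)$ is $\subseteq$-increasing, being the composite of the $\subseteq$-decreasing $\neg$, the $\subseteq$-increasing $f$, and $\neg$ again; and $\pair{\wp(X)}{\subseteq}$ is a complete lattice, so both $\Lfp{\subseteq}{\alpha^{\sim}(f)}$ and $\Gfp{\subseteq}{f}$ exist by Tarski's theorem \cite{Tarski-fixpoint}.

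The key move is to view $f$ on the dual complete lattice $\pair{\wp(X)}{\supseteq}$. Since monotonicity is insensitive to reversing the order on both source and target, $f$ is also $\supseteq$-increasing, and its least fixpoint for $\supseteq$ is exactly its greatest fixpoint for $\subseteq$, i.e.\ $\Lfp{\supseteq}{f}=\Gfp{\subseteq}{f}$. I would then feed Th.~\ref{th:fixpoint-abstraction} with the complement Galois isomorphism $\pair{\wp(X)}{\supseteq}\GaloiS{\alpha^{\neg}}{\alpha^{\neg}}\pair{\wp(X)}{\subseteq}$ from (\ref{eq-complement-GC}), taking concrete side $\pair{\wp(X)}{\supseteq}$, abstract side $\pair{\wp(X)}{\subseteq}$, $\alpha=\gamma=\alpha^{\neg}$, concrete map $f$, and abstract map $\bar f=\alpha^{\sim}(f)$.

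The one computation to carry out is the exact commutation $\alpha^{\neg}\comp f=\alpha^{\sim}(f)\comp\alpha^{\neg}$: for any $Y$, $\alpha^{\neg}(f(Y))=\neg f(Y)$, while $\alpha^{\sim}(f)(\alpha^{\neg}(Y))=\neg f(\neg\neg Y)=\neg f(Y)$, using that $\neg$ is an involution. With exact commutation in hand, Th.~\ref{th:fixpoint-abstraction} gives $\alpha^{\neg}(\Lfp{\supseteq}{f})=\Lfp{\subseteq}{\alpha^{\sim}(f)}$; substituting $\Lfp{\supseteq}{f}=\Gfp{\subseteq}{f}$ yields the claim.

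I expect no genuinely hard step; the only point requiring care is the ordering bookkeeping, namely applying the abstraction theorem with the reversed order $\supseteq$ on the concrete lattice so that its ``least'' fixpoint is the greatest fixpoint of $f$ in the original order, and confirming the completeness and monotonicity hypotheses in that dual setting. As an independent check, and a fully self-contained alternative that avoids invoking Th.~\ref{th:fixpoint-abstraction}, I would note that Tarski's characterizations give $\alpha^{\neg}(\Gfp{\subseteq}{f})=\neg\bigcup\{Y\mid Y\subseteq f(Y)\}=\bigcap\{\neg Y\mid Y\subseteq f(Y)\}$ by De~Morgan, and the substitution $Z=\neg Y$ turns the constraint $Y\subseteq f(Y)$ into $\alpha^{\sim}(f)(Z)\subseteq Z$; hence the indexing set becomes $\{Z\mid\alpha^{\sim}(f)(Z)\subseteq Z\}$, whose meet is $\Lfp{\subseteq}{\alpha^{\sim}(f)}$.
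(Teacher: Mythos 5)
Your proposal is correct. Note that the paper does not actually spell out a proof of this theorem: it simply cites Park's classical fixpoint identity (equation (4,1,2) of \cite{DBLP:conf/ac/Park79}) and remarks that it generalizes to complete Boolean lattices. Your second, self-contained argument is precisely the classical proof of Park's identity: by Tarski, $\Gfp{\subseteq}{f}=\bigcup\{Y\mid Y\subseteq f(Y)\}$, De~Morgan turns the complement of that union into $\bigcap\{\neg Y\mid Y\subseteq f(Y)\}$, and the substitution $Z=\neg Y$ converts the post-fixpoint condition $Y\subseteq f(Y)$ into $\alpha^{\sim}(f)(Z)\subseteq Z$, so the meet is $\Lfp{\subseteq}{\alpha^{\sim}(f)}$ by Tarski again. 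Your first route — transporting $f$ across the involutive Galois isomorphism $\pair{\wp(X)}{\supseteq}\GaloiS{\alpha^{\neg}}{\alpha^{\neg}}\pair{\wp(X)}{\subseteq}$ of (\ref{eq-complement-GC}) and invoking Th.~\ref{th:fixpoint-abstraction} with the exact commutation $\alpha^{\neg}\comp f=\alpha^{\sim}(f)\comp\alpha^{\neg}$ — is a valid alternative packaging that fits the paper's methodology (everything as a fixpoint abstraction) and has the added benefit of making the advertised generalization to arbitrary complete Boolean lattices immediate, since it never uses that the lattice is a powerset, only that complementation is an order-reversing involution. Both arguments are sound; the only bookkeeping worth double-checking, which you did, is that ``least fixpoint for $\supseteq$'' means ``greatest fixpoint for $\subseteq$'' and that $\alpha^{\sim}(f)$ is $\subseteq$-increasing as the composite of two antitone maps around a monotone one.
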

\begin{toappendix}
\section{Abstractions of Deductive Systems}
In order to abstract a deductive system $R$, we can consider its model semantics $\Lfp{\subseteq}F_{R}$ as in Sect\@. \ref{sec:Deductive-system-Model-theoretic-definition}, abstract this fixpoint into $\Lfp{\subseteq}\bar{F}_{R}$ by abstracting its transformer $F_{R}$ into an abstract one $\bar{F}_{R}$  using Th\@. \ref{th:fixpoint-abstraction}, and then this abstract fixpoint back to an abstract deductive system, as explained in Sect\@. \ref{sec:equivalence-fxp-deductive-system}. A more direct way is as follows
\begin{theorem}[Deductive system abstraction]\label{th:Deductive-system-abstraction}
Let $S\in\wp(X)$ be the set defined by the deductive system $R$. Let $\pair{\wp(X)}{\subseteq}\galois{\alpha}{\gamma}\pair{\wp(Y)}{\subseteq}$. Let $\bar{S}\in\wp(Y)$ be defined by the abstract deductive system $\bar{R}\triangleq\bigl\{\frac{\alpha(P)}{\bar{c}}\bigm|\frac{P}{c}\in R\wedge \bar{c}\in\alpha(\{c\})\bigr\}$. Then $\alpha(S)\subseteq\bar{S}$. For a Galois isomorphism, $\alpha(S)=\bar{S}$.
\end{theorem}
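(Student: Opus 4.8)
The plan is to realise both the concrete and abstract deductive systems as least fixpoints of their associated rule operators, and then to invoke the fixpoint abstraction Theorem~\ref{th:fixpoint-abstraction}. Following Aczel's construction (Sect.~\ref{sect:SemanticsDeductiveSystems}), the set defined by $R$ is $S=\Lfp{\subseteq}F_R$ on the complete lattice $\pair{\wp(X)}{\subseteq}$, where $F_R(Z)\triangleq\{c\mid\exists\frac{P}{c}\in R\mathrel{.}P\subseteq Z\}$, and likewise $\bar S=\Lfp{\subseteq}F_{\bar R}$ on $\pair{\wp(Y)}{\subseteq}$ with $F_{\bar R}(\bar Z)\triangleq\{\bar c\mid\exists\frac{\bar P}{\bar c}\in\bar R\mathrel{.}\bar P\subseteq\bar Z\}$. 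Both operators are $\subseteq$-increasing (enlarging the set of derived facts can only enable more rules), so the least fixpoints exist on these powerset lattices and Theorem~\ref{th:fixpoint-abstraction} is applicable.

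First I would establish the semi-commutation $\alpha\comp F_R\mathrel{\dot\subseteq}F_{\bar R}\comp\alpha$. Fix $Z\in\wp(X)$ and take $\bar c\in\alpha(F_R(Z))$. Since $\alpha$ is a lower adjoint it preserves arbitrary joins, so $\alpha(F_R(Z))=\bigcup_{c\in F_R(Z)}\alpha(\{c\})$; hence $\bar c\in\alpha(\{c\})$ for some $c\in F_R(Z)$, i.e.\ for some rule $\frac{P}{c}\in R$ with $P\subseteq Z$. Monotonicity of $\alpha$ gives $\alpha(P)\subseteq\alpha(Z)$, and by construction $\frac{\alpha(P)}{\bar c}\in\bar R$, so $\bar c\in F_{\bar R}(\alpha(Z))$. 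This proves $\alpha(F_R(Z))\subseteq F_{\bar R}(\alpha(Z))$, and Theorem~\ref{th:fixpoint-abstraction} then yields $\alpha(S)=\alpha(\Lfp{\subseteq}F_R)\subseteq\Lfp{\subseteq}F_{\bar R}=\bar S$.

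For the isomorphism case I would upgrade semi-commutation to full commutation $\alpha\comp F_R=F_{\bar R}\comp\alpha$ by proving the reverse inclusion $F_{\bar R}(\alpha(Z))\subseteq\alpha(F_R(Z))$. Here the extra ingredient is that a Galois isomorphism makes $\alpha$ an order isomorphism with inverse $\gamma$, so it \emph{reflects} inclusions: $\alpha(P)\subseteq\alpha(Z)$ implies $P=\gamma(\alpha(P))\subseteq\gamma(\alpha(Z))=Z$. Thus given $\bar c\in F_{\bar R}(\alpha(Z))$, witnessed by $\frac{\alpha(P)}{\bar c}\in\bar R$ (arising from some $\frac{P}{c}\in R$ with $\bar c\in\alpha(\{c\})$) and $\alpha(P)\subseteq\alpha(Z)$, order reflection gives $P\subseteq Z$, whence $c\in F_R(Z)$ and therefore $\bar c\in\alpha(\{c\})\subseteq\alpha(F_R(Z))$ by monotonicity. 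Full commutation with Theorem~\ref{th:fixpoint-abstraction} then gives $\alpha(S)=\bar S$.

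The step I expect to need the most care is the conclusion side: because $\bar R$ replaces a single conclusion $c$ by \emph{all} $\bar c\in\alpha(\{c\})$, the commutation argument must decompose $\alpha$ applied to a set of conclusions into the union of its values on singletons, which is precisely where join preservation of the lower adjoint is used. The premise side, by contrast, needs only monotonicity of $\alpha$ for the inclusion, and order reflection (bijectivity) of $\alpha$ for the converse; everything else reduces to a direct application of the fixpoint abstraction theorem.
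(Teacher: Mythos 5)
Your proof is correct and follows essentially the same route as the paper's: both realise $S$ and $\bar S$ as least fixpoints of the consequence operators, use join preservation of the lower adjoint $\alpha$ to decompose the conclusions into singletons, establish semi-commutation (resp.\ full commutation via order reflection in the isomorphism case), and conclude by the fixpoint abstraction Theorem~\ref{th:fixpoint-abstraction}. The only cosmetic difference is that you argue element-wise where the paper writes the same steps as a chain of set (in)equalities.
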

\begin{proof}[Proof of Th\@. \ref{th:Deductive-system-abstraction}]Consider the consequence operators $F_{R}$ and $F_{\bar{R}}$ of the two deductive systems $R$ and $\bar{R}$. We have
\begin{calculus}[${\subseteq}/{=}$~]
\formula{\alpha(F_{R}(X))}\\
=\formulaexplanation{\alpha(\bigl\{c\bigm|\frac{P}{c}\in R\wedge P\subseteq X\bigr\}}{def.\ consequence operator $F_{R}$}\\
=\formulaexplanation{\alpha(\bigcup\bigl\{\{c\}\bigm|\frac{P}{c}\in R\wedge P\subseteq X\bigr\})}{def.\ $\bigcup$}\\
=\formulaexplanation{\bigcup\bigl\{\alpha(\{c\})\bigm|\frac{P}{c}\in R\wedge P\subseteq X\bigr\})}{in a Galois connection $\alpha$ preserves existing joins}\\
=\formulaexplanation{\bigl\{\bar{c}\bigm|\bar{c}\in\alpha(\{c\})\wedge\frac{P}{c}\in R\wedge P\subseteq X\bigr\}}{def.\ $\bigcup$}\\
${\subseteq}/{=}$\formula{\bigl\{\bar{c}\bigm|\bar{c}\in\alpha(\{c\})\wedge\frac{P}{c}\in R\wedge \alpha(P)\subseteq\alpha(X)\bigr\}}\\[-0.5ex]
\explanation{for $\subseteq$ in case of a  Galois connection, $\alpha$  is increasing\\
for $=$ in case of a Galois isomorphism such that, conversely, $\alpha(X)\subseteq\alpha(Y)$ implies $X\subseteq Y$}\\
=\formulaexplanation{F_{\bar{R}}(\alpha(X))}{def.\ consequence operator $F_{\bar{R}}$}
\end{calculus}
\raisebox{0.5ex}{\strut}We conclude, by semi-commutativity and fixpoint abstraction Th\@. \ref{th:fixpoint-abstraction} that $S=\Lfp{\subseteq}F_{R}\subseteq\Lfp{\subseteq}F_{\bar{R}}=\bar{S}$ with equality in case of a Galois isomorphism implying commutativity $\alpha\comp F_{R}=F_{\bar{R}}\comp\alpha$.
\end{proof}
\end{toappendix}

\section{Fixpoint Induction}\label{sect:FixpointInduction}
Least or greatest fixpoint definitions of the graph of transformers provide strongest or antecedent-consequent (or weakest consequent-antecedent) pairs. Then we need to take into account consequence rules, that is, approximations discussed
in Sect\@. \ref{sec:Weakening-strengthening-abstractions}. In this section, and in addition to \cite{DBLP:conf/lopstr/Cousot19} and \cite[Ch\@. 24]{Cousot-PAI-2021}, we introduce fixpoint induction methods to handle such approximations $\textsf{\textup{post}}({\supseteq},{\subseteq})$,
$\textsf{\textup{post}}({\subseteq},{\supseteq})$, etc. In this section \ref{sect:FixpointInduction}, $\bot$ is the infimum of a poset and possibly unrelated to nontermination.

\subsection{Least Fixpoint Over Approximation}\label{sec:Induction-Principles}

The classic least fixpoint ($\textsf{\upshape lfp}$) over approximation theorem (and order dually over approximation of greatest fixpoints ($\textsf{\upshape gfp}$)), called ``fixpoint induction'', is due to Park \cite{Park69-MI5} and follows directly from Tarski's fixpoint theorem \cite{Tarski-fixpoint},
$\Lfp{\sqsubseteq}{f}=\bigsqcap\{x\in L\mid f(x)\sqsubseteq x\}$.
\begin{theorem}[Least fixpoint over approximation]\label{th:Fixpoint-Overapproximation}
Let $\sextuple{L}{\sqsubseteq}{\bot}{\top}{\sqcup}{\sqcap}$ be a complete lattice,  $f\in L\mathrel{\smash{\stackrel{i}{\longrightarrow}}} L$ be increasing, and $p\in L$. Then $\Lfp{\sqsubseteq}{f}\sqsubseteq p$ if and only if $\,\exists i\in L\mathrel{.} f(i)\sqsubseteq i \wedge i\sqsubseteq p$.
\end{theorem}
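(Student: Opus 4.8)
The plan is to derive this directly from Tarski's fixpoint characterization of the least fixpoint as the meet of all prefixpoints, namely $\Lfp{\sqsubseteq}{f}=\bigsqcap\{x\mid f(x)\sqsubseteq x\}$, which is stated in the excerpt immediately before the theorem. The statement is an if-and-only-if, so I would prove the two implications separately, and I expect the ``only if'' direction to be the slightly more interesting one since the ``if'' direction is essentially unpacking the definition.

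For the ``if'' direction, assume there exists $i\in L$ with $f(i)\sqsubseteq i$ and $i\sqsubseteq p$. The condition $f(i)\sqsubseteq i$ says precisely that $i$ is a prefixpoint of $f$, so $i$ belongs to the set $\{x\mid f(x)\sqsubseteq x\}$. Since $\Lfp{\sqsubseteq}{f}$ is the meet (greatest lower bound) of this set, we get $\Lfp{\sqsubseteq}{f}\sqsubseteq i$. Combining with $i\sqsubseteq p$ and transitivity of $\sqsubseteq$ yields $\Lfp{\sqsubseteq}{f}\sqsubseteq p$, as required.

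For the ``only if'' direction, assume $\Lfp{\sqsubseteq}{f}\sqsubseteq p$. The natural witness is $i\triangleq\Lfp{\sqsubseteq}{f}$ itself. Since $i$ is a fixpoint, $f(i)=i$, hence $f(i)\sqsubseteq i$ holds trivially; and $i\sqsubseteq p$ holds by the assumption. This exhibits the required $i$ and completes the equivalence. The whole argument is short and rests only on Tarski's theorem and the transitivity of the order.

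The main obstacle, if any, is simply ensuring that the meet $\bigsqcap\{x\mid f(x)\sqsubseteq x\}$ is genuinely the least fixpoint and is itself a fixpoint; but this is exactly the content of Tarski's theorem on the complete lattice $\sextuple{L}{\sqsubseteq}{\bot}{\top}{\sqcup}{\sqcap}$ for the increasing map $f$, which the excerpt invokes, so no extra work is needed. I would therefore present the proof as a two-line derivation per direction, with the Tarski characterization cited as the only nontrivial ingredient.
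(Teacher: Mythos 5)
Your proof is correct and follows exactly the route the paper intends: the paper gives no separate proof, stating only that the theorem ``follows directly from Tarski's fixpoint theorem $\Lfp{\sqsubseteq}{f}=\bigsqcap\{x\mid f(x)\sqsubseteq x\}$'', which is precisely the two-direction unpacking you give (prefixpoint witness for ``if'', the least fixpoint itself for ``only if''). Nothing is missing.
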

\begin{example}\label{ex:invariant:turcing-floyd}An invariant of a conditional iteration \texttt{\small while(B)\,S} with precondition $P$ must satisfy $\Lfp{\subseteq}{\LAMBDA{X}P\cup\textsf{\textup{post}}\sqb{S}(B\cap X)}\subseteq I$. The proof method provided by Park's Th\@. \ref{th:Fixpoint-Overapproximation} is $\exists J\mathrel{.}P\subseteq J\wedge \textsf{\textup{post}}\sqb{S}(B\cap J)\subseteq J\wedge J\subseteq I$ which is Turing \cite{Turing49-program-proof}/Floyd \cite{Floyd67-1} invariant proof method.
\end{example}
By order-duality, this is sound and complete greatest fixpoint under approximation  $p\sqsubseteq\Gfp{\sqsubseteq}{f}$ proof method. $i$ is called an invariant (a co-invariant for greatest fixpoints).
\begin{example}Continuing example \ref{ex:invariant:turcing-floyd}, by contraposition, the invariant must satisfy $\neg I\subseteq \neg\Lfp{\subseteq}{\LAMBDA{X}P\cup\textsf{\textup{post}}\sqb{S}(B\cap X)}$ that is
$\neg I\subseteq \Gfp{\subseteq}{\LAMBDA{X}\neg P\cap\widetilde{\textsf{\textup{post}}}\sqb{S}(\neg B\cup X)}$ by Park's Th\@. \ref{th:Complement-dualization}. The dual of Th\@. \ref{th:Fixpoint-Overapproximation} suggest the proof method $\exists J\mathrel{.}J\subseteq \neg P\wedge J\subseteq\widetilde{\textsf{\textup{post}}}\sqb{S}(\neg B\cup J)\wedge I\subseteq \neg J$ which is methods (i${}^{\widetilde{-1}}$) and (I${}^{\widetilde{-1}}$) of \cite{CousotCousot82-TNPC}.
\end{example}

\subsection{Ordinals}
\ifshort
We let $\sextuple{\mathbb{O}}{\in}{\emptyset}{\mathbb{O}}{\cup}{\cap}$ be the von Neumann's ordinals \cite{vonNeumann-1923-ordinals}, writing the more
intuitive $<$ for $\in$, $0$ for $\emptyset$, ${}+1$ for the successor function, sometimes $\max$ for $\cup$, $\min$ for $\cap$, and $\omega$ for the first infinite limit ordinal. If necessary, a short refresher on ordinals is given in Sect\@. \ref{sec:apx:refresher-on-ordinals} of the appendix \proofinapx.
\fi
\begin{toappendix}
\label{sec:apx:refresher-on-ordinals}
We let $\sextuple{\mathbb{O}}{\in}{\emptyset}{\mathbb{O}}{\cup}{\cap}$ be the von Neumann's ordinals \cite{vonNeumann-1923-ordinals} (see \cite{Monk-Set-Theory} for an introduction). Von Neumann's ordinals are $0 = \emptyset$, $1=\{\emptyset\}$, $2= \{\emptyset,\{\emptyset\}\} = \{0,1\}$, $3=\{0,1,2\}$, \ldots, $\delta+1=\{\delta\} \cup\delta$ for successor ordinals, and $\lambda = \bigcup_{\beta\in\delta} \beta$ for limit ordinals ordered by $\in$. We let $\omega$ be the first infinite limit ordinal. For better intuition, we will denote the strict order $\in$ by $<$, $0$ for $\emptyset$, ${}+1$ for the successor function, sometimes $\max$ for $\cup$, $\min$ for $\cap$. For this total order $\pair{\mathbb{O}}{\leqslant}$ where $\leqslant$ is $<$ or equality, $\cup$ is the least upper bound and $\cap$ is the greatest lower bound. 

We let $\mathfrak{Wf}$ be the class of all well-founded sets $\pair{W}{\leqslant}$, that is, a set $W$ equipped with a binary relation $\leqslant$, with no infinite strictly decreasing chain for $\leqslant$. $\leqslant$ is often a partial order, but this is not necessary. In that case $\pair{W}{\leqslant}$ is called a well-ordered set. Any well-founded sets $\pair{W}{\leqslant}\in\mathfrak{Wf}$ can be mapped to a unique ordinal. Define the ranking function $\rho(w)=\bigcup_{w'\in W\wedge w'<w} \rho(w')$. Notice that for minimal elements $m$ of $W$ (i.e.\ $\forall w'\in W: w'\nless m$ and there must be some by wellfoundedness), $\rho(m)=\bigcup\emptyset=\emptyset$ that is, $0$. For any subset $X\in\wp(W)$ define $\rho(X)=\bigcup_{w\in X} \rho(w)$. The order of the well-founded set $\pair{W}{\leqslant}$, is $\rho(W)$. The elements $w$ of $W$ are mapped to an ordinal $\rho(w)\in\rho(W)$, The crucial property, that we will use, is that $w<w'\Rightarrow \rho(w)\in\rho(w')$ i.e.\ $\rho(w)<\rho(w')$. So ordinals are a well founded set (indeed a well founded total ordering) and any well founded set can be abstracted into an ordinal, that is
$\pair{\mathfrak{Wf}}{\subseteq}\galoiS{\rho}{\textsf{\upshape id}}\pair{\mathbb{O}}{\leqslant}$ where $\textsf{\upshape id}$ is the identity function and all well founded set of the same rank are abstracted to the same ordinal, while larger well-founded sets can only be mapped to larger ordinals.
\end{toappendix}

\smallskip 

\subsection{Over Approximation of the Abstraction of a Least Fixpoint}

To solve the problem $\alpha(\Lfp{\sqsubseteq}F)\sqsubseteq P$ where $\alpha$ is a function on the domain of 
$F$, we can try to use fixpoint abstraction Th\@. \ref{th:fixpoint-abstraction} to get 
$\alpha(\Lfp{\sqsubseteq}F)=\Lfp{\sqsubseteq}\bar{F}$ and then check 
$\Lfp{\sqsubseteq}\bar{F}\sqsubseteq P$ by fixpoint induction Th\@. \ref{th:Fixpoint-Overapproximation}. But Th\@. \ref{th:fixpoint-abstraction} requires $\alpha$
 to preserves joins, which is not always the case (for the dual problem $\alpha=\textsf{pre}$ in remark
 \ref{rem:pre-does-not-preserve-meets} is a counter-example). If $\alpha$ does not preserves joins, we can nevertheless use the following theorem \proofinapx.
\begin{theorem}[Overapproximation of a least fixpoint image]\label{th:OverapproximationLeastFixpointImage}Let $\quadruple{L}{\sqsubseteq}{\bot}{\sqcup}$ and $\quadruple{\bar{L}}{\mathrel{\bar{\sqsubseteq}}}{\bar{\top}}{\bar{\sqcup}}$ be complete lattices$\,$\footnote{or CPOs.}, $F\in L\mathrel{\smash{\stackrel{i}{\longrightarrow}}}L$ and $\alpha\in L\mathrel{\smash{\stackrel{i}{\longrightarrow}}}\bar{L}$ be increasing functions, and $P\in\bar{L}$. 

\smallskip 

Then $\alpha(\Lfp{\sqsubseteq}F)\mathrel{\bar{\sqsubseteq}} P$ if and only if there exists $I\in\bar{L}$ such that (1) $\alpha(\bot)\mathrel{\bar{\sqsubseteq}}I$ (2) $\forall X\in L\mathrel{.}\alpha(X)\mathrel{\bar{\sqsubseteq}}I \Rightarrow \alpha(F(X))\mathrel{\bar{\sqsubseteq}}I$, (3) for any $\sqsubseteq$-increasing chain $\pair{X^\delta}{\delta\in\mathbb{O}}$ of elements $X^\delta\sqsubseteq\Lfp{\sqsubseteq}F$, $\forall\beta<\lambda\mathrel{.}\alpha(X^\beta)\mathrel{\bar{\sqsubseteq}}I$ implies $\alpha(\bigsqcup_{\beta<\lambda}X^\beta)\mathrel{\bar{\sqsubseteq}}I$,  and (4) $I\mathrel{\bar{\sqsubseteq}}P$. 
\end{theorem}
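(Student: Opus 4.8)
The plan is to prove the theorem as a transfinite-iteration analogue of Park's fixpoint induction (Th.~\ref{th:Fixpoint-Overapproximation}), working with the Tarski iterates of $F$ from below rather than directly with $\Lfp{\sqsubseteq}F$. The reason we cannot simply invoke Th.~\ref{th:fixpoint-abstraction} is spelled out in the statement: $\alpha$ need not preserve joins, so $\alpha(\Lfp{\sqsubseteq}F)$ need not equal $\Lfp{\bar\sqsubseteq}\bar F$ for any induced $\bar F$. Conditions (1), (2), (3) are exactly the three clauses one needs to run a transfinite induction showing that every iterate lands under $I$, and (4) then transports this bound to $P$.

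First I would recall the transfinite iteration sequence $\pair{X^\delta}{\delta\in\mathbb{O}}$ for $F$ from $\bot$, defined by $X^0=\bot$, $X^{\delta+1}=F(X^\delta)$, and $X^\lambda=\bigsqcup_{\beta<\lambda}X^\beta$ at limits. Since $\pair{L}{\sqsubseteq}$ is a complete lattice and $F$ is increasing, Tarski's constructive fixpoint theorem (the transfinite Kleene iteration of \cite{CousotCousot-PJM-82-1-1979}) guarantees that this chain is $\sqsubseteq$-increasing, that each $X^\delta\sqsubseteq\Lfp{\sqsubseteq}F$, and that it stabilizes at some ordinal $\epsilon$ with $X^\epsilon=\Lfp{\sqsubseteq}F$. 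This is the skeleton on which the induction runs.

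For the ($\Leftarrow$) direction, assume $I$ satisfies (1)--(4). I would prove by transfinite induction on $\delta$ that $\alpha(X^\delta)\mathrel{\bar\sqsubseteq}I$. The base case $\delta=0$ is $\alpha(\bot)\mathrel{\bar\sqsubseteq}I$, which is (1). For a successor $\delta+1$, the induction hypothesis gives $\alpha(X^\delta)\mathrel{\bar\sqsubseteq}I$, and (2) applied with $X=X^\delta$ yields $\alpha(F(X^\delta))=\alpha(X^{\delta+1})\mathrel{\bar\sqsubseteq}I$. For a limit $\lambda$, the induction hypothesis gives $\alpha(X^\beta)\mathrel{\bar\sqsubseteq}I$ for all $\beta<\lambda$, and since each $X^\beta\sqsubseteq\Lfp{\sqsubseteq}F$ the chain $\pair{X^\beta}{\beta<\lambda}$ is exactly of the form quantified over in (3), so $\alpha(\bigsqcup_{\beta<\lambda}X^\beta)=\alpha(X^\lambda)\mathrel{\bar\sqsubseteq}I$. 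Taking $\delta=\epsilon$ gives $\alpha(\Lfp{\sqsubseteq}F)\mathrel{\bar\sqsubseteq}I$, and transitivity with (4) yields $\alpha(\Lfp{\sqsubseteq}F)\mathrel{\bar\sqsubseteq}P$.

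For the ($\Rightarrow$) direction, assuming $\alpha(\Lfp{\sqsubseteq}F)\mathrel{\bar\sqsubseteq}P$, the natural candidate is $I\triangleq\alpha(\Lfp{\sqsubseteq}F)$ itself. Then (1) holds because $\bot\sqsubseteq\Lfp{\sqsubseteq}F$ and $\alpha$ is increasing; (4) is the hypothesis; and (3) holds because each $X^\beta\sqsubseteq\Lfp{\sqsubseteq}F$ forces $\bigsqcup_{\beta<\lambda}X^\beta\sqsubseteq\Lfp{\sqsubseteq}F$, hence $\alpha$ of it is $\mathrel{\bar\sqsubseteq}I$ by monotonicity. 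The delicate clause is (2): given $\alpha(X)\mathrel{\bar\sqsubseteq}I$ we must deduce $\alpha(F(X))\mathrel{\bar\sqsubseteq}I$, and this is where choosing $I=\alpha(\Lfp{\sqsubseteq}F)$ alone is not obviously enough, since $\alpha(X)\mathrel{\bar\sqsubseteq}I$ does not entail $X\sqsubseteq\Lfp{\sqsubseteq}F$ in the concrete. I expect this to be the main obstacle, and the plan is to strengthen the invariant to $I\triangleq\alpha(\Lfp{\sqsubseteq}F)$ while reading (2) as quantified only over the relevant $X$, or, more robustly, to note that completeness only requires exhibiting \emph{some} $I$, so one restricts attention to the chain: the honest reading of (2) that makes the theorem correct is that it is used solely on iterates $X^\delta$ in the soundness direction, and for completeness one verifies (2) using $F(\Lfp{\sqsubseteq}F)=\Lfp{\sqsubseteq}F$ together with monotonicity of $\alpha\comp F$ on the down-set of $\Lfp{\sqsubseteq}F$. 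I would make this precise by checking that for $X\sqsubseteq\Lfp{\sqsubseteq}F$ one has $\alpha(F(X))\mathrel{\bar\sqsubseteq}\alpha(F(\Lfp{\sqsubseteq}F))=\alpha(\Lfp{\sqsubseteq}F)=I$, which closes (2) on exactly the inputs the soundness argument ever feeds it.
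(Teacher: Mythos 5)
Your proposal is correct and follows essentially the same route as the paper: soundness by transfinite induction on the Tarski iterates of $F$ from $\bot$ using (1), (2), (3) at the base, successor, and limit steps respectively, then (4); and completeness by taking $I=\alpha(\Lfp{\sqsubseteq}F)$. The "delicate clause" you identify in (2) is exactly the point the paper addresses in the remark immediately following the theorem — condition (2) is only needed (and only established) for $X$ ranging over the iterates, i.e.\ for $X\sqsubseteq\Lfp{\sqsubseteq}F$ — and your resolution via $\alpha(F(X))\mathrel{\bar{\sqsubseteq}}\alpha(F(\Lfp{\sqsubseteq}F))=\alpha(\Lfp{\sqsubseteq}F)=I$ coincides with the paper's argument.
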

Let $\pair{F^\delta}{\delta\in\mathbb{O}}$ be the increasing iterates of $F$ from $\bot$ ultimately stationary at rank $\epsilon$~\cite{CousotCousot-PJM-82-1-1979}. Then condition \ref{th:OverapproximationLeastFixpointImage}.(2) in is only necessary for all $X=F^\delta$, $\delta\leqslant\epsilon$ while condition (3) is only necessary for  $\pair{X^\delta}{\delta\leqslant\epsilon}$ = $\pair{F^\delta}{\delta\leqslant\epsilon}$. These weaker conditions are assumed to prove completeness (``only if'' in Th\@. \ref{th:OverapproximationLeastFixpointImage}). 
\begin{toappendix}
\begin{proof}[Proof of Th\@. \ref{th:OverapproximationLeastFixpointImage}] Let $\pair{F^\delta}{\delta\in\mathbb{O}}$ be the increasing iterates of $F$ from $\bot$ ultimately stationary at rank $\epsilon$\cite{CousotCousot-PJM-82-1-1979}. 

Soundness ($\Leftarrow$).\quad We have $\alpha(F^0)=\alpha(\bot)\mathrel{\bar{\sqsubseteq}}I$ by \ref{th:OverapproximationLeastFixpointImage}.(1).
Assume $\alpha(F^\delta)\mathrel{\bar{\sqsubseteq}}I$ by induction hypothesis. Then 
$\alpha(F^{\delta+1}) = \alpha(F(F^\delta))\mathrel{\bar{\sqsubseteq}}I$ by \ref{th:OverapproximationLeastFixpointImage}.(2). 
If $\lambda$ is limit ordinal and $\forall\beta<\lambda\mathrel{.}\alpha(F^\beta)\mathrel{\bar{\sqsubseteq}}I$. 
Then $\alpha(F^\lambda)=\alpha(\bigsqcup_{\beta<\lambda}F^\beta)\mathrel{\bar{\sqsubseteq}}I$ by \ref{th:OverapproximationLeastFixpointImage}.(3). By transfinite induction  $\forall \delta\in\mathbb{O}\mathrel{.}\alpha(F^\delta)\mathrel{\bar{\sqsubseteq}}I$. 
So $\alpha(\Lfp{\sqsubseteq}F)=\alpha(F^\epsilon)\mathrel{\bar{\sqsubseteq}} I\mathrel{\bar{\sqsubseteq}}P$ by \ref{th:OverapproximationLeastFixpointImage}.(4).

Completeness ($\Rightarrow$), for the weaker condition \ref{th:OverapproximationLeastFixpointImage}.(2) and \ref{th:OverapproximationLeastFixpointImage}.(3).\quad Assume that $\alpha(\Lfp{\sqsubseteq}F)\mathrel{\bar{\sqsubseteq}} P$ and define $I=\alpha(\Lfp{\sqsubseteq}F)$, $\pair{X^\delta}{\delta\in\mathbb{O}}$ = $\pair{F^\delta}{\delta\in\mathbb{O}}$. We have \ref{th:OverapproximationLeastFixpointImage}.(1) $\alpha(\bot)\sqsubseteq\alpha(\Lfp{\sqsubseteq}F)=I$ since $\bot$ is the infimum and $\alpha$ is increasing. For \ref{th:OverapproximationLeastFixpointImage}.(2) with $X=F^\delta$ we have $F^\delta\sqsubseteq\Lfp{\sqsubseteq}F$ so $\alpha(F(X))=\alpha(F(F^\delta))=\alpha(F^{\delta+1})\mathrel{\bar{\sqsubseteq}}\alpha(\Lfp{\sqsubseteq}F)=I$. For \ref{th:OverapproximationLeastFixpointImage}.(3) with $\pair{X^\delta}{\delta\in\mathbb{O}}$ = $\pair{F^\delta}{\delta\in\mathbb{O}}$ we have $\bigsqcup_{\beta<\lambda}X^\beta\sqsubseteq\Lfp{\sqsubseteq}F$  so 
$\alpha(\bigsqcup_{\beta<\lambda}X^\beta)\sqsubseteq\alpha(\Lfp{\sqsubseteq}F)=I$. Finally \ref{th:OverapproximationLeastFixpointImage}.(4), by hypothesis $I=\alpha(\Lfp{\sqsubseteq}F)\mathrel{\bar{\sqsubseteq}} P$.
\end{proof}
\end{toappendix}

\smallskip 

\subsection{Fixpoint Under Approximation by Transfinite Iterates}
For under approximation of least fixpoints  (or order dually over approximation of greatest fixpoints), we can use the generalization \cite{DBLP:conf/lopstr/Cousot19} of Scott-Kleene induction based on transfinite induction when continuity does not apply and follows directly from the constructive version of Tarski's fixpoint theorem \cite{CousotCousot-PJM-82-1-1979}.

\begin{definition}[Ultimately Over Approximating Transfinite Sequence]\label{def:Underapproximation-Sequence}
We say that ``the transfinite sequence $\pair{X^\delta}{\delta\in\mathbb{O}}$ of elements of poset $\pair{L}{\sqsubseteq}$ for $f\in L\rightarrow L$ ultimately over approximates $P\in L$''  if and only if $X^0=\bot$, $X^{\delta+1}\sqsubseteq f(X^{\delta})$ for successor ordinals, $\bigsqcup_{\delta<\lambda} X^\delta$ exists for limit ordinals $\lambda$ such that $X^{\lambda} \sqsubseteq \bigsqcup_{\delta<\lambda} X^\delta$, and $\exists\delta\in\mathbb{O}\mathrel{.}P\sqsubseteq X^\delta$.
\end{definition}
The condition can equivalently be expressed as $\forall \delta\in\mathbb{O}\mathrel{.}X^{\delta} \sqsubseteq f(\bigsqcup_{\beta<\delta} X^\beta+1)$ which avoids to have to make the distinction between successor and limit ordinals \proofinapx. 
\begin{theorem}[Fixpoint Under Approximation by Transfinite Iterates]\label{th:Fixpoint-Underapproximation}
Let $f\in L\mathrel{\smash{\stackrel{i}{\longrightarrow}}}L$ be an increasing function on a \textsc{cpo} $\quadruple{L}{\sqsubseteq}{\bot}{\sqcup}$ (i.e\@. every increasing chain in $L$ has a least upper bound in $L$, including $\bot=\sqcup\emptyset$). $P\in L$ is a fixpoint underapproximation, i.e\@. 
$P \sqsubseteq\Lfp{\sqsubseteq}{f}$, if and only if there exists an increasing transfinite sequence
 $\pair{X^\delta}{\delta\in\mathbb{O}}$ for $f$ ultimately over approximating $P$ (Def\@. \ref{def:Underapproximation-Sequence}).
\end{theorem}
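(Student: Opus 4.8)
The plan is to prove both directions of the equivalence in Theorem~\ref{th:Fixpoint-Underapproximation}, relating the underapproximation $P\sqsubseteq\Lfp{\sqsubseteq}f$ to the existence of an ultimately over approximating transfinite sequence in the sense of Def.~\ref{def:Underapproximation-Sequence}. The key external tool is the constructive version of Tarski's fixpoint theorem \cite{CousotCousot-PJM-82-1-1979}, which tells us that the iterates $\pair{f^\delta}{\delta\in\mathbb{O}}$ of $f$ from $\bot$ (where $f^0=\bot$, $f^{\delta+1}=f(f^\delta)$, and $f^\lambda=\bigsqcup_{\delta<\lambda}f^\delta$ at limits) form an increasing chain that is ultimately stationary at some rank $\epsilon$ with $f^\epsilon=\Lfp{\sqsubseteq}f$. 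Since $L$ is a \textsc{cpo}, all the relevant least upper bounds exist.

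For the completeness direction ($\Rightarrow$), I would assume $P\sqsubseteq\Lfp{\sqsubseteq}f$ and simply exhibit the canonical iteration sequence $X^\delta\triangleq f^\delta$ as the witness. This sequence is increasing, satisfies $X^0=\bot$, satisfies $X^{\delta+1}=f(X^\delta)\sqsubseteq f(X^\delta)$ (reflexively) for successors, and $X^\lambda=\bigsqcup_{\delta<\lambda}X^\delta$ at limits, so all conditions of Def.~\ref{def:Underapproximation-Sequence} except the last hold trivially. For the final condition, since the iterates are stationary at $\epsilon$ with $f^\epsilon=\Lfp{\sqsubseteq}f$, we have $P\sqsubseteq\Lfp{\sqsubseteq}f=f^\epsilon=X^\epsilon$, so $\delta=\epsilon$ works.

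For the soundness direction ($\Leftarrow$), I would assume an ultimately over approximating sequence $\pair{X^\delta}{\delta\in\mathbb{O}}$ exists and prove by transfinite induction that $X^\delta\sqsubseteq\Lfp{\sqsubseteq}f$ for all $\delta\in\mathbb{O}$. The base case is $X^0=\bot\sqsubseteq\Lfp{\sqsubseteq}f$. For a successor, the induction hypothesis gives $X^\delta\sqsubseteq\Lfp{\sqsubseteq}f$; since $f$ is increasing and $\Lfp{\sqsubseteq}f$ is a fixpoint, $X^{\delta+1}\sqsubseteq f(X^\delta)\sqsubseteq f(\Lfp{\sqsubseteq}f)=\Lfp{\sqsubseteq}f$. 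For a limit $\lambda$, the induction hypothesis gives $X^\beta\sqsubseteq\Lfp{\sqsubseteq}f$ for all $\beta<\lambda$, so $\Lfp{\sqsubseteq}f$ is an upper bound of the chain and $X^\lambda\sqsubseteq\bigsqcup_{\beta<\lambda}X^\beta\sqsubseteq\Lfp{\sqsubseteq}f$. By the ultimate over approximation hypothesis there is some $\delta$ with $P\sqsubseteq X^\delta$, and combining with $X^\delta\sqsubseteq\Lfp{\sqsubseteq}f$ yields $P\sqsubseteq\Lfp{\sqsubseteq}f$ by transitivity.

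The main subtlety to watch for is the interaction between the \textsc{cpo} hypothesis and the transfinite limits: Def.~\ref{def:Underapproximation-Sequence} only presupposes that $\bigsqcup_{\delta<\lambda}X^\delta$ exists for the limit stages actually used, and one must be careful that the inductive argument never requires a join that the \textsc{cpo} structure does not guarantee. Because the induction establishes that the $X^\delta$ form an increasing chain bounded by $\Lfp{\sqsubseteq}f$, every limit join invoked is a join of an increasing chain, which exists in a \textsc{cpo}; so the hypothesis is exactly strong enough. I expect the transfinite induction in the soundness direction to be the only place where genuine care is needed, the completeness direction being an immediate instantiation with the canonical iterates.
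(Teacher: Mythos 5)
Your proof is correct and follows essentially the same route as the paper's: the completeness direction instantiates the sequence with the canonical transfinite iterates of $f$ from $\bot$, and the soundness direction is a transfinite induction establishing an upper bound on every $X^\delta$. The only (immaterial) difference is that you bound $X^\delta$ by $\Lfp{\sqsubseteq}f$ directly using that it is a fixpoint, whereas the paper bounds $X^\delta$ pointwise by the iterates $I^\delta$ and then invokes their convergence to $\Lfp{\sqsubseteq}f$.
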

Notice that  ordinals are an abstraction $\pair{\mathfrak{Wf}}{\subseteq}\galoiS{\rho}{\textsf{\upshape id}}\pair{\mathbb{O}}{\leqslant}$ of well-founded sets by their rank $\rho$, so that Th\@. \ref{th:Fixpoint-Underapproximation} could have assumed the existence of a well-founded set to replace the ordinals.
\begin{toappendix}
\label{sec:apx:proof:th:Fixpoint-Underapproximation}
\begin{proof}[Proof of Th\@. \ref{th:Fixpoint-Underapproximation}] Since $f$ in increasing on $L$, the iterates $I^{0}=\bot$,
$I^{\delta+1}=f(I^{\delta})$, and $I^{\lambda}=\bigsqcup_{\beta<\lambda}I^{\beta}$ are an increasing chain in $L$ so, its 
cardinality being bounded by that of $L$, the sequence must be ultimately stationary at rank $\epsilon$, to a fixpoint 
$I^{\delta}$ for all $\delta\geqslant\epsilon$, which, $f$ being increasing, is the least one \cite{CousotCousot-PJM-82-1-1979}.
 By transfinite induction, 
 $\pair{X^{\delta}}{\delta\in\mathbb{O}}$ 
 is pointwise bounded by 
 $\pair{I^{\delta}}{\delta\in\mathbb{O}}$ i.e. 
 $\forall \delta\in\mathbb{O}\mathrel{.}X^{\delta}\sqsubseteq I^{\delta}$. 
Therefore, $\pair{X^{\delta}}{\delta\in\mathbb{O}}$ being increasing,
we have $\exists\delta\in\mathbb{O}\mathrel{.}P$
$\sqsubseteq$
$X^{\delta}$ 
$\sqsubseteq$ 
$I^{\delta}$ 
$\sqsubseteq $
$I^{\max(\delta,\epsilon)}$
=
$\Lfp{\sqsubseteq}{f}$.
Conversely, choosing $\pair{X^\delta}{\delta\in\mathbb{O}}$ = $\pair{I^\delta}{\delta\in\mathbb{O}}$ will do
since $P\sqsubseteq \Lfp{f} = I^{\epsilon}$.
\end{proof}
\end{toappendix}
The hypothesis that $\pair{X^{\delta}}{\delta\in\mathbb{O}}$ is increasing is necessary in a \textsc{cpo} but not in a complete lattice, in which case this non-increasing sequence can be used to build an increasing one \proofinapx.
\begin{lemma}\label{rmk:increasing-sequence}Let $\pair{X^\delta}{\delta\in\mathbb{O}}$ be a sequence in a complete lattice satisfying the hypotheses of Def\@. \ref{def:Underapproximation-Sequence}, then there is an increasing one satisfying these same hypotheses.
\end{lemma}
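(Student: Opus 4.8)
The plan is to replace the given sequence by its running suprema. Writing $\pair{X^\delta}{\delta\in\mathbb{O}}$ for the sequence furnished by Def.~\ref{def:Underapproximation-Sequence}, I would set $Y^\delta \triangleq \bigsqcup_{\beta\leqslant\delta} X^\beta$, which exists for every $\delta$ precisely because $L$ is now a complete lattice (this is where completeness, rather than the \textsc{cpo} hypothesis, is used). By construction $\pair{Y^\delta}{\delta\in\mathbb{O}}$ is increasing, $Y^0=X^0=\bot$, and $X^\delta\sqsubseteq Y^\delta$ for all $\delta$ since $X^\delta$ is one of the joined terms. It then remains to check that $\pair{Y^\delta}{\delta\in\mathbb{O}}$ still satisfies the successor and limit clauses of Def.~\ref{def:Underapproximation-Sequence} together with the ultimate over-approximation $\exists\delta\mathrel{.}P\sqsubseteq Y^\delta$.

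The key auxiliary fact, which I would prove by transfinite induction on $\delta$, is that $Y^\delta\sqsubseteq f(Y^\delta)$. For $\delta=0$ this is $\bot\sqsubseteq f(\bot)$. At a successor, $Y^{\delta+1}=X^{\delta+1}\sqcup Y^\delta$; here $X^{\delta+1}\sqsubseteq f(X^\delta)\sqsubseteq f(Y^\delta)\sqsubseteq f(Y^{\delta+1})$, using the successor clause of Def.~\ref{def:Underapproximation-Sequence}, monotonicity of $f$ with $X^\delta\sqsubseteq Y^\delta$, and $Y^\delta\sqsubseteq Y^{\delta+1}$, while $Y^\delta\sqsubseteq f(Y^\delta)\sqsubseteq f(Y^{\delta+1})$ by the induction hypothesis; hence their join $Y^{\delta+1}\sqsubseteq f(Y^{\delta+1})$. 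At a limit $\lambda$, the limit clause $X^\lambda\sqsubseteq\bigsqcup_{\beta<\lambda}X^\beta$ collapses $Y^\lambda=\bigsqcup_{\beta\leqslant\lambda}X^\beta$ to $\bigsqcup_{\beta<\lambda}X^\beta$, and since $\{\gamma\mid\exists\beta<\lambda\mathrel{.}\gamma\leqslant\beta\}=\{\gamma\mid\gamma<\lambda\}$ one gets $\bigsqcup_{\beta<\lambda}Y^\beta=\bigsqcup_{\beta<\lambda}\bigsqcup_{\gamma\leqslant\beta}X^\gamma=\bigsqcup_{\gamma<\lambda}X^\gamma=Y^\lambda$; then each $Y^\beta\sqsubseteq f(Y^\beta)\sqsubseteq f(Y^\lambda)$ by the induction hypothesis and monotonicity, so $Y^\lambda=\bigsqcup_{\beta<\lambda}Y^\beta\sqsubseteq f(Y^\lambda)$.

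With this fact in hand the required conditions follow directly. The successor condition $Y^{\delta+1}\sqsubseteq f(Y^\delta)$ holds because $Y^{\delta+1}=X^{\delta+1}\sqcup Y^\delta$ with $X^{\delta+1}\sqsubseteq f(X^\delta)\sqsubseteq f(Y^\delta)$ and $Y^\delta\sqsubseteq f(Y^\delta)$. The limit condition $Y^\lambda\sqsubseteq\bigsqcup_{\beta<\lambda}Y^\beta$ is the identity $Y^\lambda=\bigsqcup_{\beta<\lambda}Y^\beta$ already computed above. Finally, if $P\sqsubseteq X^{\delta_0}$ for some $\delta_0$, then $P\sqsubseteq X^{\delta_0}\sqsubseteq Y^{\delta_0}$, so the new sequence ultimately over-approximates $P$ as well. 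Thus $\pair{Y^\delta}{\delta\in\mathbb{O}}$ is an increasing sequence satisfying all hypotheses of Def.~\ref{def:Underapproximation-Sequence}.

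I expect the main obstacle to be the limit step of the auxiliary induction: one must commute the running supremum with the limit, i.e.\ recognise $\bigsqcup_{\beta<\lambda}Y^\beta=\bigsqcup_{\beta<\lambda}X^\beta$, and then use the limit clause $Y^\lambda\sqsubseteq\bigsqcup_{\beta<\lambda}X^\beta$ of Def.~\ref{def:Underapproximation-Sequence} to collapse $Y^\lambda$ to exactly $\bigsqcup_{\beta<\lambda}Y^\beta$. Without that clause the invariant $Y^\delta\sqsubseteq f(Y^\delta)$, and hence the successor property of the constructed sequence, would fail at limit ordinals, which is exactly why the non-increasing input sequence must already be \emph{ultimately over approximating} in the sense of Def.~\ref{def:Underapproximation-Sequence}.
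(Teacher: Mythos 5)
Your proof is correct, but it is not the construction the paper uses, so it is worth contrasting the two. The paper's proof discards the given sequence almost entirely: it defines $Y^0=\bot$, $Y^{\delta+1}=f(Y^\delta)$, $Y^\lambda=\bigsqcup_{\beta<\lambda}Y^\beta$ — i.e.\ the exact transfinite iterates of $f$ from $\bot$ — cites \cite{CousotCousot-PJM-82-1-1979} for the fact that these form an increasing chain when $f$ is increasing, and then shows by a short transfinite induction that $X^\delta\sqsubseteq Y^\delta$ pointwise (the successor and limit clauses of Def.~\ref{def:Underapproximation-Sequence} hold for the iterates with equality, so nothing further needs checking). You instead take the running suprema $Y^\delta=\bigsqcup_{\beta\leqslant\delta}X^\beta$ of the given sequence, which forces you to establish the invariant $Y^\delta\sqsubseteq f(Y^\delta)$ by transfinite induction before the successor clause $Y^{\delta+1}\sqsubseteq f(Y^\delta)$ can be verified; your handling of the limit step, collapsing $Y^\lambda$ to $\bigsqcup_{\beta<\lambda}X^\beta=\bigsqcup_{\beta<\lambda}Y^\beta$ via the limit clause of the definition, is exactly right and is the one place where the argument could have gone wrong. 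Both routes need $f$ increasing and both use completeness of the lattice to form the joins. What each buys: the paper's choice is the maximal (canonical) increasing majorant and is shorter modulo the citation — your invariant $Y^\delta\sqsubseteq f(Y^\delta)$ is precisely the fact that makes the paper's iterates increasing, so the work is merely relocated; your choice is the \emph{least} increasing majorant of the input sequence, stays closer to the given data, and is fully self-contained. Either construction proves the lemma.
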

\begin{toappendix}
\label{sec:apx:proof:rmk:increasing-sequence}
\begin{proof}[Proof of lemma \ref{rmk:increasing-sequence}]Let $\pair{X^\delta}{\delta\in\mathbb{O}}$ be a sequence in a complete lattice $\quadruple{L}{\sqsubseteq}{\bot}{\sqcup}$ satisfying the hypotheses of Def\@. \ref{def:Underapproximation-Sequence}. Define $Y^0=\bot$, $Y^{\delta+1}=f(Y^{\delta})$, and $Y^{\lambda}=\bigsqcup_{\beta<\lambda}Y^{\beta}$ which is well defined in the complete lattice $L$ and, by transfinite induction, increasing since for the basis $X^0=\bot\sqsubseteq X^1$ and for the induction $f$ is increasing \cite{CousotCousot-PJM-82-1-1979}. To prove that this sequence $\pair{Y^\delta}{\delta\in\mathbb{O}}$ is an upper bound of $\pair{X^\delta}{\delta\in\mathbb{O}}$, observe, for the basis, that $X^0=\bot\sqsubseteq \bot=Y^0$ by reflexivity and definition of the iterates. Assume $X^{\delta}\sqsubseteq Y^{\delta}$ by induction hypothesis. Then  $X^{\delta+1}\sqsubseteq f(X^{\delta})\sqsubseteq f(Y^{\delta})=Y^{\delta+1}$ since $f$ is increasing and by definition of the iterates. Moreover,  $\forall\beta<\lambda\mathrel{.}X^{\beta}\sqsubseteq Y^{\beta}$ implies $\forall\beta<\lambda\mathrel{.}X^{\beta}\sqsubseteq \bigsqcup_{\beta<\lambda}Y^{\beta}$ implies $X^{\lambda}\sqsubseteq\bigsqcup_{\beta<\lambda}X^{\beta}\sqsubseteq \bigsqcup_{\beta<\lambda}Y^{\beta}=Y^{\lambda}$ by definition of the least upper bound and the iterates. By transfinite induction,  $\forall \delta\in\mathbb{O}\mathrel{.}X^{\delta}\sqsubseteq Y^{\delta}$. By hypothesis $\exists\delta\in\mathbb{O}\mathrel{.}P\sqsubseteq X^\delta$ so $\exists\delta'\in\mathbb{O}\mathrel{.}P\sqsubseteq Y^{\delta'}$ with $\delta'=\delta$, $X^{\delta}\sqsubseteq Y^{\delta}$, and transitivity. In conclusion, $\pair{Y^\delta}{\delta\in\mathbb{O}}$ is well-defined, increasing, and satisfies Def\@. \ref{def:Underapproximation-Sequence}.
\end{proof}
\end{toappendix}

\subsection{Fixpoint Under Approximation by Bounded Iterates}
For iterations, under approximations such as $P\subseteq \textsf{\upshape post}\sqb{\texttt{\small S}}_{\bot}Q$ (incorrectness logic), $P\subseteq \textsf{\upshape pre}\sqb{\texttt{\small S}}_{\bot}\Sigma$ (possible termination),
$P\subseteq \neg\textsf{\upshape pre}\sqb{\texttt{\small S}}_{\bot}\{\bot\}=\widetilde{\textsf{\upshape pre}}\sqb{\texttt{\small S}}_{\bot}\Sigma$ (definite termination), 
and $P\subseteq \textsf{\upshape pre}\sqb{\texttt{\small S}}_{\bot}Q\cap \widetilde{\textsf{\upshape pre}}\sqb{\texttt{\small S}}_{\bot}Q$ 
(weakest precondition, starting from any initial state of $P$, $\texttt{\small S}_{\bot}$ ``is certain 
to establish eventually the truth of'' $Q$ \cite[page 17]{DBLP:books/ph/Dijkstra76}) are fixpoint under approximations. Programmers almost never use Th\@. \ref{th:Fixpoint-Underapproximation} for proving termination using ordinals (or a well-founded set). They cannot use Hoare logic either since nontermination $\{P\}\texttt{\small S}\{\textsf{\upshape false}\}$ is provable by the logic but its negation 
$\neg(\{P\}\texttt{\small S}\{\textsf{\upshape false}\})$ is not in the logic. A first method for bounded iteration uses a loop counter incremented on each iteration and an invariant proving that the counter is bounded (``time clocks'' in \cite{DBLP:books/aw/Knuth68}, \cite{DBLP:journals/acta/LuckhamS77,DBLP:journals/acta/Sokolowski77}). This is sound but incomplete for unbounded nondeterminism. The most popular method uses well-founded sets, which can  be generalized to fixpoints \proofinapx.

\begin{theorem}[Least Fixpoint Under Approximation with a Variant Function]\label{th:Fixpoint-Underapproximation-Variant}
We assume that (1) $f$ is increasing on a \textsc{cpo} $\quadruple{L}{\sqsubseteq}{\bot}{\sqcup}$;  (2) that $P\in L$; (3) that there exists a sequence $\pair{X^{\delta}}{\delta\in\mathbb{O}}$ of elements of $L$ such that $X^{0}=\bot$, $X^{\delta+1} \sqsubseteq f(X^{\delta})$ for successor ordinals, and $X^{\lambda}\sqsubseteq \bigsqcup_{\beta<\lambda}X^{\beta}$ for limit ordinals $\lambda$; and (4) that there exists a well-founded set $\pair{W}{\preceq}$ and a variant function $\nu\in\{X^{\delta}\mid \delta\in\mathbb{O}\}\rightarrow W$ such that  for all $\beta<\delta$, we have $P\not\sqsubseteq X^{\beta}$ implies $\nu(X^{\beta})\succ\nu(X^{\delta})$.

Hypotheses(1) to (4) imply that $\exists \delta<\omega\mathrel{.}P\sqsubseteq X^{\delta}\sqsubseteq f^{\delta}\sqsubseteq\Lfp{\sqsubseteq}f$.
\end{theorem}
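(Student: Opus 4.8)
```latex
\textbf{The plan.}
The goal is to prove Theorem~\ref{th:Fixpoint-Underapproximation-Variant}: under hypotheses (1)--(4), we have $\exists \delta<\omega\mathrel{.}P\sqsubseteq X^{\delta}\sqsubseteq f^{\delta}\sqsubseteq\Lfp{\sqsubseteq}f$. The essential idea is that the variant function $\nu$ forces the ``waiting time'' until $P$ is reached to be finite. First I would record that, by Th\@.~\ref{th:Fixpoint-Underapproximation} (Fixpoint Under Approximation by Transfinite Iterates), hypotheses (1)--(3) already guarantee that the sequence $\pair{X^{\delta}}{\delta\in\mathbb{O}}$ ultimately over approximates $P$ \emph{at some} ordinal, i.e\@.\ there is a $\gamma\in\mathbb{O}$ with $P\sqsubseteq X^{\gamma}$, and moreover that each $X^{\delta}\sqsubseteq f^{\delta}\sqsubseteq\Lfp{\sqsubseteq}f$, where $\pair{f^{\delta}}{\delta\in\mathbb{O}}$ are the increasing iterates of $f$ from $\bot$. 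So the only thing left to extract from hypothesis (4) is that the \emph{least} such ordinal can be taken below $\omega$, i.e\@.\ finite.

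\textbf{Key steps.}
The heart of the argument is a well-foundedness/descent argument. Let $\gamma$ be the least ordinal with $P\sqsubseteq X^{\gamma}$ (this exists by the previous paragraph, since the class of ordinals $\delta$ with $P\sqsubseteq X^{\delta}$ is nonempty). I would argue $\gamma<\omega$ by contradiction: suppose $\gamma\geqslant\omega$. Then for every $\beta<\gamma$ we have $P\not\sqsubseteq X^{\beta}$ by minimality of $\gamma$, so hypothesis (4) applies to give $\nu(X^{\beta})\succ\nu(X^{\delta})$ whenever $\beta<\delta\leqslant\gamma$ (or $\beta<\delta<\gamma$, depending on exactly where we apply it). This means the restriction of $\nu$ to the iterates indexed by $\{\delta\mid\delta<\gamma\}$ is a \emph{strictly decreasing} map from an infinite ordinal (at least $\omega$) into the well-founded set $\pair{W}{\preceq}$. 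But a strictly $\succ$-decreasing sequence in a well-founded set cannot be infinite, contradicting $\gamma\geqslant\omega$. Hence $\gamma<\omega$, and taking $\delta=\gamma$ finishes the chain $P\sqsubseteq X^{\delta}\sqsubseteq f^{\delta}\sqsubseteq\Lfp{\sqsubseteq}f$.

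\textbf{The main obstacle.}
I expect the delicate point to be reconciling the quantifier structure in hypothesis (4) with a clean descent. Hypothesis (4) states that for all $\beta<\delta$, $P\not\sqsubseteq X^{\beta}$ implies $\nu(X^{\beta})\succ\nu(X^{\delta})$; to obtain a genuine infinite strictly decreasing chain I must apply this along a suitable increasing sequence of indices below $\gamma$ and check that the strict-inequality direction lines up (the element with the smaller index gets the $\succ$-larger variant value). Since $\gamma\geqslant\omega$ would entail $\gamma$ dominates the infinite ascending sequence $0,1,2,\dots$, evaluating $\nu$ along $X^{0},X^{1},X^{2},\dots$ yields $\nu(X^{0})\succ\nu(X^{1})\succ\nu(X^{2})\succ\cdots$, which is the required infinite strictly $\succ$-descending chain in $W$. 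The only subtlety is that hypothesis (4) must be invoked with the \emph{smaller} index as $\beta$ and the larger as $\delta$, and one must confirm each $X^{n}$ with $n<\omega\leqslant\gamma$ indeed satisfies $P\not\sqsubseteq X^{n}$, which again follows from minimality of $\gamma$. Once this descent is set up correctly, well-foundedness of $\pair{W}{\preceq}$ closes the argument immediately, and the soundness direction requires nothing beyond Th\@.~\ref{th:Fixpoint-Underapproximation}.
```
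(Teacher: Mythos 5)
There is a genuine gap in your opening step. You claim that hypotheses (1)--(3) together with Th\@.~\ref{th:Fixpoint-Underapproximation} already yield some $\gamma\in\mathbb{O}$ with $P\sqsubseteq X^{\gamma}$. They do not. The condition ``$\exists\delta\in\mathbb{O}\mathrel{.}P\sqsubseteq X^{\delta}$'' is \emph{part of} Def\@.~\ref{def:Underapproximation-Sequence} (``ultimately over approximates $P$''), so to invoke the backward direction of Th\@.~\ref{th:Fixpoint-Underapproximation} you would need to have already established exactly the fact you are trying to derive; and the forward direction requires knowing $P\sqsubseteq\Lfp{\sqsubseteq}f$ in advance (and even then would only produce \emph{some} sequence, not the given one). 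A concrete counterexample to your claim: take $X^{\delta}=\bot$ for all $\delta$ and $P\neq\bot$; hypotheses (1)--(3) hold but no $\gamma$ exists. The existence of $\gamma$ is precisely what hypothesis (4) is for. (A secondary, minor point: $X^{\delta}\sqsubseteq f^{\delta}$ is also not part of the \emph{statement} of Th\@.~\ref{th:Fixpoint-Underapproximation}; it is an easy transfinite induction using that $f$ is increasing, which the paper's proof redoes explicitly.)

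The good news is that the descent argument in your second and third paragraphs is exactly the right idea and repairs the proof on its own: do not split into ``existence of $\gamma$'' followed by ``$\gamma<\omega$''. Instead argue directly that if $P\not\sqsubseteq X^{n}$ for every $n<\omega$, then hypothesis (4) applied with $\beta=n$, $\delta=n+1$ gives $\nu(X^{0})\succ\nu(X^{1})\succ\nu(X^{2})\succ\cdots$, an infinite strictly descending chain in the well-founded set $\pair{W}{\preceq}$ --- a contradiction. Hence some $\delta<\omega$ satisfies $P\sqsubseteq X^{\delta}$, and combined with $X^{\delta}\sqsubseteq f^{\delta}\sqsubseteq\Lfp{\sqsubseteq}f$ (transfinite induction) this is the conclusion. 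This is essentially the paper's own proof, which runs the contradiction under the assumption $\forall\delta\in\mathbb{O}\mathrel{.}P\not\sqsubseteq X^{\delta}$ and then observes that well-foundedness forces the witnessing rank below $\omega$; your version, once the faulty appeal to Th\@.~\ref{th:Fixpoint-Underapproximation} is deleted, is the same argument restricted to the natural-number indices, which is arguably cleaner for extracting $\delta<\omega$.
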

\begin{toappendix}
\begin{proof}[Proof of Th\@. \ref{th:Fixpoint-Underapproximation-Variant}]
Observe that, by hypothesis \ref{th:Fixpoint-Underapproximation-Variant}.(1), the transfinite iterates
 $f^{0}=\bot$,
$f^{\delta+1}=f(f^{\delta})$, and $f^{\lambda}=\bigsqcup_{\beta<\lambda}f^{\beta}$ of $f$, starting at $\bot$, are increasing and converging at rank $\epsilon\in\mathbb{O}$ to 
$\Lfp{\subseteq}f$ \cite{CousotCousot-PJM-82-1-1979}. Consider $\pair{X^{\delta}}{\delta\leqslant\epsilon}$ as defined by hypothesis \ref{th:Fixpoint-Underapproximation-Variant}.(3). Since $f$ is increasing, it follows, by transfinite induction, that
$\forall\delta\leqslant\epsilon\mathrel{.}X^{\delta}\sqsubseteq f^{\delta}$. If $\exists\delta\in\mathbb{O}\mathrel{.}P\sqsubseteq X^{\delta}$ then we are done since $P\sqsubseteq X^{\delta}\sqsubseteq f^{\delta}\sqsubseteq \Lfp{\sqsubseteq}f$. Otherwise, $\forall\delta\in\mathbb{O}\mathrel{.}P\not\sqsubseteq X^{\delta}$ and so, by hypothesis \ref{th:Fixpoint-Underapproximation-Variant}.(4), the chain $\pair{\nu(X^{\delta})}{\delta\in\mathbb{O}}$ is strictly $\prec$-decreasing, in contradiction with the hypothesis that $\pair{W}{\preceq}$ is well-founded, proving $\exists\delta\in\mathbb{O}\mathrel{.}P\sqsubseteq X^{\delta}\sqsubseteq f^{\delta}\sqsubseteq \Lfp{\sqsubseteq}f$. Notice that, by well-foundedness, $\pair{\nu(X^{\delta})}{\delta\in\mathbb{O}}$ reaches a minimal element at some rank less than $\omega$ so $\delta<\omega$.
\end{proof}
\end{toappendix}
Because $\delta<\omega$ in Th\@. \ref{th:Fixpoint-Underapproximation-Variant}, the proof method is sound but incomplete, as shown by the following counter example where the property holds but the proof method of Th\@. \ref{th:Fixpoint-Underapproximation-Variant},  is inapplicable.
\begin{example}Consider the complete lattice $\pair{\wp(\mathbb{Z})}{\subseteq}$. Define $f\in\wp(\mathbb{Z})\rightarrow\wp(\mathbb{Z})$ by $f(X)= \{0\}\cup\{x\in \mathbb{Z}\mid x-1\in X\}$. The iterates are $f^{0}=\emptyset$, $f^{n}=\{k\in\mathbb{N}\mid 0\leqslant x< n\}$. The limit is $f^{\omega}=\bigcup_{n\in\mathbb{N}}f^{n}=\mathbb{N}=\Lfp{\subseteq}f$. Take $P=\mathbb{N}$ such that $P\subseteq\Lfp{\subseteq}f$. Then $\forall n\in \mathbb{N}\mathrel{.}P\not\subseteq f^{n}$.
It follows that Def\@. \ref{th:Fixpoint-Underapproximation-Variant} is infeasible since $\forall n\in \mathbb{N}\mathrel{.}P\not\subseteq f^{n}$ implies for all $\beta<\delta$ that $\nu(X^{\beta})\succ\nu(X^{\delta})$. This infinite strictly decreasing chain is in contradiction with the well-foundness hypothesis.
\end{example}

\smallskip

\subsection{Void Intersection With Fixpoint Using Variant Functions}

Turing and Floyd \cite{Turing49-program-proof,Floyd67-1} method for unbounded nondeterminism, uses \emph{reductio ad absurdum}, proving that nontermination is impossible. This idea can also be generalized to fixpoints.

An atom of a poset $\pair{L}{\sqsubseteq}$ is either a minimal element of $L$  if $L$ has no infimum  or covers the infimum $\bot$ otherwise. So the set  of atoms of a poset $\pair{L}{\sqsubseteq}$ is $\textsf{\upshape atoms}(L)$ $\triangleq$ $\{a\in L\mid{}\not\exists x'\in L\mathrel{.}x'\sqsubset a\}$ if $L$ has no infimum  and $\textsf{\upshape atoms}(L)$ $\triangleq$ $\{a\in L\mid \not\exists x'\in L\mathrel{.}\bot\sqsubset x'\sqsubset a\}$ if $\bot$ is the infimum of $L$. The atoms of an element $x$ of $L$ are $\textsf{\upshape atoms}(x)\triangleq\{a\in\textsf{\upshape atoms}(L)\mid a\sqsubseteq x\}$.
A poset is atomic if the atoms of any element $x$ of $L$ have a join which exists and is $x$, that is, $\forall x\in L\mathrel{.}x=\bigsqcup\textsf{\upshape atoms}(x)$. Co-atomicity is $\sqsubseteq$-order-dual. We have \proofinapx
\begin{theorem}[Void intersection with least fixpoint]\label{th:Intersection-Lfp}
We assume that 
(1) $\sextuple{L}{\sqsubseteq}{\bot}{\top}{\sqcap}{\sqcup}$ is an atomic complete lattice;
(2) $f\in L\rightarrow L$ preserves non-empty joins; 
(3) there exists an invariant $I\in L$ of $f$ (i.e.\ such that $f(I)\sqsubseteq I$); 
(4) that there exists a well-founded set $\pair{W}{\preceq}$ and a variant function $\nu\in I\rightarrow W$ such that $\forall x\in\textsf{\upshape atoms}(I)\mathrel{.}(x\neq f(x))\Rightarrow (\nu(x)\succ\nu(f(x)))$;
(5) $Q\in L$; and 
(6) $\forall x\in\textsf{\upshape atoms}( I)\mathrel{.}(\nu(x)\not\succ\nu(f(x)))\Rightarrow (x\sqcap Q=\bot)$.
Then, hypotheses (1) to (6) imply $\Lfp{\sqsubseteq}f \sqcap Q=\bot$.
\end{theorem}
Th\@. \ref{th:Intersection-Lfp}  is useful, in particular, to prove $\Lfp{\sqsubseteq}f=\bot$ for $Q=\top$. If $L=\wp(\Sigma_\bot)$ then $P\subseteq Q$ is $P\cap \neg Q=\emptyset$, another possible use of this theorem.
\begin{toappendix}
\begin{proof}[Proof of Th\@. \ref{th:Intersection-Lfp}]By ref{th:Intersection-Lfp}.(2), $f$ preserves non-empty joins implies that the
 iterates $f^0(\bot)\triangleq\bot$ and $f^{n+1}(\bot)\triangleq f(f^{n}(\bot))$ of $f$ from the infimum $\bot$ on the $L$ form an increasing chain which limit, by ref{th:Intersection-Lfp}.(1) exists in the complete lattice L, and is $\Lfp{\sqsubseteq}f$ = $f^{\omega}(\bot)$ $\triangleq$ $\bigsqcup_{n\in\mathbb{N}}f^{n}(\bot)$  \cite{CousotCousot-PJM-82-1-1979}. Moreover, by ref{th:Intersection-Lfp}.(3), $f(I)\sqsubseteq I$ implies, by recurrence, that $\forall n\in\mathbb{N}\cup\{\omega\}\mathrel{.}f^{n}(\bot)\sqsubseteq I$ since by ref{th:Intersection-Lfp}.(2) $f$ preserves non-empty joins so is increasing. 

\smallskip

If $f(\bot)=\bot$ then obviously $\Lfp{\sqsubseteq}f=\bot=\bot\sqcap Q$. So in the following, we assume that $\bot\neq f(\bot)$ that is, $\bot\sqsubset f(\bot)$ by definition of the infimum $\bot$.

In an atomic complete lattice, all elements have a non-empty set of atoms but $\bot$. Since $\bot\neq f(\bot)$, we have  $\textsf{\upshape atoms}(f(\bot))\neq\emptyset$. 

We have
$\bot=f^{0}(\bot)\sqsubset f^{1}(\bot)=\bigsqcup\{f^0(x)\mid x\in \textsf{\upshape atoms}(f(\bot))\}$ by definition of the identity $f^0$. 
Assume that, for $n\geqslant 1$,   $f^n(\bot)=\bigsqcup\{f^{n-1}(x)\mid x\in \textsf{\upshape atoms}(f(\bot))\}$ by recurrence hypothesis.
By ref{th:Intersection-Lfp}.(2), $f$ preserves non-empty joins so  $f^{n+1}(\bot)=f(f^{n}(\bot))$ =
$f(\bigsqcup\{f^{n-1}(x)\mid x\in \textsf{\upshape atoms}(f(\bot))\})$ =
$\bigsqcup\{f(f^{n-1}(x))\mid x\in \textsf{\upshape atoms}(f(\bot))\}$ =
$\bigsqcup\{f^{n}(x)\mid x\in \textsf{\upshape atoms}(f(\bot))\}$. By recurrence, $\forall n\geqslant 1\mathrel{.}f^n(\bot)=\bigsqcup\{f^{n-1}(x)\mid x\in \textsf{\upshape atoms}(f(\bot))\}$. 

By ref{th:Intersection-Lfp}.(1) and definition of an atom $x\in \textsf{\upshape atoms}(f(\bot))$, we have $\bot\sqsubset x\sqsubseteq f(\bot)$ that is
 $f^0(\bot)\sqsubset f^0(x)\sqsubseteq f(\bot)$. Since, by ref{th:Intersection-Lfp}.(2), $f$ preserves non-empty joins it is increasing and therefore $\forall n\in\mathbb{N}\mathrel{.}
f^n(\bot)\sqsubseteq f^n(x)\sqsubseteq f^{n+1}(\bot)$ by recurrence. By definition of a least upper bound, it follows that $\bigsqcup_{n\in\mathbb{N}}f^n(\bot)\sqsubseteq\bigsqcup_{n\in\mathbb{N}} f^n(x)\sqsubseteq\bigsqcup_{n\in\mathbb{N}} f^{n+1}(\bot)=\bigsqcup_{n\in\mathbb{N}}f^n(\bot)$ since $\bot\sqsubseteq f(\bot)=f^{1}(\bot)$. This implies that
$\forall x\in\textsf{\upshape atoms}(f(\bot))\mathrel{.}\bigsqcup_{n\in\mathbb{N}} f^n(x)=\Lfp{\sqsubseteq}f$.

Because $\forall n\geqslant 1\mathrel{.}f^n(x)=f(f^{n-1}(x))\sqsubseteq I$, ref{th:Intersection-Lfp}.(4) implies that the chain $\pair{\nu(f^n(x))}{n\geqslant 1}$ in the well-founded set $\pair{W}{\preceq}$ cannot infinitely decrease so is ultimately stationary at some $n=\ell_x$. By contraposition of ref{th:Intersection-Lfp}.(4), $\forall y\sqsubseteq I\mathrel{.} (\nu(y)\not\succ\nu(f(y)))\implies(y= f(y))\sqsubseteq I$ so that $f^{\ell_x}(x)=f^{m}(x)\sqsubseteq I$ for all $m\geqslant \ell_x$. By ref{th:Intersection-Lfp}.(5) and ref{th:Intersection-Lfp}.(6), it follows that $\forall m\geqslant \ell_x\mathrel{.}f^{m}(x)\sqcap Q=\bot$

It follows, by definition of the least upper bound, that $\Lfp{\sqsubseteq}f$ =
$\bigsqcup_{n\in\mathbb{N}}f^{n}$ =
$\bigsqcup_{n\geqslant 1}\bigsqcup\{f^{n-1}(x)\mid x\in \textsf{\upshape atoms}(f(\bot))\}$ =
$\bigsqcup\{\bigsqcup_{n\geqslant 1}f^{n-1}(x)\mid x\in \textsf{\upshape atoms}(f(\bot))\}$ =
$\bigsqcup\{f^{\ell_x}(x)\mid x\in \textsf{\upshape atoms}(f(\bot))\}$.\
We have shown that $\forall x\in \textsf{\upshape atoms}(f(\bot))\mathrel{.}f^{\ell_x}(x)\sqcap Q=\bot$ so
$\Lfp{\sqsubseteq}f\sqcap Q$ =
$\bigsqcup\{f^{\ell_x}(x)\mid x\in \textsf{\upshape atoms}(f(\bot))\}\sqcap Q=\bot$ in the atomic complete lattice $L$.
\end{proof}
\end{toappendix}

The proof method of Th\@. \ref{th:Intersection-Lfp} is incomplete, as shown by counter-example \ref{counter-example-th:Intersection-Lf} in the appendix.
\begin{toappendix}
\begin{example}[Counter-example to the completeness of Th\@. \ref{th:Intersection-Lfp}]\label{counter-example-th:Intersection-Lf}
\begin{figure}[h]
\includegraphics[scale=0.15]{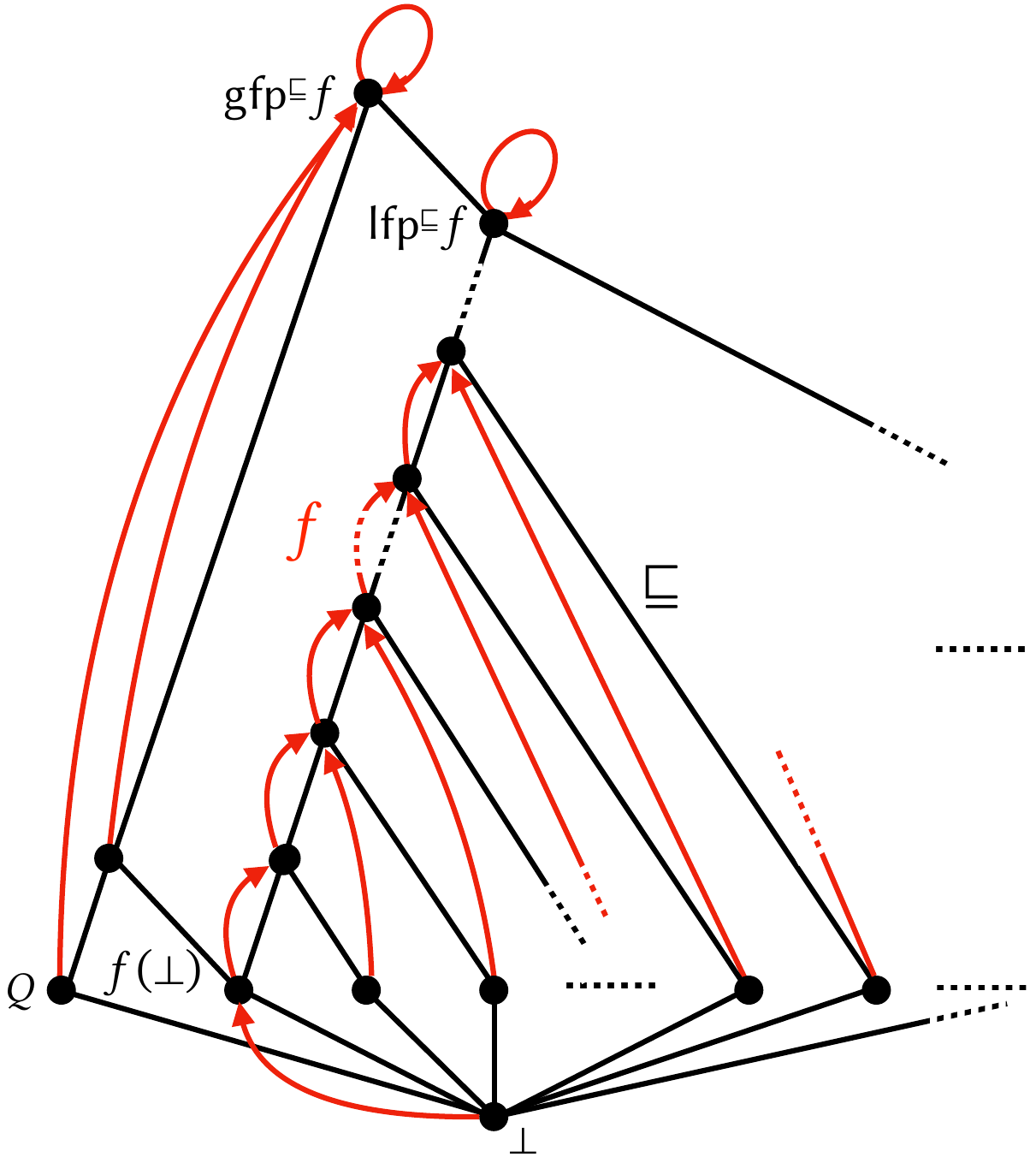}
\caption{Counter-example to the completeness of Th\@. \ref{th:Intersection-Lfp}}
\end{figure}
The iterates $\pair{f^n(\bot)}{n\in\mathbb{N}}$ of $f$ from $\bot$ are the same as those for the atoms of $f(\bot)$ which are nothing but $f(\bot)$ itself and are infinitely strictly increasing, so the hypothesis (4) of Th\@. \ref{th:Intersection-Lfp} cannot be satisfied since it would imply convergence in finitely many steps. 
\end{example}
\end{toappendix}
The completeness of Turing/Floyd variant function method is due to the additional property that the inverse of the transition relation of a terminating program is well-founded \proofinapx\ (see example \ref{ex:contructing-f} in the appendix).
\begin{toappendix}
\begin{example}[Transfinite variant function]\label{ex:contructing-f} The construction of the variant function $\nu$ for the program \texttt{\small x=[0,$\infty$]; while (x>0) \{x=x-1; y=[0,$\infty$]; while (y>0) y=y-1;\}} with unbounded nondeterminism is illustrated on Fig\@. \ref{fig:contructing-f}.
\begin{figure}[ht]
\includegraphics[width=0.6\textwidth]{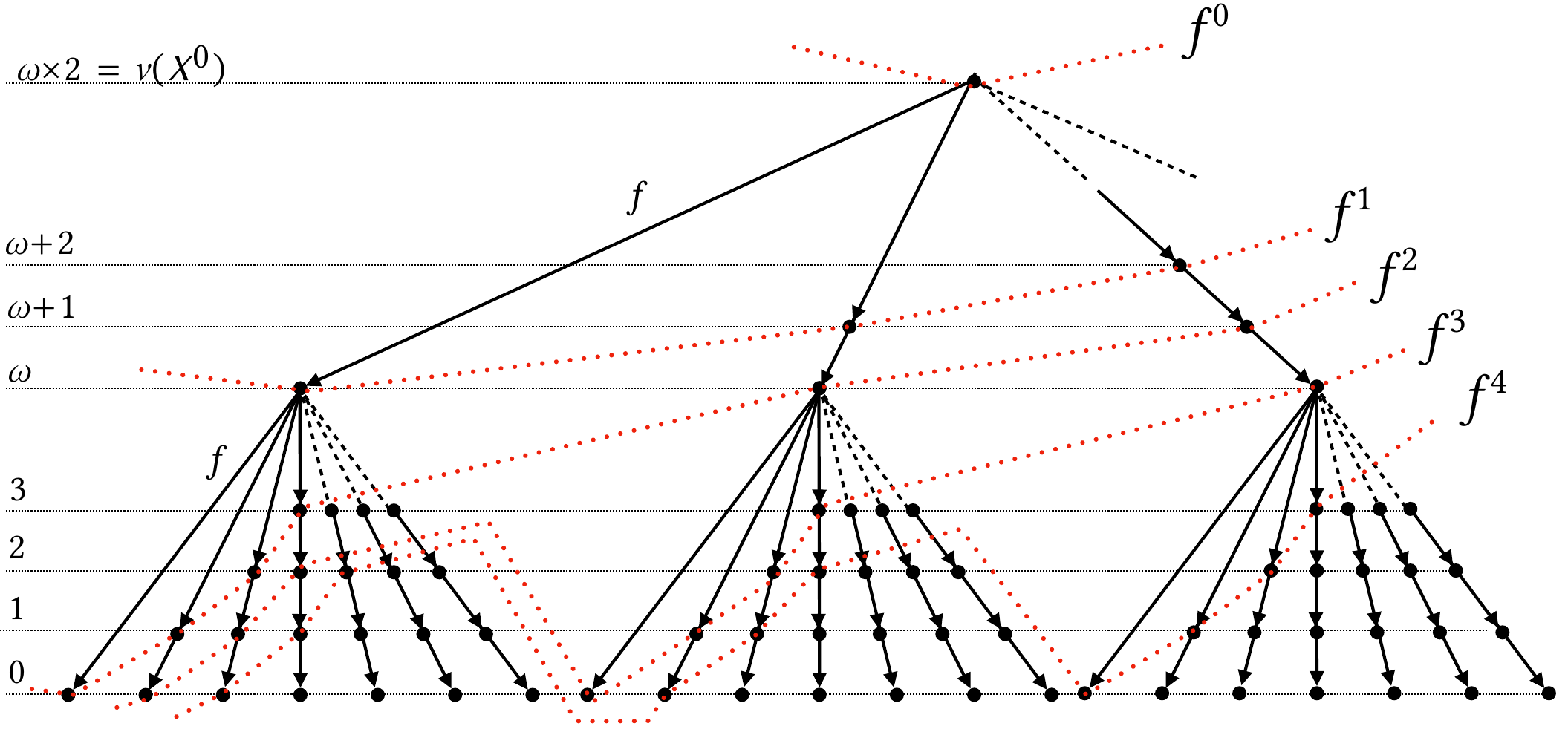}
\caption{Construction of the variant function $\nu$\label{fig:contructing-f}}
\end{figure}
\end{example}
\end{toappendix}
\begin{theorem}[Turing/Floyd]\label{th:Turing:Floyd}Let $r\in\wp(\mathcal{X}\times\mathcal{X})$ be a relation on a set $\mathcal{X}$ and $P\in\wp(\mathcal{X})$. Then
\abovedisplayskip0.5\abovedisplayskip\belowdisplayskip0.5\belowdisplayskip\begin{eqntabular}[fl]{@{\qquad}rl}
&\{x\in\mathcal{X}\mid \exists\sigma\in\mathbb{N}\rightarrow\mathcal{X}\mathrel{.}\sigma_0=x\in P\wedge\forall i\in\mathbb{N}\mathrel{.}\pair{\sigma_i}{\sigma_{i+1}}\in r\}=\emptyset\nonumber\\
\Leftrightarrow&\{x\in\mathcal{X}\mid x\in P\wedge\exists\pair{W}{\preceq}\in\mathfrak{Wf}\mathrel{.}\exists
I\in\wp(\mathcal{X})\mathrel{.}
 P\cup \textsf{\upshape post}(r)I\subseteq I\wedge\exists\nu\in I\rightarrow W\mathrel{.}{}\nonumber\\
\renumber{$\forall y\in I\mathrel{.}\forall y'\in \mathcal{X}.\pair{y}{y'}\in r\Rightarrow \nu(y)\succ\nu(y')\}$}
\end{eqntabular}
\end{theorem}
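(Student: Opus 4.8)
This theorem characterizes \emph{termination} (the absence of an infinite $r$-chain from $P$) by the existence of a Turing/Floyd invariant-plus-variant pair. The plan is to prove the two directions separately, exploiting the material already developed in the excerpt, in particular Lemma~\ref{lem:Fbot:nontermination} on the emptiness of $\Gfp{\subseteq}{F^\bot}$ and the fixpoint under-approximation machinery of Section~\ref{sect:FixpointInduction}. First I would observe that the left-hand side is exactly the statement that there is no function $\sigma\in\mathbb{N}\rightarrow\mathcal{X}$ with $\sigma_0\in P$ tracing an infinite $r$-path; phrasing termination as the nonexistence of such an infinite sequence is precisely the form appearing in Lemma~\ref{lem:Fbot:nontermination} (taking $r=\sqb{\texttt{\small B}}\fatsemi\sqb{\texttt{\small S}^e}$ there), so the left side is the semantic notion of well-foundedness of $r$ restricted to states reachable from $P$.

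For the direction ``termination $\Rightarrow$ existence of $\pair{W}{\preceq},I,\nu$'' (completeness), the natural construction is to take $W$ to be a set of reachable states with $\preceq$ the inverse reachability relation, and $\nu$ the rank function. Concretely, I would let $I=\textsf{\upshape post}(r^\ast)P$ be the set of states reachable from $P$ (the least solution of $P\cup\textsf{\upshape post}(r)I\subseteq I$), and define $\pair{W}{\preceq}$ on $I$ by $y'\prec y \Leftrightarrow \pair{y}{y'}\in r^{+}$ (transitive closure restricted to $I$). The emptiness of the infinite-chain set makes $\pair{W}{\preceq}$ well-founded: any infinite strictly $\prec$-descending chain would assemble into an infinite $r$-path from a state of $P$, contradicting the hypothesis. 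Then $\nu$ can be taken as the identity into $\pair{W}{\preceq}$ (or, using the rank abstraction $\pair{\mathfrak{Wf}}{\subseteq}\galoiS{\rho}{\textsf{\upshape id}}\pair{\mathbb{O}}{\leqslant}$ from the appendix, its ordinal rank $\rho$), so that $\pair{y}{y'}\in r$ with $y\in I$ forces $y'\in I$ and $y'\prec y$, i.e.\ $\nu(y)\succ\nu(y')$. This shows the right-hand set contains every $x\in P$, so it equals $P$-reachable-terminating, which under the hypothesis is all of $\{x\in P\}$ on both sides.

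For the direction ``existence of $\pair{W}{\preceq},I,\nu$ $\Rightarrow$ termination'' (soundness), I would argue by contradiction. Suppose $x\in P$ admits an infinite $r$-path $\pair{\sigma_i}{i\in\mathbb{N}}$ with $\sigma_0=x$. Since $P\subseteq I$ and $\textsf{\upshape post}(r)I\subseteq I$, a straightforward induction gives $\sigma_i\in I$ for all $i$. The variant condition $\forall y\in I\mathrel{.}\forall y'\mathrel{.}\pair{y}{y'}\in r\Rightarrow \nu(y)\succ\nu(y')$ then yields an infinite strictly $\succ$-decreasing chain $\nu(\sigma_0)\succ\nu(\sigma_1)\succ\cdots$ in $\pair{W}{\preceq}$, contradicting well-foundedness. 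Hence no such $x$ has an infinite path, giving the inclusion of the right-hand set in the left-hand set; the matching argument compares the two defining predicates pointwise for each $x\in P$.

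The cleanest way to package both directions is to recognize this as an instance of the abstract void-intersection / under-approximation theorems already proved: the left side is $\Lfp{\subseteq}F^\bot=\emptyset$ (via Lemma~\ref{lem:Fbot:nontermination}) and the right side is the variant-function hypothesis of Theorem~\ref{th:Intersection-Lfp} with $Q=\top$, so the equivalence should follow by specializing Theorem~\ref{th:Intersection-Lfp} to the concrete lattice $\pair{\wp(\mathcal{X})}{\subseteq}$ together with its completeness, which the excerpt notes holds \emph{because} the inverse transition relation of a terminating program is well-founded. I expect the main obstacle to be the completeness direction: exhibiting $\pair{W}{\preceq}$ and $\nu$ requires that the reachable sub-relation $r$ restricted to $I$ genuinely be well-founded and that its rank is a legitimate variant, which is exactly the nontrivial content flagged in the remark preceding example~\ref{ex:contructing-f}. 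Care is needed with unbounded nondeterminism, where the rank may be a transfinite ordinal rather than a natural number, so I would lean on the ordinal rank abstraction $\rho$ rather than attempting a bounded variant.
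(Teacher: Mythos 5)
Your two main directions are correct, and your completeness construction is essentially the paper's: the paper also takes $I=W=\textsf{\upshape post}(r^{\ast})P$, defines $x\succ y$ by $x\in W\wedge\pair{x}{y}\in r$ (you use the transitive closure $r^{+}$, which works equally well since the paper does not require $\preceq$ to be an order), takes $\nu$ to be the identity, and proves well-foundedness by exactly your splicing argument: an infinite $\prec$-descending chain in $W$ prepended with a finite $r$-path from $P$ would contradict the left-hand side. Where you genuinely diverge is the soundness direction: you give the elementary direct argument (an infinite path from $P$ stays in $I$ by invariance, so $\nu$ would descend forever in $\pair{W}{\preceq}$), whereas the paper routes it through the order dual of Theorem~\ref{th:Fixpoint-Underapproximation-Variant}, after rewriting $P\cup\textsf{\upshape post}(r)I\subseteq I$ as $I\subseteq P\cap\widetilde{\textsf{\upshape pre}}(r)I$. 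Your version is more self-contained; the paper's buys the observation that the Turing/Floyd method is literally an instance of its general fixpoint-under-approximation-with-variant principle, which is the point it wants to make.

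One caveat on your closing ``packaging'' remark: it does not go through as stated. The left-hand side corresponds to the emptiness of a \emph{greatest} fixpoint (as in Lemma~\ref{lem:Fbot:nontermination}), not $\Lfp{\subseteq}F^{\bot}$, and Theorem~\ref{th:Intersection-Lfp} is explicitly shown to be \emph{incomplete} in the paper, so it cannot deliver the ``only if'' half of the biconditional; appealing to ``its completeness, which holds because the inverse transition relation of a terminating program is well-founded'' is circular, since that well-foundedness is precisely what Theorem~\ref{th:Turing:Floyd} is establishing. This is harmless because your two-direction argument stands on its own, but the repackaging should be dropped.
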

Notice that the soundness proof \ifshort given in the appendix \fi uses (the dual of) Th\@. \ref{th:Fixpoint-Underapproximation-Variant} which shows that it is a generalization for Turing/Floyd variant method.
\begin{toappendix}
\label{sec:apx:proof:th:Turing:Floyd}
\begin{proof}[Proof of Th\@. \ref{th:Turing:Floyd}]
For soundness ($\supseteq$), we look for $P\cap \neg\textsf{\upshape pre}(r)\{\bot\}$ = $P\cap \widetilde{\textsf{\upshape pre}}(r)\Sigma$ and apply the order dual of Th\@. \ref{th:Fixpoint-Underapproximation-Variant} observing that $I\subseteq P\cap \widetilde{\textsf{\upshape pre}}(r)I$ is equivalent to $P\cup \textsf{\upshape post}(r)I\subseteq I$.

For completeness ($\supseteq$), we let $r^{\ast}=\Lfp{\subseteq}\LAMBDA{X}\textsf{\upshape id}\cup X\comp r$ and choose $W=I=\textsf{\upshape post}(r^{\ast})P$, $\nu(x)=x$, and $x\succ y\triangleq(x\in W\wedge \pair{x}{y}\in r)$. We must prove the following three conditions (we write $a\stackrel{r}{\longrightarrow}b$ for $\pair{a}{b}\in r$).
\begin{enumerate}[leftmargin=*]

\item $\pair{W}{\preceq}\in\mathfrak{Wf}$.

By reductio ad absurdum, assume that there exists an infinite sequence strictly decreasing sequence $x_0\succ x_1\succ\ldots$ of elements of $W=I$. 
By definition of $W$, $x_0\in\textsf{\upshape post}(r^{\ast})P$ so that there exists a finite sequence $y_1\in P$, ${y_1}\stackrel{r}{\longrightarrow}{y_2}\stackrel{r}{\longrightarrow}\ldots \stackrel{r}{\longrightarrow}{y_{n-1}}\stackrel{r}{\longrightarrow}{y_n}$ with $y_i\in W = I$, $i\in\interval{1}{n}$, $n\geqslant 0$ ($0$ if $x_0\in P$) and $y_n=x_0$. By definition of $x\succ y$ implying $x\stackrel{r}{\longrightarrow}y$, there exists an infinite sequence 
$y_1\in P$, ${y_1}\stackrel{r}{\longrightarrow}{y_2}\stackrel{r}{\longrightarrow}\ldots \stackrel{r}{\longrightarrow}{y_{n-1}}\stackrel{r}{\longrightarrow}x_0\stackrel{r}{\longrightarrow} x_1\stackrel{r}{\longrightarrow}\ldots$ in contradiction with $\neg(\exists\sigma\in\mathbb{N}\rightarrow\mathcal{X}\mathrel{.}\sigma_0=x\in P\wedge\forall i\in\mathbb{N}\mathrel{.}\pair{\sigma_i}{\sigma_{i+1}}\in r)$.

\item By the fixpoint abstraction Th\@. \ref{th:fixpoint-abstraction}, the abstraction of $r^{\ast}=\Lfp{\subseteq}\LAMBDA{X}\textsf{\upshape id}\cup X\comp r$ by $\LAMBDA{X}\textsf{\upshape post}(X)P$ is  $W=I=\textsf{\upshape post}(r^{\ast})P = \Lfp{\subseteq}\LAMBDA{X}\textsf{\upshape P}\cup \textsf{\upshape post}(r)X$ so that $I=\textsf{\upshape P}\cup \textsf{\upshape post}(r)I$ hence $P\cup \textsf{\upshape post}(r)I\subseteq I$ by reflexivity.

\item if $y\in W=I=\textsf{\upshape post}(r^{\ast})P$ and $y'\in \mathcal{X}$ is such that $\pair{y}{y'}\in r$ then $y'\in \textsf{\upshape post}(r^{\ast})P=I=W$. Then $(y\in W\wedge \pair{y}{y'}\in r)$
implies $y\succ y'$ by definition of $\succ$ and therefore $\nu(y)\succ\nu(y')$ by definition of $\nu$.\qed
\end{enumerate}\let\qed\relax
\end{proof}
\end{toappendix}
Notice that if the intersection of $\Gfp{\sqsubseteq}f$ with $Q$ is empty ($\bot$ in the lattice) then so is 
the intersection of $\Lfp{\sqsubseteq}f$ with $Q$ but not conversely, so in addition to theorem 
\ref{th:Intersection-Lfp}, we also need the following \proofinapx
\begin{theorem}[Void intersection with greatest fixpoint]\label{th:Fixpoint-Gfp-intersection-Variant}
We assume that 
(1) $\sextuple{L}{\sqsubseteq}{\bot}{\top}{\sqcap}{\sqcup}$ is a coatomic complete lattice;
(2) $f\in L\rightarrow L$ preserves non-empty meets; 
(3) there exists a coinvariant $I\in L$ of $f$ (i.e.\ such that $I\sqsubseteq f(I)$); 
(4) that there exists a well-founded set $\pair{W}{\preceq}$ and a variant function $\nu\in I\rightarrow W$ such that $\forall x\in\textsf{\upshape coatoms}( I)\mathrel{.}(x\neq f(x))\Rightarrow (\nu(x)\succ\nu(f(x)))$;
(5) $Q\in L$; and 
(6) $\forall x\in\textsf{\upshape coatoms}( I)\mathrel{.}(\nu(x)\not\succ\nu(f(x)))\Rightarrow (x\sqcap Q=\bot)$.
Then, hypotheses (1) to (6) imply $\Gfp{\sqsubseteq}f \sqcap Q=\bot$.
\end{theorem}
Notice that Th\@. \ref{th:Fixpoint-Gfp-intersection-Variant}, as well as its proof  \ifshort in Sect\@. \ref{sec:apx:proof:th:Fixpoint-Gfp-intersection-Variant} of the appendix, \fi are not the order dual of 
Th\@. \ref{th:Intersection-Lfp} since (6) have the same conclusion $x\sqcap Q=\bot$ and the dual of the conclusion $\Lfp{\sqsubseteq}f \sqcap Q=\bot$ would be $\Gfp{\sqsubseteq}f \sqcup Q=\top$.
\begin{toappendix}
\label{sec:apx:proof:th:Fixpoint-Gfp-intersection-Variant}
\begin{proof}[Proof of Th\@. \ref{th:Fixpoint-Gfp-intersection-Variant}]By (2), $f$ preserves non-empty meets implies that the
 iterates $f^0(\top)\triangleq\top$ and $f^{n+1}(\bot)\triangleq f(f^{n}(\bot))$ of $f$ from the supremum $\top$ on the $L$ form an decreasing chain which limit, by (1) exists in the complete lattice L, and is $\Gfp{\sqsubseteq}f$ = $f^{\omega}(\bot)$ $\triangleq$ $\bigsqcap_{n\in\mathbb{N}}f^{n}(\top)$  \cite{CousotCousot-PJM-82-1-1979}. Moreover, by (3), $I \sqsubseteq f(I)$ implies, by recurrence, that $\forall n\in\mathbb{N}\cup\{\omega\}\mathrel{.}I\sqsubseteq f^{n}(\top)$ since by (2) $f$ preserves non-empty meets so is increasing. 

\smallskip

If $f(\top)=\top$ then obviously $\Gfp{\sqsubseteq}f=\top$. By (6) $\forall \top \sqsupseteq I$ and $(\nu(\top)\succ\nu(f(\top)))$ implies $ (\top\sqcap Q=\bot)$. So in the following, we assume that $\top\neq f(\top)$ that is, $\top\sqsupset f(\top)$ by definition of the supremum $\top$.

In a coatomic complete lattice, all elements have a non-empty set of coatoms but $\top$. Since $\top\neq f(\top)$, we have  $\textsf{\upshape coatoms}(f(\top))\neq\emptyset$. 

We have
$\top=f^{0}(\top)\sqsupset f^{1}(\top)=\bigsqcap\{f^0(x)\mid x\in \textsf{\upshape coatoms}(f(\top))\}$ by definition of the identity $f^0$. 
Assume that, for $n\geqslant 1$,   $f^n(\top)=\bigsqcap\{f^{n-1}(x)\mid x\in \textsf{\upshape coatoms}(f(\top))\}$ by recurrence hypothesis.
By (2), $f$ preserves non-empty meets so  $f^{n+1}(\top)=f(f^{n}(\top))$ =
$f(\bigsqcap\{f^{n-1}(x)\mid x\in \textsf{\upshape coatoms}(f(\top))\})$ =
$\bigsqcap\{f(f^{n-1}(x))\mid x\in \textsf{\upshape coatoms}(f(\top))\}$ =
$\bigsqcap\{f^{n}(x)\mid x\in \textsf{\upshape coatoms}(f(\top))\}$. By recurrence, $\forall n\geqslant 1\mathrel{.}f^n(\top)=\bigsqcup\{f^{n-1}(x)\mid x\in \textsf{\upshape coatoms}(f(\top))\}$. 

By (1) and definition of a coatom $x\in \textsf{\upshape coatoms}(f(\top))$, we have $\top\sqsupset x\sqsupset f(\top)$ that is
 $f^0(\top)\sqsupset f^0(x)\sqsupset f(\top)$. Since, by (2), $f$ preserves non-empty meets it is increasing and therefore $\forall n\in\mathbb{N}\mathrel{.}
f^n(\top)\sqsupseteq f^n(x)\sqsupseteq f^{n+1}(\top)$ by recurrence. By definition of a greatest lower bound, it follows that $\bigsqcap_{n\in\mathbb{N}}f^n(ç)\sqsupseteq\bigsqcap_{n\in\mathbb{N}} f^n(x)\sqsubseteq\bigsqcap_{n\in\mathbb{N}} f^{n+1}(\top)=\bigsqcap_{n\in\mathbb{N}}f^n(\top)$ since $ç\sqsupseteq f(ç)=f^{1}(ç)$. This implies that
$\forall x\in\textsf{\upshape coatoms}(f(ç))\mathrel{.}\bigsqcap_{n\in\mathbb{N}} f^n(x)=\Gfp{\sqsubseteq}f$.

Because $\forall n\geqslant 1\mathrel{.}f^n(x)=f(f^{n-1}(x))\sqsupseteq I$, (4) implies that the chain $\pair{\nu(f^n(x))}{n\geqslant 1}$ in the well-founded set $\pair{W}{\preceq}$ cannot infinitely decrease so is ultimately stationary at some $n=\ell_x$. By contraposition of (4), $\forall y\sqsupseteq I\mathrel{.} (\nu(y)\not\succ\nu(f(y)))\implies(y= f(y))\sqsupseteq I$ so that $f^{\ell_x}(x)=f^{m}(x)\sqsupseteq I$ for all $m\geqslant \ell_x$. By (5) and (6), it follows that $\forall m\geqslant \ell_x\mathrel{.}f^{m}(x)\sqcap Q=\bot$

It follows, by definition of the greatest lower bound, that $\Gfp{\sqsubseteq}f$ =
$\bigsqcap_{n\in\mathbb{N}}f^{n}$ =
$\bigsqcap_{n\geqslant 1}\bigsqcap\{f^{n-1}(x)\mid x\in \textsf{\upshape atoms}(f(\top))\}$ =
$\bigsqcap\{\bigsqcap_{n\geqslant 1}f^{n-1}(x)\mid x\in \textsf{\upshape atoms}(f(\top))\}$ =
$\bigsqcap\{f^{\ell_x}(x)\mid x\in \textsf{\upshape coatoms}(f)(\top)\}$.
We have shown that $\forall x\in \textsf{\upshape coatoms}(f(\top))\mathrel{.}f^{\ell_x}(x)\sqcap Q=\bot$ so
$\Lfp{\sqsubseteq}f\sqcap Q$ =
$\bigsqcap\{f^{\ell_x}(x)\mid x\in \textsf{\upshape coatoms}(f(\top))\}\sqcap Q=\bot$ in the coatomic complete lattice $L$.
\end{proof}
\end{toappendix}

\smallskip

\subsection{Fixpoint Non Emptiness}\label{sec:FixpointNonEmptiness}
Another result to handle greatest fixpoints, e.g.\ to prove definite nontermination, is the following theorem \proofinapx.
\begin{theorem}[Greatest fixpoint non emptiness]\label{th:Greatest-fixpoint-non-emptiness}
Let $f\in L\mathrel{\smash{\stackrel{i}{\longrightarrow}}}L$ be an increasing function of a complete lattice $\sextuple{L}{\sqsubseteq}{\bot}{\top}{\sqcup}{\sqcap}$ and $P\in L\setminus\{\bot\}$. Then $\Gfp{\sqsubseteq}f\sqcap P\neq \bot$ if and only if $\forall X\in L\mathrel{.} (\Gfp{\sqsubseteq}f\sqsubseteq X\wedge f(X)\sqsubseteq X \wedge X\sqcap P \neq \bot)\Rightarrow(f(X)\sqcap P \neq \bot)$.
\end{theorem}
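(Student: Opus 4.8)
The plan is to prove the two directions of the biconditional in Theorem~\ref{th:Greatest-fixpoint-non-emptiness} separately, treating the ``only if'' direction as essentially trivial and concentrating the real work on the ``if'' direction. Throughout I would exploit the constructive characterization of the greatest fixpoint: since $f$ is increasing on a complete lattice, by the constructive version of Tarski's theorem \cite{CousotCousot-PJM-82-1-1979} the decreasing transfinite iterates $f^{0}=\top$, $f^{\delta+1}=f(f^{\delta})$, and $f^{\lambda}=\bigsqcap_{\beta<\lambda}f^{\beta}$ converge to $\Gfp{\sqsubseteq}f$ at some rank $\epsilon$. The key structural fact I expect to use is that every iterate $f^{\delta}$ is a \emph{prefixpoint approximant from above}: it satisfies $\Gfp{\sqsubseteq}f\sqsubseteq f^{\delta}$ and $f(f^{\delta})=f^{\delta+1}\sqsubseteq f^{\delta}$, so each iterate is exactly an $X$ of the form appearing in the hypothesis (post-fixpoint $f(X)\sqsubseteq X$ lying above the greatest fixpoint).

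For the ``only if'' direction, suppose $\Gfp{\sqsubseteq}f\sqcap P\neq\bot$. Take any $X$ with $\Gfp{\sqsubseteq}f\sqsubseteq X$, $f(X)\sqsubseteq X$, and $X\sqcap P\neq\bot$. I would observe that since $f(X)\sqsubseteq X$, $X$ is a postfixpoint, so $\Gfp{\sqsubseteq}f\sqsubseteq f(X)$ as well (the greatest fixpoint is below every postfixpoint, by Tarski), whence $\Gfp{\sqsubseteq}f\sqcap P\sqsubseteq f(X)\sqcap P$; since the left side is $\neq\bot$, so is $f(X)\sqcap P$. This gives the implication directly.

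For the ``if'' direction I would argue by transfinite induction along the iterates that $\forall\delta\in\mathbb{O}\mathrel{.}f^{\delta}\sqcap P\neq\bot$, and then conclude at the stabilization rank $\epsilon$ where $f^{\epsilon}=\Gfp{\sqsubseteq}f$. The base case $f^{0}\sqcap P=\top\sqcap P=P\neq\bot$ holds by the assumption $P\in L\setminus\{\bot\}$. For a successor step, assuming $f^{\delta}\sqcap P\neq\bot$, I would set $X=f^{\delta}$: it satisfies $\Gfp{\sqsubseteq}f\sqsubseteq f^{\delta}$ and $f(f^{\delta})\sqsubseteq f^{\delta}$ (the iterates decrease), and $f^{\delta}\sqcap P\neq\bot$ by hypothesis, so the assumed implication yields $f(f^{\delta})\sqcap P=f^{\delta+1}\sqcap P\neq\bot$. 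The main obstacle is the \textbf{limit case}: I cannot in general split $\bigsqcap_{\beta<\lambda}f^{\beta}\sqcap P$ into the individual meets, so I would need a genuine argument that the meet of a decreasing chain, each term of which has nonempty meet with $P$, still has nonempty meet with $P$. Here I expect to use that the chain $\pair{f^{\beta}}{\beta<\lambda}$ is $\sqsubseteq$-decreasing and that in a complete lattice $f^{\lambda}\sqcap P=\bigsqcap_{\beta<\lambda}(f^{\beta}\sqcap P)$, reducing the limit case to: the greatest lower bound of a decreasing chain of elements each above $\bot$ in their meet with $P$ need not itself meet $P$ nontrivially in an arbitrary lattice.

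This is precisely where I suspect the statement as given requires a further hypothesis or a compactness-type argument, and it is the step I would scrutinize most carefully; one clean route is to invoke the assumed implication not just at successor stages but to show that the set of postfixpoints $X\sqsupseteq\Gfp{\sqsubseteq}f$ with $X\sqcap P\neq\bot$ is closed under the relevant meets, or to run the induction only over the (possibly non-continuous) iterates while using a separate Zorn/minimality argument to locate a postfixpoint lying above $\Gfp{\sqsubseteq}f$ that is minimal subject to $X\sqcap P\neq\bot$ and showing it must equal $\Gfp{\sqsubseteq}f$ itself. I would therefore first try to close the limit case directly, and if that fails, fall back on the minimal-postfixpoint argument: consider $\mathcal{C}=\{X\mid\Gfp{\sqsubseteq}f\sqsubseteq X,\ f(X)\sqsubseteq X,\ X\sqcap P\neq\bot\}$, show $\mathcal{C}\neq\emptyset$ (it contains $\top$), and use the hypothesis to show $f$ maps $\mathcal{C}$ into itself, so that iterating $f$ from any element of $\mathcal{C}$ and taking the infimum stays in the closure, ultimately forcing $\Gfp{\sqsubseteq}f\in\mathcal{C}$, i.e.\ $\Gfp{\sqsubseteq}f\sqcap P\neq\bot$.
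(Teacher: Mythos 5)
Your overall strategy coincides with the paper's: the ``only if'' direction is a short monotonicity argument, and the ``if'' direction is a transfinite induction establishing $X^{\delta}\sqcap P\neq\bot$ along the decreasing iterates $X^{0}=\top$, $X^{\delta+1}=f(X^{\delta})$, $X^{\lambda}=\bigsqcap_{\beta<\lambda}X^{\beta}$ converging to $\Gfp{\sqsubseteq}f$. One small repair first: in the ``only if'' direction your justification ``the greatest fixpoint is below every postfixpoint'' is not a valid general principle (Knaster--Tarski gives that for the \emph{least} fixpoint; take $f=\textsf{\upshape id}$ for a counterexample). You do not need it: the hypothesis $\Gfp{\sqsubseteq}f\sqsubseteq X$ and monotonicity already give $\Gfp{\sqsubseteq}f=f(\Gfp{\sqsubseteq}f)\sqsubseteq f(X)$, hence $\bot\neq\Gfp{\sqsubseteq}f\sqcap P\sqsubseteq f(X)\sqcap P$, which is exactly the paper's argument.

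The limit case you leave open is the genuine gap, and your instinct to scrutinize it is well founded. The paper closes it by rewriting $X^{\lambda}\sqcap P=(\bigsqcap_{\beta<\lambda}X^{\beta})\sqcap P=\bigsqcap_{\beta<\lambda}(X^{\beta}\sqcap P)$ (valid, since arbitrary meets associate) and then asserting that this meet is $\neq\bot$ because $\pair{X^{\beta}\sqcap P}{\beta<\lambda}$ is a decreasing chain of non-$\bot$ elements. That final inference is precisely the one you doubted, and it fails in an arbitrary complete lattice: in $\wp(\mathbb{N})$ the decreasing chain $\{n\mid n\geqslant k\}$, $k\in\mathbb{N}$, consists of nonempty sets with empty intersection. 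Indeed $L=\wp(\mathbb{N})$, $f(X)=\{n+1\mid n\in X\}$, $P=\mathbb{N}$ realizes this with the actual iterates: $f$ is increasing, every nonempty postfixpoint (a set closed under successor) has nonempty image so the right-hand side of the equivalence holds, yet $\Gfp{\subseteq}f=\emptyset$ so the left-hand side fails. So your suspicion is correct that some further hypothesis is needed to push the induction through limit ordinals --- stationarity of $\pair{X^{\beta}\sqcap P}{\beta<\lambda}$, a descending chain condition, or a coatomicity/compactness assumption of the kind the paper uses in its variant-function theorems. Your fallback via a minimal postfixpoint does not escape this: showing that the infimum of a chain of elements of $\mathcal{C}=\{X\mid\Gfp{\sqsubseteq}f\sqsubseteq X\wedge f(X)\sqsubseteq X\wedge X\sqcap P\neq\bot\}$ remains in $\mathcal{C}$ is the same limit problem in disguise, so the proposal as written cannot be completed without strengthening the hypotheses of the statement.
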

\begin{toappendix}
\label{sec:apx:proof:th:Greatest-fixpoint-non-emptiness}
\begin{proof}[Proof of Th\@. \ref{th:Greatest-fixpoint-non-emptiness}]The transfinite iterates $X^0=\top$, $X^{\delta+1}=f(X^{\delta})$, and $X^{\lambda}=\bigsqcap_{\beta<\lambda}X^{\beta}$ of $f$ from $\top$ are decreasing, ultimately stationary, with limit $\Gfp{\sqsubseteq}f$ \cite{CousotCousot-PJM-82-1-1979}, so $\forall \delta\in\mathbb{O}\mathrel{.}\Gfp{\sqsubseteq}f\sqsubseteq X^{\delta}$.

For soundness, we have  $X^0\sqcap P \neq \bot$ since $P\neq\bot$. If $X^{\delta}\sqcap P \neq \bot$ then $f(X^{\delta})=X^{\delta+1}\sqsubseteq X^{\delta}$ implies 
$f(X^{\delta})\sqcap P \neq \bot$ by hypothesis, that is, $X^{\delta+1}\sqcap P \neq \bot$. If, by induction hypothesis, $X^{\beta}\sqcap P \neq \bot$ for all $\beta<\lambda$ then
$X^{\lambda}\sqcap P=(\bigsqcap_{\beta<\lambda}X^{\beta})\sqcap P=\bigsqcap_{\beta<\lambda}(X^{\beta}\sqcap P)\neq \bot$ since $\pair{X^{\beta}}{\beta<\lambda}$ is decreasing and so is $\pair{X^{\beta}\sqcap  P}{\beta<\lambda}$. By transfinite induction, we conclude that 
$\forall\delta\in\mathbb{O}\mathrel{.}X^{\delta}\sqcap P\neq \bot$ and so $\Gfp{\sqsubseteq}f\sqcap P\neq \bot$.

For completeness, $\Gfp{\sqsubseteq}f\sqcap P\neq \bot$ then $P\neq \bot$. If $\Gfp{\sqsubseteq}f\sqsubseteq X\wedge f(X)\sqsubseteq X \wedge X\sqcap P \neq \bot$ then $\Gfp{\sqsubseteq}f=
f(\Gfp{\sqsubseteq})\sqsubseteq f(X)$ since $f$ is increasing so $\Gfp{\sqsubseteq}f\sqcap P\neq\bot$ implies $f(X)\sqcap P\neq\bot$.
\end{proof}
\end{toappendix}
\noindent A fixpoint induction principle \ref{th:abstract-least-fixpoint-non-emptiness} for $\alpha(\Lfp{\sqsubseteq}f)\sqcap P\neq \bot$ in (\ref{eq:post-t-pre-t}.d) is \ifshort given in the appendix. \proofinapx\fi
\begin{toappendix}
\begin{theorem}[Non empty intersection with abstraction of least fixpoint]\label{th:abstract-least-fixpoint-non-emptiness}
Assume that 
(1) $\sextuple{L}{\sqsubseteq}{\bot}{\top}{\sqcap}{\sqcup}$ is an atomic complete lattice;
(2) $f\in L\rightarrow L$ preserves nonempty joins $\sqcup$; 
(3) $\pair{L}{\sqsubseteq}\galoiS{\alpha}{\gamma}\triple{\bar{L}}{\preceq}{\curlywedge}$;
(4) $\bar{Q}\in \bar{L}\setminus\{0\}$ where $0\triangleq\alpha(\bot)$; 
(5) There exists an inductive invariant $I\in L$ of $f$ (i.e\@. $f(I)\sqsubseteq I$);
(6) $\pair{W}{\leqslant}$ is a well-founded set and $\nu\in \textsf{\upshape atoms}(I)\rightarrow W$ is a (variant) function;
(7) There exists a sequence $\pair{a_i\in\textsf{\upshape atoms}(I)}{i\in\interval{1}{\infty}}$ that 
(7.a) $a_1\in f(\bot)$,
(7.b) $\forall  i\in\interval{1}{\infty}\mathrel{.}a_{i+1}\in\textsf{\upshape atoms}(f(a_i))$,
(7.c) $\forall  i\in\interval{1}{\infty}\mathrel{.} (a_i\neq a_{i+1})\Rightarrow(\nu(a_i)>\nu(a_{i+1})$,
(7.d) $\forall  i\in\interval{1}{\infty}\mathrel{.} (\nu(a_i)\not>\nu(a_{i+1})\Rightarrow\alpha(a_i)\curlywedge \bar{Q}\neq 0$;
Then, hypotheses (1) to (7) imply $\alpha(\Lfp{\sqsubseteq}f) \curlywedge \bar{Q}\neq 0$. Conversely (1) to (4) and
$\Lfp{\sqsubseteq}f \sqcap \gamma(\bar{Q})\neq \bot$ imply (5) to (7).
\end{theorem}
Notice that if $L=\wp(\Sigma)$ then $\textsf{\upshape atoms}(L)=\{\{x\}\mid x\in L\}$ so that $I\in\wp(\Sigma)$ and $\nu$ can be chosen in $I\rightarrow W$ instead of $\{\{x\}\mid x\in I\}\rightarrow W$.
\begin{proof}[\hyperlink{th:theory:not:Hoare}{Proof of Th\@. 6}]
By (1) and (2), $\Lfp{\sqsubseteq}f=\bigsqcup_{n\in\mathbb{N}}f^n(\bot)$ where the iterates of $f$ from $x\in L$ are
$f^0(x)=x$ and $f^{n+1}(x)=f(f^n(x))$ \cite{CousotCousot-PJM-82-1-1979}. By (5), $f(I)\sqsubseteq I$ so that
$\Lfp{\sqsubseteq}f\sqsubseteq I$ by Tarski's fixpoint theorem \cite{Tarski-fixpoint}. Consider 
$\pair{a_i\in\textsf{\upshape atoms}(I)}{i\in\interval{1}{\infty}}$. By (7.a), $a_1\in f(\bot)$ so $a_1\in \textsf{\upshape atoms}(f^1(\bot))$. Assume $a_n\in \textsf{\upshape atoms}(f^n(\bot))$ so that $a_n\sqsubseteq f^n(\bot)$.
By (2), $f$ is increasing so $f(a_n)\sqsubseteq f(f^n(\bot))=f^{n+1}(\bot)$. By (7.b), $a_{n+1}\in \textsf{\upshape atoms}(f(a_n))\subseteq  \textsf{\upshape atoms}(f^{n+1}(\bot))$. By recurrence $\forall n\in\mathbb{N}\mathrel{.}a_n\in \textsf{\upshape atoms}(f^n(\bot))$. This implies $a_n\sqsubseteq f^n(\bot)\sqsubseteq \bigsqcup_{n\in\mathbb{N}}f^n(\bot) = \Lfp{\sqsubseteq}f\subseteq I$ so that $a_n\in\textsf{\upshape atoms}(I)$ proving that $\nu(a_n)$ is well-defined for all $n\in\mathbb{N}$. By (6), the sequence
$\pair{\nu(a_n)}{n\in\mathbb{N}}$ cannot be strictly $>$-decreasing. So there is some $\ell\in\mathbb{N}$ such that $\nu(a_{\ell})=
 \nu(a_{\ell+1})$. By (7.d), this implies that $\alpha(a_{\ell})\curlywedge \bar{Q}\neq 0$. By (3), 
$\sextuple{\alpha(L)}{\preceq}{0}{1}{\curlywedge}{\curlyvee}$ is a complete lattice. We have 
$\alpha(\Lfp{\sqsubseteq}f)$
=
$\alpha(\bigsqcup_{n\in\mathbb{N}}f^n(\bot))$
=
$\bigcurlyvee_{n\in\mathbb{N}}\alpha(f^n(\bot))$
$\succeq$
$\alpha(f^{\ell}(\bot))$
$\succeq$
$\alpha(a_{\ell})$
so that
$\alpha(a_{\ell})\curlywedge \bar{Q}\neq 0$
implies
$\alpha(\Lfp{\sqsubseteq}f) \curlywedge \bar{Q}\neq 0$.

Conversely assume (1) to (4) and
$\Lfp{\sqsubseteq}f \sqcap \gamma(\bar{Q})\neq \bot$. Let $x_0$ be an atom common to $\Lfp{\sqsubseteq}f$ and $ \gamma(\bar{Q})$ so that $x_0\in \textsf{\upshape atoms}(f^n(\bot))$ for some $n>0$. Assume we have constructed $x_0,\ldots,x_{n-i}$, $0<i\leqslant n$ such that $\forall k\in\interval{0}{i}\mathrel{.}x_k\in
\textsf{\upshape atoms}(f^{n-k})$ which elements are two by two different. There are two cases. 
\begin{enumerate}[leftmargin=*]
\item If $x_{n-i}\in f(\bot)$ then define the finite sequence $a_1=x_{n-i}$, 
 $a_2=x_{n-i+1}$, \ldots, $a_{n-i+1}=x_0$.
 Define $I=\Lfp{\sqsubseteq}f$, $\pair{W}{\leqslant}=\pair{\interval{1}{n-i+1}}{\leqslant}$, $\nu(x)=\si x=a_i\alors i\sinon 1\fsi$, which is well-founded since the elements of $a_1$, \ldots, $a_{n-i+1}$ are two by two different. Then (5) to (7) are satisfied, Q.E.D.
 \item Otherwise $x_{n-i}\not\in f(\bot)$ and $x_{n-i}\in\textsf{\upshape atoms}(f^{n-i})=\textsf{\upshape atoms}(f(f^{n-i-1}))$.
 Pick $x_{n-i-1}$ as an atom of $f^{n-i-1}$ different from  $x_0$, \ldots, $x_{n-i}$.  Notice that if there no such $x_{n-i-1}$, we are in the previous case (1). This extends the sequence by one element, and we must terminate ultimately at $f^{1}(\bot)$ for
 which case (1) concludes. \qed
\end{enumerate}\let\qed\relax
\end{proof}
\end{toappendix}
\label{last-induction-principle}

\section{Deductive Systems of Program Logics}\label{sec:ProgramLogics}
Logics define the valid properties of a program as all provable facts by the formal proof system of the logic. These formal systems,
introduced by Hilbert \cite[\textsection\ 5]{Hilbert-Ackermann38}, are ``a system of axioms from which the remaining true sentences may be obtained by means of certain rules''. Such a formal system is  a finitely presented set of axioms $c$  and rules $\frac{P}{c}$ where the axioms and conclusions $c$ of the rules are terms with variables and the premisses $P$ are formulas of a logic. 

The semantics/interpretation of the logic maps logical terms to elements of a mathematical structure with  universe  $\mathcal{U}$. Logical formulas are interpreted as the subsets of $\,\mathcal{U}$ of elements satisfying the formulas. Therefore logical implication is subset inclusion $\subseteq$ in the complete Boolean lattice $\septuple{\wp(\mathcal{U})}{\subseteq}{\emptyset}{\mathcal{U}}{\cup}{\cap}{\neg}$ where $\emptyset$ is false, ${\mathcal{U}}$ true, $\cup$ disjunction, $\cap$ conjunction, and $\neg$ negation. The semantics/interpretation of the formal rules is a
deductive system $R=\bigl\{\frac{P_i}{c_i}\bigm| i\in\Delta\bigr\}$  where $P_i\in\wp(\mathcal{U})$ is the finite premise and $c_i\in\mathcal{U}$ the conclusion of the rule. The axioms have $P_i=\emptyset$ (false) as premises. We have $R\in\wp(\wp(\mathcal{U})\times\mathcal{U})$ where pairs $\pair{P}{c}$ are conventionally written $\frac{P}{c}$.  
\begin{example}\label{ex:odd-numbers}The formal system $1\in\mathcal{O}$ and inductive rule 
$\frac{n\:\in\:\mathcal{O}}{n+2\:\in\:\mathcal{O}}$ (defining the odd naturals $\mathcal{O}$ on universe $\mathbb{N}$) has the interpretation $\{\frac{\emptyset}{1}\}\cup\{\frac{\{n\}}{n+2}\mid n\in\mathbb{N}\}$. For example if $2\in\mathbb{N}$ is odd then $4$ is odd.
To prove that $2$ is odd, the only way is to prove that $0$ is odd which is not an axiom nor the conclusion of a rule, proving $2$ not to be odd. 
 \end{example}

\section{The Semantics of Deductive Systems}\label{sec:SemanticsLogics}\label{sect:SemanticsDeductiveSystems}

Aczel \cite{Aczel:1977:inductive-definitions} has shown that there are two equivalent ways of defining the subset $\alpha^{\mathcal{I}}(R)$ of the universe $\mathcal{U}$ defined by a deductive system $R=\bigl\{\frac{P_i}{c_i}\bigm| i\in\Delta\bigr\}$.

\subsection{Proof-Theoretic Semantics of Deductive Systems}
In the proof-theoretic approach, $\alpha^{\mathcal{I}}(R)$ is the set of provable elements where
a formal proof is a finite sequence $t_1 \ldots\ t_n$ of terms (i.e.\ elements of the universe $\mathcal{U}$) such that any term is the conclusion of a rule which premise is implied by (i.e.\ included in $\subseteq$) the set of previous terms in the sequence (which have been already proved, starting with axioms). Therefore $\alpha^{\mathcal{I}}(R)=\{t_n\in\mathcal{U}\mid \exists t_1, \ldots,t_{n-1}\in \mathcal{U}\mathrel{.}\forall k\in\interval{1}{n}\mathrel{.}\exists \frac{P}{c}\in R\mathrel{.}P\subseteq\{t_1, \ldots,t_{k-1}\} \wedge t_k=c\}$ (this requires $P$ to be finite). 

It follows that there is a Galois connection $\pair{\wp(\wp(\mathcal{U})\times\mathcal{U})}{\subseteq}\galoiS{\alpha^{\mathcal{I}}}{\gamma^{\mathcal{I}}}\pair{\wp(\mathcal{U})}{\subseteq}$ where $\alpha^{\mathcal{I}}$ is $\subseteq$-increasing (the more rules the larger is the defined set) and $\gamma^{\mathcal{I}}(X)=\{\frac{P}{c}\mid P\in\wp(\mathcal{U}) \land c\in X\}$ including axioms $\frac{\emptyset}{c}$ collecting all elements $c$ of $X$. (As discussed thereafter, there are other, more natural and effective, possible deductive systems. Proof systems are not unique.)

\subsection{Model-Theoretic Semantics of Deductive Systems}\label{sec:Deductive-system-Model-theoretic-definition}

In the  model-theoretic approach,  the same $\alpha^{\mathcal{I}}(R)$ is defined  as $\alpha^{\mathcal{I}}(R)$ = $\Lfp{\subseteq}\alpha^{F}(R)$ where the consequence operator is $\alpha^{F}(R)X\triangleq\{c\mid\exists\frac{P}{c}\in R\mathrel{.}P\subseteq X\}$. $\alpha^{F}(R)X$ is the set of consequences derivable from
the hypotheses $X\in\wp(\mathcal{U})$ by one application of an axiom (with $P=\emptyset$) or a rule of the deductive system. $\alpha^{F}(R)\in\wp(\mathcal{U})\stackrel{\sqcup}{\longrightarrow}\wp(\mathcal{U})$ preserve nonempty joins and so is increasing.
The least fixpoint $\Lfp{\subseteq}\alpha^{F}(R)$ of the consequence operator is well-defined \cite{Tarski-fixpoint} and is the set of all provable terms, that is, $\alpha^{\mathcal{I}}(R)$. For example \ref{ex:odd-numbers},  $\mathcal{O}=\Lfp{\subseteq}\LAMBDA{X}\{1\}\cup\{n+2\mid\{n\}\subseteq X\}$.

\subsection{Equivalence of the Two Definitions of the Semantics of Deductive Systems}\label{sec:equivalence-fxp-deductive-system}

The definitions of a subset of the universe by a deductive system or by a fixpoint are equivalent \cite{Aczel:1977:inductive-definitions}. We have recalled that a deductive system can be expressed in fixpoint form. Conversely, given any increasing operator $F$ on $\pair{\mathcal{U}}{\subseteq}$, the terms provable by the deductive system $\gamma^{F}(F)=\bigl\{\frac{P}{c}\bigm|P\in\wp(\mathcal{U})\land c\in F(P)\bigr\}$ (or $\bigl\{\frac{P}{c}\bigm|P\in\wp(\mathcal{U})\land c\in F(P)\wedge \forall P'\in\wp(\mathcal{U})\mathrel{.}c\in F(P')\Rightarrow P\subseteq P'\bigr\}$) are exactly its least fixpoint $\Lfp{\subseteq}F$. This yields a Galois connection between deductive systems and increasing consequence operators $\pair{\wp(\wp(\mathcal{U})\times\mathcal{U})}{\subseteq}\galoiS{\alpha^{F}}{\gamma^{F}}\pair{\wp(\mathcal{U})\mathrel{\smash{\stackrel{i}{\longrightarrow}}}\wp(\mathcal{U})}{\stackrel{.}{\subseteq}}$ where $\stackrel{.}{\subseteq}$ is $\subseteq$, pointwise. Note that there is also a Galois connection between increasing operators and fixpoints $\pair{\wp(\mathcal{U})\mathrel{\smash{\stackrel{i}{\longrightarrow}}}\wp(\mathcal{U})}{\stackrel{.}{\subseteq}}\galoiS{\Lfp{\subseteq}}{\LAMBDA{y}\LAMBDA{x}\si x\,\subseteq\, y \,\alors\, y \,\sinon\, \mathcal{U}\fsi}\pair{\wp(\mathcal{U})}{\subseteq}$  such that $\alpha^{\mathcal{I}}$ is the composition of these two Galois connections. 

The order dual of this result is defined by co-induction leading to greatest fixpoints  $\pair{\wp(\mathcal{U})\mathrel{\smash{\stackrel{i}{\longrightarrow}}}\wp(\mathcal{U})}{\stackrel{.}{\subseteq}}\galoiS{\Gfp{\subseteq}}{\LAMBDA{y}\LAMBDA{x}\si x\,\supseteq\, y \,\alors\, y \,\sinon\, \emptyset\fsi}\pair{\wp(\mathcal{U})}{\subseteq}$, we get the coinductive interpretation of proof systems. It can also be biinductive, a mix of the two, taking the $\Lfp{}$ of $\alpha^{F}(R)$ restricted to a subset of $\mathbb{V}\subseteq\mathcal{U}$ of the universe and $\Gfp{}$ on $\alpha^{F}(R)$ restricted to the complement
$\mathcal{U}\setminus\mathbb{V}$ \cite{DBLP:conf/popl/CousotC92,DBLP:conf/cav/CousotC95,DBLP:journals/iandc/CousotC09}.

More generally, the results hold for  any complete lattice $\pair{L}{\preceq}$ thus generalizing the powerset case $\pair{\wp(\mathcal{U})}{\subseteq}$ and its order dual \cite{DBLP:conf/cav/CousotC95}.

The take away is that, knowing the fixpoint semantics of the logic, there is a method for constructing the deductive system for that logic, which is both sound and complete, by construction. \ifshort An example \ref{ex:fix-to-deductiv-semantics} is given in the appendix showing how to construct the deductive natural relational semantics Sect\@. \ref{sec:natural-relational-semantics-deductive} from its fixpoint definition of Sect\@. \ref{sect:FixpointNaturalRelationalSemantics} \proofinapx.\fi
\begin{toappendix}
\begin{example}[Design of the deductive natural relational semantics]\label{ex:fix-to-deductiv-semantics}
The rule-based deductive natural relational semantics of Sect\@. \ref{sec:natural-relational-semantics-deductive} is derived from its fixpoint definition of Sect\@. \ref{sect:FixpointNaturalRelationalSemantics}, by structural induction. The base cases in (\ref{eq:def:sem:basis}) are understood as constant fixpoints $S=\Lfp{\subseteq}\LAMBDA{X}S$ so that 
Sect\@. \ref{sec:Deductive-system-Model-theoretic-definition} yields axioms. For the assignment, we get $\sigma\vdash\texttt{\small  x = A}\stackrel{e}{\Rightarrow}{\sigma[\texttt{\small x}\leftarrow\mathcal{A}\sqb{\texttt{\small A}}\sigma]}$. Since there are no rules for $\stackrel{b}{\Rightarrow}$ and $\stackrel{\infty}{\Rightarrow}$, $\sqb{\texttt{\small x = A}}^b$ and 
$\sqb{\texttt{\small x = A}}^{\infty}$  are empty.

For the induction cases, consider for example $\sqb{\texttt{\small S$_1$;S$_2$}}^e\triangleq\sqb{\texttt{\small S$_1$}}^e\fatsemi\sqb{\texttt{\small S$_2$}}^e$ in (\ref{eq:def:sem:seq:if}). By structural induction hypothesis and definition of $\fatsemi$, we get $\frac{\sigma\vdash\texttt{\small  S$_1$}\stackrel{e}{\Rightarrow}{\sigma'},\ \sigma'\vdash\texttt{\small  S$_2$}\stackrel{e}{\Rightarrow}{\sigma''}}{\sigma\vdash\texttt{\small S$_1$;S$_2$}\stackrel{e}{\Rightarrow}{\sigma''}}$ where the comma means conjunction. We are in the constant fixpoint case, so the rule is actually an axiom for \texttt{\small S$_1$;S$_2$} and, more rigorously, the premiss should be a side condition. 

For iteration \texttt{\small W} =\texttt{\small while (B) S}, $\sqb{\texttt{\small W}}^e$ in (\ref{eq:natural-finite}) involves $\Lfp{\subseteq}{F^e}$. In (\ref{eq:while:invariant}), we write $\sigma\vdash\texttt{\small W}\stackrel{i}{\Rightarrow}\sigma'$ for $\pair{\sigma}{\sigma'}\in\Lfp{\subseteq}{F^e}$. By ${F^e}(X)\triangleq\textsf{\upshape id} \cup (\sqb{\texttt{\small B}}\fatsemi\sqb{\texttt{\small S}}^e\fatsemi (X\setminus\Sigma\times\{\bot\}))$ in (\ref{eq:natural-transformer-finite}), we decompose the union into two rules (i.e\@. $\frac{x\in X}{x\in X\cup Y}$ and $\frac{y\in Y}{y\in X\cup Y}$). This decomposition yields the axiom $\sigma\vdash\texttt{\small W}\stackrel{i}{\Rightarrow}\sigma$ for $\textsf{\upshape id}$ and $\frac{\mathcal{B}\sqb{\texttt{\small B}}\sigma,\quad \sigma\vdash\texttt{\small S}\stackrel{e}{\Rightarrow}\sigma',\quad \sigma'\vdash\texttt{\small W}\stackrel{i}{\Rightarrow}\sigma''}{\sigma\vdash\texttt{\small W}\stackrel{i}{\Rightarrow}\sigma''}$ for $\sqb{\texttt{\small B}}\fatsemi\sqb{\texttt{\small S}}^e\fatsemi (X\setminus\Sigma\times\{\bot\})$. Then $\sqb{\texttt{\small W}}^e\triangleq
\Lfp{\subseteq}{F^e}\fatsemi(\sqb{\neg\texttt{\small B}}\cup\sqb{\texttt{\small B}}\fatsemi\sqb{\texttt{\small S}}^b)$ in (\ref{eq:natural-finite}) is handled \ulstrut like the sequential composition and union case to get (\ref{eq:W:e}).

For the case $\Gfp{\subseteq}{F^\bot}$ of non termination, we use the dual  interpretation $\alpha^{\mathcal{I}}(R)$ = $\Gfp{\subseteq}\alpha^{F}(R)$ of rules $R$ so that ${F^\bot}(X)\triangleq\sqb{\texttt{\small B}}\fatsemi\sqb{\texttt{\small S}}^e\fatsemi X$ yields the coinductive rule $\frac{\mathcal{B}\sqb{\texttt{\small B}}\sigma,\quad \sigma\vdash\texttt{\small S}\stackrel{e}{\Rightarrow}\sigma', \quad
\sigma'\vdash\texttt{\small W}\stackrel{\infty}{\Rightarrow}}{\sigma\vdash\texttt{\small W}\stackrel{\infty}{\Rightarrow}}\infty$ of (\ref{eq:W:infty}). The other nontermination rule $\frac{\sigma\vdash\texttt{\small W}\stackrel{i}{\Rightarrow}\sigma',\quad\mathcal{B}\sqb{\texttt{\small B}}\sigma',\quad \sigma'\vdash\texttt{\small S}\stackrel{\infty}{\Rightarrow}}{\sigma\vdash\texttt{\small W}\stackrel{\infty}{\Rightarrow}}$ follows, by structural induction, from the term $\Lfp{\subseteq}{F^e}\fatsemi\sqb{\texttt{\small B}}\fatsemi\sqb{\texttt{\small S}}^\bot$ of the union.
\end{example}
\end{toappendix}

\section{Calculational Design of Proof Systems}
After defining the theory of a logic by abstraction $\alpha_a\comp\alpha_t(\sqb{\texttt{\small S}}_\bot)$ of the relational semantics $\sqb{\texttt{\small S}}_\bot$, we use the fixpoint abstraction theorems of Sect\@. \ref{sec:abstraction} to provide a fixpoint definition of $\alpha_t(\sqb{\texttt{\small S}}_\bot)$, which is most often a transformer or its graph. Then to handle $\alpha_a$, which is an approximation abstraction like  $\textsf{\textup{post}}({\subseteq},{\supseteq})$ or  $\textsf{\textup{post}}({\supseteq},{\subseteq})$, we use the fixpoint induction theorems of Sect\@. \ref{sect:FixpointInduction}
to provide a set-theoretic of the theory of the logic which is then translated in a proof system by Aczel method of Sect\@. \ref{sec:equivalence-fxp-deductive-system}.
\begin{example}Assume that $\alpha_t(\sqb{\texttt{\small S}}_\bot)=\Lfp{\subseteq}F_P$ and that we must derive the abstract theory $T=\alpha_a\comp\alpha_t(\sqb{\texttt{\small S}}_\bot)=\{\pair{P}{Q}\mid\Lfp{\subseteq}F_P\subseteq Q\}$ (e.g\@. to handle the $\subseteq$ part in 
$\textsf{\textup{post}}({\subseteq},{\supseteq})=\textsf{\textup{post}}({=},{\supseteq})\comp\textsf{\textup{post}}({\subseteq},{=})$ or  $\textsf{\textup{post}}({\supseteq},{\subseteq})$, the other part $\supseteq$ being dual). By Th\@. \ref{th:Fixpoint-Overapproximation}, $T=\{\pair{P}{Q}\mid \exists I\mathrel{.} F_P(I)\subseteq I \wedge I\subseteq Q\}$. By Sect\@. \ref{sec:Deductive-system-Model-theoretic-definition}. this  set $T$ is defined by the axiom $\frac{F_P(I)\,\subseteq\, I,\ I\,\subseteq\, Q}{\pair{P}{Q}\,\in\, T}$.
\end{example}
\begin{remark}\label{rem:abstraction-induction}\textit{(On abstraction versus induction)}\quad Hoare logic is the $\textsf{\textup{post}}({\subseteq},{\supseteq})$ and it's reverse is the $\textsf{\textup{post}}({\supseteq},{\subseteq})$ abstraction of the transformer graph $T=\{\pair{P}{\textsf{\textup{post}}\sqb{\texttt{\small S}}P}\mid P\in \wp(\Sigma)\}$. Both proof systems can be designed, by the rules for $T$ plus the consequence rules for  $\textsf{\textup{post}}({\subseteq},{\supseteq})$ and $\textsf{\textup{post}}({\supseteq},{\subseteq})$. By (\ref{eq:natural-finite}), the theory $T$  of the conditional iteration \texttt{\small W} without \texttt{\small break}s would involve $T'=\{\pair{P}{\textsf{\textup{post}}(\Lfp{\subseteq}{F^e})P}\mid P\in \wp(\Sigma)\}$. The rule would be (\ref{eq:hoare-while-strongest}), using ordinals for unbounded nondeterminism. So to prove $\{P\}\,\texttt{\small S}\,\{Q\}$, $P\neq\emptyset$, we would have to find a postcondition $Q'$, prove that it is the strongest, and then use the consequence rule to prove that $Q'\subseteq Q$/  This is sound and complete but much too demanding. The fixpoint induction theorems of Sect\@. \ref{sect:FixpointInduction} solve this problem by weakening the rules for iteration while preserving soundness and completeness. Contrary to fixpoint abstraction, fixpoint induction allows us to take the consequence rule into account in the design of proof rules for fixpoint semantics. So partial correctness need not be a consequence of total correctness and nontermination.
\end{remark}

\section{On the comparison of logics}
To compare logics, we  first relate their theories, that is compare their expressivity, through their respective abstractions of the collecting semantics (as formalized by fixpoint abstraction in Sect\@. \ref{sect:FixpointAbstraction}). Different abstractions yield different logics, compared though their relation by Galois connections. The logics are equivalent when their theories are linked by a Galois isomorphism. An example is given in Sect\@. \ref{PartialpossibleAccessibilityOfSomeFinalStateFromAllInitialStates} where Hoare logic and subgoal induction have the same theory but different proof method (as shown in figure \ref{fig:taxonomy-assertional}). 

The proof system of a logic is entirely determined by its theory (as proved in Sect\@. \ref{sec:ProgramLogics}), but up to an equivalence, since different induction principles may be used, as formalized in Sect\@. \ref{sect:FixpointInduction}, to exploit approximation so as to simplify induction. This is exemplified by Rem\@. \ref{rem:abstraction-induction}. Which induction principle is used is the second characteristic to compare logics.

\section{Applications}
The development of Hoare incorrectness logic in Ex\@. \ref{ex-Hoare-incorrectness} is relegated to the appendix \proofinapx.
\begin{toappendix}
\subsection{Application 0: Calculational Design of Hoare Incorrectness Logic}\label{apx:DesignHoareIncorrectnessLogic}
We design by calculus the Hoare incorrectness logic of Ex\@. \ref{ex-Hoare-incorrectness} which theory is $\mathcal{T}_{\overline{\textrm{HL}}}(\texttt{\small S}) =\textsf{\textup{post}}({\subseteq},{\supseteq})\comp\alpha^{\neg}\comp\mathcal{T}_{\textrm{HL}}(\texttt{\small S})$.  The proof is by structural induction. We consider the case of the conditional iteration 
\texttt{\small W} = \texttt{\small while (B) S} (without \texttt{\small break}, to simplify).  All other cases are similar and simpler.

\subsubsection{Strongest Postcondition Over Approximation}
We start by characterizing the theory of classic Hoare logic.
\begin{lemma}[Strongest postcondition]\label{lem:strongest:postcondition}
$\mathcal{T}(\texttt{\small S})=\alpha_{\textup{G}}\comp\textsf{\upshape post}\sqb{\texttt{\small S}}=\{\pair{P}{\textsf{\upshape post}\sqb{\texttt{\small S}}P}\mid P\in\wp(\Sigma)\}$.
\end{lemma}
\begin{proof}[Proof of lem\@. \ref{lem:strongest:postcondition}]
\begin{calculus}
\formula{\mathcal{T}(\texttt{\small S})}\\
=
\formulaexplanation{\alpha_{\textup{G}}\comp\textsf{\upshape post}\comp\alpha_{\not\bot}\comp\alpha_C(\{\sqb{\texttt{\small S}}_{\bot}\})}{def\@. $\mathcal{T}$}\\
=
\formulaexplanation{\alpha_{\textup{G}}\comp\textsf{\upshape post}\comp\alpha_{\not\bot}(\sqb{\texttt{\small S}}_{\bot})}{def\@. $\alpha_C$}\\
=
\formulaexplanation{\alpha_{\textup{G}}\comp\textsf{\upshape post}(\sqb{\texttt{\small S}}_{\bot}\cap (\Sigma\times\Sigma))}{def\@. $\alpha_{\not\bot}$}\\
=
\formulaexplanation{\alpha_{\textup{G}}\comp\textsf{\upshape post}\sqb{\texttt{\small S}}}{def\@. (\ref{eq:angelic-semantics}) of the angelic semantics $\sqb{\texttt{\small S}}$}\\
=
\lastformulaexplanation{\{\pair{P}{\textsf{\upshape post}\sqb{\texttt{\small S}}P}\mid P\in\wp(\Sigma)\}}{def\@. $\alpha_{\textup{G}}$}{\mbox{\qed}}
\end{calculus}
\let\qed\relax\end{proof}
\begin{lemma}[Strongest postcondition over approximation]\label{lem:strongest:postcondition:over:approximation}
\begin{eqntabular*}{rclclcl}
\mathcal{T}_{\textup{HL}}(\texttt{\small S})&\triangleq&\textsf{\upshape post}(\mathord{{\supseteq}.{\subseteq}})\comp\mathcal{T}(\texttt{\small S})
&=&
\{\pair{P}{Q}\mid\textsf{\upshape post}\sqb{\texttt{\small S}}P \subseteq Q\}
&=&\textsf{\upshape post}(\mathord{{=},{\subseteq}})\comp\mathcal{T}(\texttt{\small S})
\end{eqntabular*}
\end{lemma}
\begin{proof}[Proof of Lem\@. \ref{lem:strongest:postcondition:over:approximation}]
\begin{calculus}
\formula{\textsf{\upshape post}(\mathord{{\supseteq}.{\subseteq}})\comp\mathcal{T}(\texttt{\small S})}\\
=
\formulaexplanation{\textsf{\upshape post}(\mathord{{\supseteq}.{\subseteq}})(\mathcal{T}(\texttt{\small S}))}{def\@. function composition $\comp$}\\
=
\formulaexplanation{\textsf{\upshape post}(\mathord{{\supseteq}.{\subseteq}})(\{\pair{P}{\textsf{\upshape post}\sqb{\texttt{\small S}}P}\mid P\in\wp(\Sigma)\})}{lem\@. \ref{lem:strongest:postcondition}}\\
=
\formulaexplanation{\{\pair{P'}{Q'}\mid\exists\pair{P}{Q}\in\{\pair{P}{\textsf{\upshape post}\sqb{\texttt{\small S}}P}\mid P\in\wp(\Sigma)\}\mathrel{.}\pair{\pair{P}{Q}}{\pair{P'}{Q'}}\in\mathord{{\supseteq}.{\subseteq}}\}}{def\@. (\ref{eq:def:post}) of \textsf{\upshape post}}\\
=
\formulaexplanation{\{\pair{P'}{Q'}\mid\exists P\mathrel{.}\pair{\pair{P}{\textsf{\upshape post}\sqb{\texttt{\small S}}P}}{\pair{P'}{Q'}}\in\mathord{{\supseteq}.{\subseteq}}\}}{def\@. $\in$}\\
=
\formulaexplanation{\{\pair{P'}{Q'}\mid\exists P\mathrel{.}\pair{P}{\textsf{\upshape post}\sqb{\texttt{\small S}}P}\mathrel{{\supseteq}.{\subseteq}}\pair{P'}{Q'}\}}{def\@. $\in$}\\
=
\formulaexplanation{\{\pair{P'}{Q'}\mid\exists P\mathrel{.}P\supseteq P'\wedge \textsf{\upshape post}\sqb{\texttt{\small S}}P \subseteq Q'\}}{def\@. $\mathrel{{\supseteq}.{\subseteq}}$}\\
=
\formulaexplanation{\{\pair{P'}{Q'}\mid\exists P\mathrel{.}P'\subseteq P\wedge \textsf{\upshape post}\sqb{\texttt{\small S}}P \subseteq Q'\}}{def\@. $\supseteq$}\\
=
\formula{\{\pair{P'}{Q'}\mid\textsf{\upshape post}\sqb{\texttt{\small S}}P' \subseteq Q'\}}\\
\explanation{($\subseteq$) by Galois connection (\ref{eq:def:post:GC}), $\textsf{\upshape post}$ is increasing
so that $P'\subseteq P\wedge \textsf{\upshape post}\sqb{\texttt{\small S}}P \subseteq Q'$ implies $\textsf{\upshape post}\sqb{\texttt{\small S}}P'\subseteq \textsf{\upshape post}\sqb{\texttt{\small S}}P\wedge \textsf{\upshape post}\sqb{\texttt{\small S}}P \subseteq Q'$ hence $\textsf{\upshape post}\sqb{\texttt{\small S}}P'\subseteq Q'$ by transitivity;\\
($\supseteq$) take $P=P'$}\\
=
\formulaexplanation{\{\pair{P'}{Q'}\mid\exists P\mathrel{.}P'=P\wedge \textsf{\upshape post}\sqb{\texttt{\small S}}P \subseteq Q'\}}{def\@. $=$}\\
=
\formulaexplanation{\{\pair{P'}{Q'}\mid\exists P\mathrel{.}\pair{P}{\textsf{\upshape post}\sqb{\texttt{\small S}}P}\mathrel{{=},{\subseteq}}\pair{P'}{Q'}\}}{def\@. $\mathrel{{=},{\subseteq}}$}\\
=
\formulaexplanation{\{\pair{P'}{Q'}\mid\exists P\mathrel{.}\pair{\pair{P}{\textsf{\upshape post}\sqb{\texttt{\small S}}P}}{\pair{P'}{Q'}}\in\mathord{{=},{\subseteq}}\}}{def\@. $\in$}\\
=
\formulaexplanation{\{\pair{P'}{Q'}\mid\exists\pair{P}{Q}\in\{\pair{P}{\textsf{\upshape post}\sqb{\texttt{\small S}}P}\mid P\in\wp(\Sigma)\}\mathrel{.}\pair{\pair{P}{Q}}{\pair{P'}{Q'}}\in\mathord{{=},{\subseteq}}\}}{def\@. $\in$}\\
=
\formulaexplanation{\{\pair{P'}{Q'}\mid\exists\pair{P}{Q}\in\mathcal{T}(\texttt{\small S})\mathrel{.}\pair{\pair{P}{Q}}{\pair{P'}{Q'}}\in\mathord{{=},{\subseteq}}\}}{lem\@. \ref{lem:strongest:postcondition}}\\
=
\formulaexplanation{\textsf{\upshape post}(\mathord{{=},{\subseteq}})(\mathcal{T}(\texttt{\small S}))}{def\@. (\ref{eq:def:post}) of \textsf{\upshape post}}\\
=
\lastformulaexplanation{\textsf{\upshape post}(\mathord{{=},{\subseteq}})\comp\mathcal{T}(\texttt{\small S})}{def\@. function composition $\comp$}{\mbox{\qed}}

\end{calculus}
\let\qed\relax\end{proof}

\subsubsection{Calculational Design of Hoare Incorrectness Logic Theory}
\begin{theorem}[Equivalent definitions of $\overline{\textrm{HL}}$ theories]\label{th:theory:not:Hoare:no:consequence}
\begin{eqntabular}[fl]{@{\quad}rcl}
\mathcal{T}_{\overline{\textrm{HL}}}(\texttt{\small W})
&\triangleq&\textsf{\textup{post}}({\subseteq},{\supseteq})\comp\alpha^{\neg}\comp\mathcal{T}_{\textrm{HL}}(\texttt{\small W})\colsep{=}\alpha^{\neg}\comp\mathcal{T}_{\textrm{HL}}(\texttt{\small W})
\renumber{{\color{black}\texttt{\small W} = \texttt{\small while (B) S}}}
\end{eqntabular}
\end{theorem}
\noindent Observe that Th\@. \ref{th:theory:not:Hoare:no:consequence} shows that $\textsf{\textup{post}}({\subseteq},{\supseteq})$ can be dispensed with. This implies that the consequence rule is useless for Hoare incorrectness logic.
\begin{proof}[Proof of Th\@. \ref{th:theory:not:Hoare:no:consequence}]
\begin{calculus}
\formulaexplanation{\mathcal{T}_{\overline{\textrm{HL}}}(\texttt{\small W}) \colsep{=}\textsf{\textup{post}}({\subseteq},{\supseteq})\comp\alpha^{\neg}\comp\mathcal{T}_{\textrm{HL}}(\texttt{\small W})}{def\@. $\mathcal{T}_{\overline{\textrm{HL}}}$}\\
=
\formula{\textsf{\textup{post}}(({\subseteq},{\supseteq})(\neg\{\pair{P}{Q}\mid\textsf{\upshape post}\sqb{\texttt{\small W}}P \subseteq Q\})}\\[-0.5ex]
\rightexplanation{lem\@. \ref{lem:strongest:postcondition:over:approximation} and def\@. (\ref{eq-complement-GC}) of $\alpha^{\neg}$}\\
=
\formulaexplanation{\textsf{\textup{post}}({\subseteq},{\supseteq})(\{\pair{P}{Q}\mid\neg(\textsf{\upshape post}\sqb{\texttt{\small W}}P \subseteq Q)\})}{def\@. $\neg$}\\
=
\formulaexplanation{\textsf{\textup{post}}({\subseteq},{\supseteq})(\{\pair{P}{Q}\mid\textsf{\upshape post}\sqb{\texttt{\small W}}P \cap \neg Q\neq\emptyset\})}{def\@. $\subseteq$ and $\neg$}\\
=
\formulaexplanation{\{\pair{P'}{Q'}\mid\exists\pair{P}{Q}\in\{\pair{P}{Q}\mid\textsf{\upshape post}\sqb{\texttt{\small W}}P \cap \neg Q\neq\emptyset\}\mathrel{.}
\pair{P}{Q}\mathrel{{\subseteq},{\supseteq}}\pair{P'}{Q'}\}}{def\@. \textsf{\textup{post}}}\\
=
\formulaexplanation{\{\pair{P'}{Q'}\mid\exists\pair{P}{Q}\mathrel{.}\textsf{\upshape post}\sqb{\texttt{\small W}}P \cap \neg Q\neq\emptyset\wedge
\pair{P}{Q}\mathrel{{\subseteq},{\supseteq}}\pair{P'}{Q'}\}}{def\@. $\in$}\\
=
\formulaexplanation{\{\pair{P'}{Q'}\mid\exists\pair{P}{Q}\mathrel{.}\textsf{\upshape post}\sqb{\texttt{\small W}}P \cap \neg Q\neq\emptyset\wedge
P\subseteq P'\wedge Q\supseteq Q'\}}{component wise def\@. of $\mathrel{{\subseteq},{\supseteq}}$}\\
=
\formula{\{\pair{P'}{Q'}\mid\exists{Q}\mathrel{.}\textsf{\upshape post}\sqb{\texttt{\small W}}P' \cap \neg Q\neq\emptyset\wedge
Q\supseteq Q'\}}\\[-0.5ex]
\explanation{($\subseteq$)\hskip1em if $P\subseteq P'$ then $\textsf{\upshape post}\sqb{\texttt{\small W}}P \subseteq \textsf{\upshape post}\sqb{\texttt{\small W}}P'$ by
(\ref{eq:def:post:GC}) so that $\textsf{\upshape post}\sqb{\texttt{\small W}}P \cap \neg Q\neq\emptyset$ implies $\textsf{\upshape post}\sqb{\texttt{\small W}}P' \cap \neg Q\neq\emptyset$;
\\
($\supseteq$)\hskip1em conversely, if $\exists{Q}\mathrel{.}\textsf{\upshape post}\sqb{\texttt{\small W}}P'$, then $\exists P\mathrel{.}\textsf{\upshape post}\sqb{\texttt{\small W}}P \cap \neg Q\neq\emptyset\wedge
P\subseteq P'$ by choosing $P=P'$.
}\\
=
\formula{\{\pair{P'}{Q'}\mid\textsf{\upshape post}\sqb{\texttt{\small W}}P' \cap \neg Q'\neq\emptyset\}}\\[-0.5ex]
\explanation{($\subseteq$)\quad if $Q\supseteq Q'$ then $\neg Q'\supseteq \neg Q$ so
$\textsf{\upshape post}\sqb{\texttt{\small W}}P' \cap \neg Q\neq\emptyset$ implies $\textsf{\upshape post}\sqb{\texttt{\small W}}P' \cap \neg Q'\neq\emptyset$;\\
($\supseteq$)\quad conversely $\textsf{\upshape post}\sqb{\texttt{\small W}}P' \cap \neg Q'\neq\emptyset$ implies $\exists{Q}\mathrel{.}\textsf{\upshape post}\sqb{\texttt{\small W}}P' \cap \neg Q\neq\emptyset\wedge
Q\supseteq Q'$ by choosing $Q=Q'$.
}\\
=
\formulaexplanation{{\{\pair{P}{Q}\mid\neg(\textsf{\upshape post}\sqb{\texttt{\small W}}P \subseteq Q)\}}}{def\@. $\subseteq$ and $\neg$}\\
=
\lastformulaexplanation{\alpha^{\neg}\comp\mathcal{T}_{\textrm{HL}}(\texttt{\small W})}{def\@. $\alpha^{\neg}$ and $\mathcal{T}_{\textrm{HL}}$ for Hoare logic}{\mbox{\qed}}
\end{calculus}\let\qed\relax
\end{proof}
\begin{theorem}[Theory of $\overline{\textrm{HL}}$]\label{th:theory:not:Hoare}
\begin{eqntabular}[fl]{@{\quad}rcl}
\mathcal{T}_{\overline{\textrm{HL}}}(\texttt{\small W})
&=&
\{\pair{P}{Q}\mid \begin{array}[t]{@{}l@{}}
\exists n\geqslant 1\mathrel{.}
\exists\pair{\sigma_i\in I}{i\in\interval{1}{n}}\mathrel{.}\sigma_1\in P\wedge{}\\
\forall  i\in\interval[open right]{1}{n}\mathrel{.}\pair{\mathcal{B}\sqb{\texttt{\small B}}\cap\{\sigma_i\}}{\neg\{\sigma_{i+1}\}}\in\mathcal{T}_{\overline{\textrm{HL}}}(\texttt{\small S})
\wedge
\sigma_n\not\in\mathcal{B}\sqb{\texttt{\small B}} \wedge\sigma_n\not\in Q\}
\end{array}\nonumber
\end{eqntabular}
\end{theorem}
\begin{proof}[Proof of Th\@. \ref{th:theory:not:Hoare}]
\begin{calculus}
\formula{\mathcal{T}_{\overline{\textrm{HL}}}(\texttt{\small W})}\\
=
\formulaexplanation{\{\pair{P}{Q}\mid
\textsf{\upshape post}\sqb{\neg\texttt{\small B}}(\Lfp{\subseteq}\bar{\bar{F}}^e_P) \cap \neg Q\neq\emptyset\}}{lem\@. \ref{lem:strongest:postcondition}, where $\bar{\bar{F}}^e_P(X)\triangleq P \cup \textsf{\upshape post}(\sqb{\texttt{\small B}}\fatsemi\sqb{\texttt{\small S}}^e)X$ }\\
=
\formulaexplanation{\{\pair{P}{Q}\mid \Lfp{\subseteq}\bar{\bar{F}}^e_P \cap \textsf{\upshape pre}\sqb{\neg\texttt{\small B}}(\neg Q)\neq\emptyset\}}{(\ref{eq:post-t-pre-t}.d)}\\
=
\formulaexplanation{
\{\pair{P}{Q}\mid\exists I\in\wp(\Sigma)\mathrel{.}\bar{\bar{F}}^e_P(I)\subseteq I\wedge
\exists\pair{W}{\leqslant}\in\mathfrak{Wf}\mathrel{.}\exists\nu\in I\rightarrow W\mathrel{.}
\exists\pair{\sigma_i\in I}{i\in\interval{1}{\infty}}\mathrel{.}
\sigma_1\in \bar{\bar{F}}^e_P(\emptyset)\wedge
\forall  i\in\interval{1}{\infty}\mathrel{.}\sigma_{i+1}\in\bar{\bar{F}}^e_P(\{\sigma_i\})\wedge
\forall  i\in\interval{1}{\infty}\mathrel{.} (\sigma_i\neq \sigma_{i+1})\Rightarrow(\nu(\sigma_i)>\nu(\sigma_{i+1})\wedge
\forall  i\in\interval{1}{\infty}\mathrel{.} (\nu(\sigma_i)\not>\nu(\sigma_{i+1})\Rightarrow\{\sigma_i\}\cap \textsf{\upshape pre}\sqb{\neg\texttt{\small B}}(\neg Q)\neq 0
\}}{Th\@. \ref{th:abstract-least-fixpoint-non-emptiness}}\\[1ex]
=
\formula{
\{\pair{P}{Q}\mid\exists I\in\wp(\Sigma)\mathrel{.}
P\subseteq I\wedge \textsf{\upshape post}(\sqb{\texttt{\small B}}\fatsemi\sqb{\texttt{\small S}}^e)I\subseteq I\wedge
\exists\pair{W}{\leqslant}\in\mathfrak{Wf}\mathrel{.}\exists\nu\in I\rightarrow W\mathrel{.}
\exists\pair{\sigma_i\in I}{i\in\interval{1}{\infty}}\mathrel{.}
\sigma_1\in P\wedge
\forall  i\in\interval{1}{\infty}\mathrel{.}(\sigma_{i+1}\in P\vee \{\sigma_{i+1}\}\subseteq\textsf{\upshape post}(\sqb{\texttt{\small B}}\fatsemi\sqb{\texttt{\small S}}^e)\{\sigma_i\})\wedge
\forall  i\in\interval{1}{\infty}\mathrel{.} (\sigma_i\neq \sigma_{i+1})\Rightarrow(\nu(\sigma_i)>\nu(\sigma_{i+1})\wedge
\forall  i\in\interval{1}{\infty}\mathrel{.} (\nu(\sigma_i)\not>\nu(\sigma_{i+1})\Rightarrow\sigma_i\in\textsf{\upshape pre}\sqb{\neg\texttt{\small B}}(\neg Q)
\}}\\[0.5ex]
\rightexplanation{def\@. $\bar{\bar{F}}^e_P(X)\triangleq P \cup \textsf{\upshape post}(\sqb{\texttt{\small B}}\fatsemi\sqb{\texttt{\small S}}^e)X$,  $\subseteq$, and \textsf{\upshape post}, which is $\emptyset$-strict}\\
=
\formula{
\{\pair{P}{Q}\mid\exists I\in\wp(\Sigma)\mathrel{.}
P\subseteq I\wedge \textsf{\upshape post}(\sqb{\texttt{\small B}}\fatsemi\sqb{\texttt{\small S}}^e)I\subseteq I\wedge
\exists\pair{W}{\leqslant}\in\mathfrak{Wf}\mathrel{.}\exists\nu\in I\rightarrow W\mathrel{.}
\exists\pair{\sigma_i\in I}{i\in\interval{1}{\infty}}\mathrel{.}
\sigma_1\in P\wedge
\forall  i\in\interval{1}{\infty}\mathrel{.}\{\sigma_{i+1}\}\subseteq\textsf{\upshape post}(\sqb{\texttt{\small B}}\fatsemi\sqb{\texttt{\small S}}^e)\{\sigma_i\}\wedge
\forall  i\in\interval{1}{\infty}\mathrel{.} (\sigma_i\neq \sigma_{i+1})\Rightarrow(\nu(\sigma_i)>\nu(\sigma_{i+1})\wedge
\forall  i\in\interval{1}{\infty}\mathrel{.} (\nu(\sigma_i)\not>\nu(\sigma_{i+1})\Rightarrow\sigma_i\in\textsf{\upshape pre}\sqb{\neg\texttt{\small B}}(\neg Q)
\}}\\[0.5ex]
\rightexplanation{since if $\sigma_{i+1}\in P$, we can equivalently consider the sequence $\pair{\sigma_j\in I}{j\in\interval{i+1}{\infty}}$}\\
=
\formula{
\{\pair{P}{Q}\mid\exists I\in\wp(\Sigma)\mathrel{.}
P\subseteq I\wedge \textsf{\upshape post}(\sqb{\texttt{\small B}}\fatsemi\sqb{\texttt{\small S}}^e)I\subseteq I\wedge
\exists n\geqslant 1\mathrel{.}
\exists\pair{\sigma_i\in I}{i\in\interval{1}{n}}\mathrel{.}
\sigma_1\in P\wedge
\forall  i\in\interval[open right]{1}{n}\mathrel{.}\{\sigma_{i+1}\}\subseteq\textsf{\upshape post}(\sqb{\texttt{\small B}}\fatsemi\sqb{\texttt{\small S}}^e)\{\sigma_i\}\wedge
\sigma_n\in\textsf{\upshape pre}\sqb{\neg\texttt{\small B}}(\neg Q)
\}}\\[0.5ex]
\explanation{($\subseteq$)\quad By $\pair{W}{\leqslant}\in\mathfrak{Wf}$, $\nu\in I\rightarrow W$, $\forall  i\in\interval{1}{\infty}\mathrel{.} (\sigma_i\neq \sigma_{i+1})\Rightarrow(\nu(\sigma_i)>\nu(\sigma_{i+1})$, the sequence is ultimately stationary at some rank $n$. For then on, $\sigma_{i+1}=\sigma_{i}$, $i\geqslant n$ and so $\nu(\sigma_i)=\nu(\sigma_{i+1})$.
Therefore 
 $\forall  i\in\interval{1}{\infty}\mathrel{.} (\nu(\sigma_i)\not>\nu(\sigma_{i+1})\Rightarrow\sigma_i\not\in{Q}$ implies that $\sigma_n\in\textsf{\upshape pre}\sqb{\neg\texttt{\small B}}(\neg Q)$;\\
($\supseteq$)\quad Conversely, from $\pair{\sigma_i\in I}{i\in\interval{1}{n}}$ we can define $W=\{\sigma_i\mid i\in\interval{1}{n}\} \cup \{-\infty\}$ with
$-\infty<\sigma_i<\sigma_{i+1}$ and $\nu(x)=\si x\in\{\sigma_i\mid i\in\interval{1}{n}\alors x\sinon-\infty\fsi$ and the sequence $\pair{\sigma_j\in I}{j\in\interval{1}{\infty}}$ repeats $\sigma_n$ ad infimum for $j\geqslant n$.}\\[0.5ex]
=
\formulaexplanation{
\{\pair{P}{Q}\mid\exists I\in\wp(\Sigma)\mathrel{.}
P\subseteq I\wedge \textsf{\upshape post}(\sqb{\texttt{\small B}}\fatsemi\sqb{\texttt{\small S}}^e)I\subseteq I\wedge
\exists n\geqslant 1\mathrel{.}
\exists\pair{\sigma_i\in I}{i\in\interval{1}{n}}\mathrel{.}
\sigma_1\in P\wedge
\forall  i\in\interval[open right]{1}{n}\mathrel{.}\{\sigma_{i+1}\}\subseteq\textsf{\upshape post}(\sqb{\texttt{\small B}}\fatsemi\sqb{\texttt{\small S}}^e)\{\sigma_i\}\wedge
\sigma_n\not\in\mathcal{B}\sqb{\texttt{\small B}} \wedge\sigma_n\not\in Q\}}{def\@. \textsf{\upshape pre}}\\[0.5ex]
=
\formulaexplanation{
\{\pair{P}{Q}\mid 
\exists n\geqslant 1\mathrel{.}
\exists\pair{\sigma_i\in I}{i\in\interval{1}{n}}\mathrel{.}
\sigma_1\in P\wedge
\forall  i\in\interval[open right]{1}{n}\mathrel{.}\{\sigma_{i+1}\}\subseteq\textsf{\upshape post}(\sqb{\texttt{\small B}}\fatsemi\sqb{\texttt{\small S}}^e)\{\sigma_i\}\wedge
\sigma_n\not\in\mathcal{B}\sqb{\texttt{\small B}} \wedge\sigma_n\not\in Q\}}{$I$ is not used and can always be chosen to be $\Sigma$}\\[0.5ex]
=
\formulaexplanation{
\{\pair{P}{Q}\mid 
\exists n\geqslant 1\mathrel{.}
\exists\pair{\sigma_i\in I}{i\in\interval{1}{n}}\mathrel{.}
\sigma_1\in P\wedge
\forall  i\in\interval[open right]{1}{n}\mathrel{.}\textsf{\upshape post}(\sqb{\texttt{\small B}}\fatsemi\sqb{\texttt{\small S}}^e)\{\sigma_i\}\cap \{\sigma_{i+1}\}\neq\emptyset\wedge
\sigma_n\not\in\mathcal{B}\sqb{\texttt{\small B}} \wedge\sigma_n\not\in Q\}}{since $x\in X\Leftrightarrow X\cap\{x\}\neq\emptyset$}\\[0.5ex]
=
\formulaexplanation{
\{\pair{P}{Q}\mid 
\exists n\geqslant 1\mathrel{.}
\exists\pair{\sigma_i\in I}{i\in\interval{1}{n}}\mathrel{.}
\sigma_1\in P\wedge
\forall  i\in\interval[open right]{1}{n}\mathrel{.}\textsf{\upshape post}(\sqb{\texttt{\small B}}\fatsemi\sqb{\texttt{\small S}}^e)\{\sigma_i\}\cap \neg(\neg\{\sigma_{i+1}\})\neq\emptyset\wedge
\sigma_n\not\in\mathcal{B}\sqb{\texttt{\small B}} \wedge\sigma_n\not\in Q\}}{def\@. $\neg X=\Sigma\setminus X$}\\[0.5ex]
=
\formulaexplanation{
\{\pair{P}{Q}\mid 
\exists n\geqslant 1\mathrel{.}
\exists\pair{\sigma_i\in I}{i\in\interval{1}{n}}\mathrel{.}
\sigma_1\in P\wedge
\forall  i\in\interval[open right]{1}{n}\mathrel{.}\neg(\textsf{\upshape post}(\sqb{\texttt{\small B}}\fatsemi\sqb{\texttt{\small S}}^e)\{\sigma_i\}\subseteq(\neg\{\sigma_{i+1}\}))\wedge
\sigma_n\not\in\mathcal{B}\sqb{\texttt{\small B}} \wedge\sigma_n\not\in Q\}}{$\neg(X\subseteq Y)\Leftrightarrow(X\cap\neg Y\neq\emptyset$}\\[0.5ex]
=
\formulaexplanation{
\{\pair{P}{Q}\mid 
\exists n\geqslant 1\mathrel{.}
\exists\pair{\sigma_i\in I}{i\in\interval{1}{n}}\mathrel{.}
\sigma_1\in P\wedge
\forall  i\in\interval[open right]{1}{n}\mathrel{.}\neg(\textsf{\upshape post}(\sqb{\texttt{\small S}}^e)(\mathcal{B}\sqb{\texttt{\small B}}\cap\{\sigma_i\})\subseteq(\neg\{\sigma_{i+1}\}))\wedge
\sigma_n\not\in\mathcal{B}\sqb{\texttt{\small B}} \wedge\sigma_n\not\in Q\}}{def\@. $\textsf{\upshape post}$, $\sqb{\texttt{\small B}}$, and $\fatsemi$}\\[0.5ex]

=
\formulaexplanation{
\{\pair{P}{Q}\mid 
\exists n\geqslant 1\mathrel{.}
\exists\pair{\sigma_i\in I}{i\in\interval{1}{n}}\mathrel{.}
\sigma_1\in P\wedge
\forall  i\in\interval[open right]{1}{n}\mathrel{.}\pair{\mathcal{B}\sqb{\texttt{\small B}}\cap\{\sigma_i\}}{\neg\{\sigma_{i+1}\}}\in\{\pair{P}{Q}\mid\neg(\textsf{\upshape post}(\sqb{\texttt{\small S}}^e)P\subseteq Q)\}
\wedge
\sigma_n\not\in\mathcal{B}\sqb{\texttt{\small B}} \wedge\sigma_n\not\in Q\}}{def\@. $\in$}\\[0.5ex]

=
\lastformulaexplanation{
\{\pair{P}{Q}\mid 
\exists n\geqslant 1\mathrel{.}
\exists\pair{\sigma_i\in I}{i\in\interval{1}{n}}\mathrel{.}
\sigma_1\in P\wedge
\forall  i\in\interval[open right]{1}{n}\mathrel{.}\pair{\mathcal{B}\sqb{\texttt{\small B}}\cap\{\sigma_i\}}{\neg\{\sigma_{i+1}\}}\in\mathcal{T}_{\overline{\textrm{HL}}}(\texttt{\small S})
\wedge
\sigma_n\not\in\mathcal{B}\sqb{\texttt{\small B}} \wedge\sigma_n\not\in Q\}}{def\@. $\mathcal{T}_{\overline{\textrm{HL}}}(\texttt{\small S})$}{\mbox{\qed}}
\end{calculus}\let\qed\relax
\end{proof}
\vskip1ex
\subsubsection{Calculational Design of\/ $\overline{\mbox{\normalfont HL}}$ Proof Rules}~\\[-1em]

\noindent\hypertarget{th:slides:7}{\textsc{Theorem 7 ($\overline{\mbox{\normalfont HL}}$ rules for conditional iteration)}}. 
\begin{eqntabular}{c}
\frac {\displaystyle\exists\pair{\sigma_i\in I}{i\in\interval{1}{n}}\mathrel{.}\sigma_1\in P\wedge{}
\forall  i\in\interval[open right]{1}{n}\mathrel{.}
\llparenthesis\,\mathcal{B}\sqb{\texttt{\small B}}\cap\{\sigma_i\}\,\rrparenthesis\,\texttt{\small S}\,\llparenthesis\,\neg\{\sigma_{i+1}\}\,\rrparenthesis
\wedge
\sigma_n\not\in\mathcal{B}\sqb{\texttt{\small B}} \wedge\sigma_n\not\in Q}{\displaystyle\llparenthesis\,P\,\rrparenthesis\,\texttt{\small while (B) S}\,\llparenthesis\, Q\,\rrparenthesis}
\label{eq:not:Hoare:while:rule}
\end{eqntabular}

\begin{proof}[Proof of (\ref{eq:not:Hoare:while:rule})]

We write $\llparenthesis\,P\,\rrparenthesis\,\texttt{\small S}\,\llparenthesis\, Q\,\rrparenthesis \triangleq\pair{P}{Q}\in{\overline{\textrm{HL}}}(\texttt{\small S})$;

By structural induction (\texttt{\small S} being a strict component of \texttt{\small while (B) S}), the rule for $\llparenthesis\,P\,\rrparenthesis\,\texttt{\small S}\,\llparenthesis\, Q\,\rrparenthesis$ have already been defined; 

By Aczel method, the (constant) fixpoint $\Lfp{\subseteq}\LAMBDA{X}S$ is defined by
$\{\frac{\emptyset}{c}\mid c\in S\}$; 

So for \texttt{\small while (B) S} we have an axiom $\frac {\displaystyle\emptyset}{\displaystyle\llparenthesis\,P\,\rrparenthesis\,\texttt{\small while (B) S}\,\llparenthesis\, Q\,\rrparenthesis}$ with side condition $\exists\pair{\sigma_i\in I}{i\in\interval{1}{n}}\mathrel{.}\sigma_1\in P\wedge{}
\forall  i\in\interval[open right]{1}{n}\mathrel{.}
\llparenthesis\,\mathcal{B}\sqb{\texttt{\small B}}\cap\{\sigma_i\}\,\rrparenthesis\,\texttt{\small S}\,\llparenthesis\,\neg\{\sigma_{i+1}\}\,\rrparenthesis
\wedge
\sigma_n\not\in\mathcal{B}\sqb{\texttt{\small B}} \wedge\sigma_n\not\in Q$  where $\llparenthesis\,\mathcal{B}\sqb{\texttt{\small B}}\cap\{\sigma_i\}\,\rrparenthesis\,\texttt{\small S}\,\llparenthesis\,\neg\{\sigma_{i+1}\}\,\rrparenthesis$ is well-defined by structural induction;

\vskip1mm

Traditionally, the side condition is written as a premiss, to get (\ref{eq:not:Hoare:while:rule}).
\end{proof}
This is nothing but debugging formalized as a logic since $\pair{\sigma_i\in I}{i\in\interval{1}{n}}$ is a finite iteration in the loop starting with $P$ true and finishing with $Q$ false, which is obviously a counter example to Hoare triple $\{P\}\,\texttt{\small while (B) S}\,\{Q\}$. Notice that recursively $\llparenthesis\,\mathcal{B}\sqb{\texttt{\small B}}\cap\{\sigma_i\}\,\rrparenthesis\,\texttt{\small S}\,\llparenthesis\,\{\sigma_{i+1}\}\,\rrparenthesis$ enforces the execution of the loop body \texttt{\small S}  to start in state $\sigma_i$ and terminate in state $\sigma_{i+1}$.

\end{toappendix}

\subsection{Application I: Calculational Design of a New Forward Logic for Termination with Correct Reachability of a Postcondition or Nontermination}\label{Calculational-Design-of-the-Extended-Hoare-Logic}

Using $\bot$ to denote nontermination, we write $Q_{\bot}\triangleq Q\cup\{\bot\}$ and $Q_{\angelic}\triangleq Q\setminus\{\bot\}$. The semantics and predicates/assertions are relational. They can establish a relation between initial and final values of a loop body to show that a variant function in a well founded set is decreasing (as in
Turing/Floyd method formalized by Th\@. \ref{th:Turing:Floyd}). See example \ref{ex:fact:spec}.

The language includes a \texttt{\small break} out of the closest enclosing loop, so  the specifications have the form $\{P\}\,\texttt{\small S}\,\{ok:Q,br:T\}$ meaning that any execution of \texttt{\small S} started in a state of $P$ will terminate in a state of $Q_{\angelic}$, or not terminate if $\bot\in Q$, or break out of \texttt{\small S} to the closest enclosing loop in a state satisfying $T$.
So $Q=\{\bot\}$ and $T=\emptyset$ would mean definite non termination (when $P\neq\emptyset$).

To design the logic, we first formally define the meaning of specifications as an abstraction of \textsf{\textup{post}}. Then we proceed by structural induction on the syntax of the language. Using fixpoint  over approximation Th\@. \ref{th:Fixpoint-Overapproximation}, the iteration rule is  ($\overline{\Sigma}$ is $\Sigma$ extended to an auxiliary variable in $\overline{\mathbb{X}}$ for each variable in $\mathbb{X}$)\ifshort\ \proofinapx\fi
\bgroup\abovedisplayskip-4pt\belowdisplayskip-3pt\begin{eqntabular}{@{}c@{}}
\frac{\begin{array}[t]{@{}c@{}}
\{\sigma\in\wp(\overline{\Sigma})\mid \sigma_{\overline{\mathbb{X}}}=\sigma_{\mathbb{X}} \wedge \sigma_{\mathbb{X}}\in P\}\subseteq {I}\quad
\{{\mathcal{B}\sqb{\texttt{\small B}}\cap I_{\angelic}}\}\,\texttt{\small S}\,\{ok:R, br:T\}\\[2pt] R_{\angelic}\subseteq I
\quad
(\mathcal{B}\sqb{\neg\texttt{\small B}}\cap I)\subseteq Q
\quad
T \subseteq Q
\quad
R_{\bot}\subseteq Q
\\[2pt]
(\bot\notin Q)
\Rightarrow
(\exists\pair{W}{\preceq}\in\mathfrak{Wf}\mathrel{.}
\exists\nu\in I\rightarrow W\mathrel{.}
\forall  \pair{\underline{\sigma}}{\sigma'}\in I\mathrel{.}\nu( \underline{\sigma})\succ\nu(\sigma'))
\end{array}}
{\{P\}\,\texttt{\small while  (B) S}\,\{ok:Q, br:T\}}\label{eq:ehl:while}
\end{eqntabular}\egroup
\begin{example}\label{ex:fact:post}For factorial \texttt{\small fact}, we choose the invariant $I=I_{\angelic}\cup I_{\bot}$ with 
$I_{\angelic}=\{n=\underline{n}\wedge f=1\}\cup\{\underline{n}>{n}\geqslant 0\wedge f=\prod_{i=n}^{\underline{n}}i\}$ with $\prod\emptyset=1$
for termination and $I_{\bot}= \{n\leqslant\underline{n}<0\wedge f=\bot\}$ for nontermination, 
$\pair{W}{\preceq}=\pair{\mathbb{N}}{\leqslant}\in\mathfrak{Wf}$, $\nu(\underline{n},n)=n$. 
We have $\mathcal{B}\sqb{\texttt{\small B}}=n\neq0$ 
so that 
$\{\mathcal{B}\sqb{\texttt{\small B}}\cap I_{\angelic}\}\,\texttt{\small f = f*n; n = n-1;}\,\{ok:R, br:T\}$ is 
$\{(n=\underline{n}\neq0\wedge f=1)\vee(n>\underline{n}> 0\wedge f=\prod_{i=n}^{\underline{n}}i)\}\,\texttt{\small f = f*n; n = n-1;}\,\{ok:R, br:\emptyset\}$ with $R$ = $R_{\angelic}$ =
$(n=\underline{n}-1\neq0\wedge f=\underline{n})\vee(n>\underline{n}> 0\wedge f=\prod_{i=n}^{\underline{n}}i)$ $\subseteq$ $I$, $R_{\bot}=\emptyset$
and $T=\emptyset$ by termination and absence of \texttt{\small break}.
\end{example}
\begin{toappendix}
\label{sec:apx:Calculational-Design-of-the-Extended-Hoare-Logic}
\subsection{The post image transformer with breaks}
To handle breaks we extend the definition of $\textsf{\upshape post}\sqb{\texttt{\small S}}$ componentwise.
\begin{eqntabular}{rcl}
    \textsf{\upshape post}^{\times}(\sqb{\texttt{\small S}}_\bot)
    &\triangleq&
\pair{
        \textsf{\upshape post}(\sqb{\texttt{\small S}}^e\cup
        \sqb{\texttt{\small S}}^\bot)}
       {\textsf{\upshape post} \sqb{\texttt{\small S}}^b}
       \label{eq:def:post:x}
\end{eqntabular}
The semantic of extended Hoare triples is
\begin{eqntabular}{rcl}
\{P\}\,\texttt{\small S}\,\{ok:Q, br:T\}
&\triangleq&
\triple{P}{Q}{T}\in\alpha_{\textup{HL}}(\textsf{\upshape post}^{\times}(\sqb{\texttt{\small S}}_\bot))
\label{eq:semantics-Hoare-logic}
\end{eqntabular} where
\begin{eqntabular}[fl]{@{}rcl}
\alpha_{\textup{HL}}(\textsf{\upshape post}^{\times}(\sqb{\texttt{\small S}}_\bot))
&\triangleq &\{\triple{P}{Q}{T}\mid
    \textsf{\upshape post}^{\times} \sqb{\texttt{\small S}}_\bot P  \stackrel{\text{.}}{\subseteq}
    \pair{Q}{T}\}\label{eq:def:aHLopost:times}\\
&=&\{\triple{P}{Q}{T}\mid
    \textsf{\upshape post}(\sqb{\texttt{\small S}}^e\cup
        \sqb{\texttt{\small S}}^\bot)P \subseteq Q \land
    {\textsf{\upshape post} \sqb{\texttt{\small S}}^bP \subseteq T}
        \}\renumber{\textexplanation{by def.\ of $\textsf{\upshape post}^{\times}$}}
\end{eqntabular}
Note that this may involve several batches of never modified auxiliary variables.

\subsection{Auxiliary Propositions}
\noindent We will use the following auxiliary lemmas.
\bgroup\arraycolsep0.5\arraycolsep\begin{eqntabular}[fl]{L@{\quad}rcl@{\quad and\quad}rcl}
\labelitemi\quad join preservation
&
{\textsf{\upshape post}}(\bigcup_{i\in\Delta} \tau_i)Q 
&=& 
\bigcup_{i\in\Delta}{\textsf{\upshape post}}(\tau_i)Q 
&
{\textsf{\upshape post}}(\tau)\bigcup_{i\in\Delta} Q _i
&=&
\bigcup_{i\in\Delta} {\textsf{\upshape post}}(\tau)Q _i
 \label{eq:post:join-preservation}
\end{eqntabular}\egroup
\begin{proof}[proof of (\ref{eq:post:join-preservation})] By the Galois connections $\pair{\wp(X\times Y)}{\subseteq}\galois{\textsf{\upshape post}}{{\textsf{\upshape post}^{-1}}}\pair{\wp(X)\stackrel{\sqcupdot}{\longrightarrow}\wp(Y)}{\stackrel{.}{\subseteq}}$ and
$\pair{\wp(X)}{\subseteq}\galois{\textsf{\upshape post}(\tau)}{\widetilde{\textsf{\upshape pre}}(\tau)}\pair{\wp(Y)}{\subseteq}$
where the lower adjoint preserves arbitrary joins.
\end{proof}
\begin{eqntabular}[fl]{L@{\qquad}rcl}
\labelitemi\quad composition
&
\textsf{\upshape post}(r_1\fatsemi r_2)P 
&=& 
\textsf{\upshape post}(r_2)\comp \textsf{\upshape post}(r_1)P 
 \label{eq:post:composition}
\end{eqntabular}
\begin{proof}[proof of (\ref{eq:post:composition})] 
\begin{calculus}[$\Rightarrow$~]<2pt>
\formula{\textsf{\upshape post}(r_1\fatsemi r_2)P }\\
=
\formulaexplanation{\{\sigma'\in Y\mid\exists\sigma\in P\mathrel{.}\pair{\sigma}{\sigma'}\in r_1\fatsemi r_2\}}{def.\ (\ref{eq:def:post}) of \textsf{\upshape post}}\\
=
\formulaexplanation{\{\sigma'\in Y\mid\exists\sigma\in P\mathrel{.}\exists \sigma''\mathrel{.}\pair{\sigma}{\sigma''}\in r_1
\land  \pair{\sigma''}{\sigma'}\in r_2\}}{def.\ relation composition $\fatsemi$}\\
=
\formulaexplanation{\{\sigma'\in Y\mid\exists \sigma''\mathrel{.}\pair{\sigma''}{\sigma'}\in r_2
\land \exists\sigma\in P\mathrel{.} \pair{\sigma}{\sigma''}\in r_1\}}{associativity of disjunction}\\
=
\formulaexplanation{\{\sigma'\in Y\mid\exists \sigma''\mathrel{.}\pair{\sigma''}{\sigma'}\in r_2
\land \sigma''\in\{\sigma''\mid\exists\sigma\in P\mathrel{.} \pair{\sigma}{\sigma''}\in r_1\}\}}{def.\ $\in$}\\
=
\formulaexplanation{\{\sigma'\in Y\mid\exists \sigma''\mathrel{.}\pair{\sigma''}{\sigma'}\in r_2
\land \sigma''\in\textsf{\upshape post}(r_1)P\}}{def.\ (\ref{eq:def:post}) of \textsf{\upshape post}}\\
=
\formulaexplanation{\textsf{\upshape post}(r_2)(\textsf{\upshape post}(r_1)P)}{def.\ (\ref{eq:def:post}) of \textsf{\upshape post}}\\
=
\formulaexplanation{\textsf{\upshape post}(r_2)\comp\textsf{\upshape post}(r_1)P}{def.\ function composition $\comp$}
\end{calculus}
\end{proof}
\begin{eqntabular}[fl]{L@{\qquad}rcl}
\labelitemi\quad Boolean postcondition
&
\textsf{\upshape post}(\sqb{\texttt{\small B}})P
=&
P\cap\mathcal{B}\sqb{\texttt{\small B}}
 \label{eq:Boolean-postcondition}
\end{eqntabular}
\begin{proof}[proof of (\ref{eq:Boolean-postcondition})]
\begin{calculus}[($\Rightarrow$)~]<2pt>
\formula{\textsf{\upshape post}(\sqb{\texttt{\small B}})Q}\\
=
\formulaexplanation{\{\sigma'\mid\exists\sigma\in P\mathrel{.}\pair{\sigma}{\sigma'}\in\sqb{\texttt{\small B}}\}}{def.\ (\ref{eq:def:post}) of \textsf{\upshape post}}\\
=
\formulaexplanation{\{\sigma'\mid\exists\sigma\in P\mathrel{.}\sigma=\sigma'\in\mathcal{B}\sqb{\texttt{\small B}}\}}{def.\  $\sqb{\texttt{\small B}}\triangleq\{\pair{\sigma}{\sigma}\mid\sigma\in\mathcal{B}\sqb{\texttt{\small B}}\}$}\\
=
\formulaexplanation{\{\sigma\mid\sigma\in P\land\sigma\in\mathcal{B}\sqb{\texttt{\small B}}\}}{def.\ equality}\\

=
\lastformulaexplanation{P\cap \mathcal{B}\sqb{\texttt{\small B}}}{def.\ intersection $\cap$}{\mbox{\qed}}
\end{calculus}\let\qed\relax
\end{proof}

\begin{eqntabular}[fl]{L@{\qquad}rcl}
\labelitemi\quad commutativity
&
\Lfp{\subseteq}{F^e}
&=& 
\Lfp{\subseteq}F'^e
\label{eq:Fe:inversion}\\
&{F^e}&\triangleq&\LAMBDA{X}\textsf{\upshape id} \cup (\sqb{\texttt{\small B}}\fatsemi\sqb{\texttt{\small S}}^e\fatsemi X),\quad X\in\wp(\Sigma\times\Sigma)\renumber{(\ref{eq:natural-transformer-finite}) }\\
&F'^e &\triangleq&\LAMBDA{X}\textsf{\upshape id} \cup (X\fatsemi\sqb{\texttt{\small B}}\fatsemi\sqb{\texttt{\small S}}^e)
 \nonumber
\end{eqntabular}
\begin{proof}[proof of (\ref{eq:Fe:inversion})] Let $\pair{X^n}{n\in\mathbb{N}}$ and
$\pair{X'^n}{n\in\mathbb{N}}$ be the respective iterates of ${F^e}$ and $F'^e$ (which preserve joins) starting from the infimum $\emptyset$. We have $X^0=X'^0$ so $X^1=X'^1=\textsf{id}=(\sqb{\texttt{\small B}}\fatsemi\sqb{\texttt{\small S}}^e)^0$. Assume $X^n=X'^n=\bigcup_{k<n}(\sqb{\texttt{\small B}}\fatsemi\sqb{\texttt{\small S}}^e)^k$, $n>0$ by induction hypothesis. Then
\begin{calculus}
\formula{X^{n+1}}\\
=\formulaexplanation{F^e(X^{n})}{def.\ iterates}\\
=\formulaexplanation{\textsf{\upshape id} \cup (\sqb{\texttt{\small B}}\fatsemi\sqb{\texttt{\small S}}^e\fatsemi X^n)}{def.\ (\ref{eq:natural-transformer-finite}) of $F^e$ and $X\cap\Sigma\times\{\bot\}=\emptyset$}\\
=\formulaexplanation{\textsf{\upshape id} \cup (\sqb{\texttt{\small B}}\fatsemi\sqb{\texttt{\small S}}^e\fatsemi \bigcup_{k<n}(\sqb{\texttt{\small B}}\fatsemi\sqb{\texttt{\small S}}^e)^k)}{induction hypothesis}\\
=\formulaexplanation{(\sqb{\texttt{\small B}}\fatsemi\sqb{\texttt{\small S}}^e)^0 \cup \bigcup_{k<n}(\sqb{\texttt{\small B}}\fatsemi\sqb{\texttt{\small S}}^e)^{k+1})}{def.\ power and relation composition $\fatsemi$}\\
=\formulaexplanation{\bigcup_{k<n+1}(\sqb{\texttt{\small B}}\fatsemi\sqb{\texttt{\small S}}^e)^{k}}{which is the induction hypotehsis}\\
=\formulaexplanation{\textsf{\upshape id} \cup (\bigcup_{k<n}(\sqb{\texttt{\small B}}\fatsemi\sqb{\texttt{\small S}}^e)^k\fatsemi\sqb{\texttt{\small B}}\fatsemi\sqb{\texttt{\small S}}^e )}{def.\ power and relation composition $\fatsemi$}\\
=\formulaexplanation{\textsf{\upshape id} \cup (X'^n\fatsemi\sqb{\texttt{\small B}}\fatsemi\sqb{\texttt{\small S}}^e )}{induction hypothesis
}\\
=\formulaexplanation{F'^e(X'^{n})}{def.\ (\ref{eq:Fe:inversion}) of $F'^e$}\\
=\formulaexplanation{X'^{n+1}}{def.\ iterates}
\end{calculus}
It follows that $\Lfp{\subseteq}{F^e}$
=
$\bigcup_{k\in \mathbb{X}}X^k$
=
$\bigcup_{k\in \mathbb{X}}X'^k$
=
$\Lfp{\subseteq}F'^e$.
\end{proof}

\begin{eqntabular}[fl]{L@{\qquad}rcl}
\labelitemi\quad commutation
&
\textsf{\upshape post}({F'^e}(X))P
&=& 
{\bar{F}'^e_P}(\textsf{\upshape post}(X)P)\label{eq:post:commutation:Fe}\\
&{\bar{F}'^e_P}(X)&\triangleq&
P\cup\textsf{\upshape post}(\sqb{\texttt{\small B}}\fatsemi\sqb{\texttt{\small S}}^e)X,\quad X\in\wp(\Sigma)\rightarrow\wp(\Sigma)
 \nonumber
\end{eqntabular}
\begin{proof}[proof of (\ref{eq:post:commutation:Fe})] 
\begin{calculus}[$\Rightarrow$~]<2pt>
\formula{\textsf{\upshape post}({F'^e}(X))P}\\
=
\formulaexplanation{\textsf{\upshape post}(\textsf{\upshape id} \cup (X\fatsemi\sqb{\texttt{\small B}}\fatsemi\sqb{\texttt{\small S}}^e))P}{def.\ (\ref{eq:Fe:inversion}) of ${F'^e}$}\\ 
=
\formulaexplanation{\textsf{\upshape post}(\textsf{\upshape id})P \cup \textsf{\upshape post}(X\fatsemi\sqb{\texttt{\small B}}\fatsemi\sqb{\texttt{\small S}}^e )P}{join preservation (\ref{eq:post:join-preservation})}\\ 
=
\formulaexplanation{P \cup \textsf{\upshape post}(X\fatsemi\sqb{\texttt{\small B}}\fatsemi\sqb{\texttt{\small S}}^e)P}{def.\ (\ref{eq:def:post}) of \textsf{\upshape post} and def.\  identity relation \textsf{\upshape id}}\\ 
=
\formulaexplanation{P \cup \textsf{\upshape post}(\sqb{\texttt{\small B}}\fatsemi\sqb{\texttt{\small S}}^e)(\textsf{\upshape post}(X)P)}{composition (\ref{eq:post:composition})}\\ 
=
\lastformulaexplanation{{\bar{F}'^e_P}(\textsf{\upshape post}(X)P)}{def.\ (\ref{eq:post:commutation:Fe}) of ${\bar{F}'^e}$}{\mbox{\qed}}
\end{calculus}
\let\qed\relax
\end{proof}
\subsection{Design of the Extended Hoare Logic}
The deductive rules of Hoare logic are derived by structural induction on the program syntax and abstraction of the semantics for each statement \texttt{\small S}.
\begin{calculus}<2pt>
\hyphen{6}\discussion{Iteration \texttt{\small while(B) S}}\\

\formula{\alpha_{\textup{HL}}(\textsf{\upshape post}^{\times}(\sqb{\texttt{\small while(B) S}}_\bot))}\\
=
\formula{\{\triple{P}{Q}{T}\mid
    \textsf{\upshape post}(\sqb{\texttt{\small while(B) S}}^e\cup
        \sqb{\texttt{\small while(B) S}}^{\bot})P \subseteq Q \land
    {\textsf{\upshape post} \sqb{\texttt{\small while(B) S}}^bP \subseteq T}
    \}
}\\
\explanation{def.\ $\alpha_{\textup{HL}}\comp\textsf{\upshape post}^{\times}$}\\
=
\formula{\{\triple{P}{Q}{T}\mid
    \textsf{\upshape post}((\Lfp{\subseteq}{F^e}\fatsemi(\sqb{\neg\texttt{\small B}}\cup\sqb{\texttt{\small B}}\fatsemi\sqb{\texttt{\small S}}^b))\cup
        (\Lfp{\subseteq}{F^e}\fatsemi\sqb{\texttt{\small B}}\fatsemi\sqb{\texttt{\small S}}^\bot)\cup\Gfp{\subseteq}{F^\bot})P \subseteq Q \land
    {\textsf{\upshape post}(\emptyset)P \subseteq T}
    \}
}\\
\explanation{
by definition (\ref{eq:natural-finite}), (\ref{eq:natural-oo}), and (\ref{eq:natural-break}) of the relational semantics of \sqb{\texttt{\small  while(B) S}}
}\\
=
\formula{\{\triple{P}{Q}{T}\mid
    \textsf{\upshape post}((\Lfp{\subseteq}{F^e}\fatsemi(\sqb{\neg\texttt{\small B}}\cup\sqb{\texttt{\small B}}\fatsemi\sqb{\texttt{\small S}}^b))\cup
        (\Lfp{\subseteq}{F^e}\fatsemi\sqb{\texttt{\small B}}\fatsemi\sqb{\texttt{\small S}}^\bot)\cup\Gfp{\subseteq}{F^\bot})P \subseteq Q 
\}
}\\
\explanation{$\textsf{\upshape post}(\emptyset)P=\emptyset$}\\
= 
\formula{\{\triple{P}{Q}{T}\mid
    \textsf{\upshape post}
        (\Lfp{\subseteq}{F^e}\fatsemi(\sqb{\neg\texttt{\small B}} \cup \sqb{\texttt{\small B}}\fatsemi\sqb{\texttt{\small S}}^b)P \subseteq Q  \land
    \textsf{\upshape post}(
        \Lfp{\subseteq}{F^e}\fatsemi\sqb{\texttt{\small B}}\fatsemi\sqb{\texttt{\small
            S}}^\bot))P\subseteq Q  \land
       \textsf{\upshape post}(\Gfp{\subseteq}{F^\bot}
        )P\subseteq Q
    \}
}\\
\explanation{
$\textsf{\upshape post}$ preserves arbitrary joins (\ref{eq:post:join-preservation}), and $(A \cup B) \subseteq q \Leftrightarrow A \subseteq q \land B \subseteq q $}\\
= 
\formula{\{\triple{P}{Q}{T}\mid
    \textsf{\upshape post}(\sqb{\neg\texttt{\small B}}\cup\sqb{\texttt{\small B}}\fatsemi\sqb{\texttt{\small S}}^b)(\textsf{\upshape post}(\Lfp{\subseteq}{F^e}) P)\subseteq Q\land
    \textsf{\upshape post}(\sqb{\texttt{\small B}}\fatsemi\sqb{\texttt{\small S}}^\bot)(\textsf{\upshape post}(
        \Lfp{\subseteq}{F^e})P)\subseteq Q  \land
       \textsf{\upshape post}(\Gfp{\subseteq}{F^\bot}
        )P\subseteq Q
    \}
}\\
\explanation{
     by composition (\ref{eq:post:composition})
}\\
=
\formula{\{\triple{P}{Q}{T}\mid\exists I\in\wp(\Sigma_{\bot})\mathrel{.}\textsf{\upshape post}(\Lfp{\subseteq}{F^e}) P\subseteq I\land
    \textsf{\upshape post}(\sqb{\neg\texttt{\small B}}\cup(\sqb{\texttt{\small B}}\fatsemi\sqb{\texttt{\small S}}^b))I\subseteq Q\land
    \textsf{\upshape post}(\sqb{\texttt{\small B}}\fatsemi\sqb{\texttt{\small S}}^\bot)I\subseteq Q  \land
       \textsf{\upshape post}(\Gfp{\subseteq}{F^\bot}
        )P\subseteq Q
    \}
}\\
\explanation{introducing an over approximation $I\in\wp(\Sigma_{\bot})$.\\
($\subseteq$)\quad $\textsf{\upshape post}(r)$ preserves joins so is increasing and transitivity;\\
($\supseteq$)\quad take $I=\textsf{\upshape post}(\Lfp{\subseteq}{F^e}) P$}\\
=
\formula{\{\triple{P}{Q}{T}\mid\exists I\in\wp(\Sigma_{\bot})\mathrel{.}\textsf{\upshape post}(\Lfp{\subseteq}{F'^e}) P\subseteq I\land
    \textsf{\upshape post}(\sqb{\neg\texttt{\small B}}\cup(\sqb{\texttt{\small B}}\fatsemi\sqb{\texttt{\small S}}^b))I\subseteq Q\land
    \textsf{\upshape post}(\sqb{\texttt{\small B}}\fatsemi\sqb{\texttt{\small S}}^\bot)I\subseteq Q  \land
       \textsf{\upshape post}(\Gfp{\subseteq}{F^\bot}
        )P\subseteq Q
    \}
}\\
\explanation{by $\Lfp{\subseteq}{F^e}
=
\Lfp{\subseteq}F'^e$ 
at (\ref{eq:Fe:inversion})}\\
=
\formula{\{\triple{P}{Q}{T}\mid\exists I\in\wp(\Sigma_{\bot})\mathrel{.}\Lfp{\subseteq}{\bar{F}'^e_P}\subseteq I\land
    \textsf{\upshape post}(\sqb{\neg\texttt{\small B}}\cup(\sqb{\texttt{\small B}}\fatsemi\sqb{\texttt{\small S}}^b))I\subseteq Q\land
    \textsf{\upshape post}(\sqb{\texttt{\small B}}\fatsemi\sqb{\texttt{\small S}}^\bot)I\subseteq Q  \land
       \si \bot\in Q\alors \textsf{true}\sinon \Gfp{\subseteq}{F^\bot}\cap(P\times\{\bot\})=\emptyset\fsi
    \}
}\\
\explanation{\hyphen{6}~by Galois connection(\ref{eq:def:post:GC}), commutativity (\ref{eq:post:commutation:Fe}), and Th\@. \ref{th:fixpoint-abstraction} for $\alpha(X)=\textsf{\upshape post}(X)P$\\
\hyphen{6}~By def.\ (\ref{eq:natural-transformer-infinite}) of ${F^\bot}(X)\triangleq\sqb{\texttt{\small B}}\fatsemi\sqb{\texttt{\small S}}^e\fatsemi X$ where $X\in\wp(\Sigma\times\{\bot\})$, we have $\Gfp{\subseteq}{F^\bot}\in\wp(\Sigma\times\{\bot\})$ and so by def.\ (\ref{eq:def:post}) of \textsf{\upshape post}, we have 
$\textsf{\upshape post}(r)P$ = 
$\{\bot\mid\exists\sigma\mathrel{.}\sigma\in P\land\pair{\sigma}{\bot}\in \Gfp{\subseteq}{F^\bot}\}$
=
$\{\bot\mid \Gfp{\subseteq}{F^\bot}\cap(P\times\{\bot\})\neq\emptyset\}$.
If follows that $\textsf{\upshape post}(\Gfp{\subseteq}{F^\bot}
        )P\subseteq Q$ when $\bot\in Q$ and otherwise $\Gfp{\subseteq}{F^\bot}\cap(P\times\{\bot\})=\emptyset$
}\\
=
\formula{\{\triple{P}{Q}{T}\mid\exists I\in\wp(\Sigma_{\bot})\mathrel{.}\Lfp{\subseteq}\LAMBDA{X}P\cup\textsf{\upshape post}(\sqb{\texttt{\small B}}\fatsemi\sqb{\texttt{\small S}}^e)X \subseteq I
\land
    \textsf{\upshape post}(\sqb{\neg\texttt{\small B}}\cup(\sqb{\texttt{\small B}}\fatsemi\sqb{\texttt{\small S}}^b))I\subseteq Q\land
    \textsf{\upshape post}(\sqb{\texttt{\small B}}\fatsemi\sqb{\texttt{\small S}}^\bot)I\subseteq Q  \land
       (\bot\notin Q)\Rightarrow (\Gfp{\subseteq}{F^\bot}\cap(P\times\{\bot\})=\emptyset)
    \}
}\\
\explanation{\hyphen{6}~def.\ (\ref{eq:post:commutation:Fe}) of ${\bar{F}'^e_P}$\\
\hyphen{6}~def.\ conditional $\si\ldots\alors\ldots\sinon\ldots\fsi$ and implication $\Rightarrow$}\\
=
\formula{\{\triple{P}{Q}{T}\mid\exists I\in\wp(\Sigma_{\bot})\mathrel{.}P\subseteq I\land
\textsf{\upshape post}(\sqb{\texttt{\small B}}\fatsemi\sqb{\texttt{\small S}}^e)I\subseteq I
\land
    \textsf{\upshape post}(\sqb{\neg\texttt{\small B}}\cup(\sqb{\texttt{\small B}}\fatsemi\sqb{\texttt{\small S}}^b))I\subseteq Q\land
    \textsf{\upshape post}(\sqb{\texttt{\small B}}\fatsemi\sqb{\texttt{\small S}}^\bot)I\subseteq Q  \land
              (\bot\notin Q)\Rightarrow \{\sigma\in\mathbb{N}\rightarrow\Sigma\mid\sigma_0\in P\land\forall i\in\mathbb{N}\mathrel{.}\pair{\sigma_i}{\sigma_{i+1}}\in \sqb{\texttt{\small B}}\fatsemi\sqb{\texttt{\small S}^e}\}=\emptyset
    \}
}\\
\explanation{\hyphen{6}~induction for fixpoint  over approximation Th\@. \ref{th:Fixpoint-Overapproximation} noting that the invariant $I$ need not be strengthened since it can be always be chosen to be strong enough\\
\hyphen{6}~by lemma \ref{lem:Fbot:nontermination}}\\
=
\formula{\{\triple{P}{Q}{T}\mid\exists I\in\wp(\Sigma_{\bot})\mathrel{.}P\subseteq I\land
\textsf{\upshape post}(\sqb{\texttt{\small B}}\fatsemi\sqb{\texttt{\small S}}^e)I\subseteq I
\land
    \textsf{\upshape post}\sqb{\neg\texttt{\small B}}I\subseteq Q \land\textsf{\upshape post}(\sqb{\texttt{\small B}}\fatsemi\sqb{\texttt{\small S}}^b)I\subseteq Q\land
    \textsf{\upshape post}(\sqb{\texttt{\small B}}\fatsemi\sqb{\texttt{\small S}}^\bot)I\subseteq Q  \land
       (\bot\notin Q)\Rightarrow \{\sigma\in\mathbb{N}\rightarrow\Sigma\mid \sigma_0\in P\land\forall i\in\mathbb{N}\mathrel{.}\pair{\sigma_i}{\sigma_{i+1}}\in \sqb{\texttt{\small B}}\fatsemi\sqb{\texttt{\small S}^e}\}=\emptyset
    \}
}\\
\explanation{\textsf{\upshape post} preserves joins and union subsumption law, $(A \cup B) \subseteq q \Leftrightarrow A \subseteq q \land B \subseteq q$}\\
=
\formula{\{\triple{P}{Q}{T}\mid\exists I\in\wp(\Sigma_{\bot})\mathrel{.}P\subseteq I\land
\textsf{\upshape post}\sqb{\texttt{\small B}}(\textsf{\upshape post}\sqb{\texttt{\small S}}^e I)\subseteq I
\land
    \textsf{\upshape post}\sqb{\neg\texttt{\small B}}I\subseteq Q \land\textsf{\upshape post}\sqb{\texttt{\small B}}(\textsf{\upshape post}\sqb{\texttt{\small S}}^b I)\subseteq Q\land
    \textsf{\upshape post}\sqb{\texttt{\small B}}(\textsf{\upshape post}\sqb{\texttt{\small S}}^\bot I)\subseteq Q  \land
       (\bot\notin Q)\Rightarrow (\exists\pair{W}{\preceq}\in\mathfrak{Wf}\mathrel{.}\exists
J\in\wp(\Sigma)\mathrel{.}
\exists\nu\in J\rightarrow W\mathrel{.}
 P\cup \textsf{\upshape post}(\sqb{\texttt{\small B}}\fatsemi\sqb{\texttt{\small S}^e})J\subseteq J\land{}
\forall \underline{\sigma}\in J\mathrel{.}\forall \sigma'\in \Sigma\mathrel{.}\pair{ \underline{\sigma}}{\sigma'}\in \sqb{\texttt{\small B}}\fatsemi\sqb{\texttt{\small S}^e}\Rightarrow \nu( \underline{\sigma})\succ\nu(\sigma'))
    \}
}\\
\explanation{\hyphen{6}~composition (\ref{eq:post:composition})\\
\hyphen{6}~by Turing/Floyd Th\@. \ref{th:Turing:Floyd}}\\
=
\formula{\{\triple{P}{Q}{T}\mid\exists I\in\wp(\Sigma_{\bot})\mathrel{.}P\subseteq I\land
 \textsf{\upshape post}\sqb{\texttt{\small S}}^e(\mathcal{B}\sqb{\texttt{\small B}}\cap I)\subseteq I
\land
    (\mathcal{B}\sqb{\neg\texttt{\small B}}\cap I)\subseteq Q \land\textsf{\upshape post}\sqb{\texttt{\small S}}^b(\mathcal{B}\sqb{\texttt{\small B}}\cap I)\subseteq Q\land
    \textsf{\upshape post}\sqb{\texttt{\small S}}^\bot(\mathcal{B}\sqb{\texttt{\small B}}\cap I)\subseteq Q  \land
              (\bot\notin Q)\Rightarrow (\exists\pair{W}{\preceq}\in\mathfrak{Wf}\mathrel{.}\exists
J\in\wp(\Sigma)\mathrel{.}
\exists\nu\in J\rightarrow W\mathrel{.}
P\subseteq J\land ( \mathcal{B}\sqb{\texttt{\small B}}\cap\textsf{\upshape post}(\sqb{\texttt{\small S}^e})J\subseteq J\land{}
\forall  \underline{\sigma}\in J\mathrel{.}\forall \sigma'\in \Sigma\mathrel{.}\pair{ \underline{\sigma}}{\sigma'}\in \sqb{\texttt{\small B}}\fatsemi\sqb{\texttt{\small S}^e}\Rightarrow \nu( \underline{\sigma})\succ\nu(\sigma'))
    \}
}\\
\explanation{\hyphen{6}~Boolean composition (\ref{eq:Boolean-postcondition})\\
\hyphen{6}~by composition (\ref{eq:post:composition}), Boolean postcondition (\ref{eq:Boolean-postcondition}),  and union subsumption law, $(A \cup B) \subseteq q \Leftrightarrow A \subseteq q \land B \subseteq q$}\\
=
\formula{\{\triple{P}{Q}{T}\mid\exists I'\mathrel{.}P\subseteq I'\land
 \textsf{\upshape post}\sqb{\texttt{\small S}}^e(\mathcal{B}\sqb{\texttt{\small B}}\cap I')\subseteq I'
\land
    (\mathcal{B}\sqb{\neg\texttt{\small B}}\cap I')\subseteq Q \land\textsf{\upshape post}\sqb{\texttt{\small S}}^b (\mathcal{B}\sqb{\texttt{\small B}}\cap I')\subseteq Q\land
    \textsf{\upshape post}\sqb{\texttt{\small S}}^\bot (\mathcal{B}\sqb{\texttt{\small B}}\cap I')\subseteq Q  \land
              (\bot\notin Q)\Rightarrow (\exists\pair{W}{\preceq}\in\mathfrak{Wf}\mathrel{.}
\exists\nu\in I'\rightarrow W\mathrel{.}
\forall  \underline{\sigma}\in I'\mathrel{.}\forall \sigma'\in \Sigma\mathrel{.}\pair{ \underline{\sigma}}{\sigma'}\in \sqb{\texttt{\small B}}\fatsemi\sqb{\texttt{\small S}^e}\Rightarrow \nu( \underline{\sigma})\succ\nu(\sigma'))
    \}
}\\
\explanation{($\supseteq$)\quad take $I=J=I'$\\
($\supseteq$)\quad take $I'$ to be the greatest lower bound of $I$ and $J$ in the complete lattice
of post fixpoints of $\LAMBDA{X}P\cup \textsf{\upshape post}\sqb{\texttt{\small S}}^e(\mathcal{B}\sqb{\texttt{\small B}}\cap X)$, \cite[Corollary 4.2]{CousotCousot-PJM-82-1-1979}; $\cap$ and $\textsf{\upshape post}$ are increasing in both of their parameters}\\
=
\formula{\{\triple{P}{Q}{T}\mid\exists I\in\wp(\Sigma_{\bot})\mathrel{.}P\subseteq I\land
\textsf{\upshape post}\sqb{\texttt{\small S}}^e(\mathcal{B}\sqb{\texttt{\small B}}\cap I)\subseteq I
\land
    (\mathcal{B}\sqb{\neg\texttt{\small B}}\cap I)\subseteq Q \land\textsf{\upshape post}\sqb{\texttt{\small S}}^b (\mathcal{B}\sqb{\texttt{\small B}}\cap I)\subseteq Q\land
    \textsf{\upshape post}\sqb{\texttt{\small S}}^\bot(\mathcal{B}\sqb{\texttt{\small B}}\cap I)\subseteq Q  \land
              (\bot\notin Q)\Rightarrow (\exists\pair{W}{\preceq}\in\mathfrak{Wf}\mathrel{.}
\exists\nu\in I\rightarrow W\mathrel{.}
\forall  \underline{\sigma}\in I\mathrel{.}\forall \sigma'\in 
\textsf{\upshape post}\sqb{\texttt{\small S}^e}(\mathcal{B}\sqb{\texttt{\small B}}\cap\{\varsigma\mid\varsigma= \underline{\sigma}\})\mathrel{.}\nu( \underline{\sigma})\succ\nu(\sigma'))
    \}
}\\
\explanation{by composition (\ref{eq:post:composition}) and Boolean postcondition (\ref{eq:Boolean-postcondition}), writing $\{\underline{\sigma}\}=\{\varsigma\mid\varsigma= \underline{\sigma}\}$ to make the binding of $ \underline{\sigma}$ explicit}\\
=
\formula{\{\triple{P}{Q}{T}\mid\exists \bar{I}\in\wp(\overline{\Sigma}_{\bot})\mathrel{.}\mathbb{I}_{\overline{\mathbb{X}}}(P)\subseteq \bar{I}
\land
\textsf{\upshape post}\sqb{\texttt{\small S}}^e(\mathcal{B}\sqb{\texttt{\small B}}\cap \bar{I})\subseteq \bar{I}
\land
(\mathcal{B}\sqb{\neg\texttt{\small B}}\cap \bar{I})_{\mathbb{X}}\subseteq Q
\land
\textsf{\upshape post}\sqb{\texttt{\small S}}^b (\mathcal{B}\sqb{\texttt{\small B}}\cap \bar{I})_{\mathbb{X}}\subseteq Q
\land
\textsf{\upshape post}\sqb{\texttt{\small S}}^\bot(\mathcal{B}\sqb{\texttt{\small B}}\cap \bar{I})_{\mathbb{X}}\subseteq Q  
\land
(\bot\notin Q)
\Rightarrow 
(\exists\pair{W}{\preceq}\in\mathfrak{Wf}\mathrel{.}
\exists\nu\in \bar{I}\rightarrow W\mathrel{.}
\forall  \pair{\underline{\sigma}}{\sigma'}\in 
\textsf{\upshape post}\sqb{\texttt{\small S}^e}(\mathcal{B}\sqb{\texttt{\small B}}\cap\bar{I})\mathrel{.}\nu( \underline{\sigma})\succ\nu(\sigma'))
    \}
}\\
\explanation{by introducing auxiliary variables, with $I=\bar{I}_{\mathbb{X}}$}\\
=
\formula{\{\triple{P}{Q}{T}\mid\exists \bar{I}\in\wp(\overline{\Sigma}_{\bot})\mathrel{.}\mathbb{I}_{\overline{\mathbb{X}}}(P)\subseteq \bar{I}
\land
\textsf{\upshape post}\sqb{\texttt{\small S}}^e(\mathcal{B}\sqb{\texttt{\small B}}\cap \bar{I})\subseteq \bar{I}
\land
(\mathcal{B}\sqb{\neg\texttt{\small B}}\cap \bar{I})_{\mathbb{X}}\subseteq Q
\land
\textsf{\upshape post}\sqb{\texttt{\small S}}^b (\mathcal{B}\sqb{\texttt{\small B}}\cap \bar{I})_{\mathbb{X}}\subseteq Q
\land
\textsf{\upshape post}\sqb{\texttt{\small S}}^\bot(\mathcal{B}\sqb{\texttt{\small B}}\cap \bar{I})_{\mathbb{X}}\subseteq Q  
\land
(\bot\notin Q)
\Rightarrow 
(\exists\pair{W}{\preceq}\in\mathfrak{Wf}\mathrel{.}
\exists\nu\in \bar{I}\rightarrow W\mathrel{.}
\forall  \pair{\underline{\sigma}}{\sigma'}\in \bar{I}\mathrel{.}\nu( \underline{\sigma})\succ\nu(\sigma'))
    \}
}\\
\explanation{($\subseteq$)\quad by $x\in A\land (A\subseteq B)\Rightarrow (x\in B)$\\
($\supseteq$)\quad by taking the greatest lower bound of $\bar(I)$ and $\textsf{\upshape post}\sqb{\texttt{\small S}}^e(\mathcal{B}\sqb{\texttt{\small B}}\cap \bar{I})$ in the complete lattice
of post fixpoints of $\LAMBDA{X}\textsf{\upshape post}\sqb{\texttt{\small S}}^e(\mathcal{B}\sqb{\texttt{\small B}}\cap P\cap X)$, \cite[Corollary 4.2]{CousotCousot-PJM-82-1-1979} and increasingness
}\\
=
\formula{\{\triple{P}{Q}{T}\mid\exists I\in\wp(\overline{\Sigma}_{\bot})\mathrel{.}
\mathbb{I}_{\overline{\mathbb{X}}}(P)\subseteq \bar{I}
\land
\exists R^e\mathrel{.}
\textsf{\upshape post}\sqb{\texttt{\small S}}^e(\mathcal{B}\sqb{\texttt{\small B}}\cap I)\subseteq R^e\land R^e\subseteq I
\land
(\mathcal{B}\sqb{\neg\texttt{\small B}}\cap I)_{\mathbb{X}}\subseteq Q
\land
\exists R^b\mathrel{.}
\textsf{\upshape post}\sqb{\texttt{\small S}}^b (\mathcal{B}\sqb{\texttt{\small B}}\cap I)\subseteq R^b \land R^b_{\mathbb{X}}  \subseteq Q
\land
\exists R^\bot\mathrel{.}
\textsf{\upshape post}\sqb{\texttt{\small S}}^\bot(\mathcal{B}\sqb{\texttt{\small B}}\cap I)\subseteq R^\bot \land R^\bot_{\mathbb{X}}\subseteq Q  
\land
(\bot\notin Q)
\Rightarrow 
(\exists\pair{W}{\preceq}\in\mathfrak{Wf}\mathrel{.}
\exists\nu\in I\rightarrow W\mathrel{.}
\forall  \pair{\underline{\sigma}}{\sigma'}\in I\mathrel{.}\nu( \underline{\sigma})\succ\nu(\sigma'))
    \}
}\\
\explanation{choosing $I=\bar{I}$ and introducing possible approximations by $X\subseteq Y\Leftrightarrow\exists Z\mathrel{.}X\subseteq Z\land Z\subseteq Y$}\\
=
\formula{\{\triple{P}{Q}{T}\mid\exists I\in\wp(\overline{\Sigma}_{\bot})\mathrel{.}
\mathbb{I}_{\overline{\mathbb{X}}}(P)\subseteq \bar{I}
\land
\exists R,R^b\mathrel{.}
\triple{\mathcal{B}\sqb{\texttt{\small B}}\cap I_{\angelic}}{R}{R^b}\in\alpha_{\textup{HL}}(\textsf{\upshape post}^{\times}\sqb{\texttt{\small S}}_\bot)
\land R_{\angelic}\subseteq I
\land
(\mathcal{B}\sqb{\neg\texttt{\small B}}\cap I)_{\mathbb{X}}\subseteq Q
\land
R^b_{\mathbb{X}}  \subseteq Q
\land
R_{\bot_{\mathbb{X}}}\subseteq Q  
\land
(\bot\notin Q)
\Rightarrow 
(\exists\pair{W}{\preceq}\in\mathfrak{Wf}\mathrel{.}
\exists\nu\in I\rightarrow W\mathrel{.}
\forall  \pair{\underline{\sigma}}{\sigma'}\in I\mathrel{.}\nu( \underline{\sigma})\succ\nu(\sigma'))
    \}
}\\
\explanation{def.\ (\ref{eq:def:post:x}) of $\textsf{\upshape post}^{\times}$ and (\ref{eq:def:aHLopost:times}) of $\alpha_{\textup{HL}}$}\\
=
\formula{\{\triple{P}{Q}{T}\mid\exists I\in\wp(\overline{\Sigma}_{\bot})\mathrel{.}
\{\sigma\in\wp(\overline{\Sigma})\mid \sigma_{\overline{\mathbb{X}}}=\sigma_{\mathbb{X}} \wedge \sigma_{\mathbb{X}}\in P\}\subseteq {I}
\land
\exists R,R^b\mathrel{.}\{{\mathcal{B}\sqb{\texttt{\small B}}\cap I_{\angelic}}\}\,\texttt{\small S}\,\{ok:R, br:R^b\}
\land R_{\angelic}\subseteq I
\land
(\mathcal{B}\sqb{\neg\texttt{\small B}}\cap I)_{\mathbb{X}}\subseteq Q
\land
R^b_{\mathbb{X}}  \subseteq Q
\land
R_{\bot_{\mathbb{X}}}\subseteq Q  
\land
(\bot\notin Q)
\Rightarrow 
(\exists\pair{W}{\preceq}\in\mathfrak{Wf}\mathrel{.}
\exists\nu\in I\rightarrow W\mathrel{.}
\forall  \pair{\underline{\sigma}}{\sigma'}\in I\mathrel{.}\nu( \underline{\sigma})\succ\nu(\sigma'))
    \}
}\\
\explanation{def.\ $\mathbb{I}_{\overline{\mathbb{X}}}(P)$ and (\ref{eq:semantics-Hoare-logic}) of the semantics of Hoare triples}
\end{calculus}

\noindent
The computation clearly shows that $Q$ can be chosen in $\wp(\overline{\Sigma}\times\overline{\Sigma}_{\bot})$ to relate initial to final states or $\bot$, in which case we get
\begin{calculus}<2pt>
\formula{\{\triple{P}{Q}{T}\mid\exists I\in\wp(\overline{\Sigma}_{\bot})\mathrel{.}
\{\sigma\in\wp(\overline{\Sigma})\mid \sigma_{\overline{\mathbb{X}}}=\sigma_{\mathbb{X}} \wedge \sigma_{\mathbb{X}}\in P\}\subseteq {I}
\land
\exists R,T\mathrel{.}\{{\mathcal{B}\sqb{\texttt{\small B}}\cap I_{\angelic}}\}\,\texttt{\small S}\,\{ok:R, br:T\}
\land R_{\angelic}\subseteq I
\land
(\mathcal{B}\sqb{\neg\texttt{\small B}}\cap I)\subseteq Q
\land
T \subseteq Q
\land
R_{\bot}\subseteq Q  
\land
(\bot\notin Q)
\Rightarrow 
(\exists\pair{W}{\preceq}\in\mathfrak{Wf}\mathrel{.}
\exists\nu\in I\rightarrow W\mathrel{.}
\forall  \pair{\underline{\sigma}}{\sigma'}\in I\mathrel{.}\nu( \underline{\sigma})\succ\nu(\sigma'))
    \}
}\\
\explanation{$Q\in\wp(\overline{\Sigma}\times\overline{\Sigma}_{\bot})$}
\end{calculus}
\vskip1ex

\noindent
Following Sect\@. \ref{sec:SemanticsLogics}, this set $\{\triple{P}{Q}{T}\mid\triple{P}{Q}{T}\in\alpha_{\textup{HL}}(\textsf{\upshape post}^{\times}(\sqb{\texttt{\small while(B) S}}_\bot))\}$ = $\{\triple{P}{Q}{T}\mid\{P\}\,\texttt{\small S}\,\{ok:Q, br:T\}$ by (\ref{eq:semantics-Hoare-logic}) can be equivalently defined by the following deductive system.
\begin{eqntabular}[fl]{@{}c@{}}
\frac{\begin{array}[t]{@{}c@{}}
\{\sigma\in\wp(\overline{\Sigma})\mid \sigma_{\overline{\mathbb{X}}}=\sigma_{\mathbb{X}} \wedge \sigma_{\mathbb{X}}\in P\}\subseteq {I}\\[2pt]
\{{\mathcal{B}\sqb{\texttt{\small B}}\cap I_{\angelic}}\}\,\texttt{\small S}\,\{ok:R, br:T\}\\[2pt]
R_{\angelic}\subseteq I
\quad
(\mathcal{B}\sqb{\neg\texttt{\small B}}\cap I_{\angelic})\subseteq Q
\quad
T \subseteq Q
\quad
R_{\bot}\subseteq Q  
\\[2pt]
(\bot\notin Q)
\Rightarrow 
(\exists\pair{W}{\preceq}\in\mathfrak{Wf}\mathrel{.}
\exists\nu\in I\rightarrow W\mathrel{.}
\forall  \pair{\underline{\sigma}}{\sigma'}\in I\mathrel{.}\nu( \underline{\sigma})\succ\nu(\sigma'))
\end{array}}
{\{P\}\,\texttt{\small while(B) S}\,\{ok:Q, br:T\}}
\end{eqntabular}
\end{toappendix}

\subsection{Application II: Calculational Design of a New Program Logic for possible Accessibility of a Postcondition or Nontermination}\label{sec:design:possibleAccessibilityNontermination}
As a second example, we design a logic $\{P\}\,\texttt{\small S}\,\cev{\{}ok:Q, br:T\}$ for the language of Sect\@. \ref{sect:FixpointNaturalRelationalSemantics} with natural semantics (\ref{eq:def:semantics}). A quadruple  $\{P\}\,\texttt{\small S}\,\cev{\{}ok:Q, br:T\}$ means that for any state in $P$ there exists at least one execution from that state that terminates in a state of $Q$, or $T$ through a break, or
does not terminate (contrary to incorrectness logic \cite{DBLP:journals/pacmpl/OHearn20} requiring termination). For example if $Q$ is bad, $\bot\notin Q$, and $T=\emptyset$ then from any state of $P$ there must be a finite execution reaching a bad state in $Q$ (unless all executions from that state in $P$ do not terminate or $P$ is empty), which corresponds to the incorrectness component of the outcome logic \cite{DBLP:journals/pacmpl/ZilbersteinDS23} in case $\bot\notin Q$. $\{P\}\,\texttt{\small S}\,\cev{\{}ok:\{\bot\}, br:\emptyset\}$ stipulates that any initial state in $P$ can lead to at least one nonterminating execution (as opposed to the extended Hoare specification  $\{P\}\,\texttt{\small S}\,\{ok:\{\bot\}, br:\emptyset\}$ stating that no execution from $P$ can terminate). Formally,
\bgroup\abovedisplayskip0.5\abovedisplayskip\belowdisplayskip0.5\belowdisplayskip
\begin{eqntabular}{rcl}
\{P\}\,\texttt{\small S}\,\cev{\{}ok:Q, br:T\}
&\triangleq&
\triple{P}{Q}{T}\in
\alpha_{\textsf{\upshape pre}}(\sqb{\texttt{\small S}}_{\bot})
\label{eq:def:pre-logic-notation}
\end{eqntabular}\egroup
where the abstraction $\alpha_{\textsf{\upshape pre}}(\sqb{\texttt{\small S}}_{\bot})$ is
\bgroup\abovedisplayskip0.5\abovedisplayskip\belowdisplayskip0.5\belowdisplayskip
\begin{eqntabular}[fl]{Llcl@{\qquad}L}
& \rlap{$\{\triple{P}{Q}{T}\in\wp(\Sigma)\times\wp(\Sigma_{\bot})\times\wp(\Sigma)\mid
(P\subseteq\textsf{\upshape pre}({\sqb{\texttt{\small S}}^e\cup\sqb{\texttt{\small S}}^\bot})Q)
        \vee
(P\subseteq \textsf{\upshape pre}({\sqb{\texttt{\small S}}^b})T)\}$}
\label{eq:def:alpha:pre}\\
that is&\alpha_{\textsf{\upshape pre}}(\sqb{\texttt{\small S}}_{\bot})&=&
\{\triple{P}{Q}{T}\mid
P\subseteq\textsf{\upshape pre}({\sqb{\texttt{\small S}}^e\cup\sqb{\texttt{\small S}}^\bot})Q\}&(A)\label{def:alpha:pre:A:B}\\
&&&\llap{${}\cup{}$}
\{\triple{P}{Q}{T}\mid P\subseteq \textsf{\upshape pre}({\sqb{\texttt{\small S}}^b})T\}&(B)
\renumber{\textexplanation{def.\ union $\cup$}}
\end{eqntabular}\egroup
We proceed by structural induction on statements\ifshort\ (details are found in the appendix \proofinapx)\fi. We consider the iteration \texttt{\small while(B) S} which is the most difficult case. Let us start with the easy case (B) of (\ref{def:alpha:pre:A:B}) for the iteration \texttt{\small while(B) S}.
\begin{calculus}[$\Rightarrow$~]
\formula{\{\triple{P}{Q}{T}\mid P\subseteq \textsf{\upshape pre}({\sqb{\texttt{\small while(B) S}}^b})T\}}\\[-0.5ex]
=
\formulaexplanation{\{\triple{P}{Q}{T}\mid P\subseteq \textsf{\upshape pre}(\emptyset)T\}=\{\triple{P}{Q}{T}\mid P\subseteq \emptyset\}}{def. (\ref{eq:natural-break}) of ${\sqb{\texttt{\small while(B) S}}^b}$ and (\ref{eq:def:psott-pre-pret}) of \textsf{\upshape pre}}\\[-0.5ex]
=
\formulaexplanation{\{\triple{\emptyset}{Q}{T}\mid Q\in\wp(\Sigma_{\bot})\wedge T\in\wp(\Sigma)\}}{def.\ inclusion $\subseteq$ and empty set $\emptyset$}
\end{calculus}
\hskip1em\uLstrut Following Sect\@. \ref{sec:SemanticsLogics},  (A) $\cup$ (B) can be defined by a deductive system with separate rules for (A) and (B). With notation (\ref{eq:def:pre-logic-notation}), the deductive system for (B) of (\ref{def:alpha:pre:A:B}) for the iteration \texttt{\small while(B) S} is the axiom
\bgroup\abovedisplayskip-2\abovedisplayskip\belowdisplayskip0.5\belowdisplayskip\begin{eqntabular}{c}
\frac{\emptyset}{\{\emptyset\}\,\texttt{\small while(B) S}\,\cev{\{}ok:Q, br:T\}}
\end{eqntabular}\egroup
meaning that if you never execute a program you can conclude anything on its executions. This is also
valid in Hoare logic but is not given as an explicit axiom since it can be derived from other rules (by an extensive induction on all program statements).

We now have to consider case (A) of (\ref{def:alpha:pre:A:B}) for the iteration \texttt{\small while(B) S}. This is more difficult and requires three pages of calculation plus two pages of auxiliary propositions, too much for most readers. So we will \ifshort\else first \fi sketch the main steps of this calculation for a global understanding\ifshort\ and refer to Sect\@. \ref{apx:sec:design:possibleAccessibilityNontermination} of the appendix for all further details\fi.

Starting from $\{\triple{P}{Q}{T}\mid
P\subseteq\textsf{\upshape pre}({\sqb{\texttt{\small while(B) S}}^e\cup\sqb{\texttt{\small while(B) S}}^\bot})Q\}$
we get
\bgroup\abovedisplayskip0.75\abovedisplayskip\belowdisplayskip0.75\belowdisplayskip
\begin{eqntabular*}{c}
\{\triple{P^e\cup P^{\bot}}{Q}{T}\mid\begin{array}[t]{@{}l@{}}
P^e\subseteq(\textsf{\upshape pre}(\Lfp{\subseteq}{F^e}\fatsemi\sqb{\neg\texttt{\small B}})Q_{\angelic}
\cup\textsf{\upshape pre}(\Lfp{\subseteq}{F^e}\fatsemi\sqb{\texttt{\small B}}\fatsemi\sqb{\texttt{\small S}}^b)Q_{\angelic})
\wedge{}\\
P^{\bot}\subseteq(\textsf{\upshape pre}(\Lfp{\subseteq}{F^e}\fatsemi\sqb{\texttt{\small B}}\fatsemi\sqb{\texttt{\small S}}^\bot)\{\bot\mid \bot\in Q\}\cup
\textsf{\upshape pre}(\Gfp{\subseteq}{F^\bot})\{\bot\mid \bot\in Q\}))\}\end{array}
\end{eqntabular*}\egroup
\noindent
by case analysis and expanding definitions (\ref{eq:natural-finite}) and (\ref{eq:natural-oo}) of the semantics. For states in $P^e$ there is an execution terminating in $Q_{\angelic}\triangleq Q\setminus\{\bot\}$, possibly by a break. For states in $P^{\bot}_{b}$ there is an infinite execution consisting of zero or more finite iterations followed by a nonterminating execution of the loop body. Finally, for states in  $P^{\bot}_{\ell}$ there is an execution consisting of infinitely many finite iterations. These cases are not exclusive and might be empty.

Grouping the cases $P=P^e\cup P^{\bot}_{b}$, we get (by case analysis and $\textsf{\upshape pre}(r)\emptyset=\emptyset$)
\bgroup\abovedisplayskip0.75\abovedisplayskip\belowdisplayskip0.75\belowdisplayskip
\begin{eqntabular*}{l}
\{\triple{P\cup P^{\bot}_{\ell}}{Q}{T}\mid\begin{array}[t]{@{}l@{}}
P\subseteq\Lfp{\subseteq}\LAMBDA{X}(\textsf{\upshape pre}\sqb{\neg\texttt{\small B}}Q_{\angelic})\cup
(\textsf{\upshape pre}(\sqb{\texttt{\small B}}\fatsemi\sqb{\texttt{\small S}}^b)Q_{\angelic})
\cup{}(\textsf{\upshape pre}(\sqb{\texttt{\small B}}\fatsemi\sqb{\texttt{\small S}}^\bot)\{\bot\mid \bot\in Q\})\cup{}\\(\textsf{\upshape pre}(\sqb{\texttt{\small B}}\fatsemi\sqb{\texttt{\small S}}^e)X)
\wedge
\si \bot\in Q \alors P^{\bot}_{\ell}\subseteq\textsf{\upshape pre}(\Gfp{\subseteq}{F^\bot})\{\bot\}\sinon P^{\bot}_{\ell}=\emptyset\fsi\}\end{array}
\end{eqntabular*}\egroup
\noindent We must now handle fixpoint under approximations using induction principles. Since the \textsf{pre} transformer preserves arbitrary joins, its least fixpoint iterations are stable at $\omega$. 
So the hypotheses of Th\@. \ref{th:Fixpoint-Underapproximation} for $P\subseteq\Lfp{\subseteq}\LAMBDA{X}A\cup\textsf{pre}(r)X$  become
\bgroup\abovedisplayskip0.75\abovedisplayskip\belowdisplayskip0.5\belowdisplayskip\begin{eqntabular}{c}
\exists\pair{I^n}{n\in\mathbb{N}}\mathrel{.}I^0=\emptyset\wedge \forall n\in\mathbb{N}\mathrel{.}I^n\subseteq I^{n+1}\subseteq 
 A\cup\textsf{pre}(r)I^n\wedge\exists\ell\in\mathbb{N}\mathrel{.} P\subseteq I^\ell
 \label{cal:Fixpoint-Underapproximation-pre} 
\end{eqntabular}\egroup
Since \textsf{pre} does not preserve meets, we cannot use the dual of the fixpoint abstraction Th\@. \ref{th:fixpoint-abstraction} to express $\textsf{\upshape pre}(\Gfp{\subseteq}{F^\bot})\{\bot\mid \bot\in Q\}$ in fixpoint form and then use the dual of fixpoint induction Th\@. \ref{th:OverapproximationLeastFixpointImage}.
This is where the order dual Th\@. \ref{th:OverapproximationLeastFixpointImage} is useful to under approximate the image of the greatest fixpoint $\Gfp{\subseteq}{F^\bot}$ by $\LAMBDA{r}\textsf{\upshape pre}(r)\{\bot\mid \bot\in Q\}$. The dual hypotheses of Th\@. \ref{th:OverapproximationLeastFixpointImage} are that there exists $J\in\wp(\Sigma_{\bot})$ such that, after simplifications for that particular case, are
\bgroup\abovedisplayskip0.5\abovedisplayskip\belowdisplayskip0.5\belowdisplayskip\begin{eqntabular}{c}
\exists J\in\wp(\Sigma_\bot)\mathrel{.}
\si \bot\in Q\alors
\textsf{pre}(\sqb{\texttt{\small B}}\fatsemi\sqb{\texttt{\small S}}^e)(J)\subseteq J\wedge P^{\bot}_{\ell}\subseteq J
\sinon\textsf{true}\fsi
\label{cal:Underapproximation-gfp-pre}
\end{eqntabular}\egroup
so that, by (\ref{cal:Fixpoint-Underapproximation-pre}), (\ref{cal:Underapproximation-gfp-pre}), and $\textsf{\upshape pre}(\sqb{\texttt{\small B}})R
=
\mathcal{B}\sqb{\texttt{\small B}}\cap R$, we get
\bgroup\abovedisplayskip0pt\belowdisplayskip3pt
\begin{eqntabular*}{l}
\{\triple{P\cup P^{\bot}_{\ell}}{Q}{T}\mid
\exists\pair{I^n}{n\in\mathbb{Q}}\mathrel{.}
 I^0=\emptyset\wedge
\forall n\in\mathbb{N}\mathrel{.}I^n\subseteq I^{n+1}\subseteq
(\mathcal{B}\sqb{\neg\texttt{\small B}}\cap Q_{\angelic})
\cup{}
(\mathcal{B}\sqb{\texttt{\small B}}\cap\textsf{\upshape pre}(\sqb{\texttt{\small S}}^b)Q_{\angelic}){}\\
{}\cup{}
(\mathcal{B}\sqb{\texttt{\small B}}\cap \textsf{\upshape pre}\sqb{\texttt{\small S}}^\bot\{\bot\mid \bot\in Q\})\cup{} (\sqb{\texttt{\small B}}\cap \textsf{\upshape pre}\sqb{\texttt{\small S}}^e(I^n))\wedge\exists\ell\in\mathbb{N}\mathrel{.} P\subseteq I^\ell
\wedge
\exists J\in\wp(\Sigma_\bot)\mathrel{.}
\si \bot\in Q\alors{}\\
\mathcal{B}\sqb{\texttt{\small B}}\cap\textsf{pre}(\sqb{\texttt{\small S}}^e)(J)\subseteq J\wedge P^{\bot}_{\ell}\subseteq J
\sinon P^{\bot}_{\ell}=\emptyset\fsi\}
\end{eqntabular*}\egroup
\noindent
Since we proceed by structural induction, we have, by definition (\ref{eq:def:pre-logic-notation}), to make the under approximations of $\textsf{\upshape pre}\sqb{\texttt{\small S}}$ for the loop body \texttt{\small S} to appear explicitly in the calculations, as follows
\begin{calculus}
\formula{\{\triple{P\cup P^{\bot}_{\ell}}{Q}{T}\mid
\exists R^b\mathrel{.}R^b\subseteq\textsf{\upshape pre}(\sqb{\texttt{\small S}}^b)Q_{\angelic}
\wedge
\exists R^\bot\mathrel{.}\triple{R^\bot}{\{\bot\mid \bot\in Q\}}{\emptyset}\in
\alpha_{\textsf{\upshape pre}}(\sqb{\texttt{\small S}}_{\bot})
\wedge
\exists\pair{I^n}{n\in\mathbb{Q}}\mathrel{.}
 I^0=\emptyset\wedge
\forall n\in\mathbb{N}\mathrel{.}
\exists R^e_n\mathrel{.}\triple{R^e_n}{I^n}{\emptyset}\in\alpha_{\textsf{\upshape pre}}(\sqb{\texttt{\small S}}_{\bot})\wedge
I^n\subseteq I^{n+1}\subseteq
(\mathcal{B}\sqb{\neg\texttt{\small B}}\cap Q_{\angelic})
\cup(\sqb{\texttt{\small B}}\cap R^b)\cup(\sqb{\texttt{\small B}}\cap R^\bot)\cup (\sqb{\texttt{\small B}}\cap R^e_n)\wedge\exists\ell\in\mathbb{N}\mathrel{.} P\subseteq I^\ell
\wedge
\exists J\in\wp(\Sigma_\bot)\mathrel{.}
\exists R^{\bot}_{\ell}\mathrel{.}\triple{R^{\bot}_{\ell}}{J}{\emptyset}\in\alpha_{\textsf{\upshape pre}}(\sqb{\texttt{\small S}})\wedge
\si \bot\in Q\alors
\mathcal{B}\sqb{\texttt{\small B}}\cap R^{\bot}_{\ell}\subseteq J\wedge P^{\bot}_{\ell}\subseteq J
\sinon P^{\bot}_{\ell}=\emptyset\fsi\}}
\end{calculus}
\noindent\uLstrut Following Sect\@. \ref{sec:SemanticsLogics} and using the notation (\ref{eq:def:pre-logic-notation}), the theory $\alpha_{\textsf{\upshape pre}}(\sqb{\texttt{\small while(B) S}}_{\bot})$ can be equivalently  defined by the following deductive system of the logic.
\bgroup\abovedisplayskip2pt\belowdisplayskip-7pt
\begin{eqntabular}[fl]{@{}c@{}}
\frac{\begin{array}{@{}c@{}}
I^0=\emptyset\quad%
\{R^b\}\,\texttt{\small S}\,\cev{\{}ok:\emptyset, br:Q_{\angelic}\}\quad%
\{R^\bot\}\,\texttt{\small S}\,\cev{\{}ok:\{\bot\mid \bot\in Q\}, br:\emptyset\}\\
\begin{array}[t]{@{}l@{}l@{}}
\forall n\in\mathbb{N}\mathrel{.}{}&\{R^e_n\}\,\sqb{\texttt{\small S}}\,\cev{\{}ok:I^n, br:
\emptyset\}\\[0.25ex]
& I^n\subseteq I^{n+1}\subseteq \,
(\mathcal{B}\sqb{\neg\texttt{\small B}}\cap Q_{\angelic})\cup(\mathcal{B}\sqb{\texttt{\small B}}\cap  R^b)\cup(\mathcal{B}\sqb{\texttt{\small B}}\cap  R^\bot)\cup (\mathcal{B}\sqb{\texttt{\small B}}\cap R^e_n)
\quad\exists\ell\in\mathbb{N}\mathrel{.} P\subseteq I^\ell
\end{array}\\[15pt]
\si \bot\in Q \alors \{R^{\bot}_{\ell}\}\,\sqb{\texttt{\small S}}\,\cev{\{}ok:J, br:\emptyset\}\wedge \mathcal{B}\sqb{\texttt{\small B}}\cap R^{\bot}_{\ell}\subseteq J\wedge P^{\bot}_{\ell}\subseteq J\sinon  P^{\bot}_{\ell}=\emptyset\fsi\\[0.25ex]
\end{array}
}
{
\mbox{\ustrut}\{P\cup P^{\bot}_{\ell}\}\,{\texttt{\small while(B) S}}\,\cev{\{}ok:Q, br:\emptyset\}
}\label{eq:deductive:system:pre:while}\addtocounter{equation}{1}\renumber{\raisebox{-2ex}[0pt][0pt]{(\ref{eq:deductive:system:pre:while})}}
\end{eqntabular}\egroup
\ifshort\else
Notice that termination is proved using naturals in a backward reasoning whereas ordinals would be necessary in a forward classic reasoning à la Turing/Floyd with unbounded nondeterminism. Two separate rules can also be provided for the  cases $\bot\in Q$ and $\bot\notin Q$. This was also a possibility for the extended Hoare logic.
\fi
\vspace*{-1pt}
\begin{example}\label{ex:fact:pre}Continuing  Ex\@. \ref{ex:fact:spec} and \ref{ex:fact:pre:spec}, consider the factorial with postcondition contract $\texttt{\small f}>0$. An interval analysis produces an alarm $Q=Q_{\angelic}=\texttt{\small f}\leqslant0$ where $\bot\notin Q$ so $Q_{\bot}=\emptyset$ and $P^{\bot}_{\ell}=\emptyset$. Take $R^{\bot}=R^{b}=\emptyset$ since the loop body terminates with no break. Let $I^k = n\leqslant k\wedge f\leqslant 0$ and $R^e_k=I^{k-1}$ so that $\{R^e_k\}\,\sqb{\texttt{\small f = f*n; n = n-1;}}\,\cev{\{}ok:I^k, br:\emptyset\}$. Take $P=I^{|\underline{n}|}$. By (\ref{eq:deductive:system:pre:while}), $\{P\}\,{\texttt{\small fact}}\,\cev{\{}ok:Q, br:\emptyset\}$. But $P$ implies $\texttt{\small f}\leqslant 0$ in contradiction $\{\emptyset\}\,{\texttt{\small f=1;}}\,\cev{\{}ok:P, br:\emptyset\}$  with the initialization \texttt{\small f=1} proving the unreachable alarm to be false, which \cite{DBLP:journals/pacmpl/OHearn20,DBLP:journals/pacmpl/ZilbersteinDS23} cannot do.
\end{example}
\begin{toappendix}
\label{apx:sec:design:possibleAccessibilityNontermination}
\subsection{Auxiliary Propositions}
\noindent We will use the following auxiliary lemmas.
\bgroup\arraycolsep0.5\arraycolsep\begin{eqntabular}[fl]{L@{\qquad}rcl@{\quad and\quad}rcl}
\labelitemi\quad join preservation
&
{\textsf{\upshape pre}}(\bigcup_{i\in\Delta} \tau_i)Q 
&=& 
\bigcup_{i\in\Delta}{\textsf{\upshape pre}}(\tau_i)Q 
&
{\textsf{\upshape pre}}(\tau)\bigcup_{i\in\Delta} Q _i
&=&
\bigcup_{i\in\Delta} {\textsf{\upshape pre}}(\tau)Q _i
 \label{eq:pre:join-preservation}
\end{eqntabular}\egroup
\begin{proof}[proof of (\ref{eq:pre:join-preservation})] By the Galois connections $\pair{\wp(X\times Y)}{\subseteq}\galois{\textsf{\upshape pre}}{{\textsf{\upshape pre}^{-1}}}\pair{\wp(X)\stackrel{\sqcupdot}{\longrightarrow}\wp(Y)}{\stackrel{.}{\subseteq}}$ and
$\pair{\wp(X)}{\subseteq}\galois{\textsf{\upshape pre}(\tau)}{\widetilde{\textsf{\upshape post}}(\tau)}\pair{\wp(Y)}{\subseteq}$
where the lower adjoint preserves arbitrary joins.
\end{proof}
\begin{eqntabular}[fl]{L@{\qquad}rcl}
\labelitemi\quad composition
&
\textsf{\upshape pre}(r_1\fatsemi r_2)Q 
&=& 
\textsf{\upshape pre}(r_1)\comp \textsf{\upshape pre}(r_2)Q 
 \label{eq:pre:composition}
\end{eqntabular}
\begin{proof}[proof of (\ref{eq:pre:composition})] 
\begin{calculus}[$\Rightarrow$~]<2pt>
\formula{\textsf{\upshape pre}(r_1\fatsemi r_2)Q }\\
=
\formulaexplanation{\{\sigma\in X\mid\exists\sigma'\in Q\mathrel{.}\pair{\sigma}{\sigma'}\in r_1\fatsemi r_2\}}{def.\ (\ref{eq:def:psott-pre-pret}) of \textsf{\upshape pre}}\\
=
\formulaexplanation{\{\sigma\in X\mid\exists\sigma'\in Q\mathrel{.}\exists \sigma''\mathrel{.}\pair{\sigma}{\sigma''}\in r_1
\wedge  \pair{\sigma''}{\sigma'}\in r_2\}}{def.\ relation composition $\fatsemi$}\\
=
\formulaexplanation{\{\sigma\in X\mid\exists \sigma''\mathrel{.}\pair{\sigma}{\sigma''}\in r_1
\wedge \exists\sigma'\in Q\mathrel{.} \pair{\sigma''}{\sigma'}\in r_2\}}{associativity of disjunction}\\
=
\formulaexplanation{\{\sigma\in X\mid\exists \sigma''\mathrel{.}\pair{\sigma}{\sigma''}\in r_1
\wedge \sigma''\in\{\sigma''\mid\exists\sigma'\in Q\mathrel{.} \pair{\sigma''}{\sigma'}\in r_2\}\}}{def.\ $\in$}\\
=
\formulaexplanation{\{\sigma\in X\mid\exists \sigma''\mathrel{.}\pair{\sigma}{\sigma''}\in r_1
\wedge \sigma''\in\textsf{\upshape pre}(r_2)Q\}}{def.\ (\ref{eq:def:psott-pre-pret}) of \textsf{\upshape pre}}\\
=
\formulaexplanation{\textsf{\upshape pre}(r_1)(\textsf{\upshape pre}(r_2)Q)}{def.\ (\ref{eq:def:psott-pre-pret}) of \textsf{\upshape pre}}\\
=
\formulaexplanation{\textsf{\upshape pre}(r_1)\comp\textsf{\upshape pre}(r_2)Q}{def.\ function composition $\comp$}
\end{calculus}
\end{proof}
\begin{eqntabular}[fl]{L@{\qquad}rcl}
\labelitemi\quad partitioning
&
\textsf{\upshape pre}(r)Q 
&=& 
\textsf{\upshape pre}(r\cap(\Sigma\times\Sigma))Q_{\angelic} 
\cup
\textsf{\upshape pre}(r\cap(\Sigma\times\{\bot\}))\{\bot\mid\bot\in Q\}
 \label{eq:pre:partitioning}
\end{eqntabular}
\begin{proof}[proof of (\ref{eq:pre:partitioning})] 
\begin{calculus}[$\Rightarrow$~]<2pt>
\formula{\textsf{\upshape pre}(r)Q}\\
=
\formulaexplanation{\{\sigma\mid\exists \sigma'\mathrel{.} \pair{\sigma}{\sigma'}\in r \wedge \sigma'\in Q\}}{def.\ (\ref{eq:def:psott-pre-pret}) of \textsf{\upshape pre}}\\
=
\formula{\{\sigma\mid\exists \sigma'\mathrel{.} \pair{\sigma}{\sigma'}\in (r\cap(\Sigma\times\Sigma)\cup r\cap(\Sigma\times\{\bot\})) \wedge \sigma'\in Q\}}\\
\explanation{$r\in\wp(\Sigma\times\Sigma_{\bot})$, $\Sigma_{\bot}=\Sigma\cup\{\bot\}$ and $\bot\notin \Sigma$}\\
=
\formulaexplanation{\{\sigma\mid\exists \sigma'\mathrel{.} (\pair{\sigma}{\sigma'}\in r\cap(\Sigma\times\Sigma)
\lor
\pair{\sigma}{\sigma'}\in  r\cap(\Sigma\times\{\bot\})) \wedge \sigma'\in Q\}}{def.\ $\cup$}\\
=
\formulaexplanation{\{\sigma\mid\exists \sigma'\mathrel{.} (\pair{\sigma}{\sigma'}\in r\cap(\Sigma\times\Sigma)\wedge \sigma'\in Q)
\lor
(\pair{\sigma}{\sigma'}\in  r\cap(\Sigma\times\{\bot\})\wedge \sigma'\in Q) \}}{distributivity}\\
=
\formulaexplanation{\{\sigma\mid (\exists \sigma'\mathrel{.}\pair{\sigma}{\sigma'}\in r\cap(\Sigma\times\Sigma)\wedge \sigma'\in Q)
\lor
(\exists \sigma'\mathrel{.}\pair{\sigma}{\sigma'}\in  r\cap(\Sigma\times\{\bot\})\wedge \sigma'\in Q) \} }{associativity}\\
=
\formulaexplanation{\{\sigma\mid \exists \sigma'\mathrel{.}\pair{\sigma}{\sigma'}\in r\cap(\Sigma\times\Sigma)\wedge \sigma'\in Q\}
\cup
\{\sigma\mid\exists \sigma'\mathrel{.}\pair{\sigma}{\sigma'}\in  r\cap(\Sigma\times\{\bot\})\wedge \sigma'\in Q \} }{def.\ $\cup$}\\
=
\formula{\{\sigma\mid \exists \sigma'\mathrel{.}\pair{\sigma}{\sigma'}\in r\cap(\Sigma\times\Sigma)\wedge \sigma'\in Q_{\angelic}\}
\cup
\{\sigma\mid\exists \sigma'\mathrel{.}\pair{\sigma}{\sigma'}\in  r\cap(\Sigma\times\{\bot\})\wedge \sigma'\in \{\bot\mid\bot\in Q\} \} }\\
\explanation{for the first term, $\sigma'\in\Sigma$ and $\sigma'\in Q$ so $\sigma'\in Q\cap\Sigma\triangleq Q_{\angelic}$, while, for the second term,  ${\sigma'}\in \{\bot\}$ and $\sigma'\in Q$ so $\sigma'\in Q\cap\{\bot\}= \{\bot\mid\bot\in Q\}$}\\
=
\lastformulaexplanation{\textsf{\upshape pre}(r\cap(\Sigma\times\Sigma))Q_{\angelic} 
\cup
\textsf{\upshape pre}(r\cap(\Sigma\times\{\bot\}))\{\bot\mid\bot\in Q\}}{def.\ (\ref{eq:def:psott-pre-pret}) of \textsf{\upshape pre}}{\mbox{\qed}}
\end{calculus}\let\qed\relax
\end{proof}
\begin{eqntabular}[fl]{L@{\qquad}rcl}
\labelitemi\quad join covering
&
P\subseteq A\cup B&
\Leftrightarrow& 
\exists P_A, P_B\mathrel{.}P=P_A\cup P_B\wedge P_A\subseteq A \wedge P_B\subseteq B
 \label{eq:join-covering}
\end{eqntabular}
\begin{proof}[proof of (\ref{eq:join-covering})] 
\begin{calculus}[($\Rightarrow$)~]<2pt>
($\Rightarrow$)\discussion{take $P_A=P\cap A$ and $P_B=P\cap B$ so that $P_A\subseteq P\cap A \subseteq A$ and $P_B\subseteq P\cap B\subseteq B$; }\\
($\Leftarrow$)\lastdiscussion{we have $P=P_A\cup P_B\subseteq A\cup B$.}{\mbox{\qed}}
\end{calculus}\let\qed\relax
\end{proof}
\begin{eqntabular}[fl]{L@{\qquad}rcl}
\labelitemi\quad Boolean precondition
&
\textsf{\upshape pre}(\sqb{\texttt{\small B}})Q
=&
\mathcal{B}\sqb{\texttt{\small B}}\cap Q
 \label{eq:Boolean-precondition}
\end{eqntabular}
\begin{proof}[proof of (\ref{eq:Boolean-precondition})] 
\begin{calculus}[($\Rightarrow$)~]<2pt>
\formula{\textsf{\upshape pre}(\sqb{\texttt{\small B}})Q}\\
=
\formulaexplanation{\{\sigma\mid\exists\sigma'\in Q\mathrel{.}\pair{\sigma}{\sigma'}\in\sqb{\texttt{\small B}}\}}{def.\ (\ref{eq:def:psott-pre-pret}) of \textsf{\upshape pre}}\\
=
\formulaexplanation{\{\sigma\mid\exists\sigma'\in Q\mathrel{.}\sigma=\sigma'\in\mathcal{B}\sqb{\texttt{\small B}}\}}{def.\  $\sqb{\texttt{\small B}}\triangleq\{\pair{\sigma}{\sigma}\mid\sigma\in\mathcal{B}\sqb{\texttt{\small B}}\}$}\\
=
\lastformulaexplanation{Q\cap \mathcal{B}\sqb{\texttt{\small B}}}{def.\ intersection $\cap$}{\mbox{\qed}}
\end{calculus}\let\qed\relax
\end{proof}
\begin{eqntabular}[fl]{L@{\qquad}L}
\labelitemi\quad $F^e$ commutation
&
$F^e$ and 
$F^e_{\textsf{\upshape pre}}\triangleq\LAMBDA{X}\LAMBDA{Q}Q\cup (\mathcal{B}\sqb{\texttt{\small B}}\cap  \textsf{\upshape pre}\sqb{\texttt{\small S}}^e(X(Q)))$
commute \label{eq:F-e-commutation}\\
& for $\alpha(X)=\LAMBDA{Q}\textsf{\upshape pre}(X)Q$ \nonumber
\end{eqntabular}
\begin{proof}[proof of (\ref{eq:F-e-commutation})] 
\begin{calculus}[($\Rightarrow$)~]<2pt>
\formula{\alpha(F^e(X))}\\
=
\formulaexplanation{\textsf{\upshape pre}(F^e(X))}{def.\ $\alpha$}\\
=
\formulaexplanation{\textsf{\upshape pre}(\textsf{\upshape id} \cup (\sqb{\texttt{\small B}}\fatsemi\sqb{\texttt{\small S}}^e\fatsemi X))}{def.\ (\ref{eq:natural-transformer-finite}) of $F^e$}\\
=
\formulaexplanation{\textsf{\upshape pre}(\textsf{\upshape id}) \stackrel{.}{\cup} \textsf{\upshape pre}(\sqb{\texttt{\small B}}\fatsemi\sqb{\texttt{\small S}}^e\fatsemi X))}{join preservation (\ref{eq:pre:join-preservation})}\\
=
\formulaexplanation{\LAMBDA{Q}\textsf{\upshape pre}(\textsf{\upshape id})Q \stackrel{.}{\cup} \textsf{\upshape pre}(\sqb{\texttt{\small B}}\fatsemi\sqb{\texttt{\small S}}^e\fatsemi X)Q}{def.\ \text{\boldmath$\lambda$} notation}\\
=
\formulaexplanation{\LAMBDA{Q}Q\cup \textsf{\upshape pre}\sqb{\texttt{\small B}}( \textsf{\upshape pre}\sqb{\texttt{\small S}}^e(\textsf{\upshape pre}(X)Q))}{def.\ (\ref{eq:def:psott-pre-pret}) of \textsf{\upshape pre} and composition (\ref{eq:pre:composition})}\\
=
\formulaexplanation{\LAMBDA{Q}Q\cup (\mathcal{B}\sqb{\texttt{\small B}}\cap  \textsf{\upshape pre}\sqb{\texttt{\small S}}^e(\textsf{\upshape pre}(X)Q))}
{Boolean precondition (\ref{eq:Boolean-precondition})}\\
=
\formulaexplanation{\LAMBDA{Q}Q\cup (\mathcal{B}\sqb{\texttt{\small B}}\cap  \textsf{\upshape pre}\sqb{\texttt{\small S}}^e(\alpha(X)Q))}{def.\ $\alpha=\textsf{\upshape pre}(X)$}\\
=
\lastformulaexplanation{F^e_{\textsf{\upshape pre}}(\alpha(X))}{def.\ (\ref{eq:F-e-commutation}) of $F^e_{\textsf{\upshape pre}}$}{\mbox{\qed}}
\end{calculus}\let\qed\relax
\end{proof}
\begin{calculus}[\labelitemi\quad]
{\labelitemi\quad}\numberedleftdiscussion{Observe that $F^e_{\textsf{\upshape pre}}(X)Q=\bar{F}^e_Q(X(Q))$ by defining $\bar{F}^e_Q\triangleq\LAMBDA{X}Q\cup (\mathcal{B}\sqb{\texttt{\small B}}\cap  \textsf{\upshape pre}\sqb{\texttt{\small S}}^e(X))$}\label{Fepre-FbareQ}
\end{calculus}
\smallskip

\begin{eqntabular}[fl]{L@{\qquad}L}
\labelitemi\quad $F^\bot$ commutation
&
$F^\bot$ and 
$F^\bot_{\textsf{\upshape pre}}\triangleq\LAMBDA{X}\LAMBDA{Q}\mathcal{B}\sqb{\texttt{\small B}}\cap(\textsf{\upshape pre}\sqb{\texttt{\small S}}^e(X(Q))$
commute for\label{eq:F-bot-commutation}\\
& $\alpha(X)=\LAMBDA{Q}\textsf{\upshape pre}(X)Q$ \nonumber
\end{eqntabular}
\begin{proof}[proof of (\ref{eq:F-bot-commutation})] 
\begin{calculus}[($\Rightarrow$)~]<2pt>
\formula{\alpha(F^\bot(X))}\\
=
\formulaexplanation{\textsf{\upshape pre}(F^\bot(X))}{def.\ $\alpha$}\\
=
\formulaexplanation{\textsf{\upshape pre}(\sqb{\texttt{\small B}}\fatsemi\sqb{\texttt{\small S}}^e\fatsemi X)}{def.\ (\ref{eq:natural-transformer-infinite}) of $F^\bot$}\\
=
\formulaexplanation{\textsf{\upshape pre}\sqb{\texttt{\small B}}\comp\textsf{\upshape pre}\sqb{\texttt{\small S}}^e\comp\textsf{\upshape pre}(X)}{composition (\ref{eq:pre:composition})}\\
=
\formulaexplanation{\LAMBDA{Q}\textsf{\upshape pre}\sqb{\texttt{\small B}}\comp\textsf{\upshape pre}\sqb{\texttt{\small S}}^e\comp\textsf{\upshape pre}(X)(Q)}{def.\ \text{\boldmath$\lambda$} notation}\\
=
\formulaexplanation{\LAMBDA{Q}\textsf{\upshape pre}\sqb{\texttt{\small B}}(\textsf{\upshape pre}\sqb{\texttt{\small S}}^e(\textsf{\upshape pre}(X)Q)}{def.\ function composition $\comp$}\\
=
\formulaexplanation{ \LAMBDA{Q}\mathcal{B}\sqb{\texttt{\small B}}\cap(\textsf{\upshape pre}\sqb{\texttt{\small S}}^e(\textsf{\upshape pre}(X)Q))}{composition (\ref{eq:pre:composition})}\\
=
\formulaexplanation{ \LAMBDA{Q}\mathcal{B}\sqb{\texttt{\small B}}\cap(\textsf{\upshape pre}\sqb{\texttt{\small S}}^e(\alpha(X)Q))}{def.\ $\alpha=\textsf{\upshape pre}(X)$}\\
=
\lastformulaexplanation{F^\bot_{\textsf{\upshape pre}}(\alpha(X))}{def.\ (\ref{eq:F-bot-commutation}) of $F^\bot_{\textsf{\upshape pre}}$}{\mbox{\qed}}
\end{calculus}\let\qed\relax
\end{proof}
\smallskip

\noindent \labelitemi\quad Let us now calculate $\alpha_{\textsf{\upshape pre}}(\sqb{\texttt{\small S}}_{\bot})$. We consider the case of the iteration \texttt{\small while(B) S} (which covers the conditional and sequential composition). Let us start with the easier case (B).
\begin{calculus}[$\Rightarrow$~]<2pt>
\formula{\{\triple{P}{Q}{T}\mid P\subseteq \textsf{\upshape pre}({\sqb{\texttt{\small while(B) S}}^b})T\}}\\
=
\formulaexplanation{\{\triple{P}{Q}{T}\mid P\subseteq \textsf{\upshape pre}(\emptyset)T\}}{def.\ (\ref{eq:natural-break}) of the natural relational semantics}\\
=
\formulaexplanation{\{\triple{P}{Q}{T}\mid P\subseteq \emptyset\}}{\textsf{\upshape pre} preserves arbitrary joins (\ref{eq:pre:join-preservation}), hence $\emptyset$}\\
=
\formulaexplanation{\{\triple{\emptyset}{Q}{T}\mid Q\in\wp(\Sigma_{\bot})\wedge T\in\wp(\Sigma)\}}{def.\ $\subseteq$}\\[-1em]
\end{calculus}
which we can interpret as ``if you believe that a loop breaks out of an outer loop, which is impossible, then you can state anything on the behavior of this outer loop''.

\subsection{Design of the Deductive System for Tthe iteration \texttt{\small while(B) S} in Case (A) of (\ref{def:alpha:pre:A:B})}
\begin{calculus}[$\Rightarrow$~]<2pt>
\formula{\{\triple{P}{Q}{T}\mid
P\subseteq\textsf{\upshape pre}({\sqb{\texttt{\small while(B) S}}^e\cup\sqb{\texttt{\small while(B) S}}^\bot})Q\}}\\
=
\formulaexplanation{\{\triple{P}{Q}{T}\mid
P\subseteq(\textsf{\upshape pre}({\sqb{\texttt{\small while(B) S}}^e)Q\cup\textsf{\upshape pre}(\sqb{\texttt{\small while(B) S}}^\bot})Q)\}}{join preservation (\ref{eq:pre:join-preservation})}\\
=
\formula{\{\triple{P}{Q}{T}\mid
P\subseteq(\textsf{\upshape pre}(\sqb{\texttt{\small while(B) S}}^e)Q_{\angelic}\cup\textsf{\upshape pre}(\sqb{\texttt{\small while(B) S}}^\bot)\{\bot\mid \bot\in Q\})\}}\\
\explanation{partitioning (\ref{eq:pre:partitioning}) with $\sqb{\texttt{\small while(B) S}}^e\in\wp(\Sigma\times\Sigma)$ and $\sqb{\texttt{\small while(B) S}}^{\bot}\in\wp(\Sigma\times\{\bot\})$}\\
=
\formula{\{\triple{P^e\cup P^{\bot}}{Q}{T}\mid
P^e\subseteq\textsf{\upshape pre}(\sqb{\texttt{\small while(B) S}}^e)Q_{\angelic}
\wedge
P^{\bot}\subseteq\textsf{\upshape pre}(\sqb{\texttt{\small while(B) S}}^\bot)\{\bot\mid \bot\in Q\})\}}\\
\explanation{By join covering (\ref{eq:join-covering}) of $P$ by the initial states $P^e$ from which executions may terminate in $Q_{\angelic}$ and those $P^{\bot}$with possible nonterminating executions.}\\
=
\formula{\{\triple{P^e\cup P^{\bot}}{Q}{T}\mid
P^e\subseteq\textsf{\upshape pre}(\Lfp{\subseteq}{F^e}\fatsemi\sqb{\neg\texttt{\small B}}\cup\sqb{\texttt{\small B}}\fatsemi\sqb{\texttt{\small S}}^b)Q_{\angelic}
\wedge
P^{\bot}\subseteq\textsf{\upshape pre}((\Lfp{\subseteq}{F^e}\fatsemi\sqb{\texttt{\small B}}\fatsemi\sqb{\texttt{\small S}}^\bot)\cup\Gfp{\subseteq}{F^\bot}
)\{\bot\mid \bot\in Q\})\}}\\
\explanation{by definitions (\ref{eq:natural-finite}) and (\ref{eq:natural-oo})}\\
=
\formulaexplanation{\{\triple{P^e\cup P^{\bot}}{Q}{T}\mid
P^e\subseteq(\textsf{\upshape pre}(\Lfp{\subseteq}{F^e}\fatsemi\sqb{\neg\texttt{\small B}})Q_{\angelic}
\cup\textsf{\upshape pre}(\Lfp{\subseteq}{F^e}\fatsemi\sqb{\texttt{\small B}}\fatsemi\sqb{\texttt{\small S}}^b)Q_{\angelic})
\wedge
P^{\bot}\subseteq(\textsf{\upshape pre}(\Lfp{\subseteq}{F^e}\fatsemi\sqb{\texttt{\small B}}\fatsemi\sqb{\texttt{\small S}}^\bot)\{\bot\mid \bot\in Q\}
\cup
\textsf{\upshape pre}(\Gfp{\subseteq}{F^\bot})\{\bot\mid \bot\in Q\}))\}}{by join preservation (\ref{eq:pre:join-preservation})}\\
=	
\formula{\{\triple{P^{ok}\cup P^{br}\cup P^{\bot}_{b}\cup P^{\bot}_{\ell}}{Q}{T}\mid
P^{ok}\subseteq\textsf{\upshape pre}(\Lfp{\subseteq}{F^e}\fatsemi\sqb{\neg\texttt{\small B}})Q_{\angelic}
\wedge
P^{br}\subseteq\textsf{\upshape pre}(\Lfp{\subseteq}{F^e}\fatsemi\sqb{\texttt{\small B}}\fatsemi\sqb{\texttt{\small S}}^b)Q_{\angelic})
\wedge
P^{\bot}_{b}\subseteq(\textsf{\upshape pre}(\Lfp{\subseteq}{F^e}\fatsemi\sqb{\texttt{\small B}}\fatsemi\sqb{\texttt{\small S}}^\bot)\{\bot\mid \bot\in Q\})
\wedge
P^{\bot}_{\ell}\subseteq(\textsf{\upshape pre}(\Gfp{\subseteq}{F^\bot})\{\bot\mid \bot\in Q\})\}}\\
\explanation{By join covering (\ref{eq:join-covering}) of $P^e$ into the initial states $P^{ok}$ for which the loop terminates normally and $P^{br}$ for which the loop terminates through a break
and join convering of $P^\bot$ into the initial states $P^\bot_b$ from which the loop body ultimately does not terminate and $P^\bot_\ell$ from which there are infinitely many terminating executions of the loop body}\\
=	
\formula{\{\triple{P^{ok}\cup P^{br}\cup P^{\bot}_{b}\cup P^{\bot}_{\ell}}{Q}{T}\mid
P^{ok}\subseteq\textsf{\upshape pre}(\Lfp{\subseteq}{F^e})(\textsf{\upshape pre}\sqb{\neg\texttt{\small B}}Q_{\angelic})
\wedge
P^{br}\subseteq\textsf{\upshape pre}(\Lfp{\subseteq}{F^e})(\textsf{\upshape pre}(\sqb{\texttt{\small B}}\fatsemi\sqb{\texttt{\small S}}^b)Q_{\angelic})
\wedge
P^{\bot}_{b}\subseteq\textsf{\upshape pre}(\Lfp{\subseteq}{F^e})(\textsf{\upshape pre}(\sqb{\texttt{\small B}}\fatsemi\texttt{\small S}^\bot)\{\bot\mid \bot\in Q\})
\wedge
P^{\bot}_{\ell}\subseteq(\textsf{\upshape pre}(\Gfp{\subseteq}{F^\bot})\{\bot\mid \bot\in Q\})\}}\\\rightexplanation{composition (\ref{eq:pre:composition})}\\
=	
\formula{\{\triple{P\cup P^{\bot}_{\ell}}{Q}{T}\mid
P\subseteq(\textsf{\upshape pre}(\Lfp{\subseteq}{F^e})(\textsf{\upshape pre}\sqb{\neg\texttt{\small B}}Q_{\angelic})
\cup\textsf{\upshape pre}(\Lfp{\subseteq}{F^e})(\textsf{\upshape pre}(\sqb{\texttt{\small B}}\fatsemi\sqb{\texttt{\small S}}^b)Q_{\angelic})
\cup\textsf{\upshape pre}(\Lfp{\subseteq}{F^e})(\textsf{\upshape pre}(\sqb{\texttt{\small B}}\fatsemi\sqb{\texttt{\small S}}^\bot)\{\bot\mid \bot\in Q\}))
\wedge
P^{\bot}_{\ell}\subseteq(\textsf{\upshape pre}(\Gfp{\subseteq}{F^\bot})\{\bot\mid \bot\in Q\})\}}\\
\rightexplanation{join covering (\ref{eq:join-covering}) with $P=P^{ok}\cup P^{br}\cup P^{\bot}_{b}$}\\
=
\formulaexplanation{\{\triple{P\cup P^{\bot}_{\ell}}{Q}{T}\mid
P\subseteq\textsf{\upshape pre}(\Lfp{\subseteq}{F^e})((\textsf{\upshape pre}\sqb{\neg\texttt{\small B}}Q_{\angelic})
\cup
(\textsf{\upshape pre}(\sqb{\texttt{\small B}}\fatsemi\sqb{\texttt{\small S}}^b)Q_{\angelic})
\cup
(\textsf{\upshape pre}(\sqb{\texttt{\small B}}\fatsemi\sqb{\texttt{\small S}}^\bot)\{\bot\mid \bot\in Q\}))
\wedge
P^{\bot}_{\ell}\subseteq(\textsf{\upshape pre}(\Gfp{\subseteq}{F^\bot})\{\bot\mid \bot\in Q\})\}}{join preservation (\ref{eq:pre:join-preservation})}\\
=
\formula{\{\triple{P\cup P^{\bot}_{\ell}}{Q}{T}\mid
P\subseteq(\Lfp{\stackrel{.}{\subseteq}}{F^e_{\textsf{\upshape pre}}})((\textsf{\upshape pre}\sqb{\neg\texttt{\small B}}Q_{\angelic})
\cup
(\textsf{\upshape pre}(\sqb{\texttt{\small B}}\fatsemi\sqb{\texttt{\small S}}^b)Q_{\angelic})
\cup
(\textsf{\upshape pre}(\sqb{\texttt{\small B}}\fatsemi\sqb{\texttt{\small S}}^\bot)\{\bot\mid \bot\in 
\cup(\textsf{\upshape pre}(\sqb{\texttt{\small B}}\fatsemi\sqb{\texttt{\small S}}^\bot)\{\bot\mid \bot\in Q\}))
\wedge
P^{\bot}_{\ell}\subseteq(\textsf{\upshape pre}(\Gfp{\subseteq}{F^\bot})\{\bot\mid \bot\in Q\})\}}\\
\explanation{by the Galois connection $\pair{\textsf{\upshape pre}}{\textsf{\upshape pre}^{-1}}$ of
(\ref{eq:def:post:GC}) $F^e$ and 
$F^e_{\textsf{\upshape pre}}$ are increasing, and, by (\ref{eq:F-e-commutation}), they commute so that
$\textsf{\upshape pre}(\Lfp{\subseteq}F^e) =\Lfp{\stackrel{.}{\subseteq}}F^e_{\textsf{\upshape pre}}$
by the fixpoint abstraction Th\@. \ref{th:fixpoint-abstraction}}\\
=
\formula{\{\triple{P\cup P^{\bot}_{\ell}}{Q}{T}\mid
P\subseteq\Lfp{\subseteq}{\bar{F}^e_{((\textsf{\upshape pre}\sqb{\neg\texttt{\small B}}Q_{\angelic})\cup
(\textsf{\upshape pre}(\sqb{\texttt{\small B}}\fatsemi\sqb{\texttt{\small S}}^b)Q_{\angelic})
\cup(\textsf{\upshape pre}(\sqb{\texttt{\small B}}\fatsemi\sqb{\texttt{\small S}}^\bot)\{\bot\mid \bot\in Q\}))}}
\wedge
P^{\bot}_{\ell}\subseteq(\textsf{\upshape pre}(\Gfp{\subseteq}{F^\bot})\{\bot\mid \bot\in Q\})\}}\\
\rightexplanation{by $F^e_{\textsf{\upshape pre}}(X)Q=\bar{F}^e_Q(X(Q))$ at (\ref{Fepre-FbareQ}) and the pointwise abstraction corollary \ref{th:Pointwise-abstraction}}\\
=
\formulaexplanation{\{\triple{P\cup P^{\bot}_{\ell}}{Q}{T}\mid
P\subseteq\Lfp{\subseteq}\LAMBDA{X}(\textsf{\upshape pre}\sqb{\neg\texttt{\small B}}Q_{\angelic})\cup
(\textsf{\upshape pre}(\sqb{\texttt{\small B}}\fatsemi\sqb{\texttt{\small S}}^b)Q_{\angelic})
\cup(\textsf{\upshape pre}(\sqb{\texttt{\small B}}\fatsemi\sqb{\texttt{\small S}}^\bot)\{\bot\mid \bot\in Q\})\cup (\textsf{\upshape pre}(\sqb{\texttt{\small B}}\fatsemi\sqb{\texttt{\small S}}^e)X)
\wedge
P^{\bot}_{\ell}\subseteq(\textsf{\upshape pre}(\Gfp{\subseteq}{F^\bot})\{\bot\mid \bot\in Q\})\}}{def.\ (\ref{Fepre-FbareQ}) of $\bar{F}^e_Q$}\\
=
\formulaexplanation{\{\triple{P\cup P^{\bot}_{\ell}}{Q}{T}\mid
P\subseteq\Lfp{\subseteq}\LAMBDA{X}(\textsf{\upshape pre}\sqb{\neg\texttt{\small B}}Q_{\angelic})\cup
(\textsf{\upshape pre}(\sqb{\texttt{\small B}}\fatsemi\sqb{\texttt{\small S}}^b)Q_{\angelic})
\cup(\textsf{\upshape pre}(\sqb{\texttt{\small B}}\fatsemi\sqb{\texttt{\small S}}^\bot)\{\bot\mid \bot\in Q\})\cup (\textsf{\upshape pre}(\sqb{\texttt{\small B}}\fatsemi\sqb{\texttt{\small S}}^e)X)
\wedge
\si \bot\in Q \alors P^{\bot}_{\ell}\subseteq(\textsf{\upshape pre}(\Gfp{\subseteq}{F^\bot})\{\bot\}\sinon P^{\bot}_{\ell}=\emptyset\fsi)\}}{since $\textsf{pre}(r)\emptyset=\emptyset$}
\end{calculus}

\smallskip

\hyphen{5}\quad Since we proceed by structural induction on statements, $\textsf{\upshape pre}\sqb{\texttt{\small S}}^\bot(\{\bot\mid \bot\in Q\})$ is already determined so can be considered to be a constant the same way that $\mathcal{B}\sqb{\texttt{\small B}}$ or $\mathcal{B}\sqb{\neg\texttt{\small B}}$ are. We now have to choose induction principles to under approximate the fixpoints. 

\smallskip

\hyphen{5}\quad Since \textsf{pre} preserves arbitrary joins, its least fixpoint iterations are stable at $\omega$. 
So the hypotheses of Th\@. \ref{th:Fixpoint-Underapproximation} for $P\subseteq\Lfp{\subseteq}\LAMBDA{X}A\cup\textsf{pre}(r)X$  become
\begin{eqntabular}{c}
\exists\pair{I^n}{n\in\mathbb{N}}\mathrel{.}I^0=\emptyset\wedge \forall n\in\mathbb{N}\mathrel{.}I^n\subseteq I^{n+1}\subseteq 
 A\cup\textsf{pre}(r)I^n\wedge\exists\ell\in\mathbb{N}\mathrel{.} P\subseteq I^\ell
\renumber{(\ref{cal:Fixpoint-Underapproximation-pre})}
\end{eqntabular}
since, $\pair{I^n}{n\in\mathbb{N}}$ being an increasing chain for $\subseteq$, we have $P\subseteq\bigcup_{n\in\mathbb{N}}{I^n}$ if and only if $\exists\ell\in\mathbb{N}\mathrel{.} P\subseteq I^\ell$.

\smallskip

\hyphen{5}\quad The hypotheses of the dual of Th\@. \ref{th:OverapproximationLeastFixpointImage} to under approximate the image of the greatest fixpoint $\Gfp{\subseteq}{F^\bot}$ by $\LAMBDA{r}\textsf{\upshape pre}(r)\{\bot\}$ are that there exists $J\in\wp(\Sigma_{\bot})$ such that
\begin{itemize}
\item[(1)] $\textsf{pre}(\emptyset)\{\bot\}\subseteq J$ which trivially holds since  $\textsf{pre}(\emptyset)\{\bot\}=\emptyset$. 
\item[(2)] $\forall X\in \wp(\Sigma\times\Sigma_\bot)\mathrel{.}\textsf{pre} (X)\{\bot\}\subseteq J \Rightarrow \textsf{pre} (F^\bot(X))\{\bot\}\subseteq J$
\begin{calculus}[$\Leftrightarrow$~]<3pt>
$\Leftrightarrow$
\formulaexplanation{\si \bot\in Q\alors\forall X\mathrel{.}
\textsf{pre} (X)\{\bot\}\subseteq J \Rightarrow \textsf{pre} (F^\bot(X))\{\bot\}\subseteq J
\sinon\textsf{true}\fsi}{since $\textsf{pre}(X)\emptyset=\emptyset$}\\
$\Leftrightarrow$
\formulaexplanation{\si \bot\in Q\alors\forall X\mathrel{.}
\textsf{pre} (X)\{\bot\}\subseteq J \Rightarrow \textsf{pre} (\sqb{\texttt{\small B}}\fatsemi\sqb{\texttt{\small S}}^e\fatsemi X)\{\bot\}\subseteq J
\sinon\textsf{true}\fsi}{def.\ (\ref{eq:natural-transformer-infinite}) of $F^{\bot}$}\\
$\Leftrightarrow$
\formula{\si \bot\in Q\alors\forall X\mathrel{.}
\textsf{pre} (X)\{\bot\}\subseteq J \Rightarrow \textsf{pre} (\sqb{\texttt{\small B}}\fatsemi\sqb{\texttt{\small S}}^e)(\textsf{pre} (X)\{\bot\})\subseteq J
\sinon\textsf{true}\fsi}\\[-0.5ex]
\rightexplanation{composition (\ref{eq:pre:composition})}\\[-1ex]
$\Leftrightarrow$
\formula{\si \bot\in Q\alors
\textsf{pre} (\sqb{\texttt{\small B}}\fatsemi\sqb{\texttt{\small S}}^e)(J)\subseteq J
\sinon\textsf{true}\fsi}\\
\explanation{($\Rightarrow$)\quad take $X=\{\pair{\sigma}{\bot}\mid \sigma\in J\}$ so that 
$\textsf{pre} (X)\{\bot\}=J$;\\ 
($\Leftarrow$)\quad for all $X\in \wp(\Sigma\times\Sigma_\bot)$, if $\textsf{pre} (X)\{\bot\}\subseteq J$ by hypothesis then $\textsf{pre} (\sqb{\texttt{\small B}}\fatsemi\sqb{\texttt{\small S}}^e)(\textsf{pre} (X))\{\bot\}\subseteq \textsf{pre} (\sqb{\texttt{\small B}}\fatsemi\sqb{\texttt{\small S}}^e)(J)$ since  $\textsf{pre}(r)$ preserves joins so is increasing, and therefore the hypothesis $\textsf{pre} (\sqb{\texttt{\small B}}\fatsemi\sqb{\texttt{\small S}}^e)(J)\subseteq J$ implies $\textsf{pre} (\sqb{\texttt{\small B}}\fatsemi\sqb{\texttt{\small S}}^e)(\textsf{pre} (X)\{\bot\})\subseteq J$ by transitivity.}\\

\end{calculus}
\item[(3)] for any $\sqsubseteq$-increasing chain $\pair{X^\delta}{\delta\in\mathbb{O}}$ of elements $X^\delta\sqsubseteq\Lfp{\sqsubseteq}F^\bot$, $\forall\beta<\lambda\mathrel{.}\textsf{pre} (X^\beta)\{\bot\}\subseteq J$ implies $\textsf{pre} (\bigcup_{\beta<\lambda}X^\beta)\{\bot\}\subseteq J$. The case $\bot\not\in Q$ being  trivially true, let us assume that
$\textsf{pre} (X^\beta)\{\bot\}\subseteq J$ that is, $
\{\sigma\mid \pair{\sigma}{\bot}\in X^{\beta}\}\subseteq J$ for all $\beta<\lambda$, $\lambda$ limit ordinal. Then we have
\begin{calculus}[$\Leftrightarrow$~]<3pt>
\formula{\textsf{pre} (\bigcup_{\beta<\lambda}X^\beta)\{\bot\}\subseteq J}\\
$\Leftrightarrow$
\formulaexplanation{\{\sigma\mid \pair{\sigma}{\bot}\in \bigcup_{\beta<\lambda}X^\beta\}\subseteq J}{def.\ (\ref{eq:def:psott-pre-pret}) of \textsf{\upshape pre}}\\
$\Leftrightarrow$
\formulaexplanation{\bigcup_{\beta<\lambda}\{\sigma\mid \pair{\sigma}{\bot}\in X^\beta\}\subseteq J}{def.\ union $\cup$}\\
$\Leftrightarrow$
\formulaexplanation{\forall\beta<\lambda\mathrel{.}\{\sigma\mid \pair{\sigma}{\bot}\in X^\beta\}\subseteq J}{def.\ inclusion $\subseteq$}\\
$\Leftrightarrow$
\formulaexplanation{\textsf{true}}{by hypothesis}
\end{calculus}
\item[(4)] $J\subseteq P$. 
\end{itemize}
So the four hypotheses of the dual of Th\@. \ref{th:OverapproximationLeastFixpointImage} for proving that $P^{\bot}_{\ell}\subseteq(\textsf{\upshape pre}(\Gfp{\subseteq}{F^\bot})\{\bot\}$ boil down to
\begin{eqntabular}{c}
\exists J\in\wp(\Sigma_\bot)\mathrel{.}
\textsf{pre}(\sqb{\texttt{\small B}}\fatsemi\sqb{\texttt{\small S}}^e)(J)\subseteq J\wedge P^{\bot}_{\ell}\subseteq J
\renumber{(\ref{cal:Underapproximation-gfp-pre})}
\end{eqntabular}
so we can go on with our formal calculation. We left it at
\begin{calculus}[$\Leftrightarrow$~]<3pt>
$\Leftrightarrow$
\formula{\{\triple{P\cup P^{\bot}_{\ell}}{Q}{T}\mid
P\subseteq\Lfp{\subseteq}\LAMBDA{X}(\textsf{\upshape pre}\sqb{\neg\texttt{\small B}}Q_{\angelic})
\cup
(\textsf{\upshape pre}(\sqb{\texttt{\small B}}\fatsemi\sqb{\texttt{\small S}}^b)Q_{\angelic})
\cup
(\textsf{\upshape pre}(\sqb{\texttt{\small B}}\fatsemi\sqb{\texttt{\small S}}^\bot)\{\bot\mid \bot\in Q\})\cup (\textsf{\upshape pre}(\sqb{\texttt{\small B}}\fatsemi\sqb{\texttt{\small S}}^e)X)
\wedge
\si \bot\in Q \alors P^{\bot}_{\ell}\subseteq(\textsf{\upshape pre}(\Gfp{\subseteq}{F^\bot})\{\bot\}\sinon P^{\bot}_{\ell}=\emptyset\fsi)\}}\\
$\Leftrightarrow$
\formulaexplanation{\{\triple{P\cup P^{\bot}_{\ell}}{Q}{T}\mid
\exists\pair{I^n}{n\in\mathbb{Q}}\mathrel{.}
 I^0=\emptyset\wedge
\forall n\in\mathbb{N}\mathrel{.}I^n\subseteq I^{n+1}\subseteq 
(\textsf{\upshape pre}\sqb{\neg\texttt{\small B}}Q_{\angelic})
\cup
(\textsf{\upshape pre}(\sqb{\texttt{\small B}}\fatsemi\sqb{\texttt{\small S}}^b)Q_{\angelic})
\cup(\textsf{\upshape pre}(\sqb{\texttt{\small B}}\fatsemi\sqb{\texttt{\small S}}^\bot)\{\bot\mid \bot\in Q\})
\cup (\textsf{\upshape pre}(\sqb{\texttt{\small B}}\fatsemi\sqb{\texttt{\small S}}^e)I^n)\wedge\exists\ell\in\mathbb{N}\mathrel{.} P\subseteq I^\ell
\wedge
\exists J\in\wp(\Sigma_\bot)\mathrel{.}
\si \bot\in Q\alors
\textsf{pre}(\sqb{\texttt{\small B}}\fatsemi\sqb{\texttt{\small S}}^e)(J)\subseteq J\wedge P^{\bot}_{\ell}\subseteq J
\sinon P^{\bot}_{\ell}=\emptyset\fsi\}}{by (\ref{cal:Fixpoint-Underapproximation-pre}) and  (\ref{cal:Underapproximation-gfp-pre})}\\
$\Leftrightarrow$
\formulaexplanation{\{\triple{P\cup P^{\bot}_{\ell}}{Q}{T}\mid
\exists\pair{I^n}{n\in\mathbb{Q}}\mathrel{.}
 I^0=\emptyset\wedge
\forall n\in\mathbb{N}\mathrel{.}I^n\subseteq I^{n+1}\subseteq 
(\mathcal{B}\sqb{\neg\texttt{\small B}}\cap Q_{\angelic})
\cup
(\mathcal{B}\sqb{\texttt{\small B}}\cap\textsf{\upshape pre}(\sqb{\texttt{\small S}}^b)Q_{\angelic})
\cup(\mathcal{B}\sqb{\texttt{\small B}}\cap \textsf{\upshape pre}\sqb{\texttt{\small S}}^\bot\{\bot\mid \bot\in Q\})\cup (\sqb{\texttt{\small B}}\cap \textsf{\upshape pre}\sqb{\texttt{\small S}}^e(I^n))\wedge\exists\ell\in\mathbb{N}\mathrel{.} P\subseteq I^\ell
\wedge
\exists J\in\wp(\Sigma_\bot)\mathrel{.}
\si \bot\in Q\alors
\mathcal{B}\sqb{\texttt{\small B}}\cap\textsf{pre}(\sqb{\texttt{\small S}}^e)(J)\subseteq J\wedge P^{\bot}_{\ell}\subseteq J
\sinon P^{\bot}_{\ell}=\emptyset\fsi\}}{Boolean precondition (\ref{eq:Boolean-precondition})}\\[-0.5ex]

$\Leftrightarrow$

\formula{\{\triple{P\cup P^{\bot}_{\ell}}{Q}{T}\mid
\exists\pair{I^n}{n\in\mathbb{Q}}\mathrel{.}
 I^0=\emptyset\wedge
\forall n\in\mathbb{N}\mathrel{.}I^n\subseteq I^{n+1}\subseteq 
(\mathcal{B}\sqb{\neg\texttt{\small B}}\cap Q_{\angelic})
\cup
(\mathcal{B}\sqb{\texttt{\small B}}\cap\textsf{\upshape pre}(\sqb{\texttt{\small S}}^b)Q_{\angelic})\cup
(\mathcal{B}\sqb{\texttt{\small B}}\cap \textsf{\upshape pre}\sqb{\texttt{\small S}}^\bot\{\bot\mid \bot\in Q\})\cup (\sqb{\texttt{\small B}}\cap \textsf{\upshape pre}\sqb{\texttt{\small S}}^e(I^n))\wedge\exists\ell\in\mathbb{N}\mathrel{.} P\subseteq I^\ell
\wedge
\exists J\in\wp(\Sigma)\mathrel{.}
\si \bot\in Q\alors
\mathcal{B}\sqb{\texttt{\small B}}\cap\textsf{pre}(\sqb{\texttt{\small S}}^e)(J)\subseteq J\wedge P^{\bot}_{\ell}\subseteq J
\sinon P^{\bot}_{\ell}=\emptyset\fsi\}}\\
\rightexplanation{since $\textsf{pre}(\sqb{\texttt{\small S}}^e)(J)=\textsf{pre}(\sqb{\texttt{\small S}}^e)(J\setminus\{\bot\})$ by (\ref{eq:natural-finite}) and (\ref{eq:natural-transformer-finite})}\\

$\Leftrightarrow$

\formula{\{\triple{P\cup P^{\bot}_{\ell}}{Q}{T}\mid
\exists R^b\mathrel{.}R^b\subseteq\textsf{\upshape pre}(\sqb{\texttt{\small S}}^b)Q_{\angelic}
\wedge
\exists\pair{I^n}{n\in\mathbb{Q}}\mathrel{.}
 I^0=\emptyset\wedge
\forall n\in\mathbb{N}\mathrel{.}I^n\subseteq I^{n+1}\subseteq 
(\mathcal{B}\sqb{\neg\texttt{\small B}}\cap Q_{\angelic})
\cup
(\mathcal{B}\sqb{\texttt{\small B}}\cap R^b)\cup
(\mathcal{B}\sqb{\texttt{\small B}}\cap \textsf{\upshape pre}\sqb{\texttt{\small S}}^\bot\{\bot\mid \bot\in Q\})\cup (\sqb{\texttt{\small B}}\cap \textsf{\upshape pre}\sqb{\texttt{\small S}}^e(I^n))\wedge\exists\ell\in\mathbb{N}\mathrel{.} P\subseteq I^\ell
\wedge
\exists J\in\wp(\Sigma)\mathrel{.}
\si \bot\in Q\alors
\mathcal{B}\sqb{\texttt{\small B}}\cap\textsf{pre}(\sqb{\texttt{\small S}}^e)(J)\subseteq J\wedge P^{\bot}_{\ell}\subseteq J
\sinon P^{\bot}_{\ell}=\emptyset\fsi\}}\\

\explanation{by under approximation of $\textsf{\upshape pre}(\sqb{\texttt{\small S}}^b)Q_{\angelic}$ by $R^b$\\[2pt]
($\Rightarrow$)\quad Take $R^b=\textsf{\upshape pre}(\sqb{\texttt{\small S}}^b)Q_{\angelic}$;\\[2pt]
($\Leftarrow$)\quad since $R^b\subseteq\textsf{\upshape pre}(\sqb{\texttt{\small S}}^b)Q_{\angelic}$ and $X\subseteq R^b$ implies $X\subseteq\textsf{\upshape pre}(\sqb{\texttt{\small S}}^b)Q_{\angelic}$}\\

$\Leftrightarrow$
\formula{\{\triple{P\cup P^{\bot}_{\ell}}{Q}{T}\mid
\exists R^b\mathrel{.}R^b\subseteq\textsf{\upshape pre}(\sqb{\texttt{\small S}}^b)Q_{\angelic}
\wedge
\exists R^\bot\mathrel{.}R^\bot\subseteq\textsf{\upshape pre}\sqb{\texttt{\small S}}^\bot\{\bot\mid \bot\in Q\}
\wedge
\exists\pair{I^n}{n\in\mathbb{Q}}\mathrel{.}
 I^0=\emptyset\wedge
\forall n\in\mathbb{N}\mathrel{.}I^n\subseteq I^{n+1}\subseteq 
(\mathcal{B}\sqb{\neg\texttt{\small B}}\cap Q_{\angelic})\cup(\mathcal{B}\sqb{\texttt{\small B}}\cap R^b)\cup(\mathcal{B}\sqb{\texttt{\small B}}\cap R^\bot)\cup (\mathcal{B}\sqb{\texttt{\small B}}\cap \textsf{\upshape pre}\sqb{\texttt{\small S}}^e(I^n))\wedge\exists\ell\in\mathbb{N}\mathrel{.} P\subseteq I^\ell
\wedge
\exists J\in\wp(\Sigma_\bot)\mathrel{.}
\si \bot\in Q\alors
\mathcal{B}\sqb{\texttt{\small B}}\cap\textsf{pre}(\sqb{\texttt{\small S}}^e)(J)\subseteq J\wedge P^{\bot}_{\ell}\subseteq J
\sinon P^{\bot}_{\ell}=\emptyset\fsi\}}\\
\explanation{by under approximation of $\textsf{\upshape pre}\sqb{\texttt{\small S}}^\bot\{\bot\mid \bot\in Q\}$ by $R^\bot$\\[2pt]
($\Rightarrow$)\quad Take $R^\bot=\textsf{\upshape pre}\sqb{\texttt{\small S}}^\bot\{\bot\mid \bot\in Q\}$;\\[2pt]
($\Leftarrow$)\quad since $R^\bot\subseteq\textsf{\upshape pre}\sqb{\texttt{\small S}}^\bot\{\bot\mid \bot\in Q\}$ and $X\subseteq R^\bot$ implies $X\subseteq\textsf{\upshape pre}\sqb{\texttt{\small S}}^\bot\{\bot\mid \bot\in Q\}$}\\

$\Leftrightarrow$
\formula{\{\triple{P\cup P^{\bot}_{\ell}}{Q}{T}\mid
\exists R^b\mathrel{.}R^b\subseteq\textsf{\upshape pre}(\sqb{\texttt{\small S}}^b)Q_{\angelic}
\wedge
\exists R^\bot\mathrel{.}R^\bot\subseteq\textsf{\upshape pre}\sqb{\texttt{\small S}}_\bot\{\bot\mid \bot\in Q\}
\wedge
\exists\pair{I^n}{n\in\mathbb{Q}}\mathrel{.}
 I^0=\emptyset\wedge
\forall n\in\mathbb{N}\mathrel{.}I^n\subseteq I^{n+1}\subseteq 
(\mathcal{B}\sqb{\neg\texttt{\small B}}\cap Q_{\angelic})
\cup(\mathcal{B}\sqb{\texttt{\small B}}\cap R^b)
\cup(\mathcal{B}\sqb{\texttt{\small B}}\cap R^\bot)\cup ((\mathcal{B}\sqb{\texttt{\small B}}\cap \textsf{\upshape pre}\sqb{\texttt{\small S}}^e(I^n))\wedge\exists\ell\in\mathbb{N}\mathrel{.} P\subseteq I^\ell
\wedge
\exists J\in\wp(\Sigma_\bot)\mathrel{.}
\si \bot\in Q\alors
\mathcal{B}\sqb{\texttt{\small B}}\cap\textsf{pre}(\sqb{\texttt{\small S}}^e)(J)\subseteq J\wedge P^{\bot}_{\ell}\subseteq J
\sinon P^{\bot}_{\ell}=\emptyset\fsi\}}\\
\explanation{def.\ (\ref{eq:def:semantics}) of the relational semantics and (\ref{eq:def:psott-pre-pret}) of \textsf{\upshape pre} so that $\textsf{pre}\sqb{\texttt{\small S}}^\bot\emptyset =\textsf{pre}\sqb{\texttt{\small S}}_\bot\emptyset$ and $\textsf{pre}\sqb{\texttt{\small S}}^\bot\{\bot\} =\textsf{pre}\sqb{\texttt{\small S}}_\bot\{\bot\}$
}\\

$\Leftrightarrow$
\formula{\{\triple{P\cup P^{\bot}_{\ell}}{Q}{T}\mid
\exists R^b\mathrel{.}R^b\subseteq\textsf{\upshape pre}(\sqb{\texttt{\small S}}^b)Q_{\angelic}
\wedge
\exists R^\bot\mathrel{.}\triple{R^\bot}{\{\bot\mid \bot\in Q\}}{\emptyset}\in
\alpha_{\textsf{\upshape pre}}(\sqb{\texttt{\small S}}_{\bot})
\wedge
\exists\pair{I^n}{n\in\mathbb{Q}}\mathrel{.}
 I^0=\emptyset\wedge
\forall n\in\mathbb{N}\mathrel{.}I^n\subseteq I^{n+1}\subseteq 
(\mathcal{B}\sqb{\neg\texttt{\small B}}\cap Q_{\angelic})
\cup(\sqb{\texttt{\small B}}\cap R^b)
\cup(\sqb{\texttt{\small B}}\cap R^\bot)\cup (\sqb{\texttt{\small B}}\cap \textsf{\upshape pre}\sqb{\texttt{\small S}}^e(I^n))\wedge\exists\ell\in\mathbb{N}\mathrel{.} P\subseteq I^\ell
\wedge
\exists J\in\wp(\Sigma_\bot)\mathrel{.}
\si \bot\in Q\alors
\mathcal{B}\sqb{\texttt{\small B}}\cap\textsf{pre}(\sqb{\texttt{\small S}}^e)(J)\subseteq J\wedge P^{\bot}_{\ell}\subseteq J
\sinon P^{\bot}_{\ell}=\emptyset\fsi\}}\\
\explanation{def.\ (\ref{eq:def:alpha:pre}) of $\alpha_{\textsf{\upshape pre}}$}\\

$\Leftrightarrow$
\formula{\{\triple{P\cup P^{\bot}_{\ell}}{Q}{T}\mid
\exists R^b\mathrel{.}R^b\subseteq\textsf{\upshape pre}(\sqb{\texttt{\small S}}^b)Q_{\angelic}
\wedge
\exists R^\bot\mathrel{.}\triple{R^\bot}{\{\bot\mid \bot\in Q\}}{\emptyset}\in
\alpha_{\textsf{\upshape pre}}(\sqb{\texttt{\small S}}_{\bot})
\wedge
\exists\pair{I^n}{n\in\mathbb{Q}}\mathrel{.}
 I^0=\emptyset\wedge
\forall n\in\mathbb{N}\mathrel{.}
\exists R^e_n\mathrel{.}\triple{R^e_n}{I^n}{\emptyset}\in\alpha_{\textsf{\upshape pre}}(\sqb{\texttt{\small S}}_{\bot})\wedge
I^n\subseteq I^{n+1}\subseteq 
(\mathcal{B}\sqb{\neg\texttt{\small B}}\cap Q_{\angelic})\cup(\sqb{\texttt{\small B}}\cap R^b)\cup(\sqb{\texttt{\small B}}\cap R^\bot)\cup (\sqb{\texttt{\small B}}\cap R^e_n)\wedge\exists\ell\in\mathbb{N}\mathrel{.} P\subseteq I^\ell
\wedge
\exists J\in\wp(\Sigma_\bot)\mathrel{.}
\si \bot\in Q\alors
\mathcal{B}\sqb{\texttt{\small B}}\cap\textsf{pre}(\sqb{\texttt{\small S}}^e)(J)\subseteq J\wedge P^{\bot}_{\ell}\subseteq J
\sinon P^{\bot}_{\ell}=\emptyset\fsi\}}\\
\explanation{by under approximation $R^e_n$ of $\textsf{\upshape pre}\sqb{\texttt{\small S}}^e(I^n)$}\\

$\Leftrightarrow$
\formula{\{\triple{P\cup P^{\bot}_{\ell}}{Q}{T}\mid
\exists R^b\mathrel{.}R^b\subseteq\textsf{\upshape pre}(\sqb{\texttt{\small S}}^b)Q_{\angelic}
\wedge
\exists R^\bot\mathrel{.}\triple{R^\bot}{\{\bot\mid \bot\in Q\}}{\emptyset}\in
\alpha_{\textsf{\upshape pre}}(\sqb{\texttt{\small S}}_{\bot})
\wedge
\exists\pair{I^n}{n\in\mathbb{Q}}\mathrel{.}
 I^0=\emptyset\wedge
\forall n\in\mathbb{N}\mathrel{.}
\exists R^e_n\mathrel{.}\triple{R^e_n}{I^n}{\emptyset}\in\alpha_{\textsf{\upshape pre}}(\sqb{\texttt{\small S}}_{\bot})\wedge
I^n\subseteq I^{n+1}\subseteq 
(\mathcal{B}\sqb{\neg\texttt{\small B}}\cap Q_{\angelic})
\cup(\sqb{\texttt{\small B}}\cap R^b)\cup(\sqb{\texttt{\small B}}\cap R^\bot)\cup (\sqb{\texttt{\small B}}\cap R^e_n)\wedge\exists\ell\in\mathbb{N}\mathrel{.} P\subseteq I^\ell
\wedge
\exists J\in\wp(\Sigma_\bot)\mathrel{.}
\exists R^{\bot}_{\ell}\mathrel{.}\triple{R^{\bot}_{\ell}}{J}{\emptyset}\in\alpha_{\textsf{\upshape pre}}(\sqb{\texttt{\small S}})\wedge
\si \bot\in Q\alors
\mathcal{B}\sqb{\texttt{\small B}}\cap R^{\bot}_{\ell}\subseteq J\wedge P^{\bot}_{\ell}\subseteq J
\sinon P^{\bot}_{\ell}=\emptyset\fsi\}}\\
\explanation{by under approximation $R^{\bot}_{\ell}$ of $\textsf{pre}(\sqb{\texttt{\small S}}^e)(J)$}
\end{calculus}
Following Sect\@. \ref{sec:SemanticsLogics}, the theory $\alpha_{\textsf{\upshape pre}}(\sqb{\texttt{\small while(B) S}}_{\bot})$ can be equivalently  defined by the following deductive system.
\begin{eqntabular}[fl]{@{}c@{}}
\frac{\begin{array}{@{}c@{}}
I^0=\emptyset\\[1pt]
\{R^b\}\,\texttt{\small S}\,\cev{\{}ok:\emptyset, br:Q_{\angelic}\}\\[3pt]
\{R^\bot\}\,\texttt{\small S}\,\cev{\{}ok:\{\bot\mid \bot\in Q\}, br:\emptyset\}\\[3pt]
\forall n\in\mathbb{N}\mathrel{.}
\begin{array}[t]{@{}l@{}}
\{R^e_n\}\,\sqb{\texttt{\small S}}\,\cev{\{}ok:I^n, br:
\emptyset\}\wedge{}\\[0.5ex]
 I^n\subseteq I^{n+1}\subseteq \,
(\mathcal{B}\sqb{\neg\texttt{\small B}}\cap Q_{\angelic})\cup(\mathcal{B}\sqb{\texttt{\small B}}\cap  R^b)\cup(\mathcal{B}\sqb{\texttt{\small B}}\cap  R^\bot)\cup (\mathcal{B}\sqb{\texttt{\small B}}\cap R^e_n)\quad\exists\ell\in\mathbb{N}\mathrel{.} P\subseteq I^\ell
\end{array}\\[15pt]
\si \bot\in Q \alors \{R^{\bot}_{\ell}\}\sqb{\texttt{\small S}}\,\cev{\{}ok:J, br:\emptyset\}\wedge \mathcal{B}\sqb{\texttt{\small B}}\cap R^{\bot}_{\ell}\subseteq J\wedge P^{\bot}_{\ell}\subseteq J\sinon  P^{\bot}_{\ell}=\emptyset\fsi\\[0.75ex]
\end{array}
}
{
\mbox{\LARGE\strut}\{P\cup P^{\bot}_{\ell}\}\,{\texttt{\small while(B) S}}\,\cev{\{}ok:Q, br:\emptyset\}
}\renumber{\raisebox{-2ex}[0pt][0pt]{(\ref{eq:deductive:system:pre:while})}}
\end{eqntabular}
\end{toappendix}
\begin{toappendix}
\section{Related Work}\label{sec:RelatedWork}
The paper provides ample citations in the text and mainly owes to the extensive work on Hoare logic \cite{DBLP:journals/cacm/Hoare69,DBLP:journals/toplas/Apt81,DBLP:journals/tcs/Apt84,DBLP:journals/fac/AptO19,DBLP:books/mc/21/AptO21}, abstract interpretation \cite{Cousot-PAI-2021} (for the design of semantics by abstraction \cite{DBLP:journals/tcs/Cousot02} and the specification of program properties by Galois connections \cite{DBLP:conf/popl/CousotC14}),
 fixpoint induction \cite{Park69-MI5,DBLP:conf/lopstr/Cousot19} to handle invariants and termination, a previous study of proof methods \cite{CousotCousot82-TNPC} (now expressed with Galois connections and fixpoint induction), and the nonconformist idea of Derek Dreyer, Ralf Jung, and Peter O'Hearn \cite{DBLP:journals/pacmpl/OHearn20} originating the interest in incorrectness e.g\@.
\cite{DBLP:journals/jlap/PoskittP23,
DBLP:journals/iandc/FengL23,
DBLP:journals/pacmpl/YanJY22,
DBLP:journals/pacmpl/RaadBDO22,
DBLP:journals/pacmpl/ZilbersteinDS23,
DBLP:journals/jacm/BruniGGR23,
DBLP:conf/sefm/VriesK11,
DBLP:journals/pacmpl/LeRVBDO22,
DBLP:conf/fossacs/AscariBG22,
DBLP:conf/sas/Vanegue22,
DBLP:conf/RelMiCS/MollerOH21,
DBLP:journals/pacmpl/ZhangK22,
DBLP:conf/ecoop/MaksimovicCLSG23,
DBLP:conf/tap/NausVSR23,
DBLP:conf/cav/RaadBDDOV20,
DBLP:journals/corr/abs-2310-18156,
DBLP:conf/concur/RaadVBO23}.

Incorrectness has also be studied in the context of logic \cite{DBLP:conf/popl/Shapiro82,DBLP:conf/aadebug/Ferrand93,DBLP:conf/mi/Lloyd95} and constraint programming \cite{DBLP:conf/agp/BerreT96} as well as mathematical logic. \cite{Bergmann-Incorrectness77} discusses [in]correctness in the presence of undefined. The definition of incorrectness requires a
referent \cite{SvobodaPeregrin-incorrect-16}, which for programming languages is their semantics.

For simplicity, we have considered antecedent and consequent to be sets of states. Using logics instead is a further abstraction with no best abstraction (e.g\@. non-compact infinite disjunction in first-order logic). This abstraction introduces incompleteness inherited by transformational logics which by themselves are complete (under expressivity conditions of the underlying logic \cite{DBLP:journals/siamcomp/Cook78,DBLP:journals/siamcomp/Cook81,DBLP:journals/eatcs/BlassG00a}, which amounts to consider the interpretation of logic formulas as sets of states).

The abstraction $\textsf{\textup{post}}({\supseteq},{\subseteq})$ comes from the specification of reverse Hoare logic by $Q\subseteq\textsf{\textup{post}}\sqb{\texttt{\small S}}P$ in \cite{DBLP:conf/sefm/VriesK11} and then a.o\@. \cite{DBLP:journals/pacmpl/OHearn20,DBLP:journals/pacmpl/ZhangK22,DBLP:journals/jlap/PoskittP23}. 
\cite{DBLP:conf/focs/Pratt76,CousotCousot82-TNPC,DBLP:journals/pacmpl/ZhangK22,DBLP:journals/pacmpl/YanJY22} also consider complement duality (\ref{eq:def:post:GC}) 
\cite{DBLP:conf/ecoop/MaksimovicCLSG23}  incorporate reasoning about non-terminating specifications. \cite[Section 5]{DBLP:journals/afp/Dardinier23a} expresses different (hyper) logics in a common framework. \cite{DBLP:journals/entcs/Schmidt07} derives a logic from an abstraction.
In the proof theoretic approach, \cite{DBLP:journals/jacm/HarperHP93}
designs deductive systems by encoding in the common
Edinburgh Logical Framework (LF), e.g\@.\ Hoare logic 
\cite[Sect\@. 6.1]{DBLP:journals/jar/AvronHMP92}.
\end{toappendix}
\nocite{DBLP:journals/cacm/Hoare69,DBLP:journals/toplas/Apt81,DBLP:journals/tcs/Apt84,DBLP:journals/fac/AptO19,DBLP:books/mc/21/AptO21,DBLP:conf/lpar/BubelGHS23,DBLP:conf/scc/Pnueli79,DeneckeErneWismath-GC-03,DBLP:conf/fossacs/AscariBG22,Cousot-PAI-2021,CousotCousot82-TNPC,DBLP:conf/vmcai/CousotCFL13,CousotCousot-PJM-82-1-1979,vonNeumann-1923-ordinals,Monk-Set-Theory,DBLP:journals/jacm/HarperHP93,DBLP:journals/jar/AvronHMP92,DBLP:journals/entcs/Schmidt07,DBLP:journals/iandc/FengL23,DBLP:journals/pacmpl/YanJY22,DBLP:journals/pacmpl/RaadBDO22,DBLP:journals/toplas/Apt81,DBLP:journals/tcs/Apt84,DBLP:journals/fac/AptO19,DBLP:books/mc/21/AptO21,DBLP:conf/scc/Pnueli79,DeneckeErneWismath-GC-03,DBLP:conf/fossacs/AscariBG22,Cousot-PAI-2021,CousotCousot82-TNPC,DBLP:conf/vmcai/CousotCFL13,CousotCousot-PJM-82-1-1979,vonNeumann-1923-ordinals,Monk-Set-Theory,DBLP:journals/jacm/HarperHP93,DBLP:journals/jar/AvronHMP92,DBLP:journals/entcs/Schmidt07,DBLP:journals/iandc/FengL23,DBLP:journals/pacmpl/YanJY22,DBLP:journals/pacmpl/RaadBDO22,DBLP:conf/lpar/BubelGHS23,DBLP:journals/afp/Dardinier23a,
DBLP:journals/eatcs/BlassG00a,
DBLP:conf/sas/Vanegue22,
DBLP:conf/tap/NausVSR23,
DBLP:conf/RelMiCS/MollerOH21,
DBLP:journals/pacmpl/ZhangAG22,
DBLP:journals/pacmpl/ZhangK22,
DBLP:conf/ecoop/MaksimovicCLSG23,
DBLP:journals/jlap/PoskittP23,
DBLP:journals/entcs/Schmidt07,
DBLP:journals/jacm/HarperHP93,
DBLP:journals/corr/abs-2310-18156,
DBLP:journals/afp/Murray20,
DBLP:journals/jar/AvronHMP92,
DBLP:conf/cav/RaadBDDOV20,
DBLP:conf/concur/RaadVBO23,
DBLP:conf/popl/Shapiro82,
DBLP:conf/aadebug/Ferrand93,
DBLP:conf/mi/Lloyd95,
DBLP:conf/agp/BerreT96,
Bergmann-Incorrectness77,
DBLP:conf/tap/NausVSR23,
SvobodaPeregrin-incorrect-16}
\section{Conclusion}
\ifshort Related work was moved to the appendix Sect\@. \ref{sec:RelatedWork} \proofinapx. \fi
We have shown that the theory of abstract interpretation can be used to design program transformational logics, including (non)termination, by defining their theory as an abstraction of the programming language fixpoint natural relational semantics and then their proof system (useful to support mechanization) by fixpoint induction and Aczel correspondence between set-theoretic fixpoint definitions and deductive systems  \cite{Aczel:1977:inductive-definitions}. The approach applies to all other abstractions of the collecting semantics into a relation, not necessarily into a logic. For future work,  this same principled approach can be used to design hyper logics \cite{DBLP:journals/afp/Dardinier23a}, including dependency logics \cite{DBLP:conf/sas/Cousot19a}, to include meta information in predicates with an instrumented semantics (e.g\@. \cite{DBLP:journals/pacmpl/ZilbersteinDS23,DBLP:journals/pacmpl/ZhangK22,DBLP:conf/sas/Vanegue22}), and to extend
temporal logics like \cite{DBLP:conf/scc/Pnueli79} to programming languages by structural induction and local invariants \cite{DBLP:conf/lpar/BubelGHS23}.
\ifshort 
\section*{Data Availability Statement}
The auxiliary material of this article is available at \href{https://doi.org/10.1145/3632849}{https://doi.org/10.1145/3632849} and contains both the article and its appendix in a
single file with clickable hyper references.
\fi
\begin{acks}
I thank the participants to the \href{https://www.dagstuhl.de/seminars/seminar-calendar/seminar-details/23281}{Dagstuhl Seminar on ``Theoretical Advances and Emerging Applications in Abstract Interpretation''
09--14 July 2023} and Jeffery Wang for discussions. I thank the reviewers for appreciation and criticisms, corrections, and useful suggestions, Francesco Ranzato for improvement proposals, and Charles de Haro for numerous corrections. 
\end{acks}

%
\bibliographystyle{ACM-Reference-Format}
\bibliography{\jobname}
\end{document}

\newpage

\section*{Hoare logic rules}

\noindent\hypertarget{th:slides:3}{\textsc{Theorem 3 (Hoare rules for conditional iteration)}}.
\begin{eqntabular*}[fl]{@{\qquad\qquad}c}
\color{blue}\frac {\displaystyle\llstrut P\subseteq I,\ \{I\cap\mathcal{B}\sqb{\texttt{\small B}}\}\,\texttt{\small S}\,\{I\},\ (I\cap\mathcal{\neg B}\sqb{\texttt{\small B}})\subseteq Q}{\displaystyle\ulstrut\{P\}\,\texttt{\small while (B) S}\,\{Q\}}
\end{eqntabular*}

\begin{proof}

We write $\color{blue}\{P\}\,\texttt{\small S}\,\{Q\} \triangleq\pair{P}{Q}\in\mathcal{T_{\textrm{HL}}}(\texttt{\small S})$;

By structural induction (\texttt{\small S} being a strict component of \texttt{\small while (B) S}), the rule for $\color{blue}\{P\}\,\texttt{\small S}\,\{Q\}$ have already been defined; 

By Aczel method, the (constant) fixpoint $\color{blue}\Lfp{\subseteq}\LAMBDA{X}S$ is defined by
$\color{blue}\{\frac{\emptyset}{c}\mid c\in S\}$; 

So for \texttt{\small while (B) S} we have an axiom $\color{blue}\frac {\emptyset}{\displaystyle\ulstrut\{P\}\,\texttt{\small while (B) S}\,\{Q\}}$ with side condition $\color{blue}P\subseteq I,\ \{I\cap\mathcal{B}\sqb{\texttt{\small B}}\}\,\texttt{\small S}\,\{I\},\ (I\cap\mathcal{\neg B}\sqb{\texttt{\small B}})\subseteq Q$;

Traditionally, the side condition is considered a premiss, to get $\color{blue}\frac {\displaystyle\llstrut P\subseteq I,\ \{I\cap\mathcal{B}\sqb{\texttt{\small B}}\}\,\texttt{\small S}\,\{I\},\ (I\cap\mathcal{\neg B}\sqb{\texttt{\small B}})\subseteq Q}{\displaystyle\ulstrut\{P\}\,\texttt{\small while (B) S}\,\{Q\}}$
\let\qed\relax\end{proof}

$\color{blue}\frac {\displaystyle\llstrut P\subseteq I,\quad \{I\cap\mathcal{B}\sqb{\texttt{\small B}}\}\,\texttt{\small S}\,\{I\}}{\displaystyle\ulstrut\{P\}\,\texttt{\small while (B) S}\,\{I\cap\mathcal{\neg B}\sqb{\texttt{\small B}}\}}
\qquad
\frac {\displaystyle\llstrut\{P\}\,\texttt{\small S}\,\{Q\},\quad Q\subseteq Q'}{\displaystyle\ulstrut\{P\}\,\texttt{\small S}\,\{Q'\}}
$

\newpage
\section*{Reverse Hoare aka incorrectness logic theory}

\noindent\hypertarget{th:slides:4}{\textsc{Theorem 4 (theory of IL)}}. 
\bgroup\color{blue}\belowdisplayskip-17pt
\begin{eqntabular*}{rcl}
\mathcal{T_{\textrm{IL}}}(\texttt{\small W})
&\triangleq&\textsf{post}({\subseteq}.{\supseteq})\comp\mathcal{T}(\texttt{\small W})\\
&=&\{\pair{P}{Q}\mid \exists \pair{J^n}{n\in\mathbb{N}}\mathrel{.}
 J^0=P\wedge \pair{J^n\cap\mathcal{B}\sqb{\texttt{\small B}}}{J^{n+1}}\in\mathcal{T}_{\textrm{IL}}(\texttt{\small S})\wedge Q \subseteq(\bigcup_{n\in\mathbb{N}}J^n)\cap\mathcal{B}\sqb{\neg\texttt{\small B}} \}
\end{eqntabular*}
\egroup

\medskip

\begin{proof}
\begin{calculus}
\formula{\mathcal{T}_{\textrm{IL}}(\texttt{\small W})}\\
=
\formulaexplanation{\textsf{post}({\subseteq}.{\supseteq})\comp\mathcal{T}(\texttt{\small W})}{def\@. $\mathcal{T}_{\textrm{IL}}$}\\
=
\formulaexplanation{\{\pair{P}{Q}\mid Q \subseteq\textsf{\upshape post}\sqb{\texttt{\small W}}P \}}{$\subseteq$-order dual of \hyperlink{lem:slides:4}{lem\@. 4}}\\
=
\formulaexplanation{\{\pair{P}{Q}\mid Q \subseteq\textsf{\upshape post}\sqb{\neg\texttt{\small B}}(\Lfp{\subseteq}\bar{\bar{F}}^e_{P}) \}}{\hyperlink{th:slides:1}{Th\@. 1} where $\bar{\bar{F}}^e_P(X)\triangleq P \cup \textsf{\upshape post}(\sqb{\texttt{\small B}}\fatsemi\sqb{\texttt{\small S}}^e)X$}\\
= 
\formula{\{\pair{P}{Q}\mid \exists I\mathrel{.}Q \subseteq\textsf{\upshape post}\sqb{\neg\texttt{\small B}}(I) \wedge I\subseteq \Lfp{\subseteq}\bar{\bar{F}}^e_{P}\}}{}\\[-0.5ex]
\explanation{($\subseteq$)\quad Take $I= \Lfp{\subseteq}\bar{\bar{F}}^e_{P}$ and reflexivity;\\
($\supseteq$)\quad By Galois connection (\ref{eq:def:post:GC}), $\textsf{\upshape post}\sqb{\neg\texttt{\small B}}$ is increasing so $Q \subseteq\textsf{\upshape post}\sqb{\neg\texttt{\small B}}(I)\subseteq \textsf{\upshape post}\sqb{\neg\texttt{\small B}}(\Lfp{\subseteq}\bar{\bar{F}}^e_{P})$ and transitivity}\\
=
\formula{\{\pair{P}{Q}\mid \exists I\mathrel{.}Q \subseteq\textsf{\upshape post}\sqb{\neg\texttt{\small B}}(I) \wedge \exists \pair{J^n}{n<\omega}\mathrel{.}J^0=\emptyset\wedge J^{n+1}\subseteq \bar{\bar{F}}^e_{P}(J^n)\wedge I\subseteq\bigcup_{n<\omega}J^n\}}\\[-0.5ex]\rightexplanation{fixpoint underapproximation Th\@. II.3.6}\\
=
\formula{\{\pair{P}{Q}\mid \exists \pair{J^n}{n<\omega}\mathrel{.}
 J^0=\emptyset\wedge J^{n+1}\subseteq \bar{\bar{F}}^e_{P}(J^n)\wedge Q \subseteq\textsf{\upshape post}\sqb{\neg\texttt{\small B}}(\bigcup_{n<\omega}J^n) \}}\\[-0.5ex]
 \explanation{($\subseteq$)\quad By Galois connection (\ref{eq:def:post:GC}), $\textsf{\upshape post}\sqb{\neg\texttt{\small B}}$ is increasing so $Q \subseteq\textsf{\upshape post}\sqb{\neg\texttt{\small B}}(I)\subseteq\textsf{\upshape post}\sqb{\neg\texttt{\small B}}(\bigcup_{n<\omega}J^n)$ and transitivity;\\
 ($\supseteq$)\quad take $I=\bigcup_{n<\omega}J^n$}\\
=
\formula{\{\pair{P}{Q}\mid \exists \pair{J^n}{n<\omega}\mathrel{.}
 J^0=\emptyset\wedge J^{n+1}\subseteq (P \cup \textsf{\upshape post}(\sqb{\texttt{\small B}}\fatsemi\sqb{\texttt{\small S}}^e)(J^n))\wedge Q \subseteq\textsf{\upshape post}\sqb{\neg\texttt{\small B}}(\bigcup_{n<\omega}J^n) \}}\\[-0.5ex]\rightexplanation{def\@. $\bar{\bar{F}}^e_{P}$}\\
 =
\formula{\{\pair{P}{Q}\mid \exists \pair{J^n}{1\leqslant n<\omega}\mathrel{.}
 J^1=P\wedge J^{n+1}\subseteq \textsf{\upshape post}(\sqb{\texttt{\small B}}\fatsemi\sqb{\texttt{\small S}}^e)(J^n)\wedge Q \subseteq\textsf{\upshape post}\sqb{\neg\texttt{\small B}}(\bigcup_{1\leqslant n<\omega}J^n) \}}\\
 \rightexplanation{getting rid of $ J^0=\emptyset$}\\
 =
 \formula{\{\pair{P}{Q}\mid \exists \pair{J^n}{n\in\mathbb{N}}\mathrel{.}
 J^0=P\wedge J^{n+1}\subseteq \textsf{\upshape post}(\sqb{\texttt{\small B}}\fatsemi\sqb{\texttt{\small S}}^e)(J^n)\wedge Q \subseteq\textsf{\upshape post}\sqb{\neg\texttt{\small B}}(\bigcup_{n\in\mathbb{N}}J^n) \}}\\\rightexplanation{changing $n+1$ to $n$}\\
 =
 \formula{\{\pair{P}{Q}\mid \exists \pair{J^n}{n\in\mathbb{N}}\mathrel{.}
 J^0=P\wedge J^{n+1}\subseteq \textsf{\upshape post}\sqb{\texttt{\small S}}^e(J^n\cap\mathcal{B}\sqb{\texttt{\small B}})\wedge Q \subseteq(\bigcup_{n\in\mathbb{N}}J^n)\cap\mathcal{B}\sqb{\neg\texttt{\small B}} \}}\\[-0.5ex]\rightexplanation{\hyperlink{lem:slides:2}{test lemma 2}}\\
  =
 \formulaexplanation{\{\pair{P}{Q}\mid \exists \pair{J^n}{n\in\mathbb{N}}\mathrel{.}
 J^0=P\wedge \pair{J^n\cap\mathcal{B}\sqb{\texttt{\small B}}}{ J^{n+1}}\in\{\pair{P'}{Q'}\mid Q'\subseteq \textsf{\upshape post}\sqb{\texttt{\small S}}^e)P)\}\wedge Q \subseteq(\bigcup_{n\in\mathbb{N}}J^n)\cap\mathcal{B}\sqb{\neg\texttt{\small B}} \}}{def\@. $\in$}\\
 =
 \formulaexplanation{\{\pair{P}{Q}\mid \exists \pair{J^n}{n\in\mathbb{N}}\mathrel{.}
 J^0=P\wedge \pair{J^n\cap\mathcal{B}\sqb{\texttt{\small B}}}{J^{n+1}}\in\mathcal{T}_{\textrm{IL}}(\texttt{\small S})\wedge Q \subseteq(\bigcup_{n\in\mathbb{N}}J^n)\cap\mathcal{B}\sqb{\neg\texttt{\small B}} \}}{def\@. $ \mathcal{T}_{\textrm{IL}}$} \end{calculus}
\end{proof}

\newpage
\section*{IL rules}

\noindent\hypertarget{th:slides:5}{\textsc{Theorem 5 (IL rules for conditional iteration)}}. 
\begin{eqntabular*}{c}
\color{blue}\frac {\displaystyle\llstrut J^0=P,\ [J^n\cap\mathcal{B}\sqb{\texttt{\small B}}]\,\texttt{\small S}\,[J^{n+1}],\ Q \subseteq(\bigcup_{n\in\mathbb{N}}J^n)\cap\mathcal{B}\sqb{\neg\texttt{\small B}}}{\displaystyle\ulstrut[P]\,\texttt{\small while (B) S}\,[Q]}
\end{eqntabular*}

\begin{proof}

We write $\color{blue}[P]\,\texttt{\small S}\,[Q] \triangleq\pair{P}{Q}\in\mathcal{T_{\textrm{IL}}}(\texttt{\small S})$;

By structural induction (\texttt{\small S} being a strict component of \texttt{\small while (B) S}), the rule for $\color{blue}[P]\,\texttt{\small S}\,[Q]$ have already been defined; 

By Aczel method, the (constant) fixpoint $\color{blue}\Lfp{\subseteq}\LAMBDA{X}S$ is defined by
$\color{blue}\{\frac{\emptyset}{c}\mid c\in S\}$; 

So for \texttt{\small while (B) S} we have an axiom $\color{blue}\frac {\emptyset}{\displaystyle\ulstrut\{P\}\,\texttt{\small while (B) S}\,\{Q\}}$ with side condition $\color{blue}J^0=P,\ [J^n\cap\mathcal{B}\sqb{\texttt{\small B}}]\,\texttt{\small S}\,[J^{n+1}],\ Q \subseteq(\bigcup_{n\in\mathbb{N}}J^n)\cap\mathcal{B}\sqb{\neg\texttt{\small B}}$;

Traditionally, the side condition is considered a premiss, to get $$\color{blue}\frac {\displaystyle\llstrut J^0=P,\ [J^n\cap\mathcal{B}\sqb{\texttt{\small B}}]\,\texttt{\small S}\,[J^{n+1}],\ Q \subseteq(\bigcup_{n\in\mathbb{N}}J^n)\cap\mathcal{B}\sqb{\neg\texttt{\small B}}}{\displaystyle\ulstrut[P]\,\texttt{\small while (B) S}\,[Q]}$$
\let\qed\relax\end{proof}

\newpage
\noindent\hypertarget{lem:slides:6}{\textsc{Lemma 6}}. 
\color{blue}\begin{eqntabular*}{rclcl}
\textsf{post}(R)P\cap Q\neq \emptyset
&\Leftrightarrow&
\exists\sigma\in P\mathrel{.}\exists\sigma'\in Q\mathrel{.}\pair{\sigma}{\sigma'}\in R
&\Leftrightarrow&
P\cap \textsf{pre}(R) Q\neq\emptyset
\end{eqntabular*}\color{black}
\vspace*{-1.75em}
\begin{proof}
\begin{calculus}[$\Leftrightarrow$~]
\formula{\textsf{post}(R)P\cap Q\neq \emptyset}\\
$\Leftrightarrow$
\formulaexplanation{\{\sigma'\mid\exists\sigma\in P\mathrel{.}\pair{\sigma}{\sigma'}\in R\} \cap Q\neq \emptyset}{def\@. \textsf{post}}\\
$\Leftrightarrow$
\formulaexplanation{\exists\sigma'\mathrel{.}\sigma'\in\{\sigma'\mid\exists\sigma\in P\mathrel{.}\pair{\sigma}{\sigma'}\in R\} \wedge\sigma'\in Q}{def\@. $\cap$ and $\emptyset$}\\
$\Leftrightarrow$
\formulaexplanation{\exists\sigma'\mathrel{.}\exists\sigma\in P\mathrel{.}\pair{\sigma}{\sigma'}\in R \wedge\sigma'\in Q}{def\@. $\in$}\\
$\Leftrightarrow$
\formulaexplanation{\exists\sigma\in P\mathrel{.}\exists\sigma'\in Q\mathrel{.}\pair{\sigma}{\sigma'}\in R}{commutativity}\\
$\Leftrightarrow$
\formulaexplanation{\exists\sigma\mathrel{.}\sigma\in P\wedge\sigma\in\{\sigma\mid\exists\sigma'\in Q\mathrel{.}\pair{\sigma}{\sigma'}\in R\}}{def\@. $\in$}\\
$\Leftrightarrow$
\formulaexplanation{P\cap\{\sigma\mid\exists\sigma'\in Q\mathrel{.}\pair{\sigma}{\sigma'}\in R\}\neq\emptyset}{def\@. $\cap$ and $\emptyset$}\\
$\Leftrightarrow$
\lastformulaexplanation{P\cap \textsf{pre}(R) Q\neq\emptyset}{\hyperlink{lem:slides:6}{\textsc{lem\@. 6}}}{\mbox{\qed}}
\end{calculus}\let\qed\relax
\end{proof}

\bgroup
\noindent\hypertarget{lem:slides:7}{\textsc{Lemma 7}}. 
\color{blue}
\color{blue}
\begin{eqntabular}[fl]{rcl}
\neg(\{P\}\,\texttt{\small S}\{Q\})
&\Leftrightarrow&
\textsf{post}(R)P\cap \neg Q\neq \emptyset\nonumber\\[-0.5ex]
&\Leftrightarrow&
\exists \sigma\in P\mathrel{.}\exists\sigma'\not\in Q\mathrel{.}\pair{\sigma}{\sigma'}\in\sqb{\texttt{\small S}}\nonumber\\[-0.5ex]
&\Leftrightarrow&
P\cap \mathsf{pre}\sqb{\texttt{\small S}}\neg Q\neq\emptyset\nonumber\\[0.5ex]
&\smash{\stackrel{\displaystyle\;\not\Rightarrow}{\displaystyle\Leftarrow}}
&
\color{blue}[P]\texttt{\small S}[\neg Q]\renumber{{\color{black}(IL is sufficient but not necessary for incorrectness)}}
\end{eqntabular}\color{black}%
\makeatletter
\makeatother
\begin{proof}
\begin{calculus}[$\Leftrightarrow$~]
\formula{\neg(\{P\}\,\texttt{\small S}\{Q\})}\\
$\Leftrightarrow$
\formulaexplanation{\neg(\textsf{post}\sqb{\texttt{\small S}}P\subseteq Q)}{\hyperlink{lem:slides:4}{\textsc{lem\@. 4}}}\\
$\Leftrightarrow$
\formulaexplanation{\textsf{post}\sqb{\texttt{\small S}}P\cap\neg Q\neq\emptyset}{De Morgan}\\
$\Leftrightarrow$
\formula{\exists \sigma\in P\mathrel{.}\exists\sigma'\not\in Q\mathrel{.}\pair{\sigma}{\sigma'}\in\sqb{\texttt{\small S}}}\\
$\Leftrightarrow$
\formulaexplanation{P\cap \mathsf{pre}\sqb{\texttt{\small S}}\neg Q\neq\emptyset}{def\@. $\mathsf{pre}$}\\[1em]
\formulaexplanation{[P]\texttt{\small S}[\neg Q]}{reverse Hoare aka incorrectness logic}\\
$\Leftrightarrow$
\formulaexplanation{\neg Q\subseteq \textsf{post}\sqb{\texttt{\small S}}P}{def\@. triple}\\
$\Leftrightarrow$
\formulaexplanation{\neg Q\subseteq \{\sigma'\mid\exists\sigma\in P\mathrel{.}\pair{\sigma}{\sigma'}\in\sqb{\texttt{\small S}}\}}{def\@.  \textsf{post}}\\
$\Leftrightarrow$
\formulaexplanation{\forall \sigma'\not\in Q\mathrel{.}\exists\sigma\in P\mathrel{.}\pair{\sigma}{\sigma'}\in\sqb{\texttt{\small S}}}{def\@. $\subseteq$ and $\neg$}\\[-0.5ex]
$\stackrel{\displaystyle\not\Leftarrow}{\displaystyle\;\Rightarrow}$
\formula{\exists \sigma\in P\mathrel{.}\exists\sigma'\mathrel{.}\pair{\sigma}{\sigma'}\in\sqb{\texttt{\small S}}\wedge\sigma'\not\in Q}\\
\lastexplanation{($\Rightarrow$)\quad Assume $\neg Q\neq\emptyset$ so pick $\sigma_0\in\neg Q$. Then, by hypothesis,
$\exists \sigma_1\in P\mathrel{.}\pair{\sigma_0}{\sigma_1}\in\sqb{\texttt{\small S}}$ proving $\exists \sigma\in P\mathrel{.}\exists\sigma'\mathrel{.}\pair{\sigma}{\sigma'}\in\sqb{\texttt{\small S}}\wedge \sigma'\not\in Q$ with $\sigma=\sigma_0$ and $\sigma'=\sigma_1$;\\[0.5ex]
(${\displaystyle\not\Leftarrow}$) If $\neg Q=\emptyset$ i.e\@. $Q=\Sigma$ then $\forall \sigma'\not\in Q\mathrel{.}\exists\sigma\in P\mathrel{.}\pair{\sigma}{\sigma'}\in\sqb{\texttt{\small S}}$ is vacuously true while $\exists\sigma'\mathrel{.}\sigma'\not\in Q$ hence $\exists \sigma\in P\mathrel{.}\exists\sigma'\mathrel{.}\pair{\sigma}{\sigma'}\in\sqb{\texttt{\small S}}\wedge\sigma'\not\in Q$ is false}{\mbox{\qed}}
\end{calculus}\let\qed\relax
\end{proof}
For example, $\neg(\{\textsf{true}\}\,\texttt{\small x = 1}\,\{\texttt{\small x}=2\})$ has
neither $[\textsf{true}]\,\texttt{\small x = 1}\,[\texttt{\small x}=2])$ nor $[\textsf{true}]\,\texttt{\small x = 1}\,[\texttt{\small x}\neq 2])$
\egroup

\newpage
\section*{Hoare incorrectness logic theory}
\noindent\hypertarget{th:slides:6}{\textsc{Theorem 6 (theory of $\overline{\textrm{HL}}$)}}. 
\bgroup\color{blue}
\begin{eqntabular}[fl]{@{\quad}rcl}
\mathcal{T}_{\overline{\textrm{HL}}}(\texttt{\small W})
&\triangleq&\textsf{post}({\subseteq},{\supseteq})\comp\alpha^{\neg}\comp\mathcal{T}_{\textrm{HL}}(\texttt{\small W})\colsep{=}\alpha^{\neg}\comp\mathcal{T}_{\textrm{HL}}(\texttt{\small W})
\renumber{{\color{black}\texttt{\small W} = \texttt{\small while (B) S}}}\\
&=&
\{\pair{P}{Q}\mid \begin{array}[t]{@{}l@{}}
\exists n\geqslant 1\mathrel{.}
\exists\pair{\sigma_i\in I}{i\in\interval{1}{n}}\mathrel{.}\sigma_1\in P\wedge{}\\
\forall  i\in\interval[open right]{1}{n}\mathrel{.}\pair{\mathcal{B}\sqb{\texttt{\small B}}\cap\{\sigma_i\}}{\{\sigma_{i+1}\}}\in\mathcal{T}_{\overline{\textrm{HL}}}(\texttt{\small S})
\wedge
\sigma_n\not\in\mathcal{B}\sqb{\texttt{\small B}} \wedge\sigma_n\not\in Q\}
\end{array}\nonumber
\end{eqntabular}
\egroup
\noindent Observe that $\textsf{post}({\subseteq},{\supseteq})$ can be dispensed with. This shows that the consequence rule is useless for Hoare incorrectness logic.
\begin{proof}
\begin{calculus}
\formulaexplanation{\color{blue}\mathcal{T}_{\overline{\textrm{HL}}}(\texttt{\small W}) \colsep{=}\textsf{post}({\subseteq},{\supseteq})\comp\alpha^{\neg}\comp\mathcal{T}_{\textrm{HL}}(\texttt{\small W})}{def\@. $\mathcal{T}_{\overline{\textrm{HL}}}$}\\
=
\formula{\textsf{post}(({\subseteq},{\supseteq})(\neg\{\pair{P}{Q}\mid\textsf{\upshape post}\sqb{\texttt{\small W}}P \subseteq Q\})}\\[-0.5ex]
\rightexplanation{\hyperlink{lem:slides:4}{{lemma 4 (strongest postcondition over approximation)}} and def\@. (\ref{eq-complement-GC}) of $\alpha^{\neg}$}\\
=
\formulaexplanation{\textsf{post}({\subseteq},{\supseteq})(\{\pair{P}{Q}\mid\neg(\textsf{\upshape post}\sqb{\texttt{\small W}}P \subseteq Q)\})}{def\@. $\neg$}\\
=
\formulaexplanation{\textsf{post}({\subseteq},{\supseteq})(\{\pair{P}{Q}\mid\textsf{\upshape post}\sqb{\texttt{\small W}}P \cap \neg Q\neq\emptyset\})}{def\@. $\subseteq$ and $\neg$}\\
=
\formulaexplanation{\{\pair{P'}{Q'}\mid\exists\pair{P}{Q}\in\{\pair{P}{Q}\mid\textsf{\upshape post}\sqb{\texttt{\small W}}P \cap \neg Q\neq\emptyset\}\mathrel{.}
\pair{P}{Q}\mathrel{{\subseteq},{\supseteq}}\pair{P'}{Q'}\}}{def\@. \textsf{post}}\\
=
\formulaexplanation{\{\pair{P'}{Q'}\mid\exists\pair{P}{Q}\mathrel{.}\textsf{\upshape post}\sqb{\texttt{\small W}}P \cap \neg Q\neq\emptyset\wedge
\pair{P}{Q}\mathrel{{\subseteq},{\supseteq}}\pair{P'}{Q'}\}}{def\@. $\in$}\\
=
\formulaexplanation{\{\pair{P'}{Q'}\mid\exists\pair{P}{Q}\mathrel{.}\textsf{\upshape post}\sqb{\texttt{\small W}}P \cap \neg Q\neq\emptyset\wedge
P\subseteq P'\wedge Q\supseteq Q'\}}{component wise def\@. of $\mathrel{{\subseteq},{\supseteq}}$}\\
=
\formula{\{\pair{P'}{Q'}\mid\exists{Q}\mathrel{.}\textsf{\upshape post}\sqb{\texttt{\small W}}P' \cap \neg Q\neq\emptyset\wedge
Q\supseteq Q'\}}\\[-0.5ex]
\explanation{($\subseteq$)\quad if $P\subseteq P'$ then $\textsf{\upshape post}\sqb{\texttt{\small W}}P \subseteq \textsf{\upshape post}\sqb{\texttt{\small W}}P'$ by
(\ref{eq:def:post:GC}) so that $\textsf{\upshape post}\sqb{\texttt{\small W}}P \cap \neg Q\neq\emptyset$ implies $\textsf{\upshape post}\sqb{\texttt{\small W}}P' \cap \neg Q\neq\emptyset$;
\\
($\supseteq$)\quad conversely, if $\exists{Q}\mathrel{.}\textsf{\upshape post}\sqb{\texttt{\small W}}P'$, then $\exists P\mathrel{.}\textsf{\upshape post}\sqb{\texttt{\small W}}P \cap \neg Q\neq\emptyset\wedge
P\subseteq P'$ by choosing $P=P'$.
}\\
=
\formulaexplanation{\{\pair{P'}{Q'}\mid\textsf{\upshape post}\sqb{\texttt{\small W}}P' \cap \neg Q'\neq\emptyset\}\colsep{=} {\{\pair{P}{Q}\mid\neg(\textsf{\upshape post}\sqb{\texttt{\small W}}P \subseteq Q)\}\color{blue}\colsep{=}\alpha^{\neg}\comp\mathcal{T}_{\textrm{HL}}(\texttt{\small W})}}{Q.E.D.}\\[-0.5ex]
\explanation{($\subseteq$)\quad if $Q\supseteq Q'$ then $\neg Q'\supseteq \neg Q$ so
$\textsf{\upshape post}\sqb{\texttt{\small W}}P' \cap \neg Q\neq\emptyset$ implies $\textsf{\upshape post}\sqb{\texttt{\small W}}P' \cap \neg Q'\neq\emptyset$;\\
($\supseteq$)\quad conversely $\textsf{\upshape post}\sqb{\texttt{\small W}}P' \cap \neg Q'\neq\emptyset$ implies $\exists{Q}\mathrel{.}\textsf{\upshape post}\sqb{\texttt{\small W}}P' \cap \neg Q\neq\emptyset\wedge
Q\supseteq Q'$ by choosing $Q=Q'$.
}\\[1ex]
=
\formula{\{\pair{P}{Q}\mid
\textsf{\upshape post}\sqb{\neg\texttt{\small B}}(\Lfp{\subseteq}\bar{\bar{F}}^e_P) \cap \neg Q\neq\emptyset\}}\\[-0.5ex]
\rightexplanation{\hyperlink{th:slides:1}{{theorem 1 (iteration strongest postcondition)}}, where $\bar{\bar{F}}^e_P(X)\triangleq P \cup \textsf{\upshape post}(\sqb{\texttt{\small B}}\fatsemi\sqb{\texttt{\small S}}^e)X$ }\\
=
\formulaexplanation{\{\pair{P}{Q}\mid \Lfp{\subseteq}\bar{\bar{F}}^e_P \cap \textsf{\upshape pre}\sqb{\neg\texttt{\small B}}(\neg Q)\neq\emptyset\}}{(\ref{eq:post-t-pre-t}.d)}\\
=
\formulaexplanation{
\{\pair{P}{Q}\mid\exists I\in\wp(\Sigma)\mathrel{.}\bar{\bar{F}}^e_P(I)\subseteq I\wedge
\exists\pair{W}{\leqslant}\in\mathfrak{Wf}\mathrel{.}\exists\nu\in I\rightarrow W\mathrel{.}
\exists\pair{\sigma_i\in I}{i\in\interval{1}{\infty}}\mathrel{.}
\sigma_1\in \bar{\bar{F}}^e_P(\emptyset)\wedge
\forall  i\in\interval{1}{\infty}\mathrel{.}\sigma_{i+1}\in\bar{\bar{F}}^e_P(\{\sigma_i\})\wedge
\forall  i\in\interval{1}{\infty}\mathrel{.} (\sigma_i\neq \sigma_{i+1})\Rightarrow(\nu(\sigma_i)>\nu(\sigma_{i+1})\wedge
\forall  i\in\interval{1}{\infty}\mathrel{.} (\nu(\sigma_i)\not>\nu(\sigma_{i+1})\Rightarrow\{\sigma_i\}\cap \textsf{\upshape pre}\sqb{\neg\texttt{\small B}}(\neg Q)\neq 0
\}}{Th\@. \ref{th:abstract-least-fixpoint-non-emptiness}}\\[1ex]
=
\formula{
\{\pair{P}{Q}\mid\exists I\in\wp(\Sigma)\mathrel{.}
P\subseteq I\wedge \textsf{\upshape post}(\sqb{\texttt{\small B}}\fatsemi\sqb{\texttt{\small S}}^e)I\subseteq I\wedge
\exists\pair{W}{\leqslant}\in\mathfrak{Wf}\mathrel{.}\exists\nu\in I\rightarrow W\mathrel{.}
\exists\pair{\sigma_i\in I}{i\in\interval{1}{\infty}}\mathrel{.}
\sigma_1\in P\wedge
\forall  i\in\interval{1}{\infty}\mathrel{.}(\sigma_{i+1}\in P\vee \{\sigma_{i+1}\}\subseteq\textsf{\upshape post}(\sqb{\texttt{\small B}}\fatsemi\sqb{\texttt{\small S}}^e)\{\sigma_i\})\wedge
\forall  i\in\interval{1}{\infty}\mathrel{.} (\sigma_i\neq \sigma_{i+1})\Rightarrow(\nu(\sigma_i)>\nu(\sigma_{i+1})\wedge
\forall  i\in\interval{1}{\infty}\mathrel{.} (\nu(\sigma_i)\not>\nu(\sigma_{i+1})\Rightarrow\sigma_i\in\textsf{\upshape pre}\sqb{\neg\texttt{\small B}}(\neg Q)
\}}\\[0.5ex]
\rightexplanation{def\@. $\bar{\bar{F}}^e_P(X)\triangleq P \cup \textsf{\upshape post}(\sqb{\texttt{\small B}}\fatsemi\sqb{\texttt{\small S}}^e)X$,  $\subseteq$, and \textsf{\upshape post}, which is $\emptyset$-strict}\\
=
\formula{
\{\pair{P}{Q}\mid\exists I\in\wp(\Sigma)\mathrel{.}
P\subseteq I\wedge \textsf{\upshape post}(\sqb{\texttt{\small B}}\fatsemi\sqb{\texttt{\small S}}^e)I\subseteq I\wedge
\exists\pair{W}{\leqslant}\in\mathfrak{Wf}\mathrel{.}\exists\nu\in I\rightarrow W\mathrel{.}
\exists\pair{\sigma_i\in I}{i\in\interval{1}{\infty}}\mathrel{.}
\sigma_1\in P\wedge
\forall  i\in\interval{1}{\infty}\mathrel{.}\{\sigma_{i+1}\}\subseteq\textsf{\upshape post}(\sqb{\texttt{\small B}}\fatsemi\sqb{\texttt{\small S}}^e)\{\sigma_i\}\wedge
\forall  i\in\interval{1}{\infty}\mathrel{.} (\sigma_i\neq \sigma_{i+1})\Rightarrow(\nu(\sigma_i)>\nu(\sigma_{i+1})\wedge
\forall  i\in\interval{1}{\infty}\mathrel{.} (\nu(\sigma_i)\not>\nu(\sigma_{i+1})\Rightarrow\sigma_i\in\textsf{\upshape pre}\sqb{\neg\texttt{\small B}}(\neg Q)
\}}\\[0.5ex]
\rightexplanation{since if $\sigma_{i+1}\in P$, we can equivalently consider the sequence $\pair{\sigma_j\in I}{j\in\interval{i+1}{\infty}}$}\\
=
\formula{
\{\pair{P}{Q}\mid\exists I\in\wp(\Sigma)\mathrel{.}
P\subseteq I\wedge \textsf{\upshape post}(\sqb{\texttt{\small B}}\fatsemi\sqb{\texttt{\small S}}^e)I\subseteq I\wedge
\exists n\geqslant 1\mathrel{.}
\exists\pair{\sigma_i\in I}{i\in\interval{1}{n}}\mathrel{.}
\sigma_1\in P\wedge
\forall  i\in\interval[open right]{1}{n}\mathrel{.}\{\sigma_{i+1}\}\subseteq\textsf{\upshape post}(\sqb{\texttt{\small B}}\fatsemi\sqb{\texttt{\small S}}^e)\{\sigma_i\}\wedge
\sigma_n\in\textsf{\upshape pre}\sqb{\neg\texttt{\small B}}(\neg Q)
\}}\\[0.5ex]
\explanation{($\subseteq$)\quad By $\pair{W}{\leqslant}\in\mathfrak{Wf}$, $\nu\in I\rightarrow W$, $\forall  i\in\interval{1}{\infty}\mathrel{.} (\sigma_i\neq \sigma_{i+1})\Rightarrow(\nu(\sigma_i)>\nu(\sigma_{i+1})$, the sequence is ultimately stationary at some rank $n$. For then on, $\sigma_{i+1}=\sigma_{i}$, $i\geqslant n$ and so $\nu(\sigma_i)=\nu(\sigma_{i+1})$.
Therefore 
 $\forall  i\in\interval{1}{\infty}\mathrel{.} (\nu(\sigma_i)\not>\nu(\sigma_{i+1})\Rightarrow\sigma_i\not\in{Q}$ implies that $\sigma_n\in\textsf{\upshape pre}\sqb{\neg\texttt{\small B}}(\neg Q)$;\\
($\supseteq$)\quad Conversely, from $\pair{\sigma_i\in I}{i\in\interval{1}{n}}$ we can define $W=\{\sigma_i\mid i\in\interval{1}{n}\} \cup \{-\infty\}$ with
$-\infty<\sigma_i<\sigma_{i+1}$ and $\nu(x)=\si x\in\{\sigma_i\mid i\in\interval{1}{n}\alors x\sinon-\infty\fsi$ and the sequence $\pair{\sigma_j\in I}{j\in\interval{1}{\infty}}$ repeats $\sigma_n$ ad infimum for $j\geqslant n$.}\\[0.5ex]
=
\formulaexplanation{
\{\pair{P}{Q}\mid\exists I\in\wp(\Sigma)\mathrel{.}
P\subseteq I\wedge \textsf{\upshape post}(\sqb{\texttt{\small B}}\fatsemi\sqb{\texttt{\small S}}^e)I\subseteq I\wedge
\exists n\geqslant 1\mathrel{.}
\exists\pair{\sigma_i\in I}{i\in\interval{1}{n}}\mathrel{.}
\sigma_1\in P\wedge
\forall  i\in\interval[open right]{1}{n}\mathrel{.}\{\sigma_{i+1}\}\subseteq\textsf{\upshape post}(\sqb{\texttt{\small B}}\fatsemi\sqb{\texttt{\small S}}^e)\{\sigma_i\}\wedge
\sigma_n\not\in\mathcal{B}\sqb{\texttt{\small B}} \wedge\sigma_n\not\in Q\}}{def\@. \textsf{\upshape pre}}\\[0.5ex]
=
\formulaexplanation{
\{\pair{P}{Q}\mid 
\exists n\geqslant 1\mathrel{.}
\exists\pair{\sigma_i\in I}{i\in\interval{1}{n}}\mathrel{.}
\sigma_1\in P\wedge
\forall  i\in\interval[open right]{1}{n}\mathrel{.}\{\sigma_{i+1}\}\subseteq\textsf{\upshape post}(\sqb{\texttt{\small B}}\fatsemi\sqb{\texttt{\small S}}^e)\{\sigma_i\}\wedge
\sigma_n\not\in\mathcal{B}\sqb{\texttt{\small B}} \wedge\sigma_n\not\in Q\}}{$I$ is not used and can always be chosen to be $\Sigma$}\\[0.5ex]
=
\formulaexplanation{
\{\pair{P}{Q}\mid 
\exists n\geqslant 1\mathrel{.}
\exists\pair{\sigma_i\in I}{i\in\interval{1}{n}}\mathrel{.}
\sigma_1\in P\wedge
\forall  i\in\interval[open right]{1}{n}\mathrel{.}\textsf{\upshape post}(\sqb{\texttt{\small B}}\fatsemi\sqb{\texttt{\small S}}^e)\{\sigma_i\}\cap \{\sigma_{i+1}\}\neq\emptyset\wedge
\sigma_n\not\in\mathcal{B}\sqb{\texttt{\small B}} \wedge\sigma_n\not\in Q\}}{since $x\in X\Leftrightarrow X\cap\{x\}\neq\emptyset$}\\[0.5ex]
=
\formulaexplanation{
\{\pair{P}{Q}\mid 
\exists n\geqslant 1\mathrel{.}
\exists\pair{\sigma_i\in I}{i\in\interval{1}{n}}\mathrel{.}
\sigma_1\in P\wedge
\forall  i\in\interval[open right]{1}{n}\mathrel{.}\textsf{\upshape post}(\sqb{\texttt{\small B}}\fatsemi\sqb{\texttt{\small S}}^e)\{\sigma_i\}\cap \neg(\neg\{\sigma_{i+1}\})\neq\emptyset\wedge
\sigma_n\not\in\mathcal{B}\sqb{\texttt{\small B}} \wedge\sigma_n\not\in Q\}}{def\@. $\neg X=\Sigma\setminus X$}\\[0.5ex]
=
\formulaexplanation{
\{\pair{P}{Q}\mid 
\exists n\geqslant 1\mathrel{.}
\exists\pair{\sigma_i\in I}{i\in\interval{1}{n}}\mathrel{.}
\sigma_1\in P\wedge
\forall  i\in\interval[open right]{1}{n}\mathrel{.}\neg(\textsf{\upshape post}(\sqb{\texttt{\small B}}\fatsemi\sqb{\texttt{\small S}}^e)\{\sigma_i\}\subseteq(\neg\{\sigma_{i+1}\}))\wedge
\sigma_n\not\in\mathcal{B}\sqb{\texttt{\small B}} \wedge\sigma_n\not\in Q\}}{$\neg(X\subseteq Y)\Leftrightarrow(X\cap\neg Y\neq\emptyset$}\\[0.5ex]
=
\formulaexplanation{
\{\pair{P}{Q}\mid 
\exists n\geqslant 1\mathrel{.}
\exists\pair{\sigma_i\in I}{i\in\interval{1}{n}}\mathrel{.}
\sigma_1\in P\wedge
\forall  i\in\interval[open right]{1}{n}\mathrel{.}\neg(\textsf{\upshape post}(\sqb{\texttt{\small S}}^e)(\mathcal{B}\sqb{\texttt{\small B}}\cap\{\sigma_i\})\subseteq(\neg\{\sigma_{i+1}\}))\wedge
\sigma_n\not\in\mathcal{B}\sqb{\texttt{\small B}} \wedge\sigma_n\not\in Q\}}{def\@. $\textsf{\upshape post}$, $\sqb{\texttt{\small B}}$, and $\fatsemi$}\\[0.5ex]

=
\formulaexplanation{
\{\pair{P}{Q}\mid 
\exists n\geqslant 1\mathrel{.}
\exists\pair{\sigma_i\in I}{i\in\interval{1}{n}}\mathrel{.}
\sigma_1\in P\wedge
\forall  i\in\interval[open right]{1}{n}\mathrel{.}\pair{\mathcal{B}\sqb{\texttt{\small B}}\cap\{\sigma_i\}}{\{\sigma_{i+1}\}}\in\{\pair{P}{Q}\mid\neg(\textsf{\upshape post}(\sqb{\texttt{\small S}}^e)P\subseteq\neg Q)\}
\wedge
\sigma_n\not\in\mathcal{B}\sqb{\texttt{\small B}} \wedge\sigma_n\not\in Q\}}{def\@. $\in$}\\[0.5ex]

=
\lastformulaexplanation{
\{\pair{P}{Q}\mid 
\exists n\geqslant 1\mathrel{.}
\exists\pair{\sigma_i\in I}{i\in\interval{1}{n}}\mathrel{.}
\sigma_1\in P\wedge
\forall  i\in\interval[open right]{1}{n}\mathrel{.}\pair{\mathcal{B}\sqb{\texttt{\small B}}\cap\{\sigma_i\}}{\{\sigma_{i+1}\}}\in\mathcal{T}_{\overline{\textrm{HL}}}(\texttt{\small S})
\wedge
\sigma_n\not\in\mathcal{B}\sqb{\texttt{\small B}} \wedge\sigma_n\not\in Q\}}{def\@. $\mathcal{T}_{\overline{\textrm{HL}}}(\texttt{\small S})$}{\mbox{\qed}}
\end{calculus}\let\qed\relax
\end{proof}

\section*{$\overline{\mbox{\normalfont\bfseries HL}}$ rules}

\noindent\hypertarget{th:slides:7}{\textsc{Theorem 7 ($\overline{\mbox{\normalfont HL}}$ rules for conditional iteration)}}. 
\begin{eqntabular*}{c}
\color{blue}\frac {\displaystyle\exists\pair{\sigma_i\in I}{i\in\interval{1}{n}}\mathrel{.}\sigma_1\in P\wedge{}
\forall  i\in\interval[open right]{1}{n}\mathrel{.}
\llparenthesis\,\mathcal{B}\sqb{\texttt{\small B}}\cap\{\sigma_i\}\,\rrparenthesis\,\texttt{\small S}\,\llparenthesis\,\{\sigma_{i+1}\}\,\rrparenthesis
\wedge
\sigma_n\not\in\mathcal{B}\sqb{\texttt{\small B}} \wedge\sigma_n\not\in Q}{\displaystyle\llparenthesis\,P\,\rrparenthesis\,\texttt{\small while (B) S}\,\llparenthesis\, Q\,\rrparenthesis}
\end{eqntabular*}

\begin{proof}

We write $\color{blue}\llparenthesis\,P\,\rrparenthesis\,\texttt{\small S}\,\llparenthesis\, Q\,\rrparenthesis \triangleq\pair{P}{Q}\in{\overline{\textrm{HL}}}(\texttt{\small S})$;

By structural induction (\texttt{\small S} being a strict component of \texttt{\small while (B) S}), the rule for $\color{blue}\llparenthesis\,P\,\rrparenthesis\,\texttt{\small S}\,\llparenthesis\, Q\,\rrparenthesis$ have already been defined; 

By Aczel method, the (constant) fixpoint $\color{blue}\Lfp{\subseteq}\LAMBDA{X}S$ is defined by
$\color{blue}\{\frac{\emptyset}{c}\mid c\in S\}$; 

So for \texttt{\small while (B) S} we have an axiom $\color{blue}\frac {\displaystyle\emptyset}{\displaystyle\llparenthesis\,P\,\rrparenthesis\,\texttt{\small while (B) S}\,\llparenthesis\, Q\,\rrparenthesis}$ with side condition $\color{blue}\exists\pair{\sigma_i\in I}{i\in\interval{1}{n}}\mathrel{.}\sigma_1\in P\wedge{}
\forall  i\in\interval[open right]{1}{n}\mathrel{.}
\llparenthesis\,\mathcal{B}\sqb{\texttt{\small B}}\cap\{\sigma_i\}\,\rrparenthesis\,\texttt{\small S}\,\llparenthesis\,\{\sigma_{i+1}\}\,\rrparenthesis
\wedge
\sigma_n\not\in\mathcal{B}\sqb{\texttt{\small B}} \wedge\sigma_n\not\in Q$;

\vskip1mm

Traditionally, the side condition is considered a premiss, to get $$\color{blue}\frac {\displaystyle\exists\pair{\sigma_i\in I}{i\in\interval{1}{n}}\mathrel{.}\sigma_1\in P\wedge{}
\forall  i\in\interval[open right]{1}{n}\mathrel{.}
\llparenthesis\,\mathcal{B}\sqb{\texttt{\small B}}\cap\{\sigma_i\}\,\rrparenthesis\,\texttt{\small S}\,\llparenthesis\,\{\sigma_{i+1}\}\,\rrparenthesis
\wedge
\sigma_n\not\in\mathcal{B}\sqb{\texttt{\small B}} \wedge\sigma_n\not\in Q}{\displaystyle\llparenthesis\,P\,\rrparenthesis\,\texttt{\small while (B) S}\,\llparenthesis\, Q\,\rrparenthesis}$$
\let\qed\relax\end{proof}
This is nothing but debugging formalized as a logic since $\pair{\sigma_i\in I}{i\in\interval{1}{n}}$ is a finite iteration in the loop starting with $P$ true and finishing with $Q$ false, which is obviously a counter example to Hoare triple $\{P\}\,\texttt{\small while (B) S}\,\{Q\}$. Notice that recursively $\llparenthesis\,\mathcal{B}\sqb{\texttt{\small B}}\cap\{\sigma_i\}\,\rrparenthesis\,\texttt{\small S}\,\llparenthesis\,\{\sigma_{i+1}\}\,\rrparenthesis$ enforces the execution of the loop body \texttt{\small S} to terminate.

\end{document}